\documentclass[journal,twoside,web]{ieeecolor}
\usepackage{generic}

\usepackage{cite}

\usepackage{amsmath}
\let\labelindent\relax
 
\usepackage{amsthm}

\usepackage{amssymb,amsfonts}
\usepackage{algorithmic}
\usepackage{graphicx}
\usepackage{algorithm,algorithmic}
\usepackage{hyperref}
\hypersetup{hidelinks=true}
\usepackage{textcomp}
\usepackage{enumitem}
\usepackage{caption}
\usepackage{subcaption}
\usepackage{cleveref}
\usepackage{thmtools}
\usepackage{thm-restate}
\usepackage{shorthands}
\usepackage{array}
\usepackage{tikz}
\usepackage{framed}
\usetikzlibrary{arrows.meta, positioning, fit,calc}
\newcommand{\citep}[1]{\cite{#1}}
\newcommand{\citet}[1]{\cite{#1}}

\newcommand{\nmpparam}{\ensuremath{\xi}}
\def\BibTeX{{\rm B\kern-.05em{\sc i\kern-.025em b}\kern-.08em
    T\kern-.1667em\lower.7ex\hbox{E}\kern-.125emX}}

\newcommand{\TAC}{0}
\newif\ifTACmode
\ifnum\TAC=1
    \TACmodetrue
\else
    \TACmodefalse
\fi

\ifTACmode
\else
\fi
\begin{document}

\title{The Fragility of Learning LQG Controllers}
\author{Bruce D. Lee*, Anastasios Tsiamis*, Nikolai Matni, Manfred Morari, John Lygeros
\thanks{*Equal Contribution}
\thanks{This work is supported in part by an ETH AI Center Postdoctoral Fellowship to Bruce D. Lee. This work is also partially supported by NCCR Automation, grant agreement 51NF40\_225155 from the Swiss National Science Foundation and NSF CAREER award ECCS-
204583.}
\thanks{Bruce D. Lee is with the ETH AI Center. Anastasios Tsiamis, Manfred Morari, and John Lygeros are with the Institute for Automatic Control at ETH Zurich. Nikolai Matni is with the Department of Electrical and Systems Engineering at the University of Pennsylvania. Emails: \tt\small bruce.lee@ai.ethz.ch, \{atsiamis,mmorari,jlygeros\}@control.ee.ethz.ch, nmatni@seas.upenn.edu.}
}

\maketitle
\ifTACmode
\else
    \thispagestyle{empty}
\fi

\begin{abstract}
  Learning methods are  increasingly used to synthesize controllers from data, yet existing sample-complexity characterizations for continuous control are sharp only in the fully observed setting. This paper studies the partially observed case by deriving information-theoretic lower bounds for learning Linear Quadratic Gaussian (LQG) controllers from offline trajectories generated by a (linear) exploration policy. We prove an $\varepsilon$-local minimax excess-cost lower bound that applies to any algorithm mapping the offline dataset to a stabilizing linear controller. The bound is expressed in terms of the Hessian of the LQG cost with respect to model parameters and the inverse Fisher Information induced by the exploration policy. We provide system-theoretic characterizations of these objects, enabling transparent construction of hard instances. Instantiating the bound on classical fragile robust-control examples, including variants of the Doyle LQG fragility counterexample and non-minimum-phase systems, demonstrates when fragile robust control problems translate
into high sample complexity for learning-enabled control.
These results suggest the asymptotic optimality of certainty-equivalent synthesis and motivate the importance of both task-directed experiment design and system co-design for sample-efficient learning in partially observed control.
\end{abstract}

\section{Introduction}
The problem of learning to control partially observed systems remains poorly understood, even for the simple case of linear systems with quadratic costs and Gaussian noise (LQG). Although the robust control community has long recognized that partial observations introduce substantial additional subtleties~\cite{doyle1978guaranteed}, this insight is largely absent from the recent literature on learning-based control and reinforcement learning.
In this paper, we fill this gap by examining the fundamental limits and inherent fragility of learning the optimal LQG controller. We characterize the complexity of this problem through its data requirements, that is, its statistical complexity.

\subsection{Contributions}
We make the following contributions:
\begin{itemize}[noitemsep, nolistsep, leftmargin=*]
    \item We provide an information-theoretic lower bound on the sample complexity of learning LQG controllers from offline data. The bound applies to any algorithm mapping data to a controller, and thus shows that common practical strategies for mitigating partial observability in control and reinforcement learning (e.g., history stacking \citep{mnih2015human}) cannot circumvent the fundamental hardness of the problem.
    \item We show the bound in terms of the underlying dynamical system, allowing the construction of ``hard'' instances.\footnote{Code to evaluate the hardness of any instance is available at \url{https://colab.research.google.com/drive/12BtSCDb8-gYhx-7CwIIw6FaC8Z3OnGmM?usp=sharing} }%
    \item We study classical examples of fragile problems from robust control, including an example inspired by Doyle's counterexample \citet{doyle1978guaranteed} and a non-minimum phase system. We show that such fragile instances result in learning problems with high sample complexity. 
\end{itemize}
These contributions serve as an important step to enable the design of optimal algorithms, e.g. by task-directed experiment design. They also shed light on the importance of system design as a critical step in the engineering pipeline to enable sample efficient learning, a step that is often neglected in reinforcement learning due to the desire to apply black-box algorithms. By exposing hard instances of linear systems, the results additionally enable strategic design of benchmarks for reinforcement learning that have a clear hierarchy of difficulty. 

\subsection{Related Work}

\paragraph{Complexity of Learning-Enabled 
Control} Recent literature has made significant progress in understanding the statistical complexity of learning to control in the setting 
of fully observed linear systems. The authors of \citet{abbasi2011regret} provide a bound on the regret of learning to control a linear system online, while \citet{dean2020sample} bound the excess cost of an offline identification and control scheme. Bounds that appear tight in state and input dimension were introduced by \citet{mania2019certainty} for the offline setting,
and by \citet{simchowitz2020naive} for the online setting. Subsequent work has shown that such bounds neglected terms that could scale exponentially in system dimensions \citep{tsiamis2021linear, tsiamis2022learning, ziemann2022policy, chen2021black}, indicating that the focus should be upon  achieving bounds that are tight in system-theoretic quantities rather than system dimensions. Progress in this direction is presented by \citet{wagenmaker2021task, wagenmaker2023optimal, lee2024active, ziemann2024regret}. However,
these works are restricted to the fully observed setting or special cases of partially observed systems. Upper bounds on learning the LQG are presented in the offline setting by 
\citep{mania2019certainty,zheng2021sample} and in the online setting by \citep{lale2020logarithmic} but without a tight dependence on the underlying system instance. In particular, they rely on bounds for partially observed system identification \citep{oymak2019non, ziemann2023tutorial} which are not asymptotically tight.
We instead give tight lower bounds for partially observed offline LQG learning.  %

\paragraph{Fundamental Limitations of Robust Control} A long line of work in robust control has shown that partially observed systems can exhibit fundamental fragility to model uncertainty.
The seminal paper  \citet{doyle1978guaranteed} was the first to show that LQG controllers could be arbitrarily fragile to model uncertainty, despite the guaraneed inherent robustness of the linear quadratic regulator (LQR). Subsequently, robust alternatives to LQG control were developed, which could mitigate the issue \citep{doyle1988state}. Nonetheless, certain systems that are ill-conditioned \citep{skogestad1988robust} or exhibit a challenging observability structure due to unstable zeros \citep{doyle2013feedback} lack robustness under \emph{any} controller. 
In \citet{leong2016understanding}, the authors provide a case study illustrating such limitations on the toy problem of stabilizing a cartpole system with visual observations. This problem is studied from an empirical perspective in \citet{xu2021learned} by applying a reinforcement learning algorithm to a partially observed cartpole system, and highlighting that the systems which robust control theory predicts are fragile require a large number of samples to learn an effective controller. We provide theoretical justification for such observations by deriving lower bounds on the cost incurred by any learning algorithm applied to control a partially observed linear system.  We instantiate our lower bound on several classical examples of fragile robust control problems, and demonstrate when they correspond to challenging learning-based control problems.

\paragraph{Interplay of System Identification with 
Control}
When the system model must be estimated from data, the fragility phenomena described above place a significant burden on the identification process. In particular, small errors can lead to large degradation in control performance, necessitating very accurate identification~\cite{zeng2023hardness}. This has led to extensive study of the interplay of system identification with control, see e.g. \citet{gevers2005identification} and references therein.
One of the primary takeaways from such literature is that experiment design must be performed in a way that accounts for the downstream control task \citep{rivera2003plant, gevers2005identification, hjalmarsson2005experiment, rojas2007robust, hjalmarsson2009system, bombois2011optimal}. A case study for one sensitive example, controlling a high purity distillation column, is studied by \citet{rivera2007high}. 
In contrast,
our analysis takes an algorithm-agnostic perspective. In particular, we derive lower bounds on the excess cost incurred by \emph{any} algorithm that maps an offline dataset to a controller. These bounds justify identification-for-control: data collection must excite the most task-relevant parameter directions.

\textbf{Notation: } For conforming matrices $A, B, C, D$, we identify $\brac{\begin{array}{c|c}
            A & B \\ \hline  C & D
            \end{array}} $ with the transfer function $C(zI - A)^{-1} B + D$. For a transfer function $\calM$ of dimension $(\mathsf{dz} + \dy) \times (\mathsf{d \eta} + \du) $, denote its feedback interconnection with $K$ of dimension $\du \times \dy$ as
\begin{align*}
    \calF(\calM, K) \triangleq  \calM_{11} + \calM_{12}(I - \calM_{22}K)^{-1} \calM_{21}, 
\end{align*}
where $\bmat{\calM_{11} & \calM_{12} \\ \calM_{21} & \calM_{22}} = \calM$ are size conforming blocks. For a transfer matrix $P$, let $P^*$ denote its complex conjugate. Let $\calR^{m \times n}_p$ and $\calR^{m \times n}_s$ denote the set of real rational transfer functions of dimension $m \times n$ that are  proper and strictly proper, respectively. Denote by $\mathcal{RH}_\infty$ the set of real rational proper transfer functions analytic in $\abs{z} > 1$. For a stable transfer function $P$, the $\calH_2$ norm is defined as $\norm{P}_{\calH_2} = (\frac{1}{2\pi} \int_{-\pi}^{\pi} \trace (P(e^{j\omega})^*P(e^{j\omega})) \; d \omega)^{1/2}$  and the $\calH_{\infty}$ norm is given by $\norm{P}_{\calH_{\infty}} = \max_{\omega \in [-\pi, \pi]} \norm{P(e^{j\omega})}$. 
For matrices $A \in \R^{n\times n},B\in\R^{n\times m},Q \in \R^{n \times n},R \in \R^{m\times m}$, with $Q \succeq 0$ and $(A, Q^{1/2})$ detectable, and $R \succ 0$, $\dare(A,B,Q,R)$ denotes the solution $P$ to the algebraic Riccati equation, 
    $P = A^\top P A - A^\top P B(B^\top P B+ R)^{-1} B^\top P A + Q$
and $\dlyap(A,Q)$ denotes the solution $X$ to the Lyapunov equation 
    $X = A X A^\top + Q. $
The Kronecker product of two matrices $A$ and $B$ is denoted $A \otimes B$. For a matrix $A$, $\mathsf{sym}(A) = A + A^\top$.
For a sequence of random variables $\curly{X_n}_{n=1}^\infty$ and a random variable $X$, $X_n \overset{d}{\to} X$ denotes convergence in distribution. Let $\calN(\mu,\Sigma)$ denote a normal distribution (multivariate) with mean $\mu$ and covariance $\Sigma$. For functions $f,g : D \to R$, the notation $f(x) = O(g(x))$ means that there exists a positive constant $c$ such that $f(x) \leq c g(x)$ for all $x \in D$, while $f(x) = \Omega(g(x))$ means that there exists a constant $c$ such that $f(x) \geq c g(x)$ for all $x \in D$. The notation $f(x) = o(g(x))$ denotes a quantity that satisfies $\lim_{x\to 0}\frac{f(x)}{g(x)} = 0$.   %

\ifTACmode
\textbf{The supplementary material} available with the TAC submission contains omitted proofs, extention to the non-strictly causal setting, and more examples.
\else{ }\fi

\section{Problem formulation}
\label{sec: problem formulation}
\begin{figure}
    \centering
\resizebox{\columnwidth}{!}{
\begin{tikzpicture}[
    block/.style={draw, rounded corners, align=center, minimum width=1.2cm, minimum height=0.6cm},
    blueblock/.style={block, fill=cyan!20, draw=none},
    dashedbox/.style={draw, dashed, minimum width=3cm, minimum height=2.7cm},
    arrow/.style={thick, -{Latex[length=2mm]}},
    myarrow/.style={arrow, line width=1.1mm,-{Latex[length=3.0mm, width=3.4mm]}},
     myarrow2/.style={arrow, double distance=1pt, line width=0.5mm,-{Latex[length=3.5mm, width=4.0mm,fill=white]}},
      dashedbox2/.style={draw, dashed, minimum width=1.5cm, minimum height=2.5cm},
]

\node[blueblock] (ptheta) {$\mathcal P_{\theta}$};
\node[blueblock, below=0.8cm of ptheta] (texp) {$\pi_{\mathrm{exp}}$};

\node[dashedbox, fit=(ptheta)(texp)] (expbox) {};
\node[below=0pt of expbox] (experiments) {experiments};

\draw[arrow] (ptheta.east) -- node[above]{$y$} ++(0.6,0) |- (texp.east);
\draw[arrow] (texp.west) -- node[above]{$u$} ++(-0.6,0) |- ($(ptheta.west)+(0,-0.2)$);

\draw[arrow] ($(ptheta.west)+(-0.62,0.2)$) --node[above]{$d$} ($(ptheta.west)+(0,0.2)$);
\node[right=0.65cm of expbox] (data) {
    $\begin{aligned}
        y_{0:T-1}^{1},& u_{0:T-1}^{1} \\
        &\vdots \\
        y_{0:T-1}^{N},& u_{0:T-1}^{N}
    \end{aligned}$
};
\node[right=50pt of experiments] (datatext) {data $\mathcal D$};

\draw[myarrow] ($(expbox.east)+(0.1,0)$) -- (data.west);

\node[block, fill=orange!20, draw=none, minimum width=1.6cm, minimum height=1cm,
      right=3.5cm of expbox] (A) {$\calA$};
\node[below=0pt of A, align=center] {learning\\algorithm};

\draw[myarrow] (data.east) -- (A.west);

\node[right=2.2cm of A] (policy) {};

\draw[myarrow] (A.east) -- node[below=0pt]{policy} node[above=0pt]{$K_{\mathcal{D}}\in\calR^{\du\times\dy}_s$} (policy.west);

\node[below right=1.5cm and 0 of expbox,blueblock] (deploy_system){$\mathcal{P}_{\theta}$};

\node[blueblock, fill=orange!20, below=0.8cm of deploy_system] (deploy_controller) {$K_{\mathcal{D}}$};

\draw[arrow] (deploy_system.east) -- node[above]{$y$}  ++(0.6,0) |- (deploy_controller.east);
\draw[arrow] (deploy_controller.west) -- node[above]{$u$} ++(-0.6,0) |- ($(deploy_system.west)+(0,-0.2)$);
\draw[arrow] ($(deploy_system.west)+(-0.62,0.2)$) --node[above]{$d$} ($(deploy_system.west)+(0,0.2)$);

\node[dashedbox,fit=(deploy_system)(deploy_controller)](closed_loop_deploy){};
\node[right=2cm of closed_loop_deploy,align=center] (closed_loop_cost) {$J(K_\mathcal{D},\theta)$\\LQG cost};
\draw[myarrow] ($(closed_loop_deploy.east)+(0.2,0)$) -- (closed_loop_cost);

\node[below=0pt of closed_loop_deploy] (closed_loop_deploy_label) {deployment};
\end{tikzpicture}}
	    \caption{Offline reinforcement learning. Input-output data $\mathcal D$ ($N$ independent length-$T$ trajectories) is collected from the unknown system $\mathcal{P}_\theta$ under exploration policy $\pi_{\exp}$. A learning algorithm maps the data to a dynamic linear policy $K_\mathcal{D}$, whose closed-loop deployment incurs cost $J(K_\mathcal{D},\theta)$.
	    }
	    \label{fig: architecture}
        \vspace{-1em}
\end{figure}

We consider the problem of learning to control a partially observed linear dynamical system with an \textbf{unknown} parameter $\theta \in \R^{d_{\theta}}$. The evolution of the system is governed by
\begin{equation}
\label{eq: linsys}
\begin{aligned}
    x_{t+1} &= A(\theta) x_t + B(\theta) u_t + w_t \\
    y_t &= C(\theta) x_t + v_t,
\end{aligned}
\end{equation}
where $x_t \in \R^{\dx}$ denotes the system state, $y_t \in \R^{\dy}$ is the observation, $u_t \in \R^{\du}$ is the control action, and $w_t \in \R^{\dx}$ and $v_t \in \R^{\dy}$ are zero-mean Gaussian noise that are independent across time and from each other with covariance  matrices $\Sigma_w(\theta)$ and $\Sigma_v(\theta)$, respectively. We denote the joint noise signal as $d_t = \bmat{w_t \\ v_t}$. We assume that the initial state $x_0$ is also zero-mean Gaussian with covariance $\Sigma_0(\theta))$ and
is independent from the noise variables. The system matrices $A(\cdot)$, $B(\cdot)$, $C(\cdot)$, $\Sigma_w(\cdot)$, $\Sigma_v(\cdot)$ and $\Sigma_0(\cdot)$ are analytic functions of the unknown parameter $\theta$. We additionally assume $\Sigma_v(\theta) \succ 0$, that the tuple $(A(\theta), B(\theta), C(\theta))$ is stabilizable and detectable, and that the pair $(A(\theta)^\top, \Sigma_w(\theta)^{1/2})$ is detectable. 

\subsection{Objective}
 Let $K$ denote a control policy that maps the available observations $y_{0:t}$ to a control action $u_t$. Let $\calK$ denote the class of available control policies. The nominal control objective is to find a policy $K \in \calK$ that minimizes the following Linear Quadratic Gaussian (LQG) cost
\begin{align}
    \label{eq: control objective}
    J(K, \theta) = \limsup_{T\to\infty} \frac{1}{T} \bfE^{K}_{\theta} \brac{\sum_{t=0}^T x_t^\top Q x_t + u_t^\top R u_t},
\end{align}
where $Q \succeq 0$ with $(A(\theta), Q^{1/2})$ detectable and $R \succ 0$. The subscript in the expectation denotes that the states evolve according to the dynamics \eqref{eq: linsys} with the parameter $\theta$, and the superscript denotes that the inputs are generated according to the policy $K$. We denote the optimal solution to this problem for a fixed parameter $\theta$ as 
\begin{align}
    \label{eq: solution}
    K_{\theta} \in \argmin_{K\in \calK } J(K, \theta).
\end{align}
It is well established that when $\calK$ consists of all measurable functions of the past data, $y_{0:t}$ (causal policies), or of all measurable functions of the past data excluding the most recent measurement, $y_{0:t-1}$ (strictly causal policies), the optimal LQG controller is a linear dynamic feedback policy that can be represented as a proper transfer function $K_\theta\in\calR_p^{\du\times\dy}$, or a strictly proper transfer function $K_\theta\in\calR_s^{\du\times\dy}$, respectively. This policy can be calculated explicitly given $\theta$; see Section~\ref{s: LQG_preliminaries}. 
\ifTACmode
We therefore restrict the policy class to linear dynamical policies represented by $\calK=\calR_{s}^{\du\times \dy}$; in this case, the minimizer in~\eqref{eq: solution} is unique and well-defined. The extension of our result to non-strictly causal policies can be found in the supplementary material.
\else
We therefore restrict the policy class to linear dynamical policies represented by  $\calK=\calR_{p}^{\du\times \dy}$ or $\calK=\calR_{s}^{\du\times \dy}$; in this case, the minimizer in~\eqref{eq: solution} is unique and well-defined. While we present our results for both settings, we focus primarily on strictly causal policies for ease of exposition.
\fi %

Since the system $\theta$ is \textbf{unknown}, the learner must estimate a controller from data to minimize $J(K, \theta)$. We adopt the offline reinforcement learning perspective, visualized in \Cref{fig: architecture}. First, one collects a dataset $\calD = \curly{(y_t^n, u_t^n)}_{t=0, n=1}^{T-1,N}$ of $N$ independent trajectories of length $T$  from \eqref{eq: linsys}, using a history-dependent and potentially stochastic exploration policy $\pi_{\exp}$. This policy is not restricted to lie in $\calK$. We denote the distribution over such datasets as $\rho^{\pi_{\exp}, N,T}(\theta)$. 

Using the dataset $\calD$ collected by the exploration policy, the learner synthesizes a control policy $K_{\calD} \in\calK$.
 Such an estimated controller incurs an \textit{excess cost} over the optimal controller of $J(K_\mathcal{D}, \theta) - J(K_{\theta}, \theta)$. The excess cost depends on the amount and quality of data available, as well as the efficiency of the learning algorithm. We seek to answer the question:

\begin{quote}
\begin{center}
\textit{What are the fundamental limits for the decay in the excess cost incurred by a controller estimated from data?}
\end{center}
\end{quote}

To answer the above question, we pursue lower bounds on the excess cost that are valid for \emph{any} learning algorithm. We consider \emph{instance-specific minimax} lower bounds, which capture the worst case performance of any learning algorithm $\mathcal{A}$ near an instance of interest $\theta^\star$. 
Specifically, we use local minimax theory \citep{duchi2016lecture}, in which we lower bound
 the $\varepsilon$-local minimax excess cost, defined as
 \begin{align*}
     \mathcal{M}(\theta^\star\!, \!\varepsilon, \!\calA) \!\triangleq\!\!\! \!\sup_{\theta \in \calB(\theta^\star, \varepsilon)} \!\!\!\! \E_{\calD \sim \rho^{\pi_{\exp}, N, T}(\theta)} [J(\calA(\calD), \theta) \!-\! \! J(K_\theta,\! \theta)],
 \end{align*}
 where $\varepsilon>0$ sets the level of locality.

Examining a small ball around the nominal instance reveals the hardness of $\theta^\star$, while ruling out degenerate algorithms.
In particular, the supremum ($\varepsilon\neq 0$) rules out algorithms that only work for a single system instance, for example,
algorithms that ignore the data and return the controller $\mathcal{A}(\mathcal{D})=K_{\theta^\star}$.
On the other hand, global minimax lower bounds ($\varepsilon=\infty$) can be overly pessimistic since they may include trivially pathological systems (see, for instance, Figure 1 of \citep{lee2023fundamental}) and are uninformative for $\theta^\star$.

\subsection{Assumptions}
As already stated, we consider linear dynamic policies that are \ifTACmode strictly causal, $\calK = \calR_s^{\du\times\dy}$\else causal or strictly causal, $\calK=\calR_{p}^{\du\times \dy}$ or $\calK = \calR_s^{\du\times\dy}$, respectively\fi. 
\ifTACmode
Note that this class rules out time-varying or adaptive policies, which are outside the scope of the offline setting.
\else
Note that both classes rule out time-varying or adaptive policies, which are outside the scope of the offline setting.
\fi

In the following assumption, we further restrict the complexity of the class $\calK$ by bounding the $\calH_\infty$ norm of the closed-loop transfer matrix of policies in $\calK$ interconnected with the nominal plant $\calP_{\theta^\star}$, where %
\begin{align*}
    \calP_{\theta} \triangleq \bmat{\calP_{\theta}^u & \calP_{\theta}^{d}} = \brac{\begin{array}{c|cc}
    A(\theta) &  B(\theta) & \bmat{\Sigma_w(\theta)^{1/2} & 0}\\[2pt]
    \hline \\[-0.95em] 
    C(\theta)  & 0 & \bmat{0 & \Sigma_v(\theta)^{1/2}}
\end{array}}.
\end{align*}

\begin{assumption}[Internal Stability]
    \label{asmp: closed loop}
    Define the closed loop map of the plant $\calP_{\theta^\star}^u$ interconnected with a controller \ifTACmode $K \in \calR_s^{\du \times \dy}$\else $K \in \calR_p^{\du \times \dy}$\fi as 
    \begin{align*}
        \mathcal{T}_K = \bmat{
            (I-K\calP_{\theta^\star}^u)^{-1} & K(I-\calP_{\theta^\star}^u K)^{-1}  \\
            \calP_{\theta^\star}^u(I-K\calP_{\theta^\star}^u)^{-1} & (I-\calP_{\theta^\star}^u K)^{-1}}.
    \end{align*}
       We assume that the policy class $\calK$ consists of \ifTACmode linear strictly causal controllers, $K\in \calR_s^{\du \times \dy}$\else linear, causal controllers, $K\in \calR_p^{\du \times \dy}$, or of linear strictly causal controllers, $K\in \calR_s^{\du \times \dy}$\fi, such that the closed-loop transfer matrices have uniformly bounded norm, $\sup_{K \in \calK}\norm{\calT_K}_{\calH_\infty}<\infty$.
\end{assumption}
Based on the above, we can define
\begin{equation}\label{eq:alpha} \alpha\triangleq\sup_{K\in\calK}\norm{\calT_K}_{\calH_\infty}-\norm{\calT_{K_{\theta^\star}}}_{\calH_\infty},
\end{equation}
which essentially captures the size of the policy class $\mathcal{K}$.
The block transfer matrix $\mathcal{T}_K$ characterizes the mapping of disturbances before and after the controller in a feedback control loop to measurements before and after the controller. Thus having $\mathcal{T}_K \in \mathcal{RH}_{\infty}$ is necessary and sufficient for internal stability of the closed-loop (Theorem 5.3 \citep{zhou1996robust}). In the special case where the open loop plant $\calP_{\theta^\star}^u$ is stable with a bounded $\calH_{\infty}$ norm, the bound on $\norm{\mathcal{T}_K}_{\calH_\infty}$ reduces to a bound on the $\calH_{\infty}$ norm of the internal model controller $Q = K(I-\calP_{\theta^\star}^u K)^{-1}$ \citep{morari1989robust}.

{\color{magenta}
}

\subsection{Review of LQG optimal control}
\label{s: LQG_preliminaries}
When the system $\theta$ is known, the optimal strictly causal LQG controller $K_\theta\in\calR^{\du\times\dy}_s$ can be computed explicitly in the frequency domain as
\[
K_\theta=F(\theta)(zI-A(\theta) + L(\theta) C(\theta)-B(\theta)F(\theta))^{-1}L(\theta),
\]
where $F(\theta)\in\mathbb{R}^{\du\times\dx}$ is the LQR gain obtained by solving the Linear Quadratic Regulator (LQR) problem, and $L(\theta)\in\mathbb{R}^{\dx\times\dy}$ is the Kalman predictor gain.
The LQR gain is
\begin{align*}
    F(\theta) &= -\Psi(\theta)^{-1} B(\theta)^\top P(\theta) A(\theta); \\
    \Psi(\theta) &= B(\theta)^\top P(\theta) B(\theta) + R;\\
    P(\theta) &= \texttt{DARE}(A(\theta), B(\theta), Q, R),
\end{align*}
while the Kalman predictor gain is  given  by
\begin{align*}
    L(\theta) &= A(\theta) \Sigma(\theta) C(\theta)^\top \Sigma_e(\theta)^{-1};\\
    \Sigma_e(\theta) &= C(\theta) \Sigma(\theta) C(\theta)^\top + \Sigma_v(\theta);\\
    \Sigma(\theta) &= \texttt{DARE}(A(\theta)^\top, C(\theta)^\top, \Sigma_w(\theta), \Sigma_v(\theta)). 
\end{align*}

The optimal LQG controller can also be implemented in the time-domain as
\begin{align}
    \label{eq: lqr control}
    u^\theta_t = F(\theta) \hat x^\theta_{t \vert t-1},
\end{align}
where $\hat x^\theta_{t \vert t-1} \triangleq \bfE_{\theta} \brac{x_t \vert y_{0:t-1}, u_{0:t-1}}$ is the Kalman predictor estimate for the state $x_t$ given the history up to time $t-1$. In particular, 
\begin{align}   
    \label{eq: kf update}
    \hat x^\theta_{t+1\vert t} &= A_{\mathsf{cl}}^o(\theta) \hat x^\theta_{t \vert t-1} + \bmat{L(\theta) & B(\theta)} \bmat{y_{t} \\ u_{t}},
\end{align}
where $A_{\mathsf{cl}}^o(\theta) \triangleq A(\theta) - L(\theta) C(\theta)$ is the closed loop map under the Kalman predictor gain. Under the optimal LQG policy, we set $u_t=u^\theta_t$ in~\eqref{eq: kf update}.
We similarly define the shorthand for the closed loop map under the LQR controller as $A_{\mathsf{cl}}^c(\theta) \triangleq A(\theta) + B(\theta) F(\theta)$. 

\ifTACmode
The optimal causal LQG controller can be expressed similarly; this extension is deferred to the supplementary material.
\else
The optimal causal LQG controller can be expressed similarly, see \Cref{s: non-strictly causal}. 
\fi

Since we are dealing with the average infinite horizon behavior, we can set the initial conditions arbitrarily. A convenient choice is $\hat x_{0\vert -1}^{\theta} = 0$ and $\Sigma_0(\theta) = \Sigma(\theta)$. We will assume this throughout the paper.
Under this choice, the innovation sequence, defined as $e_t=y_t  - C(\theta) \hat x_{t\vert t-1}$, is Gaussian i.i.d., that is, $e_t \overset{iid}{\sim} \calN(0, \Sigma_e(\theta))$. 

\section{Fundamental Limits}

Our main result lower bounds the local minimax excess control cost of any learning algorithm. The lower bound is expressed in terms of two quantities: 
    \paragraph{The Fisher Information} corresponding to the dataset collected by policy $\pi_{\exp}$: 
    \begin{align}
        \label{eq: FI}
        \mathsf{FI}^{\pi_{\exp}}(\theta) &\triangleq \E_{\calD\sim\rho^{\pi_{\exp}, 1,T}(\theta)} \brac{\nabla^2_\theta \mathsf{nll}(\calD; \theta)}.%
    \end{align}
    Here, $\mathsf{nll}$ denotes the negative log likelihood of the dataset for any parameter $\theta$, defined as
    \begin{align*}
       \mathsf{nll}(\calD; \theta)&\!\triangleq\! \frac{1}{2N}\sum_{n=1}^N \sum_{t=0}^{T-1} \norm{y_{t}^n \!-\! \hat y^n_{t}(\theta)}_{\Sigma_e(\theta)^{-1}}^2\!+\!\frac{T}{2} \log \abs{\Sigma_e(\theta)}, \\
        \hat y^n_{t+1}(\theta) &\triangleq
        \sum_{k=0}^t C(\theta) A_{\mathsf{cl}}^o(\theta)^{t-k} \bmat{L(\theta) \!\!\! & B(\theta)} \bmat{y_k^n \\ u_k^n}.
    \end{align*}    
    The Fisher information measures the signal-to-noise ratio and thus captures how easy it is to estimate $\theta$. 
    \paragraph{The Hessian} of the objective under LQG policies:  
    \begin{align}
    \label{eq: Hessian}
        H(\theta) \triangleq  \nabla_{\tilde\theta}^2 J(K_{\tilde \theta}, \theta) \vert_{\tilde \theta = \theta}.%
    \end{align}
    The Hessian captures the sensitivity of the control synthesis problem to the unknown parameters.

\begin{theorem}[Excess Cost Lower Bound]
    \label{thm: excess cost lower bound}
    Let $\calA$ be any algorithm that maps a dataset $\calD$ to a policy $K_{\calD} \in \calK$ for a policy class $\calK$  that obeys Assumption~\ref{asmp: closed loop}. Suppose that $\theta^\star$ and $\varepsilon>0$ are such that all  instances in the ball $B(\theta^\star, \varepsilon)$ satisfy the stabilizability and detectability assumptions of \Cref{sec: problem formulation}. Suppose that the data is collected under an exploration policy $\pi_{\exp}$ such that $\mathsf{FI}^{\pi_{\exp}}(\theta)$ is Lipschitz continuous in $\theta$ over $\calB(\theta^\star, \varepsilon)$ and $\mathsf{FI}^{\pi_{\exp}}(\theta^\star) \succ 0$. 
    There exist constants $c_1(\theta^\star, \pi_{\exp}, T, \alpha)$ and $c_2(\theta^\star, \pi_{\exp}, T)$ that depend on the nominal system, the exploration policy, the length of the experiment rollouts, and the parameter $\alpha$ from Assumption~\ref{asmp: closed loop} such that for %
    $N \geq \max\curly{c_1(\theta^\star, \pi_{\exp}, T, \alpha), c_2(\theta^\star, \pi_{\exp}, T) \varepsilon^{-2}}$, the following bound holds:
    \begin{align}\label{eq:minimax_lower_bounds}
         \mathcal{M}(\theta^\star, \varepsilon, \calA) \geq \frac{1}{4} \frac{\trace(H(\theta^\star) \mathsf{FI}^{\pi_{\exp}}(\theta^\star)^{-1}) }{N}.
    \end{align}
\end{theorem}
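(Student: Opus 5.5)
The argument follows the local minimax route used for LQR lower bounds (e.g. \citep{lee2023fundamental}). It reduces excess cost to a quadratic loss in a ``parameter-like'' quantity and then applies a van Trees (Bayesian Cram\'er--Rao) inequality with a smooth prior on $\calB(\theta^\star,\varepsilon)$.

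\textbf{Step 1: a quadratic lower bound on the excess cost.} For $K\in\calK$, I parametrize the controller through the Youla/IMC parameter relative to $\calP_{\theta^\star}$, so that $J(K,\theta)$ is a convex quadratic in the closed-loop maps. By Assumption~\ref{asmp: closed loop}, $\calK$ sits inside a bounded set of closed-loop maps, with the bound controlled by $\alpha$. Within this set I define an estimator $\hat\theta$ as the (near) minimizer over $\tilde\theta$ of a distance between $K_\calD$ and the manifold $\{K_{\tilde\theta}\}$, or equivalently project the closed-loop map of $K_\calD$ onto the tangent space of that manifold at $\theta$. A second-order Taylor expansion of $J(\cdot,\theta)$ around its minimizer $K_\theta$ then gives
\[
J(K_\calD,\theta)-J(K_\theta,\theta)\ \ge\ \tfrac12(\hat\theta-\theta)^\top H(\theta)(\hat\theta-\theta) - \text{(cubic remainder)}.
\]
The first-order term vanishes because $K_\theta$ is optimal, and the component of the error orthogonal to the manifold only increases the cost, by strong convexity in the Youla parameter. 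Uniform boundedness of $\calT_K$ bounds the remainder constants by quantities depending on $\theta^\star$ and $\alpha$. Far-away controllers incur cost bounded below by a constant, and these can be handled by a truncation argument once $N\ge c_1$.

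\textbf{Step 2: van Trees.} I place a prior $\lambda$ on $\calB(\theta^\star,\varepsilon)$, namely a rescaled $\cos^2$ density with Fisher information of order $\varepsilon^{-2}$, and lower bound the supremum by the Bayes risk. The van Trees inequality applied to the weighted loss $\E\|\hat\theta-\theta\|^2_{H}$ gives
\[
\E_\lambda\E[(\hat\theta-\theta)^\top H(\theta)(\hat\theta-\theta)] \ge \trace\big(H\,(N\,\E_\lambda \mathsf{FI}^{\pi_{\exp}}(\theta)+J(\lambda))^{-1}\big),
\]
up to the variation of $H(\theta)$ over the ball, which is handled by continuity of $H$. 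Here I use that the dataset consists of $N$ i.i.d.\ trajectories, so the total Fisher information is $N\,\mathsf{FI}^{\pi_{\exp}}(\theta)$. This information depends only on the plant and not on $\pi_{\exp}$'s own randomness, and the innovation form of $\mathsf{nll}$ is exact because $\Sigma_0=\Sigma(\theta)$.

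\textbf{Step 3: error terms.} Lipschitz continuity of $\mathsf{FI}$ gives $\E_\lambda\mathsf{FI}\preceq \mathsf{FI}(\theta^\star)+O(\varepsilon)$, and smoothness of $H$ gives a similar bound for $H$. The prior term satisfies $J(\lambda)/N\lesssim 1/(N\varepsilon^2)$, which is small relative to $\mathsf{FI}(\theta^\star)\succ0$ once $N\ge c_2\varepsilon^{-2}$. The cubic remainder contributes $O(N^{-3/2})$. Together these losses degrade the ideal constant $\tfrac12$ to $\tfrac14$, provided $\varepsilon$ is small. Large $\varepsilon$ is harmless, since it can be shrunk to a sub-ball and the supremum is monotone in $\varepsilon$.

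\textbf{Main obstacle.} The hardest part is Step 1. The learned $K_\calD$ is an arbitrary element of an infinite-dimensional class, not $K_{\hat\theta}$ for some $\hat\theta$, so I must construct a measurable map back to a $d_\theta$-dimensional estimator whose quadratic loss is dominated by the excess cost, with constants uniform over the ball. My first attempt is to work with the Youla parameter $\mathcal Q$. Write $J(K,\theta)-J(K_\theta,\theta)$ as a weighted $\calH_2$ distance $\|W(\mathcal Q-\mathcal Q_\theta)\|^2_{\calH_2}$, which holds exactly for LQG via the separation structure up to plant mismatch. Then define $\hat\theta$ by weighted least squares projection of $\mathcal Q_{K_\calD}$ onto the linearization $\mathcal Q_{\theta}+D\mathcal Q_{\theta}(\tilde\theta-\theta)$, and note that $H(\theta)=D\mathcal Q^* W^*W D\mathcal Q$ up to a factor of 2, which identifies the Hessian. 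The plant dependence of the Youla parametrization on $\theta$ is then handled by the $\alpha$-bounded closed-loop maps.
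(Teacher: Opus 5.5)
\textbf{Gap: the $\hat\theta$ in Step~1 is not an estimator.} In the form you actually rely on (the ``main obstacle'' paragraph and the ``equivalently'' clause of Step~1), $\hat\theta$ is the weighted least-squares projection of $\mathcal{Q}_{K_{\mathcal{D}}}$ onto the tangent line $\mathcal{Q}_\theta + D\mathcal{Q}_\theta(\tilde\theta-\theta)$ at the \emph{true} $\theta$. The weight $W$ is built from $\Psi(\theta)$ and $\Sigma_e(\theta)$. So $\hat\theta$ is a function of $(\mathcal{D},\theta)$, not a statistic, and the van Trees inequality of Step~2 does not apply to it. Its proof integrates by parts in $\theta$ and needs $\partial_\theta(\hat\theta-\theta)=-I$. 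For $\hat\theta(\mathcal{D},\theta)$ you get $\mathbf{E}[\partial_\theta\hat\theta - I]$ instead, which can vanish (take $\hat\theta=\theta$).

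Your two constructions are also not equivalent. The tangent-space projection at $\theta$ is what makes Pythagoras exact: $J(K_{\mathcal{D}},\theta)-J(K_\theta,\theta)=\|W\mathcal{Q}^{K_{\mathcal{D}}}_\theta\|_{\mathcal{H}_2}^2\ge\tfrac12\|\hat\theta-\theta\|^2_{H(\theta)}$. A data-only minimum-distance estimator satisfies this only approximately. Your claim that the cubic remainder is $O(N^{-3/2})$ presumes moment bounds on $\hat\theta-\theta$ that an arbitrary algorithm need not satisfy.

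This $\theta$-dependence is exactly the obstacle the paper's proof is built around. It never forms a finite-dimensional $\hat\theta$. Instead it applies an extended van Trees inequality, which allows $g(\mathcal{D},\Theta)$ and a data-dependent event $\mathcal{E}$, directly to the frequency response $K^{y,u}_\Theta Z^{K_{\mathcal{D}}}_{\theta^\star}(e^{j\omega})$. The key fact is that $K^{y,u}_{\mathcal{D}}\left[\begin{smallmatrix} I\\ K_{\mathcal{D}}\end{smallmatrix}\right]=K_{\mathcal{D}}$ does not depend on $\Theta$, so the Jacobian is exactly $D_\theta K^{y,u}_\theta$ times a data-dependent factor. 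That factor and the closed-loop gap are controlled on $\mathcal{E}$ via Markov's inequality, and integrating over $\omega$ produces $H(\theta^\star)/2$.

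\textbf{Repair.} Your route can be fixed, and is then arguably simpler than the paper's, by anchoring everything at the known $\theta^\star$. You already do this for the Youla parameter in Step~1; do it for the base point and the weight as well.
\begin{itemize}
\item Let $q_K=\mathcal{Q}^K_{\theta^\star}$, $W^\star(X)=\Psi(\theta^\star)^{1/2}X\Sigma_e(\theta^\star)^{1/2}$, and $D\mathcal{Q}=\partial_{\tilde\theta}q_{K_{\tilde\theta}}|_{\tilde\theta=\theta^\star}$.
\item Define $\hat\theta-\theta^\star$ as the weighted least-squares coefficients of $q_{K_{\mathcal{D}}}$ on $D\mathcal{Q}$. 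This is a genuine statistic. Since the excess cost at $\theta^\star$ is exactly $\|W^\star(q_K)\|^2_{\mathcal{H}_2}$ and $q_{K_{\theta^\star}}=0$, you get $\|W^\star(D\mathcal{Q}\,\delta)\|^2_{\mathcal{H}_2}=\tfrac12\delta^\top H(\theta^\star)\delta$.
\item The identity in the proof of the Youla-smoothness lemma, together with $\mathcal{Q}^{K_\theta}_\theta=0$, gives $\mathcal{Q}^{K_{\mathcal{D}}}_\theta=(I+\Delta_2)(q_{K_{\mathcal{D}}}-q_{K_\theta})(I+\Delta_3+\Delta_4 q_{K_{\mathcal{D}}})^{-1}$, with all $\Delta_i=O(\varepsilon)$ in $\mathcal{H}_\infty$.
\item With the $\beta$-bound from Assumption~\ref{asmp: closed loop}, the excess cost at $\theta$ is at least $(1-c\varepsilon)\|W^\star(q_{K_{\mathcal{D}}}-q_{K_\theta})\|^2_{\mathcal{H}_2}$.
\item Projecting onto the span of $W^\star(D\mathcal{Q})$ and using $q_{K_\theta}=D\mathcal{Q}(\theta-\theta^\star)+O(\varepsilon^2)$ gives at least $(1-c\varepsilon)\max\{\|W^\star(D\mathcal{Q}(\hat\theta-\theta))\|-C\varepsilon^2,\,0\}^2$.
\end{itemize}
Standard van Trees then applies with constant $\tfrac12$ before slack. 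The curvature term is negligible once $\varepsilon\lesssim N^{-1/4}$. Together with $N\varepsilon^2\gtrsim 1$, this is the same window the paper uses, and it is closed by monotonicity of $\mathcal{M}$ in $\varepsilon$. This version needs neither $\mathcal{E}$ nor the data-dependent Jacobian. Your $\cos^2$ prior is also the right choice: the paper's $\rho\propto 1-\|x\|^2$ has divergent Fisher information at the boundary.
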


The main tool for establishing this result is the van Trees inequality~\citep{gill1995applications}, a Bayesian version of the Cramer-Rao lower bound, which lower bounds the squared error in estimating a smooth function of the unknown parameter from data. To apply this result, we exploit the fact that the excess cost is a quadratic function  of the Youla parameter for the estimated controller.  
The argument is detailed in \Cref{s: lower bound}. 

The theorem lower bounds the excess cost of learning the LQG controller in terms of the Hessian $H(\theta^\star)$ (defined in \eqref{eq: Hessian}) multiplied by the inverse of the Fisher information for the policy used to collect the dataset, $\mathsf{FI}^{\pi_{\exp}}(\theta^\star)$ (defined in \eqref{eq: FI}). The Hessian captures the sensitivity of the control synthesis problem to the unknown parameters, while the inverse of the Fisher information measures the noise-to-signal ratio and thus captures the ``difficulty" of estimating $\theta^\star$. The product of these quantities captures the fact that the excess cost due to learning a controller scales with the difficulty of the identification problem in the directions that are relevant to the cost. Note that the exploration policy should be \emph{persistently exciting}, that is, the Fisher Information should be invertible, in the directions where the Hessian is non-zero. %

\begin{remark}[Dependence on $T$, exploration policies]The Fisher Information matrix depends on the length of the exploration episodes $T$. If the exploration policy $\pi_{\exp}$ is linear and stabilizing, then it satisfies the smoothness requirement for $\mathsf{FI}^{\pi_{\exp}}(\theta)$ (follows by Riccati perturbation arguments, see, for example, Appendix B of \citep{simchowitz2020naive}). If, in addition, the exploration policy is persistently exciting at steady-state, that is, $\lim_T \mathsf{FI}^{\pi_{\exp}}(\theta)/T$ is positive definite, then  the Fisher Information grows linearly with $T$ asymptotically (see \Cref{prop: info}). Thus, in the case of persistently exciting linear stabilizing policies, the lower bound scales inversely proportionally to the total amount of data $N\times T$. Additionally the burn-in time reduces to $N 
\times T \geq \max\curly{c_1(\theta^\star, \pi_{\exp}, \alpha), c_2(\theta^\star, \pi_{\exp}) \varepsilon^{-2}}$ for problem dependent constants $c_1$ and $c_2$ that no longer depend on the episode horizon.
More general exploration policies (e.g. unstable policies, nonlinear policies) might also satisfy the smoothness requirement for $T<\infty$ but we do not have a systematic way of characterizing how their Fisher Information scales with $T$. 
\end{remark}

The lower bound is valid only after $N$ exceeds a burn-in time defined by the quantities $c_1(\theta^\star, \pi_{\exp}, T, \alpha)$, $c_2(\theta^\star, \pi_{\exp}, T)$, and $\varepsilon$. 
For sufficiently small $\varepsilon$, a trivial algorithm that ignores the data and simply returns a fixed robust controller can achieve uniformly 
bounded
excess cost over all instances in the ball around $\theta^\star$. 
When $N$ is larger than the burn-in time, the available data is abundant enough so that an efficient learning algorithm can outperform such trivial baselines, which is the regime we focus on.
Notably, the burn-in time depends on the instance of interest, the exploration policy, the length of the exploration episodes, and the value $\alpha$ from Assumption~\ref{asmp: closed loop}. The exact dependence of the burn-in time on these objects is not tight, as our focus is on the asymptotic behavior. The asymptotic characterization of the bound is tight, as shown in the following remark. %

\begin{remark}[Tightness of lower bound]
One can verify that the lower bound matches the asymptotic rate of decay for the model-based certainty equivalence algorithm. In particular, if the parameter $\hat \theta$ is estimated from the collected dataset via maximum likelihood as $\hat \theta =\argmin_{\theta\in\R^{d_{\theta}}} \mathsf{nll}(\calD, \theta)$, and if $\mathsf{FI}^{\pi_{\exp}}(\theta) \succ 0$, then by Theorem 9.1 of \citet{ljung1998system}
\begin{align*}
     \sqrt{N}(\hat \theta - \theta) \overset{d}{\to}\calN(0, \mathsf{FI}^{\pi_{\exp}}(\theta)^{-1}) \mbox{ as } N\to\infty. 
\end{align*}
If the controller is synthesized based on this estimate and as long as that controller is stabilizing, by a Taylor expansion
\begin{align*}
    J(K_{\hat \theta}, \theta) -J(K_{\theta}, \theta) = \frac{1}{2} \norm{\hat \theta  - \theta}_{H(\theta)}^2 + O\paren{\!\norm{\hat \theta-  \theta}^2\!}. 
\end{align*}
By combining these facts, one achieves the following asymptotic characterization of the excess cost
\begin{equation*}
\begin{aligned}
    \lim_{N\to\infty} \!N \E\brac{\!J(K_{\hat \theta}, \theta) \!-\! J(K_{\theta}, \theta) } =\frac{1}{2}\trace\paren{H(\theta) \mathsf{FI}^{\pi_{\exp}}(\theta)^{-1}}.
\end{aligned}
\end{equation*}
This implies that the result of Theorem~\ref{thm: excess cost lower bound} is asymptotically tight and certainty equivalence is asymptotically optimal. 
\end{remark}

The matching upper and lower bound suggest that to achieve sample efficient control, the exploration policy should be chosen such that $\mathsf{FI}^{\pi_{\exp}}$ is as large as possible in the directions where objective is sensitive to the parameter estimates. 
Practical limitations, such as actuator and energy bounds, typically prevent the exploration policy from injecting large inputs to achieve large signal-to-noise ratio. 
In such cases, the learner should choose an exploration policy that maximizes the relevant information subject to constraints on the experimental procedure. This decision reduces to the canonical problem of experiment design \citep{rivera2003plant, gevers2005identification, hjalmarsson2005experiment, rojas2007robust, hjalmarsson2009system, bombois2011optimal,chatzikiriakos2025high}.  The optimal experiment design formulation using these matching upper and lower bounds has been studied for the fully observed (and potentially nonlinear) setting by \citet{wagenmaker2021task, wagenmaker2023optimal, lee2024active}.

\Cref{thm: excess cost lower bound} recovers the lower bound from the fully observed setting \citep{wagenmaker2021task, wagenmaker2023optimal, lee2023fundamental}
 as a special case. 
In particular, the fundamental limits in the fully-observed setting are also defined by the product of the cost Hessian with the inverse Fisher Information. Consequently, understanding the hardness of the learning-enabled control induced by partial observability is not transparent from the above bound, and we are instead required to study the behavior of the the matrices $H(\theta)$ and $\mathsf{FI}^{\pi_{\exp}}(\theta)$. We provide characterizations of them in the sequel.

\section{Computing the Lower Bound}

We now provide characterizations for the Hessian and Fisher Information from \Cref{thm: excess cost lower bound}. To get an interpretable expression of the Fisher Information, we restrict attention to linear exploration policies with additive probing noise:
\begin{equation}
\label{eq: exploration policy}
\begin{aligned}
    x^{\exp}_{t+1} &= A_{\exp} x^{\exp}_t + B_{\exp} y_t \\
    u_t &= C_{\exp} x^{\exp}_t + D_{\exp}^y y_t + D_{\exp}^\eta \eta_t,
\end{aligned}
\end{equation}
where $\eta_t \in \R^{\du}$ is $i.i.d.$  mean zero Gaussian with identity covariance ($D_{\mathsf{exp}}^{\eta}$ can capture non-isotropic probing noise) and $x_t^{\exp}$ is the state of the exploration policy.

We also restrict our attention to the case of a scalar parameter ($d_{\theta}=1$) $\theta$. The extension of $\theta$ to a multi-dimensional parameter follows by the observation that if $U = \bmat{U_1 & \dots & U_{d_{\theta}}}$ is a $d_{\theta}$-dimensional orthonormal matrix such that $U^\top \mathsf{FI}^{\pi_{\exp}}(\theta) U$ is diagonal, then \begin{align*}
    \trace(H(\theta) \mathsf{FI}^\pi_{\exp}(\theta)^{-1}) = \sum_{i=1}^{d_{\theta}} \frac{U_i^\top H(\theta) U_i}{U_i^\top \mathsf{FI}^{\pi_{\exp}}(\theta) U_i}.
\end{align*} 
The scalar projections of the Hessian defined by $U_i^\top H(\theta) U_i$ and of the Fisher Information defined by $U_i^\top \mathsf{FI}^{\pi_{\exp}}(\theta) U_i$  are equivalent to the second derivatives of $f(t) \triangleq J(K_{\theta+t U_i}, \theta)$ and $g(t) \triangleq \mathsf{nll}(\calD; \theta + t U_i)$ with respect to $t$, at $t=0$.
We drop the dependence of system matrices on the parameter when it is clear from the context. We additionally denote the derivative of a system matrix with respect to the parameter evaluated at $\theta$ as $\dot A = \frac{d}{d\tilde \theta} A(\tilde \theta)\vert_{\tilde\theta=\theta}$. 

The characterizations for the Hessian and Fisher Information matrix rely upon the derivatives of the controller and observer gains with respect to the underlying parameter, which are shown by Lemma B.1 of \citet{simchowitz2020naive} to be %
\begin{align*}
    \dot F &= - \Psi^{-1}(\dot B^\top P A_{\mathsf{cl}}^c + B^\top P (\dot A + \dot B F )
    + B^\top \dot P A_{\mathsf{cl}}^c) \\
    \dot L &= \left(A_{\mathsf{cl}}^o \Sigma \dot C^\top + A_{\mathsf{cl}}^o \dot \Sigma  C^\top + (\dot A - L \dot C) \Sigma C^\top - L \dot \Sigma_v \right)\Sigma_e^{-1} \\
    \dot P &\!=\! \dlyap\paren{\!(A_{\mathsf{cl}}^c)^\top, \mathsf{sym}\paren{(\dot A + \dot B F) P A_{\mathsf{cl}}^c}\!} \\
    \dot \Sigma &\!=\! \dlyap\paren{A_{\mathsf{cl}}^o, \mathsf{sym}\paren{\! A_{\mathsf{cl}}^o \Sigma(\theta) (\dot A - L \dot C) } + \dot \Sigma_w + L \dot \Sigma_v L^\top \!}.
\end{align*}
Both the Hessian and the Fisher Information may now be expressed in terms of these quantities.

\begin{proposition}[Strictly Causal Hessian Characterization]
\label{prop: hessian}
    Consider the strictly causal setting. If $\theta$ is a scalar parameter, then $H = 2\trace(\Psi \bmat{\dot F & F} \Sigma_H \bmat{\dot F & F}^\top)$, where $\Sigma_H = \dlyap(A_{\mathsf{joint}}^H, B_{\mathsf{joint}}^H \Sigma_e (B_{\mathsf{joint}}^H)^\top)$ and
    \begin{align*}
    A_{\mathsf{joint}}^H &= \bmat{A_{cl}^c & 0 \\ \dot A - L \dot C + \dot B F & A_{cl}^o}, \quad
    B_{\mathsf{joint}}^H = \bmat{L \\ \dot L}. 
    \end{align*}
\end{proposition}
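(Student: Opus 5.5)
We use the standard fact that the excess cost of any stabilizing controller in this LQG setting is a quadratic form in the deviation of its input from the optimal input. We then apply it to the certainty-equivalent controller $K_{\tilde\theta}$ run on the true system $\theta$, and differentiate twice at $\tilde\theta=\theta$.

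\textbf{Step 1 (cost decomposition).} Let $u_t$ be any input generated by a strictly causal stabilizing linear policy driven by the innovations of the true system. Completing the square with the Riccati solution $P(\theta)$ and using the Kalman predictor representation \eqref{eq: kf update} gives
\begin{equation*}
J(K,\theta)-J(K_\theta,\theta)=\lim_{t\to\infty}\E\big[(u_t-F\hat x_{t|t-1})^\top\Psi(u_t-F\hat x_{t|t-1})\big].
\end{equation*}
This uses that $u_t$ is a function of $y_{0:t-1}$, so the estimation error $x_t-\hat x_{t|t-1}$ is orthogonal to both $u_t$ and $\hat x_{t|t-1}$.

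\textbf{Step 2 (state-space realization).} Take $K=K_{\tilde\theta}$ with $\tilde\theta=\theta+s$. Let $z_t$ be the controller's internal state, so $u_t=F(\tilde\theta)z_t$, and let $\delta_t=z_t-\hat x_{t|t-1}$. Write $y_t=C\hat x_{t|t-1}+e_t$, where $\hat x$ evolves with the true $A,B,C,L$. Then the pair $(\hat x,\delta)$ obeys a linear recursion driven by $e_t$, and
\begin{equation*}
u_t-F\hat x_{t|t-1}=F(\tilde\theta)\delta_t+(F(\tilde\theta)-F)\hat x_{t|t-1}.
\end{equation*}
Both $\delta_t$ and $F(\tilde\theta)-F$ are $O(s)$. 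So the excess cost equals $s^2$ times a quadratic that we expand to leading order, and therefore $H=2\times$ that leading coefficient.

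\textbf{Step 3 (first-order expansion).} Differentiate the $\delta$-recursion in $s$ at $s=0$ and let $\xi_t=\frac{d}{ds}\delta_t$. Subtracting the true predictor from the controller recursion $z_{t+1}=(\tilde A-\tilde L\tilde C)z_t+\tilde B u_t+\tilde L y_t$, with $u_t=\tilde Fz_t$ and $y_t=C\hat x+e$, we expect to obtain
\begin{equation*}
\xi_{t+1}=A_{cl}^o\xi_t+(\dot A-L\dot C+\dot BF)\hat x_{t|t-1}+\dot L e_t.
\end{equation*}
In this derivation the $\dot L$ contribution through $y_t-C\hat x_{t|t-1}=e_t$ survives, while the $\dot L C$ terms acting on $\hat x$ cancel against part of $-\dot L C z$ at zeroth order, since $z=\hat x$ at $s=0$. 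Meanwhile $\hat x_{t+1|t}=A_{cl}^c\hat x_{t|t-1}+Le_t$. This stacked system is exactly $(A^H_{\mathsf{joint}},B^H_{\mathsf{joint}})$ with state $(\hat x,\xi)$. The first derivative of the input deviation is $\dot F\hat x+F\xi=[\dot F\ F](\hat x;\xi)$. Its stationary covariance is $\Sigma_H=\dlyap(A^H_{\mathsf{joint}},B^H_{\mathsf{joint}}\Sigma_e (B^H_{\mathsf{joint}})^\top)$, which gives $H=2\trace(\Psi[\dot F\ F]\Sigma_H[\dot F\ F]^\top)$. The factor $2$ comes from $\frac{d^2}{ds^2}(s^2 q)=2q$.

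\textbf{Main obstacle.} The delicate step is justifying the interchange of the limit, the expectation, and the second derivative: cross terms and higher-order terms must vanish at second order. This follows because the excess cost is nonnegative and is zero at $s=0$, so its first derivative vanishes. Smoothness follows from stability of $A^H_{\mathsf{joint}}$, which is block triangular with stable diagonal blocks, together with analyticity of Lyapunov solutions in $s$. The remaining work is the bookkeeping in Step 3, checking which $\dot L$ and $\dot C$ terms survive.
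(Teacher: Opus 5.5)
Your proposal is correct and follows essentially the same route as the paper. The paper also starts from the performance difference lemma, which writes the excess cost as the $\Psi$-weighted squared deviation of the input from $F\hat x_{t|t-1}$. It then differentiates the gap between the certainty-equivalent controller state and the true Kalman predictor (the paper's $r_t^{\tilde\theta}$, your $\delta_t$), obtaining exactly the joint recursion $(A^H_{\mathsf{joint}}, B^H_{\mathsf{joint}})$ driven by the innovations, whose stationary covariance gives $H$. Your note on justifying the exchange of derivative and limit, via stability of the block-triangular closed loop for small $s$, is a bit more explicit than the paper, which simply differentiates under the $\limsup$.
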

\begin{proof}
    By the performance difference lemma (Theorem 11.2 of \citep{soderstrom2012discrete}),
    \begin{align*}
        &J(K_{\tilde \theta}, \theta) - J(K_{\theta}, \theta) \\
        &= \limsup_{T\to\infty}\frac{1}{T}\bfE^{K_{\tilde \theta}}\brac{\sum_{t=0}^T \norm{F(\tilde\theta) \hat x_{t\vert t-1}^{\tilde \theta} - F \hat x_{t\vert t-1}^{ \theta}}_{\Psi}^2}.
    \end{align*}
    Taking the Hessian with respect to $\tilde \theta$ gives
    \begin{align*}
        H(\theta) &= \frac{d^2}{d\tilde\theta^2} J(K_{\tilde \theta}, \theta)\vert_{\tilde \theta = \theta} \\
        &= \!\limsup_{T\to\infty}\frac{1}{T}\bfE^{K_{\tilde \theta}}\!\brac{\sum_{t=0}^T \!\frac{d^2}{d\tilde\theta^2}\!\norm{F(\tilde\theta) \hat x_{t\vert t\!-\!1}^{\tilde \theta} \!-\! F \hat x_{t\vert t\!-\!1}^{ \theta}}_{\Psi}^2\!} 
        \!\bigg\vert_{\tilde \theta = \theta}\!\!.
    \end{align*}
    Note that $\hat x_{t|t-1}^\theta$ also implicitly depends on $\tilde \theta$ through $K(\tilde \theta)$, as the Kalman filter state depends upon the history of inputs and observations up to time $t-1$. 
    Observing that terms involving second order derivatives of $F$, $x_{t|t-1}^{\tilde \theta}$, and $x_{t|t-1}^{\theta}$ are multiplied by quantities that equal zero when evaluated at $\tilde\theta=\theta$, the summand of the above expression may be written
    \begin{align*}
        &\frac{d^2}{d\tilde\theta^2} \norm{  F(\tilde \theta) \hat x_{t\vert t-1}^{\tilde \theta} - F \hat x_{t\vert t-1}^{ \theta}}_{\Psi}^2\vert_{\tilde\theta=\theta} \\
        &= 2  \norm{\dot F \hat x_{t\vert t-1}^{\theta} + F \frac{d}{d\tilde \theta}(\hat x_{t|t-1}^{\tilde\theta} - \hat x_{t|t-1}^{\theta})\vert_{\tilde\theta = \theta}}_{\Psi}^2. 
    \end{align*}
    If we let $r_t^{\tilde \theta} \triangleq \hat x_{t|t-1}^{\tilde\theta} - \hat x_{t|t-1}^{\theta}$, then
    \begin{align*}
        r_{t+1}^{\tilde \theta} &= A_{cl}^o(\tilde \theta) \hat x_{t|t-1}^{\tilde\theta} -A_{cl}^o \hat x_{t|t-1}^{\theta} \\
        &+ (\bmat{L(\tilde \theta) & B(\tilde \theta)} - \bmat{L & B}) \bmat{y_t \\ u_t}. 
    \end{align*}
    Let $\dot r_t \triangleq  \frac{d}{d\tilde \theta} r_{t}^{\tilde \theta}\vert_{\tilde\theta=\theta}$. By the product rule, we can express $\dot r_{t+1}$ recursively as $
       \dot r_{t+1} = \dot A_{cl}^o \hat x_{t|t-1}^{ \theta} + A_{cl}^o\dot r_t+\bmat{\dot L & \dot B}\bmat{ y_t \\ u_t}.$ Note that $y_t$ and $u_t$ also depend on $\theta$, but their derivatives disappear because they are multiplied by $(\bmat{L(\tilde \theta) & B(\tilde \theta)} - \bmat{L & B})\vert_{\tilde \theta = \theta} = 0$. 
       
    The joint state $(\hat x_{t|t-1}^{\theta},\dot r_t)$ then evolves according to
        $\bmat{\hat x_{t+1|t}^{\theta} \\ \dot r_{t+1}} = A_{\mathsf{joint}}^H \bmat{\hat x_{t|t-1}^{\theta} \\ \dot r_t} + B_{\mathsf{joint}}^H e_t,$
    where $e_t \sim \calN(0, \Sigma_e)$ is the innovation sequence. The result follows by writing the Hessian in terms of the stationary covariance of this joint state. 
\end{proof}

The Fisher information can be expressed similarly by writing it as the expected Hessian of the negative log-likelihood. 

\begin{proposition}[Fisher Information Characterization]
\label{prop: info}
Let $\theta$ be a scalar parameter and consider a causal linear exploration policy \eqref{eq: exploration policy} which stabilizes the instance $\theta$. Then the Fisher Information is given by 
\begin{align*}
    \lim_{T} \frac{1}{T} \mathsf{FI}^{\pi_{\exp}}(\theta) &= \norm{\Sigma_{\mathsf{FI}}^{1/2} \bmat{\dot C^\top \\ 0 \\  C^\top} \Sigma_e^{-1/2}}_F^2 \\&+ \frac{1}{2} \trace((\Sigma_e^{-1} \dot \Sigma_e)^2), %
\end{align*}
where 
\begin{align*}
    \Sigma_{\mathsf{FI}} &= \dlyap\paren{A_{\mathsf{joint}}^{\mathsf{FI}}, B_{\mathsf{joint}}^{\mathsf{FI}} \bmat{\Sigma_e\\& I} (B_{\mathsf{joint}}^{\mathsf{FI}})^\top}, \\
    A_{\mathsf{joint}}^{\mathsf{FI}} &\!=\! \bmat{A+BD_{\exp}^y C & B C_{\exp} & 0 \\ B_{\exp} C & A_{\exp} & 0\\ \dot A - L \dot C+\dot B D_{\exp}^y C & \dot B C_{\exp} & A_{\mathsf{cl}}^o},\\
B_{\mathsf{joint}}^{\mathsf{FI}}&\!=\!\bmat{L+BD^y_{\exp} & B D_{\exp}^\eta \\ B_{\exp}  & 0 \\ \dot L+\dot BD^y_{\exp} & \dot B D_{\exp}^\eta}.
\end{align*}
\end{proposition}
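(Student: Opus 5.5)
We compute the Fisher Information directly as the expected second derivative of the negative log likelihood with $N=1$, using the standard identity for Gaussian prediction-error likelihoods. Write $\varepsilon_t(\theta) = y_t - \hat y_t(\theta)$, so that
\[
\mathsf{nll} = \tfrac12\sum_t \varepsilon_t^\top \Sigma_e(\theta)^{-1}\varepsilon_t + \tfrac T2\log|\Sigma_e(\theta)|.
\]
Differentiating twice in the scalar $\theta$ and taking expectations under the true parameter, the terms containing $\ddot{\hat y}_t$ or $\ddot \Sigma_e$ vanish. This is because $\varepsilon_t = e_t$ is the innovation at the true parameter: it is zero mean and independent of the past data, which determines $\hat y_t$ and its derivatives. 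The cross terms $\dot{\hat y}_t^\top(\cdot) e_t$ also vanish, since $\dot{\hat y}_t$ is measurable with respect to $y_{0:t-1}, u_{0:t-1}$. What remains is the classical expression
\[
\mathsf{FI} = \sum_{t} \E\big[\dot{\hat y}_t^\top \Sigma_e^{-1}\dot{\hat y}_t\big] + \tfrac T2 \trace\big((\Sigma_e^{-1}\dot\Sigma_e)^2\big).
\]
The second term uses $\E[e_t e_t^\top]=\Sigma_e$ together with the standard calculation of the Fisher information of a zero-mean Gaussian covariance. Dividing by $T$ already yields the trace term of the proposition, so it remains to identify $\lim_T \frac1T\sum_t \E\|\Sigma_e^{-1/2}\dot{\hat y}_t\|^2$.

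For the first term, write $\hat y_t = C(\theta)\hat x_{t|t-1}(\theta)$, where $\hat x$ follows the predictor recursion \eqref{eq: kf update} driven by the data $(y_t,u_t)$. The data themselves do not depend on the candidate parameter, so only the filter is differentiated. This gives $\dot{\hat y}_t = \dot C \hat x_{t|t-1} + C \dot x_t$ with
\[
\dot x_{t+1} = \dot A_{\mathsf{cl}}^o\hat x_{t|t-1} + A_{\mathsf{cl}}^o \dot x_t + \dot L y_t + \dot B u_t,
\]
exactly as in the proof of \Cref{prop: hessian}. Next we express the closed-loop data generated by the exploration policy \eqref{eq: exploration policy} in innovations form, using $y_t = C\hat x_{t|t-1} + e_t$. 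The predictor then evolves as
\[
\hat x_{t+1|t} = (A + BD^y_{\exp}C)\hat x_{t|t-1} + BC_{\exp}x^{\exp}_t + (L+BD^y_{\exp})e_t + BD^\eta_{\exp}\eta_t,
\]
since $A_{\mathsf{cl}}^o + LC = A$. The controller state evolves as $x^{\exp}_{t+1} = A_{\exp}x^{\exp}_t + B_{\exp}C\hat x_{t|t-1} + B_{\exp}e_t$. Substituting $y_t,u_t$ into the $\dot x$ recursion gives
\[
\dot x_{t+1} = (\dot A - L\dot C - \dot L C + \dot L C + \dot B D^y_{\exp} C)\hat x_{t|t-1} + \dot B C_{\exp} x^{\exp}_t + A^o_{\mathsf{cl}}\dot x_t + (\dot L + \dot B D^y_{\exp})e_t + \dot B D^\eta_{\exp}\eta_t.
\]
The $\dot L C$ contributions cancel, which recovers the third block row of $A^{\mathsf{FI}}_{\mathsf{joint}}$ and $B^{\mathsf{FI}}_{\mathsf{joint}}$. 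Thus $z_t = (\hat x_{t|t-1}, x^{\exp}_t, \dot x_t)$ is a linear system driven by the white noise $(e_t,\eta_t)$, which has covariance $\mathrm{blkdiag}(\Sigma_e, I)$. We also have $\dot{\hat y}_t = [\dot C\; 0\; C] z_t$.

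Finally, we pass to the limit. The system matrix $A^{\mathsf{FI}}_{\mathsf{joint}}$ is block lower triangular. Its upper-left $2\times 2$ block is the closed loop of the exploration policy with the innovations model, which is stable because the policy stabilizes the instance. Its lower-right block $A_{\mathsf{cl}}^o$ is stable by detectability. Hence $\E[z_tz_t^\top]\to \Sigma_{\mathsf{FI}}$ geometrically fast, and the Cesàro average converges to $\trace(\Sigma_e^{-1/2}[\dot C\;0\;C]\Sigma_{\mathsf{FI}}[\dot C\;0\;C]^\top\Sigma_e^{-1/2})$, which equals the stated Frobenius norm. The main obstacle is justifying the cancellation of the expected second-order and cross terms. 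This requires $e_t$ to be independent of all past data, including the probing noise, which holds because $\eta_t$ is independent of the noise and the policy is causal. It also requires the initialization $\Sigma_0 = \Sigma$ fixed earlier in the paper, so that the innovations are exactly i.i.d.
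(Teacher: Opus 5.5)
Your proposal is correct and follows the paper's route. You write the Fisher Information as the expected Hessian of the prediction-error likelihood, so that $\sum_t \mathbf{E}[\dot{\hat y}_t^\top\Sigma_e^{-1}\dot{\hat y}_t]$ plus the Gaussian covariance term remains. You then stack the predictor, the exploration-policy state and the predictor derivative into a linear system driven by $(e_t,\eta_t)$, and pass to its stationary covariance. Your explicit block-row derivation, including the $\dot L C$ cancellation, and the block-triangular stability argument fill in details the paper only sketches. One small slip in justification: the $\ddot\Sigma_e$ terms do not vanish because $e_t$ is zero mean. They drop out because $\mathbf{E}[e_t^\top\Sigma_e^{-1}\ddot\Sigma_e\Sigma_e^{-1}e_t]=\trace(\Sigma_e^{-1}\ddot\Sigma_e)$ cancels the matching term from the log-determinant, and your appeal to the standard Gaussian covariance calculation already covers this.
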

\begin{proof}
    The Fisher information is the expected Hessian of the negative log-likelihood: 
    \begin{align*}
        &\mathsf{FI}^{\pi_{\exp}}(\theta) \\
        &= \E\brac{\nabla_{\theta}^2 \paren{\frac{1}{2}\sum_{t=1}^T \norm{y_t - \hat y_t}_{\Sigma_e^{-1}}^2 + \frac{T}{2} \log\abs{\Sigma_e}}}\\
        &= \sum_{t=1}^T\E \brac{D_\theta \hat y_t^\top \Sigma_e^{-1} D_{\theta} \hat y_t} \!+\! \frac{T}{2} \trace((\Sigma_e^{-1} \dot \Sigma_e)^2),
    \end{align*}
    where $\hat y_t = C \hat x_{t\vert t-1}^\theta$. Then by the product rule, $D_{\theta} \hat y_t = \dot C \hat x_{t\vert t-1}^\theta + C \frac{d}{d\theta} \hat x_{t\vert t-1}^\theta$. Taking the derivative of the Kalman filter state update, we arrive at a system with a joint state involving the Kalman filter estimate, the state of the exploration policy, and the derivative of the Kalman filter estimate with respect to $\theta$ (derived analogously to the proof of \Cref{prop: hessian}). This joint state $\xi_t$ evolves according to $\xi_{t+1} = A_{\mathsf{joint}}^{\mathsf{FI}} \xi_t + B_{\mathsf{joint}}^{\mathsf{FI}} \bmat{e_t \\ \eta_t}$, where $e_t \sim \calN(0, \Sigma_e)$ and $\eta_t \sim \calN(0, I)$ are independent across time and from each other. The Fisher Information matrix can be expressed asymptotically as $T\to\infty$ in terms of the stationary covariance of this joint state, resulting in the given characterization.
\end{proof}

For offline identification, the exploration policy need not have any relation to the optimal policy. However, if one considers playing an exploration policy which matches the optimal policy, and with $D_{\exp}^\eta$ potentially nonzero to inject a probing signal, then the structure of the information matrix simplifies, and closely matches that of the Hessian.\footnote{This exploration policy is approximately the choice of greedy exploitation commonly used in online settings, where one wants to maintain a small control cost throughout the learning process \citep{simchowitz2020naive}. %
}

\begin{corollary}[Fisher Information Under Optimal Strictly Causal Policy]
\label{Cor: info under optimal policy}
Let $\theta$ be a scalar parameter. Suppose that the experiment policy \eqref{eq: exploration policy} has $A_{\exp} = A - LC + B F$, $B_{\exp} = L$, $C_{\exp} = F$, $D_{\exp}^y = 0$. Then the Fisher information is characterized as in \Cref{prop: info} with the substitution 
\begin{align*}
    A_{\mathsf{joint}}^{\mathsf{FI}} &= \bmat{A_{\mathsf{cl}}^c & 0 \\ \dot A - L \dot C + \dot B F & A_{\mathsf{cl}}^o}, \quad B_{\mathsf{joint}}^{\mathsf{FI}}&=\bmat{L & B D_{\exp}^\eta \\ \dot L & \dot B D_{\exp}^\eta},
\end{align*}
and with $\bmat{\dot C & C}$ replacing the matrix $\bmat{\dot C & 0 & C}$. 
\end{corollary}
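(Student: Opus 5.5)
The plan is to specialize \Cref{prop: info} directly. Substituting $A_{\exp}=A-LC+BF$, $B_{\exp}=L$, $C_{\exp}=F$, $D^y_{\exp}=0$ into $A_{\mathsf{joint}}^{\mathsf{FI}}$ and $B_{\mathsf{joint}}^{\mathsf{FI}}$ gives a three-block state $(x_t, x^{\exp}_t, \dot{\hat x}_t)$. Here $x_t$ is the true state under the exploration policy. The first two block rows are then
\begin{align*}
\bmat{A & BF \\ LC & A-LC+BF}, \qquad \bmat{L & BD^\eta_{\exp} \\ L & 0}.
\end{align*}
The reduction to two blocks should come from recognizing that, with these gains, the controller state $x^{\exp}_t$ coincides with the Kalman predictor $\hat x^\theta_{t|t-1}$. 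The controller recursion $x^{\exp}_{t+1}=(A-LC+BF)x^{\exp}_t+Ly_t$ is the Kalman update \eqref{eq: kf update} evaluated at $u_t=Fx^{\exp}_t$. The actual input is $u_t=Fx^{\exp}_t+D^\eta_{\exp}\eta_t$, and the predictor also absorbs the known term $BD^\eta_{\exp}\eta_t$. I will therefore define the predictor state as driven by the true $u_t$ and verify the identification, with the $BD^\eta_{\exp}\eta_t$ term appearing in its input matrix.

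I then express everything in innovation coordinates. Since $y_t=C\hat x_{t|t-1}+e_t$ with $e_t$ i.i.d.\ $\calN(0,\Sigma_e)$ independent of $\eta_t$, the predictor evolves as
\begin{align*}
\hat x_{t+1|t}=(A-LC)\hat x_{t|t-1}+L(C\hat x_{t|t-1}+e_t)+B(F\hat x_{t|t-1}+D^\eta_{\exp}\eta_t)=A_{\mathsf{cl}}^c\hat x_{t|t-1}+Le_t+BD^\eta_{\exp}\eta_t.
\end{align*}
Differentiating the Kalman update as in the proof of \Cref{prop: hessian} gives the derivative state
\begin{align*}
\dot{\hat x}_{t+1}=(\dot A-L\dot C)\hat x_{t|t-1}+A_{\mathsf{cl}}^o\dot{\hat x}_t+\dot L y_t+\dot B u_t.
\end{align*}
Substituting $y_t=C\hat x_{t|t-1}+e_t$ and $u_t=F\hat x_{t|t-1}+D^\eta_{\exp}\eta_t$, the contributions $\dot L C\hat x_{t|t-1}$ and $\dot A_{\mathsf{cl}}^o\hat x_{t|t-1}=(\dot A-\dot LC-L\dot C)\hat x_{t|t-1}$ combine. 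The coefficient of $\hat x_{t|t-1}$ becomes $\dot A-L\dot C+\dot BF$, and the coefficient of $(e_t,\eta_t)$ becomes $\bmat{\dot L & \dot BD^\eta_{\exp}}$.

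This yields exactly the stated $A_{\mathsf{joint}}^{\mathsf{FI}}$ and $B_{\mathsf{joint}}^{\mathsf{FI}}$ acting on the state $(\hat x_{t|t-1},\dot{\hat x}_t)$. In the proof of \Cref{prop: info}, the output derivative is $D_\theta\hat y_t=\dot C\hat x_{t|t-1}+C\dot{\hat x}_t$. In the reduced coordinates the readout matrix is therefore $\bmat{\dot C & C}$ rather than $\bmat{\dot C & 0 & C}$. The log-determinant term $\frac12\trace((\Sigma_e^{-1}\dot\Sigma_e)^2)$ is unaffected.

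The main obstacle is bookkeeping rather than substance. \Cref{prop: info} is phrased with the true state as the first block, while the corollary uses the predictor and innovations. I would justify the change of coordinates explicitly through the similarity map $(x,x^{\exp},\dot{\hat x})\mapsto(x^{\exp},\dot{\hat x})$ together with the innovations representation. This works because only $\hat x_{t|t-1}$ and $\dot{\hat x}_t$ enter $D_\theta\hat y_t$, so the stationary second moment of the reduced state is all that is needed. Stability of $A_{\mathsf{cl}}^c$ and $A_{\mathsf{cl}}^o$ guarantees the Lyapunov solution exists, so the limit as $T\to\infty$ is well defined.
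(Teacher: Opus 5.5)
Your route is the one the paper intends: the paper gives no argument beyond substituting into \Cref{prop: info} and collapsing a redundant block. Your derivative-state algebra is also right once $u_t=F\hat x_{t|t-1}+D^\eta_{\exp}\eta_t$ is granted. That substitution, however, is the step that fails.

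Under \eqref{eq: exploration policy} the controller state is updated by $x^{\exp}_{t+1}=(A-LC+BF)x^{\exp}_t+Ly_t$, which never sees $\eta_t$. The Kalman predictor, by contrast, is driven by the applied input $u_t=Fx^{\exp}_t+D^\eta_{\exp}\eta_t$. Setting $\delta_t\triangleq\hat x_{t|t-1}-x^{\exp}_t$ gives $\delta_{t+1}=A^o_{\mathsf{cl}}\delta_t+BD^\eta_{\exp}\eta_t$. So the identification you say you will verify is false whenever $BD^\eta_{\exp}\neq 0$. In coordinates $(\hat x_{t|t-1},\delta_t,\dot{\hat x}_t)$ the three-block system of \Cref{prop: info} becomes
\begin{align*}
\hat x_{t+1|t}&=A^c_{\mathsf{cl}}\hat x_{t|t-1}-BF\delta_t+Le_t+BD^\eta_{\exp}\eta_t,\\
\dot{\hat x}_{t+1}&=(\dot A-L\dot C+\dot BF)\hat x_{t|t-1}-\dot BF\delta_t+A^o_{\mathsf{cl}}\dot{\hat x}_t+\dot Le_t+\dot BD^\eta_{\exp}\eta_t.
\end{align*}
The extra $-BF\delta_t$ and $-\dot BF\delta_t$ terms change the stationary covariance in general.

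Your argument therefore establishes the stated two-block formula when $D^\eta_{\exp}=0$. In that case $\delta_t\equiv 0$, or decays geometrically, which suffices for the $T\to\infty$ limit. For $D^\eta_{\exp}\neq 0$ it holds only for a modified policy whose internal observer is propagated with the applied input, $x^{\exp}_{t+1}=A^o_{\mathsf{cl}}x^{\exp}_t+Ly_t+Bu_t$. That requires an extra $BD^\eta_{\exp}\eta_t$ term in the controller-state update, which is outside the form \eqref{eq: exploration policy}. The stated $B^{\mathsf{FI}}_{\mathsf{joint}}$ implicitly presumes this interpretation, so the statement carries the same hidden assumption. Your proof should state it as a hypothesis rather than claim to derive it.

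Two smaller points:
\begin{itemize}
\item In \Cref{prop: info} the first block is already the Kalman predictor in innovations form (its gain on $e_t$ is $L+BD^y_{\exp}$), not the true state. No passage from true-state to innovation coordinates is needed. Also, $(x,x^{\exp},\dot{\hat x})\mapsto(x^{\exp},\dot{\hat x})$ is a projection, not a similarity, and it is legitimate only because $\delta\equiv 0$.
\item Your first display for $\dot{\hat x}_{t+1}$ should carry $\dot A^o_{\mathsf{cl}}=\dot A-\dot LC-L\dot C$ rather than $\dot A-L\dot C$ before $y_t$ is substituted. Your following sentence does the combination correctly.
\end{itemize}
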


\ifTACmode\else
    \section{Recovering the LQR Setting}
\label{s: lqr setting}
The above characterizations do not immediately recover results from the LQR setting from \citep{lee2023fundamental}, as they assume strictly causal policies. However, the characterizations can also be presented in the causal setting.  
\begin{proposition}[Causal Hessian Characterization]
    Consider the causal setting, and let $\theta$ be a scalar parameter. The Hessian characterization is identical to \Cref{prop: hessian}, with the substitutions:
    \begin{align*}
        \bar L &\triangleq  \Sigma C^\top \Sigma_e^{-1},\,\, I_{\bar LC} \triangleq I - \bar LC\\
         \dot {\bar L} &\triangleq  \paren{I_{\bar LC} \Sigma \dot C^\top + I_{\bar LC} \dot \Sigma  C^\top  - \bar L \dot C \Sigma C^\top} \Sigma_e^{-1} 
        \end{align*}
        and \begin{align*} 
        A_{\mathsf{joint}}^H &\gets \bmat{A+BF & 0 \\ I_{\bar LC}(\dot A + \dot B F) - \bar L \dot C (A+BF) & I_{\bar LC} A} \\
        B_{\mathsf{joint}}^H &\gets \bmat{\bar L \\ \dot {\bar{L}}}
    \end{align*}
\end{proposition}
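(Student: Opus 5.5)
We mirror the proof of \Cref{prop: hessian}, replacing the Kalman predictor by the Kalman filter estimate $\hat x^\theta_{t\vert t} \triangleq \bfE_\theta[x_t \vert y_{0:t}, u_{0:t-1}]$, on which the optimal causal controller acts through $u_t = F(\theta)\hat x^\theta_{t\vert t}$. The filter is written in the form
\begin{align*}
\hat x^\theta_{t\vert t} = I_{\bar L C}\paren{A \hat x^\theta_{t-1\vert t-1} + B u_{t-1}} + \bar L y_t ,
\end{align*}
with $\bar L = \Sigma C^\top \Sigma_e^{-1}$ the filter gain. Its innovation $\bar e_t = y_t - C(A\hat x_{t-1\vert t-1} + B u_{t-1})$ coincides with the predictor innovation $e_t$, since $\hat x_{t\vert t-1} = A\hat x_{t-1\vert t-1} + Bu_{t-1}$; hence it is i.i.d.\ $\calN(0,\Sigma_e)$. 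The performance difference lemma (Theorem 11.2 of \citep{soderstrom2012discrete}) applies verbatim in the causal case:
\begin{align*}
J(K_{\tilde\theta},\theta) - J(K_\theta,\theta) = \limsup_{T\to\infty}\frac1T\bfE^{K_{\tilde\theta}}\brac{\sum_{t=0}^T\norm{F(\tilde\theta)\hat x^{\tilde\theta}_{t\vert t} - F\hat x^\theta_{t\vert t}}^2_\Psi}.
\end{align*}
We then differentiate twice at $\tilde\theta=\theta$. As before, every second-derivative term is multiplied by a quantity that vanishes at $\tilde\theta=\theta$, so the integrand becomes $2\norm{\dot F\hat x^\theta_{t\vert t} + F\dot r_t}_\Psi^2$, where $\dot r_t = \frac{d}{d\tilde\theta}(\hat x^{\tilde\theta}_{t\vert t} - \hat x^\theta_{t\vert t})\vert_{\tilde\theta=\theta}$.

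Next we derive the recursion for $\dot r_t$. Differentiating the filter update with respect to the model parameters inside the estimator, while treating $y_t,u_{t-1}$ as fixed signals, gives
\begin{align*}
\dot r_t = I_{\bar LC} A\,\dot r_{t-1} + I_{\bar LC}(\dot A \hat x_{t-1\vert t-1} + \dot B u_{t-1}) + \dot{\bar L}\paren{y_t - C(A\hat x_{t-1\vert t-1}+Bu_{t-1})} - \bar L\dot C(A\hat x_{t-1\vert t-1} + Bu_{t-1}).
\end{align*}
Dependence of the signals on $\tilde\theta$ drops out for the same reason as in \Cref{prop: hessian}. We then substitute $u_{t-1} = F\hat x_{t-1\vert t-1}$ and note that the bracket multiplying $\dot{\bar L}$ is exactly $e_t$. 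This yields
\begin{align*}
\dot r_t = I_{\bar LC}A\,\dot r_{t-1} + \paren{I_{\bar LC}(\dot A+\dot BF) - \bar L\dot C(A+BF)}\hat x_{t-1\vert t-1} + \dot{\bar L} e_t .
\end{align*}
For the nominal filter, the closed-loop recursion is $\hat x_{t\vert t} = (A+BF)\hat x_{t-1\vert t-1} + \bar L e_t$. The joint state $(\hat x_{t\vert t},\dot r_t)$ is therefore driven by the claimed $A^H_{\mathsf{joint}}$ and $B^H_{\mathsf{joint}} = \bmat{\bar L\\ \dot{\bar L}}$, with white noise $e_t$. Writing the Hessian through the stationary covariance of this state, $H = 2\trace(\Psi\bmat{\dot F & F}\Sigma_H\bmat{\dot F & F}^\top)$, then gives the same formula as before. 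The one notational point is that in this representation the noise enters at the same time index as the state update, rather than one step delayed; this does not affect the stationary Lyapunov covariance.

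It remains to verify the expression for $\dot{\bar L}$. We differentiate $\bar L = \Sigma C^\top\Sigma_e^{-1}$ using $\dot\Sigma_e = \dot C\Sigma C^\top + C\dot\Sigma C^\top + C\Sigma\dot C^\top + \dot\Sigma_v$. After collecting terms this gives
\begin{align*}
\dot{\bar L} = \paren{I_{\bar LC}\Sigma\dot C^\top + I_{\bar LC}\dot\Sigma C^\top - \bar L\dot C\Sigma C^\top - \bar L\dot\Sigma_v}\Sigma_e^{-1}.
\end{align*}
This matches the stated formula when $\dot\Sigma_v = 0$, so we would either assume this, as the statement implicitly does, or append the $-\bar L\dot\Sigma_v\Sigma_e^{-1}$ term.

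\textbf{Main obstacle.} The delicate step is justifying the interchange of the $\limsup$, the expectation and the differentiation, and confirming that the performance difference identity holds with the filtered rather than the predicted estimate in the causal case. We handle this exactly as in \Cref{prop: hessian}: $K_{\tilde\theta}$ stabilizes $\theta$ for $\tilde\theta$ near $\theta$, all quantities are analytic in $\tilde\theta$, and the stationary covariances are finite.
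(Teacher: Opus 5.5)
Your proposal is correct and takes the same approach as the paper, which only says that the proof of \Cref{prop: hessian} carries over with the filtered estimate $\hat x_{t\vert t}$ in place of the predictor. Your explicit recursion for $\dot r_t$, the joint state $(\hat x_{t\vert t},\dot r_t)$ driven by $e_t$, and your derivation of $\dot{\bar L}$ supply the omitted details. You are also right that the stated $\dot{\bar L}$ drops the $-\bar L\dot\Sigma_v\Sigma_e^{-1}$ term, which the paper's strictly causal $\dot L$ does include, so the formula as written holds only when $\Sigma_v$ does not depend on $\theta$.
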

The proof is the same as the one of~\Cref{prop: hessian}. 
Notably, we recover the fully observed setting of \citep{lee2023fundamental} from the above proposition by letting $C=I$ and taking $\Sigma_v \to 0$. Then, the Hessian becomes $H = 2 \trace\paren{\Psi \dot F \dlyap(A+BF, \Sigma_w) \dot F^\top}$. In \citep{lee2023fundamental}, this characterization was used to derive examples of challenging fully observed problems, including an example where the lower bound on the minimax excess cost is exponential in the state dimension. In the sequel, we present examples of challenging problems enabled by characterizations of the partially observed setting.

\fi
\ifTACmode
    \section{Examples of challenging problems}

Using the lower bound of \Cref{thm: excess cost lower bound} along with the characterization of the Hessian in \Cref{prop: hessian} and the Fisher Information in \Cref{prop: info}, we are now able to examine the fundamental difficulty of learning the LQG controller for particular examples. We begin by examining several cases where varying a single parameter can make the problem arbitrarily fragile. The first of these, in \Cref{s: doyle example}, is inspired by Doyle's LQG example without guaranteed margins \cite{doyle1978guaranteed}. Another, in \Cref{s: nmp}, is a non-minimum phase system. We additionally present an example where there is a design parameter of the system that induces a tradeoff between the cost of control and the excess cost of learning to control in \Cref{s: tradeoffs}.

We consider strictly causal policies in all examples. We study learning performance as we vary certain (known) hyper-parameters, denoted by $\sigma,\xi$ or $s$ to distinguish them from the unknown parameter $\theta$. Proof details are deferred to the supplementary material.

\else
    \section{Examples of challenging problems}

Using the lower bound of \Cref{thm: excess cost lower bound} along with the characterization of the Hessian in \Cref{prop: hessian} and the Fisher Information in \Cref{prop: info}, we are now able to examine the fundamental difficulty of learning the LQG controller for particular examples. We begin by examining several cases where varying a single parameter can make the problem arbitrarily fragile. The first of these, in \Cref{s: doyle example}, is inspired by Doyle's LQG example without guaranteed margins \cite{doyle1978guaranteed}. Two others, in \Cref{s: nmp} and \Cref{s: compounding}, are non-minimum phase systems. We additionally present an example where there is a design parameter of the system that induces a tradeoff between the cost of control and the excess cost of learning to control in \Cref{s: tradeoffs}. Finally, we conclude with a discussion of model-based design with misspecification in \Cref{s: misspecified}. 

We consider strictly causal policies in all examples. We study learning performance as we vary certain (known) hyper-parameters, denoted by $\sigma,\xi$ or $s$ to distinguish them from the unknown parameter $\theta$. Proof details are deferred to the supplementary material.

\fi

\subsection{No margins example}
\label{s: doyle example}

We first consider a discrete time LQG instance inspired by the example from \cite{doyle1978guaranteed}, the first demonstration that a continuous time LQG controller need not be inherently robust. 
Specifically, we consider the LQG instance 
\begin{equation}
\label{eq: doyle example}
\begin{aligned}
    A(\theta) = \bmat{2 & 1 \\ 0 & 2}, B(\theta) = \bmat{0 \\ \theta}, C(\theta)= \bmat{1 & 0}.
\end{aligned}
\end{equation}
with $\Sigma_w(\theta) = Q = \bmat{1 & 1 \\ 1 & 1}$ and $\Sigma_v(\theta) = R= \sigma$, where $\sigma>0$ is a variable hyper-parameter. Suppose that the nominal parameter $\theta^\star=1$. Similarly to~\cite{doyle1978guaranteed}, we investigate the fragility of LQG as parameter $\sigma$ approaches zero.

The control penalty and noise distributions for this example are designed such that both the controller and observer exhibit a near deadbeat response in one direction, resulting in a pole near zero. However, the remaining pole is pushed to the unit circle. Specifically, both the closed-loop state-feedback and the observer error dynamics have an eigenvalue near $1-\sqrt{\sigma}$. 

The matrix $A_{\mathsf{joint}}^H$ defining the Hessian characterization of \Cref{prop: hessian} can be expressed as
\ifTACmode
$A_{\mathsf{joint}}^H=\hat A_{\mathsf{joint}}^H+X$,
\else
\begin{align}
    \label{eq: Ajoint decomp}
    A_{\mathsf{joint}}^H &= \hat A_{\mathsf{joint}}^H + X,
 \end{align}
\fi
where
\begin{align}
    \label{eq: hatAjointH}
    \hat A_{\mathsf{joint}}^H &= \bmat{2  &1 \\ 
    -2 - 2\sqrt{\sigma} & -1 - \sqrt{\sigma} \\
    0 & 0&- 1 - \sqrt{\sigma} & 1 \\ 
    -2 - 2\sqrt{\sigma} & -3 -\sqrt{\sigma} &-2 - 2\sqrt{\sigma} & 2},
\end{align}
and $X = O(\sigma)$.
The Jordan decomposition of $\hat A_{\mathsf{joint}}^H$ is given by $V J V^{-1}$, with
\begin{align}
\label{eq: jordan}
\begin{aligned}
    J &= \bmat{1 - \sqrt{\sigma} & 1 & 0 & 0 \\ 0 & 1- \sqrt{\sigma} &0 & 0\\ 0&0 &0 &1\\ 0 &0 &0 &0}
\end{aligned},
\end{align}
and $V$ is a matrix of generalized eigenvectors. Using this Jordan form, we can compute the Hessian analytically.
\begin{proposition}
    \label{prop: doyle hessian}
    For the LQG instance defined by \eqref{eq: doyle example}, $H(\theta^\star) =  2048 \sigma^{-3/2} + O(\sigma^{-1})$.
\end{proposition}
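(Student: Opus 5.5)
Our approach is to apply \Cref{prop: hessian} directly at $\theta^\star=1$, so we need $H = 2\trace(\Psi \bmat{\dot F & F}\Sigma_H \bmat{\dot F & F}^\top)$ with $\Sigma_H=\dlyap(A_{\mathsf{joint}}^H, B_{\mathsf{joint}}^H\Sigma_e (B_{\mathsf{joint}}^H)^\top)$. The plan has four steps.

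\textbf{Step 1: asymptotics of the Riccati quantities.} Using the structure $Q=\Sigma_w=\mathbf{1}\mathbf{1}^\top$ and $R=\Sigma_v=\sigma$, we solve the two DAREs asymptotically in $\sqrt{\sigma}$. Only $\dot B=\bmat{0 & 1}^\top$ is nonzero, so $\dot A=0$ and $\dot C=0$. The control and filter problems are dual up to the transposition $A\leftrightarrow A^\top$, $B\leftrightarrow C^\top$, so one expansion gives both. We expect
\begin{itemize}
\item $P$ and $\Sigma$ to be $O(\sigma^{-1/2})$ in the scaled direction;
\item $\Psi=B^\top P B+\sigma$ and $\Sigma_e=C\Sigma C^\top+\sigma$ to have explicit leading terms;
\item $F$ and $L$ to match the entries of \eqref{eq: hatAjointH}, for instance $L\approx \bmat{3+\sqrt\sigma & 2+2\sqrt\sigma}^\top$ up to $O(\sigma)$.
\end{itemize}
Since $\dot A=\dot C=0$ and $\dot\Sigma_w=\dot\Sigma_v=0$, we get $\dot\Sigma=0$ and hence $\dot L=0$. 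This means $B_{\mathsf{joint}}^H=\bmat{L^\top & 0}^\top$. We then compute $\dot P$ from its Lyapunov equation and $\dot F$ from the formula of Lemma B.1 of \citet{simchowitz2020naive}, both to leading order in $\sigma$.

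\textbf{Step 2: Lyapunov solution via the Jordan form.} Write $A_{\mathsf{joint}}^H=\hat A_{\mathsf{joint}}^H+X$ as in \eqref{eq: Ajoint decomp}, with $\hat A_{\mathsf{joint}}^H=VJV^{-1}$ from \eqref{eq: jordan}. Then
\[
\Sigma_H=\sum_{k\ge0}(A_{\mathsf{joint}}^H)^k W (A_{\mathsf{joint}}^H)^{k\top}, \qquad W=B_{\mathsf{joint}}^H\Sigma_e (B_{\mathsf{joint}}^H)^\top .
\]
The nilpotent block at $0$ contributes only $O(1)$ terms. The Jordan block at $\lambda=1-\sqrt\sigma$ has powers $J_2^k=\bmat{\lambda^k & k\lambda^{k-1}\\ 0 & \lambda^k}$. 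The resulting geometric sums are $\sum_k\lambda^{2k}\sim \tfrac{1}{2}\sigma^{-1/2}$, $\sum_k k\lambda^{2k}\sim \tfrac14\sigma^{-1}$, and $\sum_k k^2\lambda^{2k}\sim \tfrac14\sigma^{-3/2}$. So the dominant term of $\Sigma_H$ is $\sum_k k^2\lambda^{2k-2}\,v_1 (w_2^\top W w_2) v_1^\top$. Here $v_1$ is the eigenvector and $w_2^\top$ is the row of $V^{-1}$ paired with the generalized eigenvector.

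\textbf{Step 3: the perturbation $X$.} We must control $X=O(\sigma)$, because it shifts the near-unit eigenvalue by a relative amount that could change the leading constant. We will bound the effect by writing the perturbed block eigenvalue as $1-\sqrt\sigma+O(\sigma)$. This changes the sum $\sum_k k^2\lambda^{2k}$ only by a relative $O(\sqrt\sigma)$, i.e.\ an absolute $O(\sigma^{-1})$ error, which is consistent with the claimed remainder. The alternative is to solve the full $4\times4$ Lyapunov equation symbolically and expand the result.

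\textbf{Step 4: assembling the constant.} Substituting into the trace gives $H\approx 2\Psi\,\tfrac14\sigma^{-3/2}\,(w_2^\top W w_2)\,\|\bmat{\dot F & F}v_1\|^2$. The main obstacle is tracking the $\sigma$-scalings of $\Psi$, $\dot F$, $\Sigma_e$ and $V$. Near-cancellations can occur, since $v_1$ may be nearly annihilated by $\bmat{\dot F & F}$ or $w_2$ nearly orthogonal to $L$. The correct power $\sigma^{-3/2}$ and the constant $2048$ emerge only after all these factors are expanded carefully to consistent order. We expect to check this bookkeeping against a numerical evaluation for small $\sigma$.
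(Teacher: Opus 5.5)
Your plan follows the paper's route: Proposition~\ref{prop: hessian}; $\dot L=0$ because $\theta$ enters only through $B$, so $B^H_{\mathsf{joint}}=[L^\top\ 0]^\top$; the Jordan form of $\hat A^H_{\mathsf{joint}}$; and $\sum_k k^2\lambda^{2k-2}\approx\tfrac14\sigma^{-3/2}$. Your leading-order formula for $H$ is also correct. However, Step~3 is not justified as written.

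An $O(\sigma)$ perturbation of a $2\times 2$ Jordan block generically splits the eigenvalue by $O(\sqrt{\sigma})$. That is the same order as the margin $1-\lambda=\sqrt{\sigma}$. So $\|X\|=O(\sigma)$ alone does not give a perturbed eigenvalue $1-\sqrt\sigma+O(\sigma)$, and such a splitting would change the leading constant or even the rate.

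The missing idea is structural. $A^H_{\mathsf{joint}}$, $\hat A^H_{\mathsf{joint}}$ and hence $X$ are block lower triangular. The eigenvector $v_1=[0,0,1,2]^\top$ lies in the $\dot r$-block (last two coordinates), which equals $\mathrm{span}(v_1,v_3)$. Hence $Xv_1$ stays in that block, and the splitting entry $w_2^\top Xv_1$ of $V^{-1}XV$ vanishes identically. Equivalently, the eigenvalues near $1-\sqrt\sigma$ are those of the exact $A^c_{\mathsf{cl}}$ and $A^o_{\mathsf{cl}}$; each is simple and so moves only by $O(\sigma)$. They in fact coincide: your duality remark is exact here, $\Sigma=JPJ$ for the coordinate flip $J$, so $A^o_{\mathsf{cl}}=J(A^c_{\mathsf{cl}})^\top J$. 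You also need the generalized eigenvectors, not only the eigenvalue, to move by a relative $O(\sqrt\sigma)$. The paper handles all of this with the series $\Sigma_H-\hat\Sigma_H=\sum_{i\ge1}\mathcal{L}^i(\hat\Sigma_H)$. It tracks separately the part supported on the $\dot r$-block, the $v_1v_2^\top+v_2v_1^\top$ part, and a remainder, using exactly this block structure.

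Second, you defer Step~4 to bookkeeping and a numerical check. But the constant is the content of the proposition, and so is the $\sigma^{-3/2}$ rate, which needs precisely the non-cancellations you worry about. The bookkeeping is benign, and your expectation $P,\Sigma=O(\sigma^{-1/2})$ is wrong. In fact $P=\mathbf{1}\mathbf{1}^\top+\begin{bmatrix}4&2\\2&1\end{bmatrix}\sqrt\sigma+o(\sqrt\sigma)$ and $\Sigma=JPJ$. Hence
\[
\Psi=1+O(\sqrt\sigma),\quad \Sigma_e=1+O(\sqrt\sigma),\quad F=-[2\ \ 3]+O(\sqrt\sigma),\quad L=[3\ \ 2]^\top+O(\sqrt\sigma).
\]
Since $v_1$ has zero $\hat x$-block, $[\dot F\ \ F]v_1=F[1\ \ 2]^\top=-8+O(\sqrt\sigma)$ regardless of $\dot F$. $\dot F$ only needs to be $O(1)$, since it enters through $v_2$ only in the $O(\sigma^{-1})$ remainder.

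Normalize $(\hat A^H_{\mathsf{joint}}-\lambda I)v_2=v_1$. To leading order the $\hat x$-blocks of $v_2$ and $v_4$ are $[1,-1]^\top$ and $[-\tfrac14,\tfrac12]^\top$, while $v_1,v_3$ vanish there. So $[3,2]^\top=c_2[1,-1]^\top+c_4[-\tfrac14,\tfrac12]^\top$, which gives $c_2=8$. Therefore $H=2\cdot\tfrac14\sigma^{-3/2}\cdot 8^2\cdot 8^2=2048\,\sigma^{-3/2}$. Every neglected contribution is $O(\sigma^{-1})$: the $\sum_k k\lambda^{2k}$ cross terms, the $O(\sqrt\sigma)$ corrections to $F,L,\Psi,\Sigma_e,V$, and the effect of $X$.
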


The $\sigma^{-3/2}$ rate of growth arises due to the Jordan block $\bmat{1 -\sqrt{\sigma} & 1 \\ 0 & 1-\sqrt{\sigma}}$. This rate of growth is faster than the $\sigma^{-1/2}$ rate of growth for a Lyapunov function defined solely in terms of the closed loop matrices under the controller or the observer, $\dlyap(A_{\mathsf{cl}}^c
, Q)$ or $\dlyap(A_{\mathsf{cl}}^o, Q)$, in isolation. In particular, the appearance of the Jordan block arises because the uncertainty in the input channel amplifies the sensitivity of the observer and the controller in the closed-loop response.

We have demonstrated that the Hessian for the example \eqref{eq: doyle example} scales at a rate $\sigma^{-3/2}$. The other term determining the excess cost of the learning problem is the Fisher Information matrix under the experiment policy used to collect the identification dataset.
Consider using a fixed stabilizing policy defined by $A_{\exp}, B_{\exp}, C_{\exp}$ with $D^y_{\exp}=0,D^{\eta}_{\exp}=0$. %
Then the Fisher information converges to a positive constant as $\sigma \to 0$. Consequently, the excess cost due to the use of learned dynamics scales as $\trace(H(\theta^\star) \mathsf{FI}^{\pi_{\exp}}(\theta^\star)^{-1}) \propto \frac{1}{N \sigma^{3/2}}$. Notably, this occurs despite the fact that the optimal cost of control converges to a constant $c$: $J(K_{\theta^\star}, \theta^\star) \to c$, as $\sigma\to 0$.

The example highlights the potential fragility of the LQG problem under control penalties that encourage very aggressive controllers. Here, such controllers lead to nearly marginally stable closed-loop systems that are extremely fragile. The behavior may be counterintuitive: the problem becomes very sample-inefficient when the sensor noise level $\Sigma_v$ is \emph{decreased}. Performance can be recovered by adding ``ficticious noise'' so the controller has robustness to uncertainty \citep{doyle2003robustness}. Such a fix would sacrifice asymptotic performance on the given LQG objective as data becomes abundant.

The example is contrived, without immediate motivating physical applications. However, similar sensitivities are observed more generally in systems which have a high gain in some directions, but not others, i.e., systems with large condition numbers \citep{skogestad1988robust}. The fragility that arises due to the LQG objective motivates alternative objectives, such as $\calH_\infty$ optimal control \citep{bacsar2008h}, or alternative design procedures, such as Internal Model Control \citep{morari1989robust} which can enable more transparent tuning of controller parameters than the cost and noise matrices, $Q$, $R$, $\Sigma_w$ and $\Sigma_v$, of the LQG objective.

\ifTACmode\else
    
Two other variants of the example \eqref{eq: doyle example} which exhibit similar fragility are discussed below. 

\subsubsection{Stable Open-Loop}
  There exist stable variants of this example that can exhibit similar sensitivity. Consider the LQG instance 
    \begin{align*}
        A(\theta) &= \bmat{-0.5 & 0.5 \\ 0 & -0.5}, B(\theta) = \bmat{\theta \\ \theta}, C(\theta) = \bmat{1 & 1}, \\
        Q &= \bmat{1 & 0 \\ 0 & 0}, \Sigma_w(\theta) = \bmat{0 & 0 \\ 0 & 1}, R = \Sigma_v(\theta) = \sigma,
    \end{align*}
    with the nominal parameter $\theta^\star=1$.
    Similar to the LQG instance \eqref{eq: doyle example}, the optimal controller and observer force one eigenvalue of the closed-loop transition matrix to zero, while the other approaches $1-\sqrt{\sigma}$. As with the LQG instance \eqref{eq: doyle example}, the stable variant can become arbitrarily fragile as $\sigma\to 0$. We numerically compute the Hessian of \Cref{prop: hessian} for this stable variant to be: $H(\theta^\star) = 2.53 \sigma^{-3/2} + O(\sigma^{-1})$ obtaining a rate similar to \eqref{eq: doyle example}. 

\subsubsection{Fully-Observed}
    The example constructed by \citet{doyle1978guaranteed} was used to demonstrate that the continuous time LQG controller lacks the gauranteed margins possessed by continuous time LQR.  However, in discrete time there are no guaranteed uniform margins even in the fully observed setting. Consider the LQR controller defined by the system $(A,B,Q,R)$ from \eqref{eq: doyle example}. For this example, it holds that $ \rho(A+ (1+m) \times BF) > 1$ for $m \geq 1 + 2\sqrt{\sigma}$, indicating the upper gain margin is less than  $2\sqrt{\sigma}$, which can become arbitrarily small as $\sigma\to 0$. This sensitivity can be reflected in the lower bound of \Cref{thm: excess cost lower bound} by computing the Hessian and Fisher information for this fully observed example. %
    Following the same derivations we used to compute these objects for the LQG instance \eqref{eq: doyle example}, we numerically compute for this fully observed example that the Hessian is given by $H(\theta^\star) = \frac{9}{\sqrt{\sigma}} + O(\sigma^{-1})$. %

\fi

\subsection{Non-minimum phase dynamics example}
\label{s: nmp}
We now consider an example of a non-minimum phase system. Such systems often exhibit behavior known as inverse response, in which the response to a step input initially moves away from the eventual setting point. This makes non-minimum phase systems notoriously hard to control, with well established fundamental limits characterizing the fragility of any controller to uncertainty in the dynamics \citep{doyle2013feedback}.

We examine the impact of this behavior on the following instance of learning the LQG: 
\begin{equation}
A(\theta)=\begin{bmatrix}
1& 1\\\theta&1
\end{bmatrix},\,B(\theta)=\begin{bmatrix}
0\\1
\end{bmatrix},\,C(\theta)=\begin{bmatrix}
-\nmpparam &1
\end{bmatrix},
\label{eq: nmp example}
\end{equation}
with $\Sigma_w(\theta)=I$, $\Sigma_v(\theta)=1$, $Q=I$, $R=1$, and nominal parameter $\theta^{\star}=0$. Let $\nmpparam>0$ be a variable hyper-parameter.
The nominal system has a non-minimum phase zero at $1+\nmpparam$ and two poles at $1$. We analytically investigate what happens as the zero approaches the poles, that is, $\xi$ goes to $0$. 
\begin{lemma}[Difficulty of non-minimum phase example]
    \label{lem: nmp example}
    Consider the LQG instance defined by \eqref{eq: nmp example} and let the exploration policy be equal to the optimal strictly causal LQG policy. Then,
    \begin{align*}
    H(\theta^\star)=\Omega(\nmpparam^{-7}),\,
        \lim_{T\rightarrow \infty}\tfrac{1}{T} \mathsf{FI}^{\pi_{\exp}}(\theta^\star)=O(\nmpparam^{-3}) 
    \end{align*}
    and $J^\star=\min_{K\in\calK}J(K,\theta^\star)=O(\nmpparam^{-3})$.
\end{lemma}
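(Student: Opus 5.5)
Throughout we work at the nominal point $\theta^\star=0$, where
\[
A=\bmat{1&1\\0&1},\quad \dot A=\bmat{0&0\\1&0},\quad \dot B=0,\quad \dot C=0,\quad \dot\Sigma_w=0,\quad \dot\Sigma_v=0 .
\]
The plan has four steps. First obtain asymptotic expansions in $\nmpparam$ of the LQR and Kalman quantities $P,F,\Psi,\Sigma,L,\Sigma_e$. Then the gain derivatives $\dot P,\dot F,\dot\Sigma,\dot L$ from the formulas of Lemma B.1 quoted before \Cref{prop: hessian}. Then the joint covariances of \Cref{prop: hessian} and \Cref{Cor: info under optimal policy}. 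Finally the optimal cost $J^\star$ via the standard formula $J^\star=\trace(QΣ)+\trace(P L\Sigma_e L^\top)$, or equivalently $\trace(P\Sigma_w)+\trace(F^\top\Psi F\,\Sigma)$-type expressions.

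\textbf{Step 1: the LQR side.} It does not involve $C$, so $P$, $F$, $\Psi$ and $A_{\mathsf{cl}}^c$ are $O(1)$ and independent of $\nmpparam$. All $\nmpparam$-dependence therefore enters through the filter.

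\textbf{Step 2: the filter.} Write $C=\bmat{-\nmpparam&1}$. As $\nmpparam\to 0$ the first state becomes unobservable through the direct channel and is only seen through the double integrator. The non-minimum-phase zero at $1+\nmpparam$ approaches the poles at $1$. This forces the predictor to be poorly conditioned, and I expect
\[
\Sigma_{11}\sim\nmpparam^{-a}
\]
for some power $a$. To find the exponents, I would use the change of coordinates $z=x_2-\nmpparam x_1$, so that $y=z+v$. I would then posit the ansatz $\Sigma=\mathrm{diag}(\nmpparam^{-a},\nmpparam^{-b})\,\tilde\Sigma\,\mathrm{diag}(\cdot)$ and balance the leading terms of the filter DARE. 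The claimed $J^\star=O(\nmpparam^{-3})$ suggests $\trace(Q\Sigma)\sim\nmpparam^{-3}$, i.e.\ $\Sigma_{11}=\Theta(\nmpparam^{-3})$.

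Rather than solving exactly, it may suffice to sandwich $\Sigma$ with Riccati monotonicity. A simpler reference filter, for example one with $\Sigma_w$ replaced by a rank-one noise, gives a lower bound. Any explicit stabilizing observer gain, evaluated through its Lyapunov cost, gives an upper bound. With $\Sigma$ in hand, $L$, $\Sigma_e=\Theta(\nmpparam^{-1})$ or similar, and $A_{\mathsf{cl}}^o$ follow. I expect $A_{\mathsf{cl}}^o$ to have eigenvalues at distance $\Theta(\nmpparam)$ from the unit circle, since the filter cannot move the pole near the slow unstable zero.

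\textbf{Step 3: derivatives and Hessian.} Compute
\[
\dot\Sigma=\dlyap\paren{A_{\mathsf{cl}}^o,\ \mathsf{sym}(A_{\mathsf{cl}}^o\Sigma\dot A)},
\]
then $\dot L$ and $\dot F$. Because $A_{\mathsf{cl}}^o$ is nearly marginal, the Lyapunov operator amplifies by roughly $\nmpparam^{-1}$ to $\nmpparam^{-2}$ per application.

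For the Hessian, $\Sigma_H$ is block lower-triangular-driven. The $\dot r$ block is
\[
\dlyap\paren{A_{\mathsf{cl}}^o,\ \cdot\,}\text{ applied to } \dot A\,\Sigma_{\hat x}\,\dot A^\top + \dot L\Sigma_e\dot L^\top \text{ plus cross terms}.
\]
Since $H\ge 2\trace(\Psi F\Sigma_H^{22}F^\top)$ up to cross terms, a lower bound needs only one dominant term. I would pick the contribution driven by $\dot L\Sigma_e\dot L^\top$ propagated through $A_{\mathsf{cl}}^o$, show it is $\Theta(\nmpparam^{-7})$, and show the cross terms are of lower order. Alternatively, I could lower bound the full quadratic form by evaluating it along the slow eigenvector of $A_{\mathsf{cl}}^o$.

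\textbf{Step 4: Fisher information.} By \Cref{Cor: info under optimal policy} with $\dot C=0$,
\[
\lim_T\tfrac1T\mathsf{FI}=\trace(C\,\Sigma_{\mathsf{FI}}^{22}\,C^\top\Sigma_e^{-1})+\tfrac12\trace\bigl((\Sigma_e^{-1}\dot\Sigma_e)^2\bigr).
\]
An upper bound follows from $\|C\|=O(1)$, the already-computed size of $\Sigma_{\mathsf{FI}}^{22}$ (which for $D_{\exp}^\eta=0$ coincides with $\Sigma_H$'s block), and $\Sigma_e^{-1}$. The $\Sigma_e^{-1}$ factor absorbs powers of $\nmpparam$, yielding $O(\nmpparam^{-3})$. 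The same ingredients give $J^\star=O(\nmpparam^{-3})$.

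\textbf{Main obstacle.} The main obstacle is Step 2: getting the exact exponents of the Kalman Riccati solution and the conditioning of $A_{\mathsf{cl}}^o$ as the zero merges with the double pole. The exponent bookkeeping for $H$ relies on it, and it must be sharp enough to separate $\nmpparam^{-7}$ from $\nmpparam^{-3}$. I would first try a scaled ansatz $x_1\mapsto\nmpparam^{k}x_1$, time-scale $\nmpparam$, under which the Riccati equation reduces to a $\nmpparam$-independent limiting continuous-time Riccati equation. I would verify it symbolically and fall back on the monotone sandwich bounds if the expansion becomes unwieldy.
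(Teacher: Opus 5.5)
Your plan for $H$ mirrors the paper: the $\dot L\Sigma_e\dot L^\top$ forcing dominates $\Sigma_{H,2}$, the cross terms through $\Sigma_{H,12}=O(\xi^{-4})$ are lower order, and $F$ has a nonzero first entry. The Fisher information bound in Step 4, however, does not go through, and the cause is already visible in Step 2.

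\textbf{Step 4.} The innovation variance is not $\Theta(\xi^{-1})$. With $\Sigma_{11}\approx 2\xi^{-3}$, $\Sigma_{12}\approx 2\xi^{-2}$ and $\Sigma_{22}\approx 2\xi^{-1}$, the $\xi^{-1}$ terms in $\Sigma_e=\xi^2\Sigma_{11}-2\xi\Sigma_{12}+\Sigma_{22}+1$ cancel, and $\Sigma_e\to r^2$ with $r=(1+\sqrt5)/2$. So $\Sigma_e^{-1}$ absorbs nothing. Since $\Sigma_{\mathsf{FI}}^{22}=\Sigma_{H,2}$ is exactly the block you need to be of order $\xi^{-7}$ for the Hessian, your estimate via $\|C\|=O(1)$ gives only $O(\xi^{-7})$, or $O(\xi^{-6})$ with your guessed $\Sigma_e$.

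The $O(\xi^{-3})$ rate rests on a cancellation that no norm bound can detect:
\begin{itemize}
\item $\dot L\approx -2r^{-1}\xi^{-3}(1,\xi)^\top$ lies in $\ker C$ to leading order, so the leading part $2\xi^{-7}(1,\xi)^\top(1,\xi)$ of $\Sigma_{H,2}$ is annihilated exactly by $C=(-\xi,1)$.
\item Even that only yields $O(\xi^{-4})$. The paper goes one order further, writing $C\Sigma_{H,2}C^\top=\xi^{-4}(d_3-d_2)$ with $d_2=\xi^5(\Sigma_{H,2})_{12}-\xi^6(\Sigma_{H,2})_{11}$ and $d_3=\xi^4(\Sigma_{H,2})_{22}-\xi^5(\Sigma_{H,2})_{12}$, and showing that $d_2,d_3$ share the limit $-2r$.
\item You also leave $\tfrac12(\dot\Sigma_e/\Sigma_e)^2$ uncontrolled. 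Since $\|\dot\Sigma\|=\Theta(\xi^{-5})$, a norm bound gives $O(\xi^{-10})$. The same mechanism instead gives $\dot\Sigma_e=C\dot\Sigma C^\top=\xi^{-2}(q_3-q_2)=O(\xi^{-1})$, because $q_2,q_3\to -2r$.
\end{itemize}

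\textbf{Step 2.} The fallback is therefore not enough. An upper bound on $\Sigma$ from a suboptimal observer does suffice for $J^\star$: since $L\Sigma_eL^\top\preceq A\Sigma A^\top$, one has $J^\star=O(\|\Sigma\|)$. But such bounds say nothing about $\dot\Sigma$, $\dot L$, $\Sigma_e$, or the first-order coefficients. Those coefficients drive the cancellations above, and also the nonvanishing of the $\xi^{-7}$ coefficient that $H=\Omega(\xi^{-7})$ requires.

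What is needed is an expansion one order beyond leading, in coordinates adapted to $C$. The paper rewrites the filter DARE in $\tilde p_1=\xi^3\Sigma_{11}$, $\tilde p=\xi^2\Sigma_{12}$, $\tilde p_2=\xi\Sigma_{22}$ and $\omega=\Sigma_{22}-\xi\Sigma_{12}=(\Sigma C^\top)_2$, where $\omega^2=\Sigma_e$. At $\xi=0$ this system has the root $(2,2,2,-r)$ with nonsingular Jacobian, so the analytic implicit function theorem applies. It then solves the Lyapunov equations for $\dot\Sigma$ and $\Sigma_{H,2}$ in the analogous variables $q_i$, $d_i$. In those variables the limiting $3\times 3$ system is invertible, even though $A^o_{\mathsf{cl}}$ has an eigenvalue tending to $1$.

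Naive balancing of leading terms, or a slow-time-scale continuous-time limit, misses exactly these cancellations. Your guess $\Sigma_e=\Theta(\xi^{-1})$ is a symptom. The true limit $r^2$ involves the fast observer pole at $2-r$, which a slow-scale limit discards.
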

 Note that as the zero approaches the pole, the system becomes very hard to observe. This, in turn, leads to a very large Hessian that scales with $\nmpparam^{-7}$ and a very large nominal cost that scales with $\nmpparam^{-3}$. The Fisher information is also very large and of the order of $\nmpparam^{-3}$. However, the Fisher information is not large enough to counteract the Hessian. 
 As a result, the lower bound scales as %
    \begin{align*}
        \trace(H(\theta^\star) \mathsf{FI}^{\pi_{\exp}}(\theta^\star)^{-1}) &\propto \frac{\nmpparam^{-4}}{N}.
    \end{align*}
 This shows that the excess cost %
 can be very large and even grow unbounded as $\xi \to 0$. 
 However, this is somewhat unsurprising, as observability is lost in this case. It is more interesting to note that the excess cost can be unbounded \emph{relative to  the nominal cost}:
 \begin{align*}
     \frac{\trace(H(\theta^\star) \mathsf{FI}^{\pi_{\exp}}(\theta^\star)^{-1})}{J^\star} & \propto \frac{\nmpparam^{-1}}{N}.
 \end{align*}
 Thus,  the cost of learning dominates the cost of control.

\ifTACmode\else
    
\subsection{Compounding impact of high sensitivity and low identifiability}
\label{s: compounding}

The previous two examples contained cases where varying a hyper-parameter of the system increased the sensitivity of control synthesis to error in the estimation of the parameter describing the dynamics, as characterized by the Hessian matrix, $H(\theta^\star)$. In these cases, the corresponding Fisher Information either stayed constant or increased with the design parameter. The challenge of learning enabled-control thus arose due to a sensitivity that grew faster than the Fisher Information. We now examine an instance for learning the LQG in which a design parameter increases the sensitivity of the synthesis problem, while also \emph{decreasing} the Fisher Information. This leads to a compounding impact of a very sensitive synthesis problem with dynamics which are challenging to identify. 

We consider the instance of learning the LQG defined by 
\begin{equation}
\label{eq: compouding}
\begin{aligned}
    A(\theta) = \bmat{\theta& 1 \\ 0 & 1}, B(\theta) =\bmat{0 \\ 1}, C(\theta) = \bmat{1 & s},
\end{aligned}
\end{equation}
with $\Sigma_w(\theta) = \bmat{1 & 1 \\ 1 & 1}$, $\Sigma_v(\theta) = 1$, $Q = I$, $R = 1$, and the nominal parameter $\theta^\star=1$. Let $s>0$ be a hyper-parameter.

We study experiment policies that are equal to the optimal causal LQG policy, with varying levels of noise injected, with $D^{\eta}_{\exp} = \eta I$. In \Cref{tab: compounding}, we numerically compute the asymptotic behavior of the optimal control cost, the Hessian, and the Fisher Information with different levels of probing noise as the  hyper-parameter $s$ becomes large; 
we rely upon numerical evaluation of the characterizations in \Cref{prop: hessian} and \Cref{prop: info}. 

\begin{table}[h]
\centering
\begin{tabular}{|c|c|c|c|c|c|}
\hline
$J^\star$ & $H$ & $\mathsf{FI}^{\pi_{\exp}^0}$ & $\mathsf{FI}^{\pi_{\exp}^1}$  & $\mathsf{FI}^{\pi_{\exp}^{10}}$ & $\mathsf{FI}^{\pi_{\exp}^{100}}$  \\ \hline
$28$ & $86s$   & $18.6s^{-2}$  & $20.4s^{-2}$    & $194s^{-2}$    & $17600s^{-2}$    \\ \hline
\end{tabular}
\caption{Asymptotic behavior of the optimal cost $J^\star= J(K_{\theta^\star}, \theta^\star)$, Hessian $H = H(\theta^\star)$, and Fisher Information $\mathsf{FI}^{\pi_{\exp}} = \mathsf{FI}^{\pi_{\exp}}(\theta^\star)$ for the instance \eqref{eq: compouding} as the design parameter $s \to \infty$. The exploration policy $\pi_{\exp}^\eta$ denotes the optimal LQG policy with probing noise injected as $D^\eta_{\exp}=\eta I$. } 
\label{tab: compounding}
\end{table}

From \Cref{tab: compounding}, we observe that as $s$ increases, the Hessian grows unbounded (with a rate if $O(s)$), while at the same time the Fisher Information goes to zero (with a rate of $O(s^{-2})$) independently of the probing noise level. This occurs despite the convergence of the optimal control cost to a constant.
Increasing the probing noise mitigates but does not eliminate the identifiability issue, that is, the Fisher Information still goes to zero.

\fi
\ifTACmode
    
\subsection{Tradeoffs}
\label{s: tradeoffs}
The above examples illustrated situations where varying a hyper-parameter towards certain values can lead to challenging learning problems. In this example, we change perspective and treat the hyper-parameter as a design variable. We then investigate how to optimally select this design variable to balance the cost of learning against the cost of control.
\begin{figure*}
    \centering
    \begin{subfigure}[t]{0.24\textwidth}
        \centering
        \includegraphics[width=\textwidth]{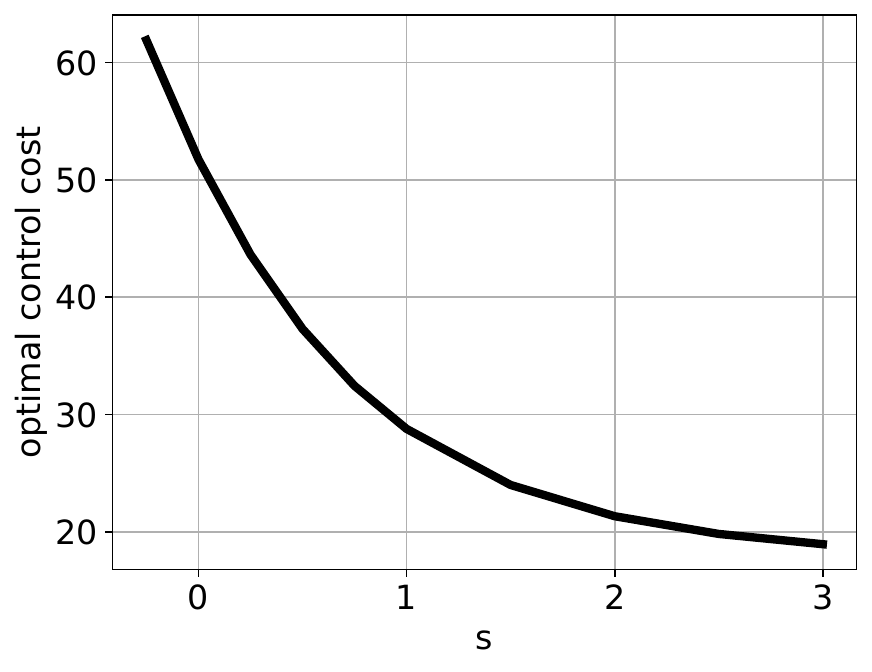}
        \caption{Optimal control cost}
        \label{fig:tradeoffs_cost}
    \end{subfigure}
    \hfill
    \begin{subfigure}[t]{0.24\textwidth}
        \centering
        \includegraphics[width=\textwidth]{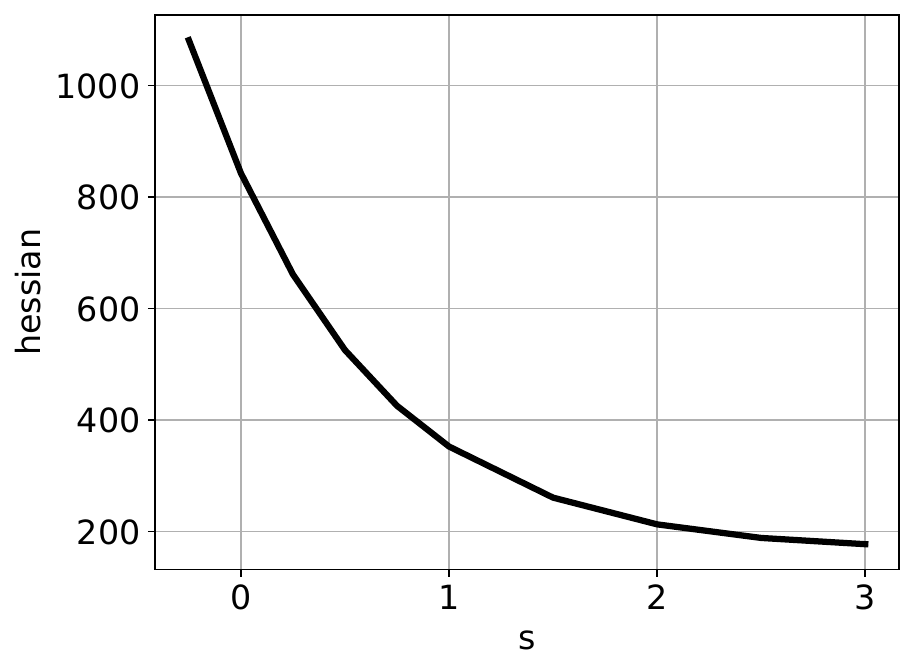}
        \caption{Hessian}
        \label{fig:tradeoffs_hessian}
    \end{subfigure}
    \hfill
	    \begin{subfigure}[t]{0.24\textwidth}
	        \centering
	        \includegraphics[width=\textwidth]{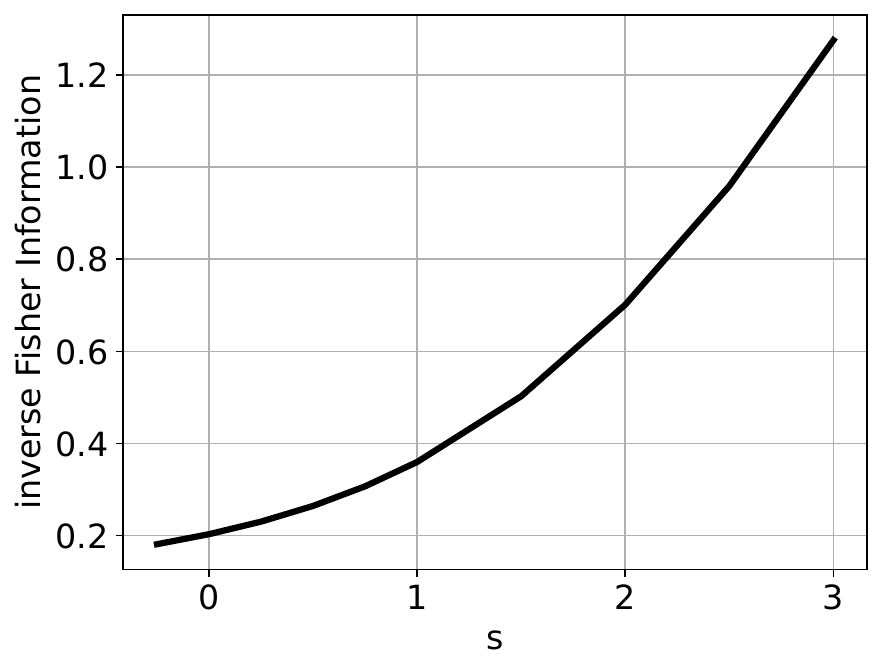}
	        \caption{Inverse asymptotic Fisher info.}
	        \label{fig:tradeoffs_fi}
	    \end{subfigure}
     \begin{subfigure}[t]{0.24\textwidth}
        \centering
        \includegraphics[width=\textwidth]{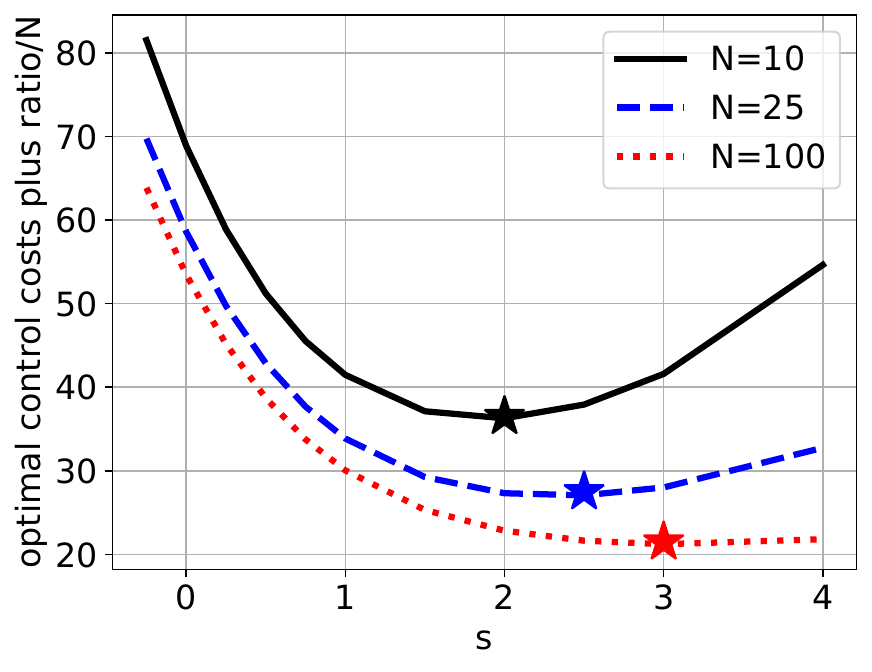}
        \caption{Co-design objective of \eqref{eq: co-design}}
        \label{fig: tradeoffs total cost}
     \end{subfigure}
	    \caption{\textbf{a-c) } Optimal cost, Hessian, and inverse asymptotic Fisher information for \eqref{eq: tradeoffs}. \textbf{d) } Co-design objective \eqref{eq: co-design} for various amounts of data $N$. Stars denote minimizers.}
	    \label{fig:tradeoffs}
        \vspace{-1.25em}
\end{figure*}
 In particular, consider the LQG instance defined by
\begin{equation}
    \label{eq: tradeoffs}
    \begin{aligned}
    A(\theta) = \bmat{ \theta & 1 \\ 0 & 1}, B(\theta) = \bmat{0\\ 1}, C(\theta) = \bmat{1 & s},
\end{aligned}
\end{equation}
with $Q = \Sigma_w(\theta) = I; R = \Sigma_v(\theta)= 1$ and nominal parameter $\theta^\star=1$. The exploration policy is taken to be the LQG policy with additional probing noise defined by $D^{\eta}_{\exp} = I$. The hyper-parameter $s$ can now be viewed as a parameter that determines the design of the sensor. %

In \Cref{fig:tradeoffs}, we examine the optimal control cost, cost Hessian, and inverse asymptotic Fisher Information as the sensor parameter $s$ varies from $0$ to $3$. As $s$ increases, the optimal controller incurs lower cost and becomes less sensitive to the unknown parameter.
However, the learner gathers information about the unknown parameter more slowly (the inverse Fisher Information increases) as $s$ increases. This induces a tradeoff between control and identification. We may therefore ask: where is the sweet spot for this design parameter, minimizing control cost while enabling sample-efficient identification? Our asymptotic lower bound, and the matching upper bound for certainty equivalence, suggest that the learner asymptotically suffers average cost scaling as
\begin{align*}
    J(K_{\theta^\star}, \theta^\star) + \frac{\trace(H(\theta^\star) \mathsf{FI}^{\pi_{\exp}}(\theta^\star)^{-1})}{N}.
\end{align*}
The sensor design parameter $s$ affects all three quantities that occur in the above expression. 
The problem of optimal sensor design could therefore be posed as
\begin{align}
    \label{eq: co-design}
    \min_{s} J(K_{\theta^\star}(s), \theta^\star, s) + \frac{\trace(H(\theta^\star,s ) \mathsf{FI}^{\pi_{\exp}}(\theta^\star, s)^{-1})}{N}.
\end{align}
In \Cref{fig: tradeoffs total cost}, we show how the objective~\eqref{eq: co-design} behaves for the LQG problem defined by \eqref{eq: tradeoffs} for various amounts of data. The star locations denote the corresponding minimizers of the co-design objective. As the amount of data available to the learner increases, the learner should increase the preferred value for the parameter $s$. 

There is a fundamental design challenge that we do not resolve here and defer to future work. To obtain the optimal system design in~\eqref{eq: co-design}, the learner would need access to the true, but unknown, parameter $\theta^\star$. In practical settings, this difficulty can be mitigated by selecting a worst-case value of the unknown parameter from an appropriate set.

\else
    
\subsection{Tradeoffs}
\label{s: tradeoffs}
The above examples illustrated situations where varying a hyper-parameter towards certain values can lead to challenging learning problems. In this example, we change perspective and treat the hyper-parameter as a design variable. We then investigate how to optimally select this design variable to balance the cost of learning against the cost of control.
\begin{figure*}
    \centering
    \begin{subfigure}[t]{0.24\textwidth}
        \centering
        \includegraphics[width=\textwidth]{figures/integrator_cost.pdf}
        \caption{Optimal control cost}
        \label{fig:tradeoffs_cost}
    \end{subfigure}
    \hfill
    \begin{subfigure}[t]{0.24\textwidth}
        \centering
        \includegraphics[width=\textwidth]{figures/integrator_hessian.pdf}
        \caption{Hessian}
        \label{fig:tradeoffs_hessian}
    \end{subfigure}
    \hfill
	    \begin{subfigure}[t]{0.24\textwidth}
	        \centering
	        \includegraphics[width=\textwidth]{figures/integrator_FI.pdf}
	        \caption{Inverse asymptotic Fisher info.}
	        \label{fig:tradeoffs_fi}
	    \end{subfigure}
     \begin{subfigure}[t]{0.24\textwidth}
        \centering
        \includegraphics[width=\textwidth]{figures/integrator_cost_plus_ratio.pdf}
        \caption{Co-design objective of \eqref{eq: co-design}}
        \label{fig: tradeoffs total cost}
     \end{subfigure}
	    \caption{\textbf{a-c) } Optimal cost, Hessian, and inverse asymptotic Fisher information for \eqref{eq: tradeoffs}. \textbf{d) } Co-design objective \eqref{eq: co-design} for various amounts of data $N$. Stars denote minimizers.}
	    \label{fig:tradeoffs}
        \vspace{-1.25em}
\end{figure*}
 In particular, consider the LQG instance defined by
\begin{equation}
    \label{eq: tradeoffs}
    \begin{aligned}
    A(\theta) = \bmat{ \theta & 1 \\ 0 & 1}, B(\theta) = \bmat{0\\ 1}, C(\theta) = \bmat{1 & s},
\end{aligned}
\end{equation}
with $Q = \Sigma_w(\theta) = I; R = \Sigma_v(\theta)= 1$ and nominal parameter $\theta^\star=1$. The exploration policy is taken to be the LQG policy with additional probing noise defined by $D^{\eta}_{\exp} = I$. The hyper-parameter $s$ can now be viewed as a parameter that determines the design of the sensor. %

In \Cref{fig:tradeoffs}, we examine the optimal control cost, cost Hessian, and inverse asymptotic Fisher Information as the sensor parameter $s$ varies from $0$ to $3$. As $s$ increases, the optimal controller incurs lower cost and becomes less sensitive to the unknown parameter.\footnote{Note the different process noise and range of $s$ from \Cref{s: compounding}.}
However, the learner gathers information about the unknown parameter more slowly (the inverse Fisher Information increases) as $s$ increases. This induces a tradeoff between control and identification. We may therefore ask: where is the sweet spot for this design parameter, minimizing control cost while enabling sample-efficient identification? Our asymptotic lower bound, and the matching upper bound for certainty equivalence, suggest that the learner asymptotically suffers average cost scaling as
\begin{align*}
    J(K_{\theta^\star}, \theta^\star) + \frac{\trace(H(\theta^\star) \mathsf{FI}^{\pi_{\exp}}(\theta^\star)^{-1})}{N}.
\end{align*}
The sensor design parameter $s$ affects all three quantities that occur in the above expression. 
The problem of optimal sensor design could therefore be posed as
\begin{align}
    \label{eq: co-design}
    \min_{s} J(K_{\theta^\star}(s), \theta^\star, s) + \frac{\trace(H(\theta^\star,s ) \mathsf{FI}^{\pi_{\exp}}(\theta^\star, s)^{-1})}{N}.
\end{align}
In \Cref{fig: tradeoffs total cost}, we show how the objective~\eqref{eq: co-design} behaves for the LQG problem defined by \eqref{eq: tradeoffs} for various amounts of data. The star locations denote the corresponding minimizers of the co-design objective. As the amount of data available to the learner increases, the learner should increase the preferred value for the parameter $s$. 

There is a fundamental design challenge that we do not resolve here and defer to future work. To obtain the optimal system design in~\eqref{eq: co-design}, the learner would need access to the true, but unknown, parameter $\theta^\star$. In practical settings, this difficulty can be mitigated by selecting a worst-case value of the unknown parameter from an appropriate set.

\fi
\ifTACmode\else
    
\subsection{Loss of Persistent Excitation}

The lower bounds and the above examples highlight the importance of the exploration policy. In particular, for the excess cost to decay to zero with an increasing number of experiments, the Fisher Information induced by the exploration policy must be positive in the directions where the Hessian is positive. Here, we review a situation where this is not the case. In particular, consider the fully observed setting by choosing $C(\theta) = I$ and $\Sigma_v(\theta) \to 0$.

Consider playing a static exploration policy defined by the LQR gain $A_{\exp} = 0$, $B_{\exp} =0$, $C_{\exp} = 0$, $D_{\exp}^y = \tilde F$ and $D_{\exp}^\eta =0$. In this case, the Fisher information of \Cref{prop: hessian} is 
\begin{align*}
    &\mathsf{FI}^{\pi_{\exp}} =\\
    &\trace\paren{\!(\!\dot A \!+\! \dot B \tilde F)\dlyap(A\!+\!B \tilde F, \Sigma_w)\!(\dot A \!+\! \dot B \tilde F)^\top \!\Sigma_w^{-1}\! }
\end{align*}
and $\mathsf{FI}^{\pi_{\exp}} = 0$ whenever $\dot A+ \dot B \tilde F  =0$. One scenario that this can occur is when $\tilde F = F + \Delta F$ with $\dot A = -\dot B F$ and $\dot B \Delta F = 0$.  

 By the discussion of \Cref{s: lqr setting}, the Hessian in the fully observed setting is given by 
\[
    H(\theta) = 2 \trace\paren{\Psi \dot F \dlyap(A+BF, \Sigma_w) \dot F^\top}.
\]
This Hessian can be lower bounded as $H(\theta) \geq \lambda_{\min}(R) \lambda_{\min}(\Sigma_w) 2 \norm{\dot F}_F^2$. Substituting $\dot A + \dot B F = 0$, leads to $\dot F = - \Psi^{-1} \dot B P A_{\mathsf{cl}}^c$, which can be nonzero for particular choices of $\dot B$. 

This is an instance where identifiability of parameters relevant for control is lost  by playing a near optimal policy \citep{polderman1986necessity}. In particular, playing a certainty equivalent policy synthesized using an estimated parameter may not provide sufficient information to identify the optimal policy. The issue can be resolved by including an appropriate probing term in the exploration policy with $D_{\exp}^\eta \neq 0$. 

\fi
\ifTACmode\else
    
\subsection{Misspecification}
    \label{s: misspecified}

    The above examples studied classical  fragile systems in the face of the excess cost lower bounds of \Cref{thm: excess cost lower bound}. Critically, the lower bound elucidates the unavoidable excess cost of learning only in the presence of a well-specified problem: one where the parametric dynamics model captures the data generating process of \eqref{eq: linsys}. If one engages in model-based design, e.g. by following the certainty equivalent synthesis procedure, then they may only attain the stated bounds if the model class is appropriately selected. The primary issue of robust control is the uncertainty due to dynamics which are challenging to model, at least via linear systems, inducing some misspecification into the model class. To understand the role such misspecification can play, observe that if the search space is restricted to a subset $\Theta \subseteq \R^{d_{\theta}}$ such that $\theta^\star \notin \Theta$, then the identification procedure will asymptotically converge to the best in class estimate $\hat \theta \in \Theta$ that maximizes the likelihood under the exploration policy $\pi_{\exp}$ [Theorem 8.2, \cite{ljung1998system}]. 
Such an estimate will differ from the true underlying parameter and the error might align with a sensitive direction of the control cost, as measured by $H(\theta^\star)$. This in turn will lead to a poorly performing controller due to an incorrect parametric modeling assumption. 

Consider, for example, a variant of the LQG instance of \eqref{eq: doyle example}, where  the actuation matrix is incorrectly assumed to be $\hat{B}(\theta)=\begin{bmatrix}\theta\\1+\varepsilon\end{bmatrix}$,  while the true system evolves under $B(\theta)=\begin{bmatrix}\theta\\1\end{bmatrix}$. 
   Let the true unknown parameter be equal to $\theta^\star=0$. 
   In this case, the learner only fits the parameter for the first component of the actuation matrix using least squares. The estimation for $\theta$ turns out to be asymptotically unbiased, and therefore the resulting dynamics model is correct up to the misspecified second element in the actuation matrix. A Taylor expansion reveals that the learner suffers an excess cost that scales as $\varepsilon^2 H$, for the Hessian $H$ as in \Cref{s: doyle example}. This excess cost can be arbitrarily poor as the value of $\sigma$ approaches zero.

\fi
\section{Proof of Lower Bound}
\label{s: lower bound}

The proof~\Cref{thm: excess cost lower bound} (\Cref{s: lower_bound_proof}) relies on the celebrated Youla parameterization~\cite{zhou1996robust,zheng2020equivalence} and the notion of doubly co-prime factorizations, which we review first in \Cref{s: lb prelim}. The Youla parameter uniquely parameterizes all stabilizing controllers. Moreover, the excess cost is a non-degenerate quadratic function of the Youla parameter (\Cref{s: performance_difference_lemma}). It also enables a clean characterization of the performance gap between two stabilizing linear controllers applied to the same system (\Cref{s: technical_results_Youla}). With these results in hand, we proceed to prove the lower bound using the van Trees inequality (\Cref{s: van trees proof}).
\ifTACmode
The proof is presented for the strictly causal setting; the extension to the non-strictly causal setting can be found in the supplementary material.
\else
While the proof is presented for the strictly causal setting, in \Cref{s: non-strictly causal}, we extend the result to the non-strictly causal setting.  
\fi

\subsection{Review of Youla parameterization}
\label{s: lb prelim}
Doubly co-prime factorizations enable us to parameterize all possible stabilizing controllers. They are defined as follows.
\begin{definition}[Doubly co-prime factorization]
    Let $P\in\calR^{\dy\times\du}_p$ be any system. We say that $M, U, N, V, \tilde M, \tilde U, \tilde N, \tilde V\in\calR\calH_\infty$ are a doubly co-prime factorization of $P$ if and only if 
    \begin{itemize}
        \item $P = N M^{-1} = \tilde M^{-1} \tilde N$ 
        \item $\tilde V M - \tilde U N = I,  \tilde M V - \tilde N U = I$ \quad (co-prime identity).
        \item $\tilde V U - \tilde U V = 0, \tilde M N - \tilde N M = 0$ \quad (orthogonality).
    \end{itemize}
\end{definition}
The parameterization of the stabilizing controllers is achieved via the celebrated Youla parameter.
\begin{lemma}
    \label{lem: youla existence and uniqueness}
    Consider any proper controller $K\in\calR_p^{\du\times\dy}$ that internally stabilizes a strictly proper system $P\in\calR^{\dy\times\du}_s$. Suppose that $M, U, N, V, \tilde M, \tilde U, \tilde N, \tilde V\in\calR\calH_\infty$ is a doubly co-prime factorization of $P$.
    Then there exists a unique stable transfer function $\calQ$, called the Youla parameter, that is proper and satisfies $K = (U + M \calQ)(V + N \calQ)^{-1} = (\tilde V + \calQ \tilde N)^{-1} (\tilde U + \calQ \tilde M)$. In particular, it holds that
     \begin{align}\label{eq: explicit youla}
        \calQ = \bmat{\tilde V & -\tilde U }\bmat{K \\ I} (I - P K)^{-1} \tilde M^{-1}. 
    \end{align}
    If in addition $U$ is strictly proper, then $K$ is strictly proper if and only if $\calQ$ is strictly proper.
\end{lemma}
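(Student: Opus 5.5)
We prove Lemma~\ref{lem: youla existence and uniqueness} by first deriving the explicit formula \eqref{eq: explicit youla} from the doubly co-prime identities, then checking that this $\calQ$ is stable, proper, and reproduces $K$, and finally arguing uniqueness and the strict-properness equivalence. The starting point is to stack the co-prime and orthogonality identities into the block identity
\begin{align*}
    \bmat{\tilde V & -\tilde U \\ -\tilde N & \tilde M}\bmat{M & U \\ N & V} = I.
\end{align*}
Both block matrices lie in $\calR\calH_\infty$, so each is the two-sided inverse of the other, and the reversed product identities $M\tilde V - U\tilde N = I$, $N\tilde U - V\tilde M = -I$, etc.\ follow. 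We define $\calQ$ by \eqref{eq: explicit youla}. Using $P = \tilde M^{-1}\tilde N$, we have $(I-PK)^{-1}\tilde M^{-1} = (\tilde M - \tilde N K)^{-1}$. Hence
\[
\calQ = (\tilde V K - \tilde U)(\tilde M - \tilde N K)^{-1}.
\]

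To verify the representation, we set $X \triangleq (\tilde M - \tilde N K)^{-1}$ and $Y\triangleq KX$, so that $\bmat{Y\\ X} = \bmat{K\\ I}X$. Then
\begin{align*}
    \bmat{\tilde V & -\tilde U \\ -\tilde N & \tilde M}\bmat{Y\\X} = \bmat{\calQ \\ I}.
\end{align*}
Multiplying by the inverse block matrix gives $Y = M\calQ + U$ and $X = N\calQ + V$. Therefore $K = YX^{-1} = (U+M\calQ)(V+N\calQ)^{-1}$, and $X$ is invertible as a rational matrix since $X=(\tilde M-\tilde NK)^{-1}$. The left representation follows by the symmetric argument with $\tilde V K - \tilde U = \calQ(\tilde M - \tilde N K)$, i.e.\ $(\tilde V + \calQ\tilde N)K = \tilde U + \calQ\tilde M$, together with invertibility of $\tilde V + \calQ \tilde N$. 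That invertibility holds because $(\tilde V+\calQ\tilde N)$ times a suitable factor equals the identity, using the reversed identities.

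\textbf{Stability.} This is the main obstacle: we must show $\calQ\in\calR\calH_\infty$ using internal stability. We expand $\calQ = \bmat{\tilde V & -\tilde U}\bmat{K(I-PK)^{-1}\\ (I-PK)^{-1}}\tilde M^{-1}$ and rewrite $\tilde M^{-1}$ through $P$. Using $\tilde M^{-1} = V - P U$, which comes from $\tilde M V - \tilde N U = I$, we get $(I-PK)^{-1}\tilde M^{-1} = (I-PK)^{-1}V - (I-PK)^{-1}PU$ and $K(I-PK)^{-1}\tilde M^{-1} = K(I-PK)^{-1}V - K(I-PK)^{-1}PU$. Each of $(I-PK)^{-1}$, $K(I-PK)^{-1}$, $(I-PK)^{-1}P = P(I-KP)^{-1}$, and $K(I-PK)^{-1}P = (I-KP)^{-1}-I$ is a block of $\calT_K$, or differs from one by a constant. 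All of these are stable by internal stability (Theorem 5.3 of \citep{zhou1996robust}). Since $\tilde V,\tilde U,U,V$ are also stable, $\calQ$ is a sum of products of stable transfer functions and is therefore stable. Properness follows similarly, since all factors are proper; here $I-PK$ is biproper because $P$ is strictly proper.

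\textbf{Uniqueness.} If $\calQ'$ is another stable parameter with $K=(\tilde V+\calQ'\tilde N)^{-1}(\tilde U+\calQ'\tilde M)$, then $\calQ'(\tilde M-\tilde NK) = \tilde VK-\tilde U$. Since $\tilde M-\tilde NK$ is invertible, $\calQ'$ equals the formula above, so $\calQ' = \calQ$.

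\textbf{Strict properness.} Assume $U$ is strictly proper. Since $P=NM^{-1}$ with $M$ proper, and $P$ and $U$ are strictly proper, $N=PM$ is strictly proper. Evaluating the identity $\tilde M V - \tilde N U = I$ at $z=\infty$ gives $\tilde M(\infty)V(\infty)=I$, since $\tilde N(\infty)=PM$-type arguments give $\tilde N = \tilde M P$ strictly proper. Likewise $\tilde V(\infty) M(\infty) = I$, so $V$ and $M$ are biproper. At infinity, $K(\infty) = M(\infty)\calQ(\infty)V(\infty)^{-1}$ from the right representation. Because $M(\infty)$ and $V(\infty)$ are invertible, $K(\infty)=0$ if and only if $\calQ(\infty)=0$.
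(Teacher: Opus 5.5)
Your proof is correct, and it reaches the existence, stability and uniqueness parts by a different route from the paper.

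\textbf{The paper's route.} The paper follows the textbook Youla theorem (Theorem~12.17 of \cite{zhou1996robust}) to get a stable right parameter $\mathcal{Q}$ and a stable left parameter $\tilde{\mathcal{Q}}$. It then shows $\mathcal{Q}=\tilde{\mathcal{Q}}$, and uniqueness, by expanding $(\tilde V+\tilde{\mathcal{Q}}\tilde N)(U+M\mathcal{Q})=(\tilde U+\tilde{\mathcal{Q}}\tilde M)(V+N\mathcal{Q})$ with the orthogonality and co-prime identities. Only afterwards does it derive the explicit formula, via an auxiliary right co-prime factorization $K=U'V'^{-1}$ and the invertibility of $Z=\tilde MV'-\tilde NU'$ in $\mathcal{RH}_\infty$.

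\textbf{Your route.} You take the explicit formula as the definition of $\mathcal{Q}$. You recover the right representation by inverting the square block identity $\left[\begin{smallmatrix}\tilde V & -\tilde U\\ -\tilde N & \tilde M\end{smallmatrix}\right]\left[\begin{smallmatrix}M & U\\ N & V\end{smallmatrix}\right]=I$, and the left one by rearranging $\mathcal{Q}(\tilde M-\tilde NK)=\tilde VK-\tilde U$. You prove stability directly: substituting $\tilde M^{-1}=V-PU$ writes $\mathcal{Q}$ as an affine combination, with stable coefficients, of the four blocks of $\mathcal{T}_K$. Uniqueness is then just cancellation of the invertible factor $\tilde M-\tilde NK$, which works even among non-stable rational $\mathcal{Q}$.

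\textbf{What each buys.} Your argument is self-contained and needs no factorization of $K$. It uses only the implication (internal stability $\Rightarrow \mathcal{Q}\in\mathcal{RH}_\infty$) that the lemma actually needs. It also mirrors the later \Cref{lem: bounded controller norm bounded Youla}, where $\mathcal{Q}$ is likewise written affinely in $\mathcal{T}_K$. The paper's version is shorter on the page because it outsources the work to the full parameterization theorem, which also gives the converse.

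\textbf{Strict properness.} This step matches the paper's, but your justification is tighter. You show $M(\infty)$ and $V(\infty)$ are invertible from $\tilde V(\infty)M(\infty)=I$ and $\tilde M(\infty)V(\infty)=I$, using that $N=PM$ and $\tilde N=\tilde MP$ are strictly proper. The paper only asserts they are ``nonzero and finite''.

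\textbf{Small points.} The block matrices are mutual inverses because they are square, not because they lie in $\mathcal{RH}_\infty$. The ``suitable factor'' you leave unnamed is $M-KN$. For example, $(\tilde V+\mathcal{Q}\tilde N)K=\tilde U+\mathcal{Q}\tilde M$ gives $(\tilde V+\mathcal{Q}\tilde N)(M-KN)=\tilde VM-\tilde UN+\mathcal{Q}(\tilde NM-\tilde MN)=I$.
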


\begin{proof}
    Let $U',V'\in\calR\calH_{\infty}$ be any right co-prime  factorization (chapter 5 in~\cite{zhou1996robust}) of the controller, that is, $K = U' V'^{-1}$.  Following the same steps as in Theorem 12.17 of~\cite{zhou1996robust}, it follows that $K$ is internally stabilizing if and only if there exist $\calQ\in\calR\calH_\infty,\tilde{\calQ}\in\calR\calH_\infty$ such that $K=(U + M \calQ)(V + N \calQ)^{-1}=(\tilde V +  \tilde{\calQ}\tilde N)^{-1}(\tilde U + \tilde \calQ\tilde M )$. 
Multiplying by $V + N \calQ$ and $\tilde V +  \tilde{\calQ}\tilde N$ from the right and the left respectively, we obtain
       \begin{align*}
       &  (\tilde V + \tilde \calQ \tilde N)(U + M \calQ) =  (\tilde U+ \tilde \calQ \tilde M)(V + N\calQ) \\
        \implies &  \tilde V M \calQ  - \tilde U N \calQ + \tilde \calQ \tilde N U - \tilde \calQ \tilde M V = 0 \quad \mbox{(orthogonality)} \\
        \implies & \calQ - \tilde \calQ  = 0 \quad \mbox{(co-prime identity)}.
    \end{align*}
This implies that $\tilde \calQ = \calQ$. It also implies uniqueness.

    Define $Z = \tilde M V' - \tilde N U'$. Then, by inverting $U'V'^{-1}=K=(\tilde V +  \calQ\tilde N)^{-1}(\tilde U + \calQ\tilde M )$, we obtain $\calQ=(\tilde VU'-\tilde UV')Z^{-1}$.
    Note that $Z^{-1}$ is well-defined and in $\mathcal{RH}_{\infty}$ (see Lemma 5.10 in~\cite{zhou1996robust}). Then, \eqref{eq: explicit youla} follows from the identities
    \[
\begin{bmatrix}
    U'\\V'
\end{bmatrix}Z^{-1}=\begin{bmatrix}
    K\\I
\end{bmatrix}V'Z^{-1},\,V'Z^{-1}=(I-PK)^{-1}\tilde{M}^{-1}.
    \]

   Finally, let now $U$ be strictly proper. Taking the limit $z\rightarrow \infty$, we obtain $M(\infty)\calQ(\infty)V^{-1}(\infty)=K(\infty)$.
    $M(\infty)$ and $V^{-1}(\infty)$ are both nonzero and finite by the co-prime property and the fact that both $P$ and $K$ are strictly proper. 
    Hence, $\calQ(\infty)=0$ if and only if $K(\infty)=0$.
\end{proof}

The optimal strictly causal LQG controller induces a doubly co-prime factorization. 
Specifically, define the following transfer functions
\begin{equation}
\label{eq: nominal co-prime}
\begin{aligned}
   \bmat{M_{\theta} & U_\theta \\ N_\theta & V_\theta} &\triangleq \left[\begin{array}{c|cc}
              A_{\mathsf{cl}}^c(\theta) & B(\theta) & L(\theta) \\ \hline 
              F(\theta) & I & 0 \\ C(\theta) & 0 & I 
            \end{array}\right] \\
     \bmat{\tilde V_{\theta} & -\tilde U_\theta \\ -\tilde N_\theta & \tilde M_{\theta}} &\triangleq \left[\begin{array}{c|cc}
             A_{\mathsf{cl}}^o(\theta) & -B(\theta) & -L(\theta) \\ \hline 
              F(\theta) & I & 0 \\ C(\theta) & 0 & I 
            \end{array}\right], 
\end{aligned}
\end{equation}
where we define the closed loop matrix under the LQR optimal controller as $A_{\mathsf{cl}}^c(\theta) \triangleq A(\theta) + B(\theta) F(\theta)$ and recall the shorthand for the closed loop matrix under the Kalman predictor gain $A_{\mathsf{cl}}^o(\theta) = A(\theta) - L(\theta) C(\theta)$. It turns out that the above selection of transfer functions is a doubly co-prime factorization for system $\calP^u_\theta$~\cite{zhou1996robust}.
We will exploit these doubly co-prime factorizations to define families of Youla parameters for every instance $\theta$. 

\begin{definition}Let $K$ be the transfer function for a strictly causal linear policy that stabilizes the system defined by instance $\theta$. Denote by $\calQ_{\theta}^K$ the Youla parameter corresponding to the controller $K$ for the above co-prime factorization. In particular, $\calQ_{\theta}^K$ is a stable, strictly proper real rational transfer function such that
\begin{equation}
     \label{eq: youla}
\begin{aligned}
    K &= (U_\theta + M_\theta \calQ_{\theta}^{K}) (V_{\theta} + N_{\theta} \calQ_{\theta}^{K})^{-1} \\
    &= (\tilde V_{\theta} + \calQ_{\theta}^K \tilde N_\theta)^{-1} (\tilde U_{\theta} + \calQ_{\theta}^K \tilde M_{\theta}).
\end{aligned}
\end{equation}
\end{definition}

\subsection{Performance Difference Lemma}\label{s: performance_difference_lemma}
We now proceed to study the performance gap between an arbitrary stabilizing controller and the optimal LQG policy. Such results are known in the literature as performance difference lemmas. Our first result expresses the performance gap in terms of control policies mapping a history of both measurements and inputs to future control inputs. Our second result shows that the gap is simply equal to a non-degenerate quadratic function of the Youla parameter. In order to state this result, we define $\tilde \calP_\theta \triangleq \bmat{0 & I\\ \calP^{d}_\theta & \calP_\theta^u }$ and $K_{\theta}^{y,u} \triangleq \bmat{\tilde U_{\theta} & I - \tilde V_{\theta}}$.

\begin{lemma}[Performance Difference Lemma]
\label{lem: performance difference lemma}
Consider strictly proper stabilizing linear controller $K$. Let $K^{y,u}$ be any transfer matrix of dimension $\du \times (\dy + \du)$ satisfying 
\begin{align*}
    K = K^{y,u} \bmat{I \\ K}. 
\end{align*}
Then the excess cost of applying controller $K$ to system \eqref{eq: linsys} with parameter $\theta$ is given by 
    \begin{align}
        &J(K, \theta) - J(K_{\theta}, \theta)\nonumber \\
        &= \norm{\Psi(\theta)^{1/2} (K^{y,u} -  K_{\theta}^{y,u}) \bmat{I \\ K} \calF(\tilde \calP_{\theta}, K)}_{\calH_2}^2\label{eq: performance_difference_K_version}\\
        &= \norm{\Psi(\theta)^{1/2}\calQ^K_{\theta}\Sigma^{1/2}_e(\theta)}_{\calH_2}^2. \label{eq: performance_difference_Q_version}
    \end{align}
    As a result,
      \begin{align*}
        J(K, \theta) - J(K_{\theta}, \theta) \geq \mu \norm{\calQ_\theta^K}_{\calH_2}^2,
    \end{align*}
    where 
       $\mu = {\lambda_{\min}(B(\theta)^\top P(\theta) B(\theta) + R)} 
        {\lambda_{\min}(\Sigma_e(\theta))}  > 0.$
\end{lemma}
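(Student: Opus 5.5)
The plan is to work with the innovations (Kalman predictor) representation of the closed loop under an arbitrary strictly proper stabilizing $K$. We derive \eqref{eq: performance_difference_K_version} by a completion-of-squares argument, and then convert it to the Youla form \eqref{eq: performance_difference_Q_version}.

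\textbf{Step 1: completion of squares.} Run the Kalman predictor $\hat x_{t|t-1}^\theta$ alongside the plant. The system can then be written as
\[
\hat x_{t+1|t} = A\hat x_{t|t-1} + Bu_t + Le_t, \qquad y_t = C\hat x_{t|t-1} + e_t,
\]
where $e_t$ is i.i.d. $\calN(0,\Sigma_e)$. Because $u_t$ is strictly causal, $e_t$ is independent of $u_t$ and of $\hat x_{t|t-1}$ for any $K$. The state splits as $x_t = \hat x_{t|t-1} + \tilde x_t$, and the error $\tilde x_t$ is independent of the past regardless of the policy, so it contributes the same constant to every policy's cost.

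Now use $P = \dare(A,B,Q,R)$ and the identity
\[
\hat x^\top Q\hat x + u^\top R u = \hat x^\top P \hat x - \hat x_+^\top P\hat x_+ + \norm{u - F\hat x}_\Psi^2 + (\text{terms linear in } e \text{ and } \trace(L^\top P L\Sigma_e)).
\]
Average over time and use stationarity. The telescoping terms vanish, and the cross terms with $e_t$ have zero mean. This gives
\[
J(K,\theta) - J(K_\theta,\theta) = \lim \tfrac1T \bfE \sum_t \norm{u_t - F\hat x^\theta_{t|t-1}}_\Psi^2 .
\]
This is the same performance difference identity already invoked in the proof of \Cref{prop: hessian}.

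\textbf{Step 2: transfer-function form.} Write $F\hat x_{t|t-1}^\theta$ as a filter of past $(y,u)$. From \eqref{eq: kf update}, $F\hat x = F(zI-A_{\mathsf{cl}}^o)^{-1}[L\;B]\,[y;u]$, and by \eqref{eq: nominal co-prime} this equals $\tilde U_\theta y + (I-\tilde V_\theta)u = K_\theta^{y,u}[y;u]$. Likewise $u = K^{y,u}[y;u]$ by assumption, so $u - F\hat x = (K^{y,u}-K_\theta^{y,u})[y;u]$. In closed loop, $[y;u] = [I;K]\,y$, and the map from $d$ to $y$ is given by the feedback interconnection $\calF(\tilde\calP_\theta,K)$ up to ordering of blocks. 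Since $d$ is white with identity covariance after the scaling built into $\calP_\theta^d$, the stationary expectation becomes the $\calH_2$ norm, which is \eqref{eq: performance_difference_K_version}.

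\textbf{Step 3: Youla form (main obstacle).} The aim is to show that $u_t - F\hat x_{t|t-1} = \calQ_\theta^K e_t$ as signals, since $e_t$ is white with covariance $\Sigma_e$. I would choose the representation $K^{y,u} = [\tilde U_\theta + \calQ\tilde M_\theta \;\; I - \tilde V_\theta - \calQ\tilde N_\theta]$. This is valid: from \eqref{eq: youla}, $(\tilde V+\calQ\tilde N)K = \tilde U + \calQ\tilde M$, which rearranges to $K = K^{y,u}[I;K]$. Then
\[
K^{y,u} - K_\theta^{y,u} = \calQ\,[\tilde M_\theta \;\; -\tilde N_\theta].
\]
Finally, $\tilde M_\theta y - \tilde N_\theta u = y - C(zI-A_{\mathsf{cl}}^o)^{-1}(Ly + Bu) = y - C\hat x_{t|t-1} = e_t$, reading off the state-space realization in \eqref{eq: nominal co-prime}. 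Hence the excess cost equals $\norm{\Psi^{1/2}\calQ\Sigma_e^{1/2}}_{\calH_2}^2$, which is \eqref{eq: performance_difference_Q_version}. One must also check that the $\calH_2$ norm in Step 2 is independent of the choice of $K^{y,u}$. It is, because any two choices differ by a term annihilating $[I;K]$.

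\textbf{Step 4: the lower bound.} Bound the trace integrand from below:
\[
\trace(\Sigma_e^{1/2}\calQ^*\Psi\calQ\Sigma_e^{1/2}) \ge \lambda_{\min}(\Psi)\lambda_{\min}(\Sigma_e)\trace(\calQ^*\calQ)
\]
pointwise in frequency. Integrating gives $\mu\norm{\calQ}_{\calH_2}^2$. Positivity of $\mu$ follows from $R\succ0$ and $\Sigma_v\succ0$.

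The delicate point is justifying the stationary and limsup manipulations for an arbitrary stabilizing $K$. This holds because internal stability makes every signal a stable filter of white noise, so time averages converge to $\calH_2$ norms.
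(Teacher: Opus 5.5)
Your argument is correct. Steps 2 and 4 match the paper's proof. In Step 1 you derive, by completing the square with $P$ in the innovations form, the performance-difference identity that the paper simply cites from Theorem 11.2 of S\"oderstr\"om. This makes your version self-contained.

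The genuine difference is Step 3. The paper uses the trivial representation $K^{y,u}=\bmat{0 & I}$ and the explicit Youla formula \eqref{eq: explicit youla} to reduce the closed-loop map to $\calQ_\theta^K\tilde M_\theta\calP^d_\theta$. It then verifies, by a Riccati-equation computation, that $(\tilde M_\theta\calP^d_\theta)(\tilde M_\theta\calP^d_\theta)^*=\Sigma_e$ on the unit circle, i.e.\ that $\tilde M_\theta\calP^d_\theta\Sigma_e^{-1/2}$ is co-inner.

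You instead pick the Youla-adapted representation $K^{y,u}=\bmat{\tilde U_\theta+\calQ\tilde M_\theta & I-\tilde V_\theta-\calQ\tilde N_\theta}$. This is the same one the paper later uses in \eqref{eq: Kdyu} and Lemma~\ref{lem: closed loop perturbations}. From it you read off that $\tilde M_\theta\bfy-\tilde N_\theta\bfu$ is exactly the innovation sequence. Its whiteness with covariance $\Sigma_e$ under any strictly causal policy is a standard Kalman-filter fact. Because $\tilde N_\theta=\tilde M_\theta\calP^u_\theta$, you have $\tilde M_\theta\bfy-\tilde N_\theta\bfu=\tilde M_\theta\calP^d_\theta\boldsymbol d$. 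So your observation and the paper's spectral identity are the same fact: you get it from innovations reasoning, the paper from deterministic frequency-domain algebra.

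What each route buys:
\begin{itemize}
\item Your route is shorter and avoids \eqref{eq: explicit youla}. It also makes explicit the interpretation $u_t-F\hat x^\theta_{t|t-1}=(\calQ_\theta^K e)_t$, i.e.\ the Youla parameter acts directly on the innovations.
\item Your route does depend on the stationary initialization and on the time-average limits you flag at the end.
\item The paper's route gives a purely transfer-function identity. It plugs directly into the $\calH_2$ norm of \eqref{eq: performance_difference_K_version} without that probabilistic bookkeeping.
\end{itemize}
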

\begin{proof}
   The lower bound involving $\mu$ follows immediately from~\eqref{eq: performance_difference_Q_version}. We only need to prove~\eqref{eq: performance_difference_K_version},~\eqref{eq: performance_difference_Q_version}.
   
   \textbf{Part a): proof of~\eqref{eq: performance_difference_K_version}.} By the performance difference lemma (Theorem 11.2 of \citep{soderstrom2012discrete}), it holds that 
    \begin{align*}
        &J(K, \theta) - J(K_{\theta}, \theta) \\
        &= \limsup_{T} \frac{1}{T}\E_{\theta}^K \sum_{t=0}^T \norm{\Psi(\theta)^{1/2} (u_t - u_t^\theta)}^2,
    \end{align*}
        where $u_t^\theta$ is the action taken by the LQG controller for the system with dynamics $\theta$ conditioned on the observed history of actions and observations $u_0, y_0, \dots, u_{t-1}, y_{t-1}$:
    \begin{align*}
        u_t^\theta = F(\theta) \bfE_{\theta}\brac{x_t \vert y_{0:t-1}, u_{0:t-1}}.
    \end{align*}
    By contrast, $u_t$ is the control action taken by controller $K$, and the expectation is under the distribution induced by $K$.

    We may write this in frequency domain by denoting $\bfy$ as the signal generated by the feedback interconnection of the plant $\tilde \calP_{\theta}$ with the controller $K$: $\bfy = \calF(\tilde \calP_{\theta}, K) \boldsymbol d$, for noise signal $\boldsymbol{d} = \bmat{\boldsymbol{w} \\ \boldsymbol{v}}$. Then the control input sequence may be expressed as $\bfu = K \calF(\tilde \calP_{\theta}, K) \boldsymbol d$. The history of measurements and inputs can therefore be written as $\bmat{\bfy \\ \bfu} = \bmat{I \\ K} \calF(\tilde \calP_{\theta}, K) \boldsymbol d$. Consequently, 
        $\bfu = K^{y,u} \bmat{I \\ K} \calF(\tilde \calP_{\theta}, K)\boldsymbol{d}$
    for any $K^{y,u}$ satisfying $K^{y,u} \bmat{I \\K} = K$. 
    For the sequence of optimal LQG actions for system $\theta$ conditioned on the observations, it holds that
    \begin{align*}
        \bfu^{\theta} %
        &= F(\theta) (z I - A(\theta) + L(\theta) C(\theta))^{-1} (L(\theta) \bfy + B(\theta) K \bfy) \\
        &= \bmat{\tilde U_\theta & I - \tilde V_\theta} \bmat{I \\ K} \calF(\tilde \calP_{\theta}, K)\boldsymbol{d}.
    \end{align*}
    The performance gap can then be expressed by Parseval's theorem as in~\eqref{eq: performance_difference_K_version}.
    
     \textbf{Part b): proof of~\eqref{eq: performance_difference_Q_version}.}
         Select $K^{y,u} = \bmat{0 & I}$. Then the gap can be written as 
    \begin{align*}
        &J(K, \theta) - J(K_\theta, \theta)\\ &= \norm{\Psi(\theta)^{1/2}\bmat{ -\tilde U_{\theta} &  \tilde V_{\theta} } \bmat{I \\ K} (I - \calP_\theta^u K)^{-1} \calP^d_\theta}_{\calH_2}^2, \\
        &= \norm{\Psi(\theta)^{1/2}\calQ_{\theta}^K \tilde M_{\theta} \calP^d_\theta}_{\calH_2}^2
    \end{align*}
    where the second equality follows from~\Cref{eq: explicit youla}.

    Recall that $$\calP^{d}_\theta = \bmat{C(\theta)(zI - A(\theta))^{-1} \Sigma_w(\theta)^{1/2} & \Sigma_v(\theta)^{1/2}},$$ and 
    $\tilde M_{\theta} = -C(\theta) (zI - A_\mathsf{cl}^o(\theta))^{-1} L(\theta) + I$. We drop the dependence of the system matrices on $\theta$ in the following. After some algebraic manipulations, it follows that 
    \begin{align*}
        \tilde M_{\theta} \calP^{d}_{\theta} &=\! \bmat{C(zI \!-\!A_{\mathsf{cl}}^o)^{-1} \Sigma_w^{1/2} \!& \!(I \!-\! C(zI -A_{\mathsf{cl}}^o)^{-1} L )\Sigma_v^{1/2} }. 
    \end{align*}
     Define $\calR = (zI - A_{\mathsf{cl}}^o)^{-1}$ and $\tilde \calR = \calR^*$. It holds that 
    \begin{align*}
        (\tilde M_{\theta} \calP^{d}_{\theta})(\tilde M_{\theta} P^{d}_{\theta})^* &= C \calR \Sigma_w \tilde \calR C^\top + C \calR L\Sigma_v L^\top \tilde \calR C^\top \\
        &-C\calR L\Sigma_v - \Sigma_v L^\top \tilde \calR C^\top +  \Sigma_v. 
    \end{align*} 
    Recall that $L \Sigma_v = A_{\mathsf{cl}}^o\Sigma C^\top$ and $\Sigma_v L^\top = C\Sigma A_{\mathsf{cl}}^o$. Also note that $\calR A_{\mathsf{cl}}^o = -I + \calR z$. Then 
        $C \calR L \Sigma_v = C  \calR A_{\mathsf{cl}}^o \Sigma C^\top = -C \Sigma C^\top +  z C\calR \Sigma C^\top.$
    Additionally,
    \begin{align*}
        &C \calR (\Sigma_w + L \Sigma_v L^\top) \tilde \calR C^\top \\
        &= C \calR (\Sigma - A_{\mathsf{cl}}^o \Sigma (A_{\mathsf{cl}}^o)^\top) \tilde \calR C^\top  \\
        &= - C\Sigma C^\top\!+\!z C\calR \Sigma C^\top \!+\! z^* C\Sigma \tilde \calR C^\top \!+\! C \calR \Sigma \tilde \calR C^\top (1 \!-\! z z^*). 
    \end{align*}
    Then 
        $(\tilde M_{\theta} P^{d}_{\theta})(\tilde M_{\theta} P^{d}_{\theta})^*\vert_{z=e^{j\omega}} = C\Sigma C^\top + \Sigma_v = \Sigma_e.$ 
    The proof of~\eqref{eq: performance_difference_Q_version} now follows from the definition of $\calH_2$ norm and the cyclical invariance property of the trace operator. 
\end{proof}
The quadratic structure of the performance gap is key to applying the van Trees inequality, which is suitable for obtaining lower bounds in quadratic estimation problems.

\subsection{Technical results on the Youla parameterization}\label{s: technical_results_Youla}
The section contains several auxiliary technical results for proving Theorem~\ref{thm: excess cost lower bound}. First, we establish bounds on the $\calH_\infty$ norm of the Youla parameter of any controller belonging to the class specified in Assumption~\ref{asmp: closed loop}. Second, we prove local Lipschitz properties for the co-prime parameterizations. Third, we characterize the gap between closed-loop responses of two different controllers.
\begin{lemma}[Bounded Youla Parameter]
    \label{lem: bounded controller norm bounded Youla}
Let Assumption~\ref{asmp: closed loop} hold. It holds that
\[
\norm{\calQ^{K}_{\theta^\star}}_{\calH_{\infty}}\le \beta, 
\]
for $\beta\!=\!\norm{\begin{bmatrix}
    \tilde V_{\theta^\star}&\!-\tilde U_{\theta^\star}
\end{bmatrix}}_{\mathcal{H}_\infty}(\alpha\!+\!2\norm{\mathcal{T}_{K_{\theta^\star}}}_{\mathcal{H}_{\infty}})
\norm{\begin{bmatrix}
    -U_{\theta^\star}\\ V_{\theta^\star}
\end{bmatrix}}_{\mathcal{H}_\infty}.$
\end{lemma}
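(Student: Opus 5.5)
The plan is to get an explicit formula for $\calQ^K_{\theta^\star}$ as a product of three factors: a fixed row block $\bmat{\tilde V_{\theta^\star} & -\tilde U_{\theta^\star}}$, a difference of closed-loop maps $\calT_K - \calT_{K_{\theta^\star}}$, and a fixed column block $\bmat{-U_{\theta^\star}\\ V_{\theta^\star}}$. Submultiplicativity of the $\calH_\infty$ norm and the definition of $\alpha$ in \eqref{eq:alpha} then give the bound. Throughout I drop the subscript $\theta^\star$ and write $P=\calP^u_{\theta^\star}$.

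\textbf{Step 1: remove $\tilde M^{-1}$.} I start from the explicit formula \eqref{eq: explicit youla} of Lemma~\ref{lem: youla existence and uniqueness}. The co-prime identity $\tilde M V - \tilde N U = I$ together with $P=\tilde M^{-1}\tilde N$ gives $\tilde M^{-1} = V - PU$. Hence
\[
\calQ^K = \bmat{\tilde V & -\tilde U}\bmat{K(I-PK)^{-1} \\ (I-PK)^{-1}}\bmat{-P & I}\bmat{U\\V}.
\]

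\textbf{Step 2: rewrite the middle product in terms of $\calT_K$.} I apply the push-through identities $K(I-PK)^{-1}P=(I-KP)^{-1}-I$ and $(I-PK)^{-1}P = P(I-KP)^{-1}$. The middle $2\times 2$ block product then becomes
\[
\bmat{I-(I-KP)^{-1} & K(I-PK)^{-1} \\ -P(I-KP)^{-1} & (I-PK)^{-1}}.
\]
Applied to $\bmat{U\\V}$, this equals $\calT_K\bmat{-U\\V} + \bmat{U\\0}$. Therefore
\[
\calQ^K = \bmat{\tilde V & -\tilde U}\calT_K\bmat{-U\\V} + \tilde V U.
\]
The key observation is that the additive term $\tilde V U$ does not depend on $K$.

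\textbf{Step 3: cancel the constant term.} I apply the same formula to $K=K_{\theta^\star}$. Its Youla parameter for the factorization \eqref{eq: nominal co-prime} is zero, since $K_{\theta^\star}=UV^{-1}$. This gives $0 = \bmat{\tilde V & -\tilde U}\calT_{K_{\theta^\star}}\bmat{-U\\V} + \tilde V U$. Subtracting this from the expression in Step 2 yields
\[
\calQ^K = \bmat{\tilde V & -\tilde U}(\calT_K - \calT_{K_{\theta^\star}})\bmat{-U\\V}.
\]

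\textbf{Step 4: bound the norm.} By submultiplicativity and the triangle inequality,
\[
\norm{\calQ^K}_{\calH_\infty} \le \norm{\bmat{\tilde V & -\tilde U}}_{\calH_\infty}\bigl(\norm{\calT_K}_{\calH_\infty}+\norm{\calT_{K_{\theta^\star}}}_{\calH_\infty}\bigr)\norm{\bmat{-U\\V}}_{\calH_\infty}.
\]
By the definition of $\alpha$, $\norm{\calT_K}_{\calH_\infty}\le \alpha+\norm{\calT_{K_{\theta^\star}}}_{\calH_\infty}$ for every $K\in\calK$. Substituting gives exactly $\beta$.

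The main obstacle is the sign and block bookkeeping in Step 2, namely matching the block product to $\calT_K$ as defined in Assumption~\ref{asmp: closed loop}, including its sign conventions. I would verify this by checking each block entry directly.
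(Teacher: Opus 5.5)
Your proof is correct and arrives at the same identity the paper uses, $\calQ^K_{\theta^\star} = \bmat{\tilde V_{\theta^\star} & -\tilde U_{\theta^\star}}(\calT_K - \calT_{K_{\theta^\star}})\bmat{-U_{\theta^\star}\\ V_{\theta^\star}}$. Your final step, triangle inequality plus submultiplicativity plus the definition of $\alpha$, is also the paper's; the only difference is how you derive that identity.

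The paper reads it off \Cref{lem: affinity of closed-loop map}. There, $\calT_K - \calT_{K_{\theta^\star}} = \bmat{M\\N}\calQ^K\bmat{\tilde N & \tilde M}$. Multiplying on the left by $\bmat{\tilde V & -\tilde U}$ and on the right by $\bmat{-U\\ V}$ collapses the outer factors via $\tilde V M - \tilde U N = I$ and $\tilde M V - \tilde N U = I$. No constant offset ever appears, and the fact that $K_{\theta^\star}$ has zero Youla parameter is built into the statement of that lemma.

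You go in the reverse direction, starting from the explicit inversion formula \eqref{eq: explicit youla}. You use $\tilde M^{-1} = V - PU$ and push-through identities to write $\calQ^K$ as an affine function of $\calT_K$. You then need uniqueness separately: $\calQ^{K_{\theta^\star}}_{\theta^\star}=0$ because $K_{\theta^\star}=UV^{-1}$, and this cancels the offset $\tilde V U$.

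Your route is self-contained given \Cref{lem: youla existence and uniqueness} but carries the block bookkeeping you flagged. The paper's route is shorter because the affinity lemma already encodes that bookkeeping.

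I checked your Step 2 entrywise: $K(I-PK)^{-1}P=(I-KP)^{-1}-I$, $(I-PK)^{-1}P=P(I-KP)^{-1}$, and the residual is $\bmat{U\\0}$. It is right.
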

\begin{proof} By Lemma~\ref{lem: affinity of closed-loop map} and the properties of the co-prime factorization, we have
$\calQ^{K}_{\theta^\star}=\begin{bmatrix}
    \tilde V_{\theta^\star}&-\tilde U_{\theta^\star}
\end{bmatrix}(\mathcal{T}_K-\mathcal{T}_{K_{\theta^\star}}) \begin{bmatrix}
    -U_{\theta^\star}\\ V_{\theta^\star}
\end{bmatrix}.$
Taking the $\calH_\infty$ norm of this quantity and applying Assumption~\ref{asmp: closed loop} along with submultiplicativity provides the given value of $\beta$. 
\end{proof}

\begin{lemma}[Closed-loop map is affine]\label{lem: affinity of closed-loop map}
Recall the definition of the closed loop map $\mathcal{T}_K$ of the plant $\mathcal{P}_{\theta^\star}^u$ under controller $K$. Let $\calQ^K_{\theta^\star}$ be the corresponding Youla parameter. Let $\mathcal{T}_{K_{\theta^\star}}$ be the closed-loop map under the nominal controller $U_{\theta^\star}V_{\theta^\star}^{-1}$ with $\calQ^{K_{\theta^\star}}_{\theta^\star}=0$. Then, we have
\[
\mathcal{T}_{K}=\mathcal{T}_{K_{\theta^\star}}+\begin{bmatrix}M_{\theta^\star}\\N_{\theta^\star}\end{bmatrix}\calQ^K_{\theta^\star}\begin{bmatrix}\tilde N_{\theta^\star}&\tilde M_{\theta^\star}\end{bmatrix},
\]
where $\tilde M_{\theta^\star},\tilde M_{\theta^\star},\tilde M_{\theta^\star},\tilde M_{\theta^\star}$ are stable, proper and $\mathcal{P}_{\theta^\star}^u=\tilde M^{-1}_{\theta^\star}\tilde N_{\theta^\star}=N_{\theta^\star}M^{-1}_{\theta^\star}$.
\end{lemma}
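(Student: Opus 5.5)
We prove the affine formula by direct substitution of the Youla parameterization into each block of $\mathcal{T}_K$. Write $P=\calP^u_{\theta^\star}$ and drop the subscript $\theta^\star$ on the factors. By Lemma~\ref{lem: youla existence and uniqueness}, $K=(U+M\calQ)(V+N\calQ)^{-1}=(\tilde V+\calQ\tilde N)^{-1}(\tilde U+\calQ\tilde M)$ with $\calQ=\calQ^K_{\theta^\star}$.

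The first step is to compute the two sensitivity-type inverses in closed form. Using $P=NM^{-1}$,
\[
I-PK=\bigl(V+N\calQ-NM^{-1}(U+M\calQ)\bigr)(V+N\calQ)^{-1}=M^{-1}\bigl(MV-NU\bigr)\cdots .
\]
It is cleaner to use the left factorization $P=\tilde M^{-1}\tilde N$. This gives
\[
I-PK=\tilde M^{-1}\bigl(\tilde M(V+N\calQ)-\tilde N(U+M\calQ)\bigr)(V+N\calQ)^{-1}.
\]
By the co-prime identity $\tilde M V-\tilde N U=I$ and orthogonality $\tilde M N-\tilde N M=0$, the middle factor equals $I$. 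Hence $(I-PK)^{-1}=(V+N\calQ)\tilde M$.

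Symmetrically, using $K=(\tilde V+\calQ\tilde N)^{-1}(\tilde U+\calQ\tilde M)$ and $P=NM^{-1}$, we get $I-KP=(\tilde V+\calQ\tilde N)^{-1}\bigl((\tilde V+\calQ\tilde N)M-(\tilde U+\calQ\tilde M)N\bigr)M^{-1}$. The middle factor equals $\tilde V M-\tilde U N+\calQ(\tilde N M-\tilde M N)=I$, so $(I-KP)^{-1}=M(\tilde V+\calQ\tilde N)$.

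The second step expresses each of the four blocks as an affine function of $\calQ$:
\begin{align*}
(I-KP)^{-1}&=M\tilde V+M\calQ\tilde N,\\
P(I-KP)^{-1}&=NM^{-1}M(\tilde V+\calQ\tilde N)=N\tilde V+N\calQ\tilde N,\\
K(I-PK)^{-1}&=(U+M\calQ)(V+N\calQ)^{-1}(V+N\calQ)\tilde M=U\tilde M+M\calQ\tilde M,\\
(I-PK)^{-1}&=V\tilde M+N\calQ\tilde M.
\end{align*}
Stacking these, $\mathcal{T}_K=\bmat{M\tilde V & U\tilde M\\ N\tilde V & V\tilde M}+\bmat{M\\N}\calQ\bmat{\tilde N&\tilde M}$. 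Setting $\calQ=0$ gives $K=UV^{-1}=K_{\theta^\star}$, so the constant term is exactly $\mathcal{T}_{K_{\theta^\star}}$. This yields the claim.

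The main obstacle is only bookkeeping: the block order of $\mathcal{T}_K$ must be matched correctly with the $\bmat{\tilde N & \tilde M}$ column structure. A second point needs checking, namely that the nominal controller corresponds to $\calQ=0$. This holds by uniqueness in Lemma~\ref{lem: youla existence and uniqueness}, since the factorization \eqref{eq: nominal co-prime} gives $U_{\theta^\star}V_{\theta^\star}^{-1}=K_{\theta^\star}$. Stability and properness of the factors follow from the factorization definition.
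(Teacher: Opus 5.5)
Your proof is correct and follows the paper's own argument: both show that the $\calQ$-shifted factors still satisfy the co-prime identities (using orthogonality), deduce $(I-KP)^{-1}=M(\tilde V+\calQ\tilde N)$ and $(I-PK)^{-1}=(V+N\calQ)\tilde M$, and then expand the four blocks. The only blemish is the abandoned first attempt at $I-PK$, whose step $V-NM^{-1}U=M^{-1}(MV-NU)$ is false for matrices (the dimensions do not even match unless $\du=\dy$), so delete that line.
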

\begin{proof}
Let
\begin{align*}
    \tilde V&=\tilde V_{\theta^\star}+\calQ^{K}_{\theta^\star}\tilde N_{\theta^\star},& \tilde U&=\tilde{U}_{\theta^\star}+\calQ^{K}_{\theta^\star}\tilde{M}_{\theta^\star},\\
     V&=V_{\theta^\star}+N_{\theta^\star}\calQ^{K}_{\theta^\star},& U&=U_{\theta^\star}+M_{\theta^\star}\calQ^{K}_{\theta^\star}.
\end{align*}
Recall that
$K=UV^{-1}=\tilde{V}^{-1}\tilde U$
with $\mathcal{P}_{\theta^\star}^u=N_{\theta^\star}M^{-1}_{\theta^\star}=\tilde M^{-1}_{\theta^\star}\tilde N_{\theta^\star}$. 
By orthogonality and the co-prime identity, it follows that
    $\tilde V M_{\theta^\star}-\tilde UN_{\theta^\star}=I, \tilde M_{\theta^\star}V -\tilde N_{\theta^\star} U=I.$
Thus, 
\begin{align*}
 (I-K\calP_{\theta^\star}^u)^{-1}&=M_{\theta^\star}(\tilde V M_{\theta^\star}-\tilde UN_{\theta^\star})^{-1}\tilde{V}=M_{\theta^\star}\tilde{V},\\
 \calP_{\theta^\star}^u(I-K\calP_{\theta^\star}^u)^{-1}&=N_{\theta^\star}\tilde{V}.
\end{align*}
Working similarly, we obtain that
 $(I-\calP_{\theta^\star}^u K)^{-1}=V\tilde M_{\theta^\star}$ and $K (I-\calP_{\theta^\star}^u K)^{-1}=U\tilde M_{\theta^\star}.$
Finally, expanding $V,U,\tilde{V},\tilde{U}$:
\[
\mathcal{T}_{K}=\mathcal{T}_{K_{\theta^\star}}+\begin{bmatrix}M_{\theta^\star}\calQ^K_{\theta^\star}\tilde N_{\theta^\star}& M_{\theta^\star}\calQ^K_{\theta^\star}\tilde M_{\theta^\star}\\ N_{\theta^\star}\calQ^K_{\theta^\star}\tilde N_{\theta^\star} & N_{\theta^\star}\calQ^K_{\theta^\star}\tilde M_{\theta^\star}\end{bmatrix},
\]
which completes the proof.
\end{proof}

Throughout our proof, we leverage the fact that various system quantities, including the transfer matrices defining the co-prime factorization, the Youla parameter corresponding to any LQG policy, and are smooth in the unknown parameter. The following lemmas establish these results.
\begin{lemma}[Co-prime Perturbations]
    \label{lem: coprime perturbations}
    There exists constants $ c_{\varepsilon}^X(\theta^\star)$ and $L_X(\theta^\star)$ depending only on the system instance $\theta^\star$ such that for $\varepsilon \leq  c_{\varepsilon}^X(\theta^\star)$ and any $\theta_1, \theta_2 \in \calB(\theta^\star, \varepsilon)$ the transfer matrices of the co-prime factorization \eqref{eq: nominal co-prime} satisfy $\norm{X_{\theta_i} - X_{\theta_2}}_{\mathcal{H}_{\infty}} \leq L_X(\theta^\star) \varepsilon$
    for $X$ denoting any of the transfer matrices $M,N,U,V,\tilde M, \tilde N, \tilde U, \tilde V$.   
\end{lemma}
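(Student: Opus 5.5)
The plan is to reduce the $\calH_\infty$ perturbation bound to a Lipschitz bound on the state-space matrices of the realizations in \eqref{eq: nominal co-prime}. Each of the eight transfer matrices has the form $\calX_\theta = D_X + C_X(\theta)(zI - A_X(\theta))^{-1}B_X(\theta)$. Here $A_X(\theta)\in\{A_{\mathsf{cl}}^c(\theta), A_{\mathsf{cl}}^o(\theta)\}$, the matrix $D_X$ is constant ($I$ or $0$), and $B_X, C_X$ are built from $B(\theta), L(\theta), F(\theta), C(\theta)$.

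First I will show that $\theta\mapsto F(\theta), L(\theta), P(\theta), \Sigma(\theta)$ are analytic, hence locally Lipschitz, near $\theta^\star$.
\begin{itemize}
\item The maps $A(\cdot),B(\cdot),C(\cdot),\Sigma_w(\cdot),\Sigma_v(\cdot)$ are analytic by assumption.
\item At $\theta^\star$ the stabilizing solution of each DARE exists, and the associated closed-loop matrix is Schur stable.
\item The linearization of the Riccati map at the stabilizing solution is $X \mapsto X - (A_{\mathsf{cl}}^c)^\top X A_{\mathsf{cl}}^c$, which is invertible since $A_{\mathsf{cl}}^c$ is Schur. The same holds for $A_{\mathsf{cl}}^o$ in the filtering DARE.
\item The implicit function theorem (equivalently the perturbation lemmas of Simchowitz--Foster, Lemma B.1 and Appendix B, already cited in the paper) then gives a radius $c_\varepsilon^X(\theta^\star)$ on which $P,\Sigma,F,L$ are $C^1$. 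On the closed ball their derivatives are bounded by some $L_0(\theta^\star)$.
\end{itemize}
Consequently the realization matrices $(A_X,B_X,C_X)$ are Lipschitz on $\calB(\theta^\star,\varepsilon)$ with a constant depending only on $\theta^\star$.

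Second, I will use uniform stability. Since $\rho(A_{\mathsf{cl}}^c(\theta^\star))<1$ and $\rho(A_{\mathsf{cl}}^o(\theta^\star))<1$, continuity of the closed-loop matrices lets me shrink $c_\varepsilon^X$ so that both remain Schur with a uniform margin on the ball. Concretely, I will get $\sup_{\theta\in\calB}\sup_{|z|=1}\norm{(zI-A_X(\theta))^{-1}}\le \kappa(\theta^\star)$. One way to do this is to fix a Lyapunov certificate $S\succ0$ for the nominal matrix with $A^\top S A \preceq \gamma^2 S$, $\gamma<1$. By continuity, a slightly larger $\bar\gamma<1$ works on the ball, which gives $\norm{A_X(\theta)^k}\le \sqrt{\mathrm{cond}(S)}\,\bar\gamma^k$. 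Summing the Neumann series then bounds the resolvent uniformly.

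Third, I will apply the resolvent identity to get the bound. For $\theta_1,\theta_2$ in the ball, abbreviate $R_i=(zI-A_X(\theta_i))^{-1}$. Then
\begin{align*}
\calX_{\theta_1}-\calX_{\theta_2} = (C_1-C_2)R_1B_1 + C_2R_1B_1 - C_2R_2B_2,
\end{align*}
with $R_1-R_2=R_1(A_1-A_2)R_2$. Bounding each term pointwise on $|z|=1$ gives
\begin{align*}
\norm{\calX_{\theta_1}-\calX_{\theta_2}}_{\calH_\infty}\le \big(\kappa\, b\,\norm{C_1-C_2} + \kappa^2 b^2 \norm{A_1-A_2} + \kappa\, b\,\norm{B_1-B_2}\big),
\end{align*}
where $b$ bounds the realization matrices on the ball. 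Combined with the first step, this is at most $L_X(\theta^\star)\norm{\theta_1-\theta_2}\le 2L_X(\theta^\star)\varepsilon$. Absorbing the factor $2$ into $L_X$ finishes the proof, and the same argument covers all eight factors.

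The main obstacle is the first step: making the smoothness of the DARE solutions quantitative on a ball whose radius depends only on $\theta^\star$. The risk is that detectability or stabilizability could degrade inside the ball. However, only a neighborhood small enough for the implicit function theorem is needed. Compactness of the closed ball then yields finite constants, so no explicit constants are required, which matches the statement.
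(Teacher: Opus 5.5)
Your proposal is correct and takes essentially the same route as the paper, whose proof is just an appeal to the Riccati smoothness results in Appendix B of Simchowitz--Foster together with analyticity of the system matrices in $\theta$. You make explicit what the paper leaves implicit: the uniform Schur margin of $A_{\mathsf{cl}}^c(\theta)$ and $A_{\mathsf{cl}}^o(\theta)$ on the ball, and the resolvent identity that turns Lipschitz bounds on the realization matrices into an $\mathcal{H}_\infty$ bound.
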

\begin{proof}
    This result follows from the results on the smoothness of the Riccati equation solution with respect to the system parameters, as shown in
    Appendix B of\citep{simchowitz2020naive} and the assumption that the system matrices are analytic functions of $\theta$. 
\end{proof}

\begin{lemma}
    \label{lem: stability}
    There exists a constant $c_{\varepsilon}^{\mathsf{stab}}(\theta^\star, \alpha)$ such that if $\varepsilon \leq c_{\varepsilon}^{\mathsf{stab}}(\theta^\star, \alpha)$, then for any policy $K \in \Pi$ from Assumption~\ref{asmp: closed loop} and any system instance $\theta \in \calB(\theta^\star, \varepsilon)$, $K$ stabilizes the system $\calP_{\theta}^u$. 
\end{lemma}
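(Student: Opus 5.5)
The plan is to argue via the Youla parameterization around the nominal instance and a small-gain argument. Fix any $K \in \calK$ and let $\calQ = \calQ^K_{\theta^\star}$ be its Youla parameter with respect to the nominal doubly co-prime factorization \eqref{eq: nominal co-prime}. By \Cref{lem: bounded controller norm bounded Youla}, $\norm{\calQ}_{\calH_\infty} \le \beta$ uniformly over $\calK$. We also write $K = (U_{\theta^\star} + M_{\theta^\star}\calQ)(V_{\theta^\star} + N_{\theta^\star}\calQ)^{-1}$.

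To check internal stability of $K$ with the perturbed plant $\calP^u_\theta = \tilde M_\theta^{-1}\tilde N_\theta$, we use the standard coprime criterion: $K$ stabilizes $\calP^u_\theta$ if and only if
\begin{align*}
\Lambda_\theta \triangleq \tilde M_\theta (V_{\theta^\star} + N_{\theta^\star}\calQ) - \tilde N_\theta (U_{\theta^\star} + M_{\theta^\star}\calQ)
\end{align*}
is invertible in $\mathcal{RH}_\infty$. This is the same argument as Theorem 12.17/Lemma 5.10 of \cite{zhou1996robust}, which was already used in the proof of \Cref{lem: youla existence and uniqueness}.

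At $\theta = \theta^\star$, the co-prime identity and orthogonality give $\Lambda_{\theta^\star} = I$. Indeed, $\tilde M V - \tilde N U = I$, and $(\tilde M N - \tilde N M)\calQ = 0$.

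Hence
\begin{align*}
\Lambda_\theta = I + \bmat{\tilde M_\theta - \tilde M_{\theta^\star} & -(\tilde N_\theta - \tilde N_{\theta^\star})} \bmat{V_{\theta^\star} + N_{\theta^\star}\calQ \\ U_{\theta^\star} + M_{\theta^\star}\calQ}.
\end{align*}
By \Cref{lem: coprime perturbations}, for $\varepsilon \le c^X_\varepsilon(\theta^\star)$ the first factor has $\calH_\infty$ norm at most $2L_X(\theta^\star)\varepsilon$. The second factor has norm at most $c_0(\theta^\star)(1+\beta)$, where $c_0$ bounds the nominal factors. So the perturbation term $\Delta$ satisfies $\norm{\Delta}_{\calH_\infty} \le 2L_X c_0(1+\beta)\varepsilon$.

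Now choose
\begin{align*}
c_\varepsilon^{\mathsf{stab}}(\theta^\star,\alpha) = \min\curly{c^X_\varepsilon(\theta^\star), \; \bigl(4L_X c_0(1+\beta)\bigr)^{-1}},
\end{align*}
noting that $\beta$ depends on $\alpha$ and $\theta^\star$. With this choice $\norm{\Delta}_{\calH_\infty} \le 1/2$. Then $\Lambda_\theta^{-1} = \sum_k (-\Delta)^k$ converges in $\mathcal{RH}_\infty$, which is a Banach algebra under $\calH_\infty$. The inverse is proper, stable, and has norm at most $2$. Therefore $K$ stabilizes $\calP^u_\theta$.

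The main obstacle is justifying the coprime stability criterion with a perturbed left factorization of the plant and a nominally-built right factorization of the controller. The condition to verify is that $(U_{\theta^\star}+M_{\theta^\star}\calQ, V_{\theta^\star}+N_{\theta^\star}\calQ)$ is right coprime, which follows from the Bezout identity with $(\tilde V_{\theta^\star}+\calQ\tilde N_{\theta^\star}, \tilde U_{\theta^\star}+\calQ\tilde M_{\theta^\star})$. The same computation as in \Cref{lem: affinity of closed-loop map} then expresses every block of the closed loop map for $\theta$ through $\Lambda_\theta^{-1}$ times stable factors.

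A secondary check is well-posedness, i.e. that $I - \calP^u_\theta K$ is invertible as a proper map. This is automatic because both the plant and the controller are strictly proper.
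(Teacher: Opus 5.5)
Your proposal is correct and follows the paper's proof essentially step for step. You write $K$ through the nominal Youla factors, invoke the coprime criterion (Lemma 5.10 of \cite{zhou1996robust}) that $\tilde M_\theta V - \tilde N_\theta U$ be invertible in $\mathcal{RH}_\infty$, use $\tilde M_{\theta^\star} V - \tilde N_{\theta^\star} U = I$, and keep the perturbation below $1/2$ in $\calH_\infty$ norm via \Cref{lem: coprime perturbations} and $\norm{\calQ^K_{\theta^\star}}_{\calH_\infty}\le\beta$.

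One minor slip in your side remark: right coprimeness of $(U_{\theta^\star}+M_{\theta^\star}\calQ,\ V_{\theta^\star}+N_{\theta^\star}\calQ)$ is certified directly by $\Lambda_{\theta^\star}=I$, with Bezout partners $\tilde M_{\theta^\star}, \tilde N_{\theta^\star}$. It is not certified by the identity involving the controller's left factors $\tilde V_{\theta^\star}+\calQ\tilde N_{\theta^\star}$ and $\tilde U_{\theta^\star}+\calQ\tilde M_{\theta^\star}$, which pair with $M_{\theta^\star}, N_{\theta^\star}$ instead.
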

\begin{proof}
    Let $K = U V^{-1}$ with $U= U_{\theta^\star}+M_{\theta^\star}\calQ^K_{\theta^\star}$, $V=V_{\theta^\star}+N_{\theta^\star}\calQ^K_{\theta^\star}$. By Lemma 5.10 of \citep{zhou1996robust}, a necessary and sufficient condition for the interconnection of $\calP_{\theta}^u$ with $K$ to be internally stable is that $\tilde M_{\theta} V - \tilde N_{\theta} U$ is invertible in $\mathcal{RH}_{\infty}$. This quantity may be written $\tilde M_{\theta} V - \tilde N_{\theta} U = (\tilde M_{\theta^\star} + \Delta \tilde M) V - (\tilde N_{\theta^\star} +  \Delta \tilde N) U$, for $\Delta \tilde N = \tilde N_{\theta} - \tilde N_{\theta^\star}$ and $\Delta \tilde M = \tilde M_{\theta} - \tilde M_{\theta^\star}$. As $\tilde M_{\theta^\star} V - \tilde N_{\theta^\star} U=I$ it suffices to have
    \begin{align*}
        \norm{(\Delta \tilde M\, V - \Delta \tilde N\, U)}_{\calH_\infty} \leq 1/2.
    \end{align*}
    Then it also suffices to have  
    \begin{align*}
        \norm{\bmat{\Delta \tilde M & \Delta \tilde N}}_{\calH_{\infty}} \leq \frac{1}{2\norm{\bmat{V \\ U} }_{\calH_{\infty}}}.
    \end{align*}
    It holds that \begin{align*}
 &   \norm{\bmat{V \\ U} }_{\calH_{\infty}}\le \norm{\bmat{V_{\theta^\star} \\ U_{\theta^\star}} }_{\calH_{\infty}}+\norm{\bmat{N_{\theta^\star} \\ M_{\theta^\star}} }_{\calH_{\infty}}\norm{\calQ^K_{\theta^\star} }_{\calH_{\infty}}\\
 & \norm{\bmat{V \\ U} }_{\calH_{\infty}}\le \underbrace{\norm{\bmat{V_{\theta^\star} \\ U_{\theta^\star}} }_{\calH_{\infty}}+\norm{\bmat{N_{\theta^\star} \\ M_{\theta^\star}} }_{\calH_{\infty}}\beta}_{\gamma(\alpha)}.
    \end{align*}

    Then for stability of the interconnection, it suffices to have 
     \begin{align*}
        \norm{\bmat{\Delta \tilde M & \Delta \tilde N}}_{\calH_{\infty}}\leq \frac{1}{2(\gamma(\alpha))}.
    \end{align*}
    By \Cref{lem: coprime perturbations}, there exists bounds on $\norm{\Delta \tilde N}_{\calH_\infty}$ and $\norm{\Delta \tilde N}_{\calH_{\infty}}$ that are linear in $\varepsilon$ as long as $\varepsilon \leq c_\varepsilon^X(\theta^\star)$. Then there exists constant $c_{\varepsilon}^{\mathsf{stab}}$ depending on the system and the value of $\alpha$ such that if $\varepsilon \leq c_{\varepsilon}^{\mathsf{stab}}$, the above bound on $\norm{\bmat{\Delta \tilde M & \Delta \tilde N}}_{\calH_{\infty}}$ is satisfied.
\end{proof}

\begin{lemma}[Smoothness of Youla Parameter]
\label{lem: smoothness of youla parameter}
    Suppose Assumption~\ref{asmp: closed loop} holds. Consider a strictly proper transfer function $K \in \Pi$.
 There exist constants $ c_{\varepsilon}^\calQ(\theta^\star, \alpha)$ and $ L_\calQ(\theta^\star, \alpha)$ such that for $\varepsilon \leq c_{\varepsilon}^\calQ(\theta^\star, \alpha)$ and any two $\theta_1, \theta_2 \in \calB(\theta^\star, \varepsilon)$,
    \begin{align*}
        \norm{\calQ_{\theta_2}^K -\calQ_{\theta_1}^K}_{\calH_\infty} \leq L_\calQ(\theta^\star, \alpha) \norm{\theta_2 - \theta_1}.
    \end{align*}
\end{lemma}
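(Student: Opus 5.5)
We plan to work with the explicit formula for the Youla parameter from Lemma~\ref{lem: youla existence and uniqueness}, rewritten so that every factor is stable and bounded uniformly over the ball. Fix $K\in\calK$ and $\theta\in\calB(\theta^\star,\varepsilon)$. As in the proof of Lemma~\ref{lem: youla existence and uniqueness}, write $K=UV^{-1}$ with $U=U_{\theta^\star}+M_{\theta^\star}\calQ^K_{\theta^\star}$ and $V=V_{\theta^\star}+N_{\theta^\star}\calQ^K_{\theta^\star}$; this is a right co-prime factorization of $K$. With $Z_\theta=\tilde M_\theta V-\tilde N_\theta U$, that proof gives
\begin{align*}
\calQ^K_\theta=(\tilde V_\theta U-\tilde U_\theta V)\,Z_\theta^{-1}.
\end{align*}
The task is then to show that $\theta\mapsto \tilde V_\theta U-\tilde U_\theta V$ and $\theta\mapsto Z_\theta^{-1}$ are Lipschitz in $\calH_\infty$, with constants uniform over $K\in\calK$.

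\textbf{Step 1 (uniform bounds).} By Lemma~\ref{lem: bounded controller norm bounded Youla}, $\norm{\calQ^K_{\theta^\star}}_{\calH_\infty}\le\beta$. Hence $\norm{\bmat{V\\U}}_{\calH_\infty}\le\gamma(\alpha)$, exactly as in the proof of Lemma~\ref{lem: stability}. By Lemma~\ref{lem: coprime perturbations}, for $\varepsilon\le c^X_\varepsilon$ every $\tilde U_\theta,\tilde V_\theta,\tilde M_\theta,\tilde N_\theta$ satisfies $\norm{X_\theta}_{\calH_\infty}\le \norm{X_{\theta^\star}}_{\calH_\infty}+L_X c^X_\varepsilon$.

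\textbf{Step 2 (invertibility of $Z_\theta$).} Since $Z_{\theta^\star}=I$, we have $Z_\theta=I+(\Delta\tilde M\,V-\Delta\tilde N\,U)$. Choose $c^\calQ_\varepsilon\le\min\{c^X_\varepsilon,c^{\mathsf{stab}}_\varepsilon\}$ so that the perturbation term has $\calH_\infty$ norm at most $1/2$. A Neumann series in the Banach algebra $\mathcal{RH}_\infty$ then gives $Z_\theta^{-1}\in\mathcal{RH}_\infty$ with $\norm{Z_\theta^{-1}}_{\calH_\infty}\le 2$ for every $\theta$ in the ball.

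\textbf{Step 3 (Lipschitz estimates).} We use the identity $Z_{\theta_2}^{-1}-Z_{\theta_1}^{-1}=Z_{\theta_2}^{-1}(Z_{\theta_1}-Z_{\theta_2})Z_{\theta_1}^{-1}$. This gives
\begin{align*}
\norm{Z_{\theta_2}^{-1}-Z_{\theta_1}^{-1}}_{\calH_\infty}\le 4\gamma(\alpha)\norm{\bmat{\tilde M_{\theta_2}-\tilde M_{\theta_1} & \tilde N_{\theta_2}-\tilde N_{\theta_1}}}_{\calH_\infty}.
\end{align*}
In the same way, $\norm{(\tilde V_{\theta_2}-\tilde V_{\theta_1})U-(\tilde U_{\theta_2}-\tilde U_{\theta_1})V}_{\calH_\infty}$ is bounded by $\gamma(\alpha)$ times the differences of the co-prime factors. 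Splitting $\calQ^K_{\theta_2}-\calQ^K_{\theta_1}$ by add-and-subtract, and using the bounds of Steps 1--2, yields a constant $L_\calQ(\theta^\star,\alpha)$ multiplying the co-prime differences.

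\textbf{Main obstacle.} Lemma~\ref{lem: coprime perturbations} as stated only bounds $\norm{X_{\theta_1}-X_{\theta_2}}_{\calH_\infty}$ by $L_X\varepsilon$, not by $L_X\norm{\theta_1-\theta_2}$. We therefore expect to strengthen it to a genuine Lipschitz statement. The plan is to note that the state-space data $(A_{\mathsf{cl}}^c,A_{\mathsf{cl}}^o,F,L,B,C)$ are analytic, hence Lipschitz, in $\theta$ on the ball, by the Riccati smoothness results cited there. Then we bound the transfer-function differences through the resolvent identity
\begin{align*}
(zI-A_2)^{-1}-(zI-A_1)^{-1}=(zI-A_2)^{-1}(A_2-A_1)(zI-A_1)^{-1}.
\end{align*}
This uses $\sup_{|z|=1}\norm{(zI-A_{\mathsf{cl}}(\theta))^{-1}}$, which is uniformly bounded on the ball once the ball is small enough that the spectral radii stay bounded away from $1$. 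This gives $\norm{X_{\theta_2}-X_{\theta_1}}_{\calH_\infty}\le L_X\norm{\theta_2-\theta_1}$, which completes Step 3.
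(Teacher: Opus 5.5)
Your proof is correct, but it uses a different reference factorization from the paper's.

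\textbf{The paper's route.} The paper equates the right factorization of $K$ at $\theta_1$ with the left factorization at $\theta_2$. This gives the linear-fractional relation $\calQ^K_{\theta_2}=(\calQ^K_{\theta_1}+\Delta_1+\Delta_2\calQ^K_{\theta_1})(I+\Delta_3+\Delta_4\calQ^K_{\theta_1})^{-1}$. Here the $\Delta_i$ are differences of the left co-prime factors between $\theta_2$ and $\theta_1$, multiplied by right factors at $\theta_1$. The paper then bounds the difference using $\norm{\calQ^K_{\theta_1}}_{\calH_\infty}\le\beta$.

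\textbf{Your route.} You freeze the right factorization $(U,V)$ at $\theta^\star$. Every $\calQ^K_\theta$ is then given by the single formula $(\tilde V_\theta U-\tilde U_\theta V)Z_\theta^{-1}$, and $\theta$ enters only through the four left factors.

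\textbf{What each buys.} You only need $\norm{\calQ^K_{\theta^\star}}_{\calH_\infty}\le\beta$, which is exactly what \Cref{lem: bounded controller norm bounded Youla} provides. The paper invokes that lemma for $\calQ^K_{\theta_1}$ at an arbitrary $\theta_1$ in the ball, which the lemma does not literally give. Closing that would need a bootstrap: first apply the linear-fractional formula with $\theta_1=\theta^\star$ to get a uniform bound $\beta+O(\varepsilon)$ on the ball. Your invertibility step, $\norm{Z_\theta-I}_{\calH_\infty}\le 1/2$, is the same estimate already made in \Cref{lem: stability}, so nothing new is needed there. In exchange, the paper's route yields an explicit, center-free transition formula between Youla parameters at any two instances.

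\textbf{Shared ingredient.} Both arguments need a genuine Lipschitz version of \Cref{lem: coprime perturbations}, with a bound in $\norm{\theta_2-\theta_1}$ rather than $\varepsilon$. The paper passes over this, and your resolvent-identity sketch supplies it. One small point: the uniform bound on $\sup_{|z|=1}\norm{(zI-A_{\mathsf{cl}}^{c}(\theta))^{-1}}$ (and likewise for $A_{\mathsf{cl}}^{o}(\theta)$) does not follow from spectral radii bounded away from one alone. It follows from continuity of these matrices on the closed ball, for example spectral radius bounded away from one together with uniformly bounded norms.
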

\begin{proof}
    By \Cref{lem: stability}, $\varepsilon$ can be taken small enough that $K$ stabilizes the systems corresponding to $\theta_1$ and $\theta_2$. 
    We begin by expressing $\calQ_{\theta_1}^K$ in terms of $\calQ_{\theta_2}^K$. Specifically, recall that $\calQ_{\theta_1}^K$ is such that the controller can be expressed in terms of the right co-prime factorization:
        $K = (U_{\theta_1} + M_{\theta_1} \calQ_{\theta_1}^K)(V_{\theta_1} + N_{\theta_1} \calQ_{\theta_1}^K)^{-1}. $
    Similarly, $\calQ_{\theta_2}^K$ is such that the controller can be expressed in terms of the left co-prime factorization as 
        $K = (\tilde V_{\theta_2} + \calQ_{\theta_2}^K \tilde N_{\theta_2})^{-1}(\tilde U_{\theta_2} + \calQ_{\theta_2}^K \tilde M_{\theta_2}).$ 
    Setting these two expressions equal, we can express $\calQ_{\theta_2}^K$ as 
    \begin{align*}
        \calQ_{\theta_2}^K &= \brac{\tilde V_{\theta_2}(U_{\theta_1} + M_{\theta_1} \calQ_{\theta_1}^K) - \tilde U_{\theta_2} (V_{\theta_1} + N_{\theta_1} \calQ_{\theta_1}^K)} \\
        &\times\brac{\tilde M_{\theta_2}(V_{\theta_1} + N_{\theta_1}\calQ_{\theta_1}^K)- \tilde N_{\theta_2}(U_{\theta_1} + M_{\theta_1} \calQ_{\theta_1}^K)}^{-1}.
    \end{align*}
    Let $\tilde V_{\theta_2} =\tilde V_{\theta_1} + \Delta \tilde V$, $\tilde U_{\theta_2} = \tilde U_{\theta_1} + \Delta \tilde U$, $\tilde M_{\theta_2} = \tilde M_{\theta_1} + \Delta \tilde M$, $\tilde N_{\theta_2} = \tilde N_{\theta_1} + \Delta \tilde N$. Then it holds that the numerator in our expression for $\calQ_{\theta_2}^K$ can be expressed as 
    \begin{align*}
        &\tilde V_{\theta_2}(U_{\theta_1} + M_{\theta_1} \calQ_{\theta_1}^K) - \tilde U_{\theta_2} (V_{\theta_1} + N_{\theta_1} \calQ_{\theta_1}^K) \\
        &= \tilde V_{\theta_1} U_{\theta_1} - \tilde U_{\theta_1} V_{\theta_1} + (\tilde V_{\theta_1} M_{\theta_1} -\tilde U_{\theta_1} N_{\theta_1})\calQ_{\theta_1}^K + \Delta \tilde V U_{\theta_1} \\
        &- \Delta \tilde U V_{\theta_1} + (\Delta \tilde V M_{\theta_1} - \Delta \tilde U N_{\theta_1}) \calQ_{\theta_1}^K \\
        &= \calQ_{\theta_1}^K + \Delta_1 + \Delta_2 \calQ_{\theta_1}^K,
    \end{align*}
    where $\Delta_1 = \Delta \tilde V U_{\theta_1} - \Delta \tilde U V_{\theta_1}$ and $\Delta_2 = \Delta \tilde V M_{\theta_1} - \Delta \tilde U N_{\theta_1}$. The final equality follows from orthogonality and the co-prime identity. The denominator in our expression for $\calQ_{\theta_2}^K$ similarly simplifies as
    \begin{align*}
        &\tilde M_{\theta_2}(V_{\theta_1} + N_{\theta_1}\calQ_{\theta_1}^K)- \tilde N_{\theta_2}(U_{\theta_1} + M_{\theta_1} \calQ_{\theta_1}^K) 
        \\&= \tilde M_{\theta_1} V_{\theta_1} - \tilde N_{\theta_1} U_{\theta_1} + (\tilde M_{\theta_1} N_{\theta_1} - \tilde N_{\theta_1} M_{\theta_1})\calQ_{\theta_1}^K + \Delta \tilde M V_{\theta_1} \\
        &- \Delta \tilde N U_{\theta_1} + (\Delta \tilde M N_{\theta_1} - \Delta \tilde N M_{\theta_1})\calQ_{\theta_1}^K \\
        &= I + \Delta_3 + \Delta_4 \calQ_{\theta_1}^K,
    \end{align*}
    where $\Delta_3 = \Delta \tilde M V_{\theta_1} - \Delta \tilde N U_{\theta_1}$ and $\Delta_4 = \Delta \tilde M N_{\theta_1} - \Delta \tilde N M_{\theta_1}$. Combining these facts, we find that 
    \begin{align*}
        \calQ_{\theta_2}^K = (\calQ_{\theta_1}^K+  \Delta_1 + \Delta_2 \calQ_{\theta_1}^K)(I + \Delta_3 + \Delta_4 \calQ_{\theta_1}^K)^{-1}.
    \end{align*}
    Then
    \begin{align*}
       &\norm{\calQ_{\theta_2}^K - \calQ_{\theta_1}^K}_{\calH_\infty} \\
       &= \norm{(\calQ_{\theta_1}^K+  \Delta_1 + \Delta_2 \calQ_{\theta_1}^K)(I + \Delta_3 + \Delta_4 \calQ_{\theta_1}^K)^{-1} - \calQ_{\theta_1}^K}_{\calH_\infty} \\
       &= \big\|(\Delta_1 + \Delta_2 \calQ_{\theta_1}^K + \calQ_{\theta_1}^K \Delta_3 + \calQ_{\theta_1}^K \Delta_4 \calQ_{\theta_1}^K) \\
       &\times (I + \Delta_3 + \Delta_4 \calQ_{\theta_1}^K)^{-1}\big\|_{\calH_\infty} \\
       &\leq \sum_{i=1}^4 \norm{  \Delta_i}_{\calH_\infty}\!\!(1 + \norm{\calQ_{\theta_1}^K}_{\calH_\infty}^2\!\!)  \norm{(I + \Delta_3 + \Delta_4 \calQ_{\theta_1}^K)^{-1}}_{\calH_\infty}\!\!. 
    \end{align*}
    By \Cref{lem: bounded controller norm bounded Youla}, $\norm{\calQ_{\theta_1}^K}_{\calH_\infty}$ is bounded by a quantity $\beta$ defined in terms of the parameter $\alpha$ of Assumption~\ref{asmp: closed loop} and system constants defined in terms of the nominal instance $\theta^\star$.  Combining this fact with the co-prime perturbations bounds of \Cref{lem: coprime perturbations} ensures the existence of a constant $c_{\varepsilon}^\calQ(\theta^\star, \alpha)$ such that if $\varepsilon\leq  c_{\varepsilon}^\calQ(\theta^\star, \alpha)$, then $\norm{\Delta_3 + \Delta_4 \calQ_{\theta_1}^K}_{\calH_\infty} \leq \frac{1}{2}$. This in turn implies that $\norm{(I + \Delta_3 + \Delta_4 \calQ_{\theta_1}^K)^{-1}}_{\calH_\infty} \leq 2$. %
 Then we are left with the inequality 
        $\norm{\calQ_{\theta_2}^K - \calQ_{\theta_1}^K}_{\calH_\infty} \leq 2 \sum_{i=1}^4 \norm{\Delta_i}_{\calH_\infty}(1 + \beta^2)$.
    We achieve the lemma statement by again appealing to the co-prime perturbation arguments of \Cref{lem: coprime perturbations} and defining $L_\calQ(\theta^\star, \varepsilon)$ appropriately.
\end{proof}

\begin{lemma}[Closeness of LQG solutions]
    \label{lem: closeness of LQG solutions}
    There exists a constant $ c_{\varepsilon}^{LQG}(\theta^\star, \alpha)$ depending only on the system instance $\theta^\star$ and the value of $\alpha$ in Assumption~\ref{asmp: closed loop} such that for $\varepsilon \leq c_{\varepsilon}^{LQG}(\theta^\star, \alpha)$, the policy class $\Pi$ contains the optimal LQG policy corresponding to every instance $\theta \in \calB(\theta^\star, \varepsilon)$.
\end{lemma}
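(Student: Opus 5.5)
The plan is to use the definition of $\alpha$ in \eqref{eq:alpha}: the class $\calK$ is assumed to contain every strictly causal controller whose closed-loop map satisfies $\norm{\calT_K}_{\calH_\infty} \leq \norm{\calT_{K_{\theta^\star}}}_{\calH_\infty} + \alpha$ with $\alpha>0$. More precisely, the reading here is that $\Pi=\calK$ is the set of strictly proper controllers satisfying such an $\calH_\infty$ bound. It then suffices to show two things for every $\theta \in \calB(\theta^\star,\varepsilon)$ with $\varepsilon$ small. First, $K_\theta$ stabilizes $\calP^u_{\theta^\star}$. Second, $\norm{\calT_{K_\theta}-\calT_{K_{\theta^\star}}}_{\calH_\infty} \leq L\varepsilon$ for a constant $L$ depending only on $\theta^\star$. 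Choosing $\varepsilon \leq \alpha/L$ then gives the claim.

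For stability and the perturbation bound, I would work with Youla parameters relative to the nominal factorization \eqref{eq: nominal co-prime}. The controller $K_\theta$ is built from the gains $F(\theta), L(\theta)$. By the Riccati smoothness results of Appendix B of \citep{simchowitz2020naive}, together with analyticity of the system matrices, these gains are Lipschitz in $\theta$ near $\theta^\star$.

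To get stability, I would write the closed loop of $\calP^u_{\theta^\star}$ with $K_\theta$ as a state-space system. Its state matrix is
\begin{align*}
\bmat{A(\theta^\star) & B(\theta^\star)F(\theta) \\ L(\theta)C(\theta^\star) & A(\theta)-L(\theta)C(\theta)+B(\theta)F(\theta)}.
\end{align*}
This matrix is an $O(\varepsilon)$ perturbation of the nominal LQG closed-loop matrix. The nominal matrix is Schur, since it is similar to $\mathrm{blkdiag}(A_{\mathsf{cl}}^c(\theta^\star),A_{\mathsf{cl}}^o(\theta^\star))$. Continuity of the spectral radius then gives a margin: for $\varepsilon$ below some $c(\theta^\star)$, the spectral radius is at most $(1+\rho(\theta^\star))/2<1$.

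Each block of $\calT_{K_\theta}$ is a transfer function of this closed-loop realization, with $B,C,D$ matrices that are also Lipschitz in $\theta$. The resolvent $(zI-\mathcal{A}(\theta))^{-1}$ is uniformly bounded on $|z|\geq 1$ for $\theta$ in the ball. From this, the standard identity
\begin{align*}
X^{-1}-Y^{-1}=X^{-1}(Y-X)Y^{-1}
\end{align*}
yields the Lipschitz bound $\norm{\calT_{K_\theta}-\calT_{K_{\theta^\star}}}_{\calH_\infty}\leq L\norm{\theta-\theta^\star}$.

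An alternative route uses Lemma~\ref{lem: affinity of closed-loop map}: write $\calQ^{K_\theta}_{\theta^\star}$ via \eqref{eq: explicit youla}, and bound it by $O(\varepsilon)$ using Lemma~\ref{lem: coprime perturbations} and the Youla algebra of Lemma~\ref{lem: smoothness of youla parameter}. Applied in reverse, that algebra uses $\calQ^{K_\theta}_\theta=0$. However, Lemma~\ref{lem: smoothness of youla parameter} presupposes $K\in\Pi$, which would make the argument circular. So I would either rederive its algebra directly for $K_\theta$, or use the state-space argument above.

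The main obstacle is ensuring uniformity of the $\calH_\infty$ bound over the whole ball. This requires a uniform stability margin, not just pointwise Schur stability. I would obtain it from a Lyapunov argument: take $X\succ0$ solving the nominal Lyapunov equation and note that it certifies decay for all nearby matrices. This yields a uniform bound on $\sup_{|z|=1}\norm{(zI-\mathcal{A}(\theta))^{-1}}$. Strict properness of $K_\theta$ is immediate from its realization.
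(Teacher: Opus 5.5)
Your proposal is correct and follows essentially the same route as the paper: Riccati perturbation bounds (Appendix B of \citep{simchowitz2020naive}) give $\|\mathcal{T}_{K_\theta}-\mathcal{T}_{K_{\theta^\star}}\|_{\mathcal{H}_\infty}\le\alpha$ for all $\theta$ in a small enough ball. That places $K_\theta$ in the class under the same implicit reading of Assumption~\ref{asmp: closed loop} as an $\mathcal{H}_\infty$ sublevel set (with $\alpha>0$), and your state-space realization, uniform stability margin, and resolvent-identity Lipschitz bound correctly spell out what the paper compresses into that citation, including the internal stability of $K_\theta$ with $\mathcal{P}^u_{\theta^\star}$ that the paper leaves implicit; the only slip is that the nominal closed-loop matrix is similar to the block upper-triangular matrix $\begin{bmatrix} A_{\mathsf{cl}}^c & -BF \\ 0 & A_{\mathsf{cl}}^o\end{bmatrix}$ rather than to $\mathrm{blkdiag}(A_{\mathsf{cl}}^c,A_{\mathsf{cl}}^o)$, which still yields the Schur property.
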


\begin{proof}
    By the Riccati perturbation arguments in Appendix B of \citep{simchowitz2020naive}, there exists a constant $ c_{\varepsilon}^{LQG}(\theta^\star, \alpha)$ dependent on the nominal system and $\alpha$ such that if $\varepsilon < c_{\varepsilon}^{LQG}(\theta^\star, \alpha)$, then for any $\theta \in \calB(\theta^\star, \varepsilon)$, the LQG controller $K_{\theta}$ satisfies the inequality $\norm{\mathcal{T}_{K_{\theta}} - \mathcal{T}_{K_{\theta_\star}}}_{\calH_{\infty}
    } \leq \alpha$.
    Then the optimal LQG controller corresponding to any $\theta \in \calB(\theta^\star, \varepsilon)$ is within the policy class $\Pi$ of Assumption~\ref{asmp: closed loop}. 
\end{proof}
Next, we present a bound characterizing the gap between the response of two different controllers applied to two different systems, using the following definition of the closed loop gap: 
\begin{equation}
    \label{eq: closed loop gap}
\begin{aligned}
\mathsf{CL\_gap}(K, \theta_1, \theta_2)\!\triangleq\! \bmat{I \\ K} (\calF(\tilde \calP_{\theta_2}, K) \!-\! \calF(\tilde \calP_{\theta_1}, K)).
\end{aligned}
\end{equation}

\begin{lemma}
    \label{lem: closed loop perturbations}
    Consider a controller $K \in \Pi$ such that $K$ stabilizes every system instance in $ \calB(\theta^\star, \varepsilon)$. Let $\theta \in \calB(\theta^\star, \varepsilon)$. Recall $K_{\theta}^{y,u} = \bmat{\tilde U_{\theta} & I - \tilde V_{\theta}}$ and choose
        $K^{y,u} \triangleq \bmat{\tilde U_{\theta} + \calQ_{\theta}^K \tilde M_{\theta} & I - \tilde V_{\theta} - \calQ_{\theta}^K \tilde N_{\theta}}$.

    There exists a constant $L_{CL}(\theta^\star, \alpha)$ that depends only on the nominal system instance such that 
    for $\varepsilon \leq c_{\varepsilon}^{X}(\theta^\star)$ (as in \Cref{lem: coprime perturbations}), 
    \begin{align*}
       &\norm{(K^{y,u} \!\!- \!K_{\theta}^{y,u})\mathsf{CL\_gap}(K, \theta,  \theta^\star\!)}_{\calH_2}\!  \leq \!\norm{\calQ_{\theta}^{K}}_{\calH_2} \!L_{CL}(\theta^\star,\alpha) \varepsilon.
    \end{align*}
\end{lemma}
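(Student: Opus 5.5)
The plan is to factor the left-hand side as $\calQ_{\theta}^K$ times a stable transfer matrix whose $\calH_\infty$ norm is $O(\varepsilon)$. We then conclude with $\norm{\calQ G}_{\calH_2}\le \norm{\calQ}_{\calH_2}\norm{G}_{\calH_\infty}$.

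\textbf{Step 1 (peel off the Youla parameter).} By the choice of $K^{y,u}$,
\[
K^{y,u}-K_\theta^{y,u}=\calQ_\theta^K\bmat{\tilde M_\theta & -\tilde N_\theta}.
\]
Hence $(K^{y,u}-K_\theta^{y,u})\bmat{I\\K}=\calQ_\theta^K(\tilde M_\theta-\tilde N_\theta K)$. We also check that this $K^{y,u}$ is admissible, i.e.\ $K^{y,u}\bmat{I\\K}=K$. This follows from $K=(\tilde V_\theta+\calQ_\theta^K\tilde N_\theta)^{-1}(\tilde U_\theta+\calQ_\theta^K\tilde M_\theta)$, which is valid since $K$ stabilizes $\theta$. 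It therefore suffices to bound
\[
G\triangleq(\tilde M_\theta-\tilde N_\theta K)\big(\calF(\tilde\calP_{\theta^\star},K)-\calF(\tilde\calP_\theta,K)\big)
\]
in $\calH_\infty$ by $L_{CL}\varepsilon$.

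\textbf{Step 2 (simplify both terms).} The relevant closed-loop map is $\calF(\tilde\calP_{\vartheta},K)=(I-\calP^u_{\vartheta}K)^{-1}\calP^d_{\vartheta}$ for $\vartheta\in\{\theta,\theta^\star\}$.

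For the $\theta$ term, $\tilde M_\theta-\tilde N_\theta K=\tilde M_\theta(I-\calP^u_\theta K)$. So that term collapses to $\tilde M_\theta\calP^d_\theta$.

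For the $\theta^\star$ term, we use the right factorization $K=UV^{-1}$ with $U=U_{\theta^\star}+M_{\theta^\star}\calQ^K_{\theta^\star}$ and $V=V_{\theta^\star}+N_{\theta^\star}\calQ^K_{\theta^\star}$. As in the proof of \Cref{lem: affinity of closed-loop map}, $(I-\calP^u_{\theta^\star}K)^{-1}=V\tilde M_{\theta^\star}$. Hence
\[
G=(\tilde M_\theta V-\tilde N_\theta U)\,\tilde M_{\theta^\star}\calP^d_{\theta^\star}-\tilde M_\theta\calP^d_\theta .
\]
As in the proof of \Cref{lem: stability}, $\tilde M_\theta V-\tilde N_\theta U=I+\Delta\tilde M\,V-\Delta\tilde N\,U$. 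Therefore
\[
G=(\Delta\tilde M\,V-\Delta\tilde N\,U)\tilde M_{\theta^\star}\calP^d_{\theta^\star}+\big(\tilde M_{\theta^\star}\calP^d_{\theta^\star}-\tilde M_\theta\calP^d_\theta\big).
\]

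\textbf{Step 3 (bound each piece).}

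For the first piece, \Cref{lem: coprime perturbations} gives $\norm{\bmat{\Delta\tilde M&\Delta\tilde N}}_{\calH_\infty}\le 2L_X\varepsilon$. The bound $\norm{\bmat{V\\U}}_{\calH_\infty}\le\gamma(\alpha)$ from the proof of \Cref{lem: stability} relies on \Cref{lem: bounded controller norm bounded Youla}. Together with $\norm{\tilde M_{\theta^\star}\calP^d_{\theta^\star}}_{\calH_\infty}<\infty$, this makes the first piece $O(\varepsilon)$ with constants depending on $(\theta^\star,\alpha)$.

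For the second piece, the computation in the proof of \Cref{lem: performance difference lemma} gives the stable realization
\[
\tilde M_\vartheta\calP^d_\vartheta=\brac{\begin{array}{c|cc}A^o_{\mathsf{cl}}(\vartheta)&\Sigma_w^{1/2}(\vartheta)&-L(\vartheta)\Sigma_v^{1/2}(\vartheta)\\ \hline C(\vartheta)&0&\Sigma_v^{1/2}(\vartheta)\end{array}}.
\]
Its matrices are smooth in $\vartheta$ by analyticity and Riccati perturbation (Appendix B of \citep{simchowitz2020naive}). For $\varepsilon\le c^X_\varepsilon(\theta^\star)$ the matrix $A^o_{\mathsf{cl}}$ stays uniformly stable, so the $\calH_\infty$ difference is Lipschitz, i.e.\ $O(\varepsilon)$. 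This is the same argument that underlies \Cref{lem: coprime perturbations}.

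Summing the two pieces yields $\norm{G}_{\calH_\infty}\le L_{CL}(\theta^\star,\alpha)\varepsilon$, which combined with Step 1 gives the claim.

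The main obstacle is Step 2. The naive approach of bounding $\calF(\tilde\calP_{\theta^\star},K)-\calF(\tilde\calP_\theta,K)$ directly would require uniform control of $(I-\calP^u_\theta K)^{-1}$ over the class. Rewriting both terms through coprime factors avoids this and makes every factor either uniformly bounded via $\alpha$ or explicitly $O(\varepsilon)$. One care point remains: the uniform stability margin of the realization of $\tilde M_\vartheta\calP^d_\vartheta$ must be verified, possibly by shrinking $\varepsilon$.
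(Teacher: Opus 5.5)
Your proof is correct, but it organizes the computation differently from the paper. The paper first writes $\bmat{I \\ K}\calF(\tilde\calP_{\tilde\theta},K)=\bmat{V_{\tilde\theta}+N_{\tilde\theta}\calQ^K_{\tilde\theta} \\ U_{\tilde\theta}+M_{\tilde\theta}\calQ^K_{\tilde\theta}}\tilde M_{\tilde\theta}\calP^d_{\tilde\theta}$ separately for $\tilde\theta\in\{\theta,\theta^\star\}$. Each instance uses its own coprime factors and its own Youla parameter. The paper then splits $\mathsf{CL\_gap}$ into two pieces: factor differences such as $V_\theta-V_{\theta^\star}+N_\theta\calQ^K_\theta-N_{\theta^\star}\calQ^K_{\theta^\star}$ times $\tilde M_\theta\calP^d_\theta$, plus nominal factors times $\tilde M_\theta\calP^d_\theta-\tilde M_{\theta^\star}\calP^d_{\theta^\star}$. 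It bounds these factor by factor after pulling out $\calQ^K_\theta\bmat{\tilde M_\theta & -\tilde N_\theta}$.

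You instead apply $\bmat{\tilde M_\theta & -\tilde N_\theta}$ before bounding. This cancels the $\theta$ closed loop exactly, leaving $\tilde M_\theta\calP^d_\theta$. Through $\tilde M_\theta V-\tilde N_\theta U=I+\Delta\tilde M\,V-\Delta\tilde N\,U$, it also expresses the $\theta^\star$ term using nominal factors only. The payoff is that you never compare Youla parameters across instances.

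Bounded term by term, the paper's estimate of $N_\theta\calQ^K_\theta-N_{\theta^\star}\calQ^K_{\theta^\star}$ actually requires $\|\calQ^K_\theta-\calQ^K_{\theta^\star}\|_{\calH_\infty}=O(\varepsilon)$. That is \Cref{lem: smoothness of youla parameter}, with its own radius $c^{\calQ}_\varepsilon(\theta^\star,\alpha)$, even though only \Cref{lem: coprime perturbations} is cited. Your argument needs only three ingredients: \Cref{lem: coprime perturbations}, the nominal bound $\beta$ from \Cref{lem: bounded controller norm bounded Youla}, and Lipschitz continuity of $\tilde M_\vartheta\calP^d_\vartheta$. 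Both proofs need that last fact, and you justify it more explicitly through the state-space realization.

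Your version therefore delivers the lemma under the stated radius $\varepsilon\le c^X_\varepsilon(\theta^\star)$. The paper's version is symmetric in the two instances and shows that the unmultiplied closed-loop gap is itself $O(\varepsilon)$, but the lemma does not need that extra information.
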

\begin{proof}
    It holds that $K^{y,u} - K_{\theta}^{y,u} = \calQ_{\theta}^{K} \bmat{ \tilde M_{\theta} & -\tilde N_{\theta}}$. Furthermore, note that for any $\tilde \theta$
    \begin{align*}
    \bmat{I \\ K} \calF(\tilde \calP_{\tilde \theta}, K)
        &= \bmat{V_{\tilde \theta} + N_{\tilde \theta} Q_{\tilde \theta}^K \\ U_{\tilde \theta} + M_{\tilde \theta} Q_{\tilde \theta}^K}(\tilde M_{\tilde \theta} V_{\tilde \theta} - \tilde N_{\tilde \theta} U_{\tilde \theta})^{-1} \tilde M_{\tilde \theta}\calP^{d}_{\tilde \theta} \\
        &= \bmat{V_{\tilde \theta} + N_{\tilde \theta} Q_{\tilde \theta}^K \\ U_{\tilde \theta} + M_{\tilde \theta} Q_{\tilde \theta}^K} \tilde M_{\tilde \theta}\calP^{d}_{\tilde \theta}.
    \end{align*}
    Then 
    \begin{align*}
        \mathsf{CL\_gap}(K, \theta, \theta^\star) 
        &= \bmat{
        V_{\theta}  - V_{\theta^\star}+ N_{\theta} \calQ_{\theta}^{K} - N_{\theta^\star} \calQ_{\theta^\star}^K\\ U_{\theta} - U_{\theta^\star} + M_{\theta} \calQ_{\theta}^{K} - M_{\theta^\star} \calQ_{\theta^\star}^K} \tilde M_{\theta} \calP^{d}_{\theta} \\
        &+ \bmat{  V_{\theta^\star} +N_{\theta^\star}\calQ_{\theta^\star}^K \\ U_{\theta^\star} + M_{\theta^\star} Q_{\theta^\star}^K} \brac{\tilde M_{\theta} \calP^{d}_{\theta} - \tilde M_{\theta^\star} \calP^{d}_{\theta^\star}}. 
    \end{align*}
    Now consider taking the $\calH_2$ norm of the transfer function $(K^{y,u} - K_{\theta}^{y,u})\mathsf{CL\_gap}(K, \theta, \theta^\star)$ and applying the triangle inequality along with mixed $\calH_\infty$,$\calH_2$ submultiplicativity. By the co-prime perturbation bounds of \Cref{lem: coprime perturbations} for  $  \varepsilon \leq c_{\varepsilon}^X(\theta^\star)$, the above quantity can be bounded as  by a linear function of $\norm{\calQ_{\theta}^K}_{\calH_2} \varepsilon$. Denoting the coefficient for this linear function by $L_{CL}(\theta^\star,\alpha)$ provides the inequality in the lemma. 
\end{proof}

\subsection{Proof of \Cref{thm: excess cost lower bound}}\label{s: lower_bound_proof}

For any $\varepsilon > \tilde \varepsilon$, we can lower bound $\calM(\calA, \theta^\star, \varepsilon)$ by $\calM(\calA, \theta^\star, \tilde \varepsilon)$. We therefore prove that the desired bound holds for $\varepsilon$ sufficiently small. 

By \Cref{lem: closeness of LQG solutions}, there exists a constant  $c_{\varepsilon}^{LQG}(\theta^\star, \alpha)$ such that for $\varepsilon \leq c_{\varepsilon}^{LQG}(\theta^\star, \alpha)$, the optimal LQG controller corresponding to any system $\theta \in \calB(\theta^\star, \varepsilon)$ is in the policy class $\Pi$ defined by \Cref{asmp: closed loop}. 
Let $\lambda$ be a prior density over $\calB(\theta^\star,\varepsilon)$ which is absolutely continuous, and satisfies $\lambda(\theta) > 0$ for $\theta \in \calB(\theta^\star, \varepsilon)$ and $\lambda(\theta) \to 0$ as $\norm{\theta - \theta^\star}\to \epsilon$. Specifically, consider the density $\rho(x) = \frac{1}{Z}(1 - \norm{x}^2) \mathbf{1}(\norm{x} \leq 1)$, where $Z$ is a normalization constant, and let $\lambda(x) = \frac{1}{\varepsilon} \rho\paren{\frac{\theta - \theta^\star}{\varepsilon}}$. 
By the fact that the weighted average over a set lower bounds the supremum,
\begin{align*}
    &\sup_{\theta \in \calB(\theta^\star, \varepsilon)} \bfE_{\calD \sim \rho^{\pi_{\exp}, N, T}(\theta)} \brac{ J(K_{\calD}, \theta) - J(K_{\theta}, \theta)} \\
    &\geq \E_{\Theta \sim \lambda} \bfE_{\calD \sim \rho^{\pi_{\exp}, N, T}(\Theta)} \brac{ J(K_{\calD}, \Theta) - J(K_{\Theta}, \Theta)}.
\end{align*}
By \Cref{lem: stability}, there exists a constant $c_{\varepsilon}^{\mathsf{stab}}(\theta^\star, \alpha)$ such that for $\varepsilon\leq c_{\varepsilon}^{\mathsf{stab}}(\theta^\star, \alpha)$, any member of $\Pi$, including $K_{\calD}$, stabilizes all systems in the ball $\calB(\theta^\star, \varepsilon)$. 
Then, by \Cref{lem: performance difference lemma}, the above gap is equal to
\begin{align*}
    \E_{\Theta, \calD} \norm{\Psi^{1/2}(K_{\calD}^{y,u} - K_{\Theta}^{y,u}) \bmat{I \\ K_{\calD}} \calF(\tilde \calP_{\Theta}, K_{\calD})}_{\calH_2}^2, 
\end{align*}
where $K_{\calD}^{y,u}$ is any transfer function such that $K_{\calD} = K_{\calD}^{y,u} \bmat{I \\ K_{\calD}}$.  We will  consider the choice 
\begin{align}
    \label{eq: Kdyu}
        K_{\calD}^{y,u} &=\bmat{ \tilde U_{\Theta} + \calQ_{\Theta}^{K_{\calD}} \tilde M_{\Theta}& I - \tilde V_{\Theta} - \calQ_{\Theta}^{K_{\calD}} \tilde N_{\Theta}}.
    \end{align}

Recall the definition of $\mu$ from \Cref{lem: performance difference lemma} and of $L_\calQ(\theta^\star, \alpha)$ from \Cref{lem: smoothness of youla parameter}. Define the event 
\begin{align}
    \label{eq: calE definition}
    \calE \!=\! \curly{\!\norm{\calQ_{\theta^\star}^{K_{D}}\!}_{\calH_2}^2\!\! \!\leq\! \frac{\trace\paren{H(\theta^\star) \mathsf{FI}^{\pi_{\exp}}(\theta^\star)^{\!-1}\!}}{ N \mu (1-2^{-1/8})}\!\! +\!  2  L_\calQ(\theta^\star, \alpha)^2 \varepsilon^2},
\end{align}
and let $\mathbf{1}(\calE)$ be the corresponding indicator. It holds by a triangle inequality that 
\begin{align}
    \label{eq: main lb triangle inequality}
    &\E_{\Theta, \calD} \norm{ \Psi^{1/2} (K_{\calD}^{y,u} - K_{\Theta}^{y,u}) \bmat{I \\ K_{\calD}} \calF(\tilde \calP_{\Theta}, K_{\calD})}_{\calH_2}^2 \\
    \nonumber
    &\geq \E_{\Theta, \calD} \brac{\norm{ \Psi^{1/2} (K_{\calD}^{y,u} - K_{\Theta}^{y,u}) \bmat{I \\ K_{\calD}} \calF(\tilde \calP_{\Theta}, K_{\calD})}_{\calH_2}^2 \mathbf{1}(\calE)} \\\nonumber
    &\geq \E_{\Theta, \calD} \Bigg[\Bigg(\norm{ \Psi^{1/2} (K_{\calD}^{y,u} \!-\! K_{\Theta}^{y,u}) \bmat{I \\ K_{\calD}} \calF(\tilde \calP_{\theta^\star}, K_{\calD})}_{\calH_2} \!\!\!\!\mathbf{1}(\calE) \\ \nonumber& - \!\norm{ \Psi^{1/2} (K_{\calD}^{y,u} \!-\! K_{\Theta}^{y,u}) \mathsf{CL\_gap}(K_{\calD}, \Theta, \theta^\star)}_{\calH_2} \!\! \mathbf{1}(\calE) \Bigg)^2\Bigg],
\end{align}
where we recall the definition of the closed loop gap from \eqref{eq: closed loop gap}. The first term will be lower bounded using the van Trees inequality, proven in \Cref{s: van trees proof}. 
This implies that the first term in our lower bound \eqref{eq: main lb triangle inequality} possesses the desired dependence on the Hessian, Fisher Information and $N$. 

To conclude the proof of \Cref{thm: excess cost lower bound}, it remains to  derive an upper bound on the negative quantity in \eqref{eq: main lb triangle inequality}. To do so, let $\bar \Psi$ be an upper bound on $\Psi$ in Loewner order for all $\theta\in B(\theta^\star, \varepsilon)$. 
It holds by \Cref{lem: closed loop perturbations} that there exists $c_{\varepsilon}^X(\theta^\star)$ and $L_{CL}(\theta^\star, \alpha)$ such that for $\varepsilon \leq c_{\varepsilon}^X(\theta^\star)$, 
\begin{equation*}
\begin{aligned}
    & \norm{ \Psi^{1/2} (K_{\calD}^{y,u} - K_{\Theta}^{y,u}) \mathsf{CL\_gap}(K_{\calD}, \Theta,  \theta^\star)}_{\calH_2}  \mathbf{1}(\calE)\\
    &\leq \bar \Psi \norm{\calQ_{\Theta}^{K_\calD}}_{\calH_2} L_{CL}(\theta^\star, \alpha) \varepsilon  \mathbf{1}(\calE).
\end{aligned}
\end{equation*}
By definition of the event $\calE$, the above quantity can be bounded by 
\begin{equation}
\label{eq: negative term upper bound}
\begin{aligned}
    &\bar \Psi \paren{\!\sqrt{\frac{\trace\paren{H(\theta^\star) \mathsf{FI}^{\pi_{\exp}}(\theta^\star)^{-1}}}{ N \mu (1-2^{-1/8})} \!+ \! 2 L_\calQ(\theta^\star, \alpha)^2\varepsilon^2} } \!L_{CL}(\theta^\star\!,\! \alpha) \varepsilon.
\end{aligned}
\end{equation}

To conclude our proof from these facts, we observe that by the Cauchy-Schwarz Inequality
    $\E[(a - b)^2] %
    \geq (\sqrt{\E[a^2]} - \sqrt{\E[b^2]})^2.$
Identifying 
\begin{align*}
    a &\gets \norm{\Psi^{1/2} (K_{\calD}^{y,u} - K_{\Theta}^{y,u}) \bmat{I \\ K_{\calD}} \calF(\tilde \calP_{\theta^\star}, K_{\calD})}_{\calH_2} \!\!\mathbf{1}(\calE) \\
    b &\gets \norm{ \Psi^{1/2} (K_{\calD}^{y,u} \!-\! K_{\Theta}^{y,u}) \mathsf{CL\_gap}(K_{\calD}, \Theta, \theta^\star)}_{\calH_2} \!\! \mathbf{1}(\calE), 
\end{align*}
we apply \Cref{lem: vt bound} to lower bound $\E[a^2]$ and the upper bound of \eqref{eq: negative term upper bound} to upper bound $\E[b^2]$. Then there exists constants $\tilde c_{\varepsilon}(\theta^\star, \pi_{\exp}, T, \alpha)$ and $c_N(\theta^\star, \pi_{\exp}, T)$ such that for $\varepsilon \leq \tilde c_{\varepsilon}(\theta^\star, \pi_{\exp}, T, \alpha)$, and $N\geq c_N(\theta^\star, \pi_{\exp}, T) \varepsilon^{-2}$ the following lower bound holds:
\begin{align*}
    &\E_{\Theta, \calD} \norm{ \Psi^{1/2} (K_{\calD}^{y,u} - K_{\Theta}^{y,u}) \bmat{I \\ K_{\calD}} \calF(\tilde \calP_{\Theta}, K_{\calD})}_{\calH_2}^2 \\
    &\geq \left(\sqrt{\mathsf{vTLB}} - \sqrt{\frac{\mathsf{vTLB}}{100} + 2 L_\calQ(\theta^\star, \alpha)^2 L_{CL}(\theta^\star, \alpha)^2 \varepsilon^4} \right)^2.
\end{align*}
In particular, it suffices to take $\tilde c_{\varepsilon}(\theta^\star, \pi_{\exp}, T \alpha) = \min\curly{c_{\varepsilon}^{VT}(\theta^\star, \pi_{\exp}, T, \alpha), c_{\varepsilon}^X(\theta^\star), \sqrt{\frac{\mu (1 - 2^{-1/8})}{\sqrt{2} \cdot 100 L_{CL}(\theta^\star, \alpha)}}}$.

For $\varepsilon \leq \paren{\frac{\mathsf{vTLB}}{200 L_\calQ(\theta^\star, \alpha)^2 L_{CL}(\theta^\star, \alpha)^2}}^{1/4}$, the above quantity is lower bounded by $\mathsf{vTLB}/\sqrt{2}$. The condition 
$\varepsilon \leq \paren{\frac{\mathsf{vTLB}}{200 L_\calQ(\theta^\star, \alpha)^2 L_{CL}(\theta^\star, \alpha)^2}}^{1/4}$ can be expressed as $\varepsilon \leq  c_{\varepsilon}(\theta^\star, \pi_{\exp}, T, \alpha) N^{-1/4}$ for some constant $c_{\varepsilon}(\theta^\star, \pi_{\exp}, T, \alpha)$ that depends only on the nominal system instance 
$\theta^\star$, the exploration policy, the length of the exploration episodes, and the parameter $\alpha$. Then desired bound holds for $N \geq c_N(\theta^\star, \pi_{\exp}, T)\varepsilon^{-2}$ and $\varepsilon \leq c_{\varepsilon}(\theta^\star, \pi_{\exp}, T, \alpha)N^{-1/4}$. Consider now any positive $\varepsilon$, and some $N$ that satisfies 
$N \geq \max\bigg\{c_N(\theta^\star, \pi_{\exp}, T)\varepsilon^{-2}$, $c_N(\theta^\star, \pi_{\exp}, T)^2 c_{\varepsilon}(\theta^\star, \pi_{\exp}, T, \alpha)^{-4}\Bigg\}$. Let $\tilde \varepsilon = c_{\varepsilon}(\theta^\star, \pi_{\exp}, T, \alpha) N^{-1/4}$. Consider two cases. First suppose that $\varepsilon \leq \tilde \varepsilon$. Then the condition for the bound holds, and we are done. Next, suppose that $\varepsilon \geq \tilde \varepsilon$. Then the bound holds for $\tilde \varepsilon$, and $\calM(\theta^\star, \varepsilon, \calA) \geq \calM(\theta^\star, \tilde \varepsilon, \calA)$. Thus the bound also holds for $\varepsilon$.

\subsection{Application of van Trees Inequality}

The standard version of the van Trees argument presented by \citep{gill1995applications} does not allow the inclusion of a data-dependent event $\mathbf{1}(\calE)$ or an estimator that depends on the unknown parameter $\Theta$, as with our choice of the estimator $K_{\calD}^{y,u}$ in \eqref{eq: Kdyu}. We present an extension that covers these cases below.

\begin{lemma}[van Trees inequality with an event]
\label{lem: van trees}
Let $g(\calD,\Theta)\in \mathbb R^m$ be differentiable in $\Theta\in\mathbb R^d$, and let
$\calE$ be $\calD$-measurable. Let
$\calI_{\lambda} \triangleq \int\frac{\nabla \lambda(\theta)}{\lambda(\theta)}\frac{\nabla \lambda(\theta)}{\lambda(\theta)}^\top \lambda(\theta) \; d\theta$
and $J := N \E_{\Theta}\mathsf{FI}^{\pi_{\exp}}(\Theta)+\calI_{\lambda},$
and assume the usual regularity conditions for integration by parts, including vanishing boundary terms. Define $
    \tilde g
    :=
    \mathbf E_{\Theta,\calD}\brac{
        D_{\Theta}g(\calD,\Theta)\mathbf 1(\calE)
    }.$
Then $\mathbf{E}_{\Theta,\calD}\brac{\norm{g(\calD,\Theta)}^2}
    \geq
    \trace\paren{
        \tilde g J^{-1}\tilde g^\top
    } .$
\end{lemma}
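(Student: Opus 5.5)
The plan is to follow the classical Cauchy--Schwarz proof of the van Trees inequality, keeping the indicator $\mathbf 1(\calE)$ in the integration-by-parts identity. Let $p(\calD\mid\theta)$ denote the density of $\rho^{\pi_{\exp},N,T}(\theta)$, and define the joint score
\[
s(\calD,\theta) \triangleq \nabla_\theta \log\big(p(\calD\mid\theta)\lambda(\theta)\big) = \nabla_\theta \log p(\calD\mid\theta) + \frac{\nabla\lambda(\theta)}{\lambda(\theta)} \in \R^d .
\]
The central identity is that for any fixed $\calD$,
\[
\int \nabla_\theta\big(p(\calD\mid\theta)\lambda(\theta)\big)^\top \otimes g(\calD,\theta)\,d\theta = -\int p(\calD\mid\theta)\lambda(\theta) D_\theta g(\calD,\theta)\,d\theta .
\]
This follows by integration by parts in $\theta$, and the boundary term vanishes because $\lambda\to 0$ on $\partial\calB(\theta^\star,\varepsilon)$. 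This is where the prior chosen in the proof of \Cref{thm: excess cost lower bound} matters. Since $\calE$ is $\calD$-measurable, $\mathbf 1(\calE)$ is constant in $\theta$ for fixed $\calD$. We may therefore multiply both sides by $\mathbf 1(\calE)$ before integrating over $\calD$, which gives
\[
\E_{\Theta,\calD}\big[g(\calD,\Theta)\,s(\calD,\Theta)^\top \mathbf 1(\calE)\big] = -\tilde g .
\]
Two points of the standard argument disappear here. First, $g$ may depend on $\Theta$: we never use $\E[s]=0$ to centre $g$ at some $\psi(\Theta)$, so any $\Theta$-dependence is absorbed into $D_\Theta g$. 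Second, the event only enters through the indicator, which commutes with differentiation in $\theta$.

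Next, apply matrix Cauchy--Schwarz. For any $m\times d$ matrix $W$,
\[
0 \preceq \E\big[(g - W s\,\mathbf 1(\calE))(g - W s\,\mathbf 1(\calE))^\top\big].
\]
Taking the trace gives
\[
\E\norm{g}^2 \ge -2\trace(W\tilde g^\top) - \trace\big(W\,\E[s s^\top \mathbf 1(\calE)]\,W^\top\big).
\]
We bound $\E[s s^\top \mathbf 1(\calE)] \preceq \E[s s^\top]$, so the quadratic term is only made larger. We then compute $\E[ss^\top]$ as in the usual van Trees argument. The cross terms between $\nabla\log p$ and $\nabla\log\lambda$ vanish because $\E_{\calD}[\nabla_\theta\log p\mid\Theta]=0$. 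The data score of $N$ i.i.d.\ trajectories has covariance $N\,\mathsf{FI}^{\pi_{\exp}}(\Theta)$. Here we use the identification of the Fisher information as the expected Hessian of $\mathsf{nll}$ in \eqref{eq: FI}, together with the normalization $1/N$ in $\mathsf{nll}$, so that one trajectory contributes $\mathsf{FI}^{\pi_{\exp}}$. Altogether $\E[ss^\top]=J$. Choosing $W=-\tilde g J^{-1}$ yields $\E\norm{g}^2 \ge 2\trace(\tilde g J^{-1}\tilde g^\top) - \trace(\tilde g J^{-1}\tilde g^\top)$, which is the claim.

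The main obstacle is the justification of the regularity conditions rather than the algebra. We need to differentiate under the integral sign, have $p(\calD\mid\theta)$ differentiable with a square-integrable score, and have $g\,p\,\lambda$ vanish at the boundary of the ball. These hold because the model is Gaussian and analytic in $\theta$ and $\lambda$ is the smooth bump with $\lambda=0$ on the boundary. We also need $J\succ 0$; this follows from $\calI_\lambda\succ 0$ for the chosen bump prior, which has finite Fisher information of order $\varepsilon^{-2}$. For the downstream use, the Lipschitz continuity of $\mathsf{FI}^{\pi_{\exp}}$ on $\calB(\theta^\star,\varepsilon)$ will later allow $\E_\Theta\mathsf{FI}^{\pi_{\exp}}(\Theta)$ to be replaced by $\mathsf{FI}^{\pi_{\exp}}(\theta^\star)$. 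That replacement belongs to \Cref{lem: vt bound}, not to this lemma.
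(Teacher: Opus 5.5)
Your proposal is correct and follows the paper's proof. It uses the joint score, and since $\mathbf 1(\calE)$ is constant in $\theta$, integration by parts gives $\E[g\,s^\top\mathbf 1(\calE)]=-\tilde g$; it then uses $\E[ss^\top\mathbf 1(\calE)]\preceq\E[ss^\top]=J$ and matrix Cauchy--Schwarz, which you simply carry out explicitly by optimizing over $W$.

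One side remark is inaccurate, though it concerns the downstream application rather than this lemma, whose regularity is assumed. The paper's specific prior $\rho\propto(1-\|x\|^2)$ is not smooth at the boundary and has infinite Fisher information, because $\nabla\rho\nabla\rho^\top/\rho\propto xx^\top/(1-\|x\|^2)$ is not integrable near $\|x\|=1$. A finite $\calI_\lambda$ of order $\varepsilon^{-2}$ would instead require a prior such as $\rho\propto(1-\|x\|^2)^2$ or a $\cos^2$-type bump.
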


\begin{proof} We drop the subscript on the expectation for simplicity. 
Let $S(\calD,\Theta)
    :=
    \nabla_{\Theta}\log p(\calD,\Theta)$
denote the joint score. Under the stated model,
\begin{align}
    \label{eq: info shorthand}
    \mathbf E \brac{SS^\top}
    =
    N \E_{\Theta}\mathsf{FI}^{\pi_{\exp}}(\Theta)+\calI_{\lambda}
    =J,
\end{align}
where $\calI_{\lambda}$ measures the prior density concentration.
Set $X=g(\calD,\Theta),
    Y=\mathbf 1(\calE)S(\calD,\Theta).$
Since $\mathbf 1(\calE)^2\leq 1$, $\mathbf E[YY^\top]
    =
    \mathbf E\brac{\mathbf 1(\calE)SS^\top}
    \preceq
    \mathbf E\brac{SS^\top}
    =
    J.$
Integration by parts gives $\mathbf E\brac{XY^\top} = -\tilde g$. 
Applying matrix Cauchy--Schwarz \citep{tripathi1999matrix} concludes the proof.
\end{proof}
We apply this to the first term in the lower bound of \eqref{eq: calE definition}.
\begin{lemma}[Application of van Trees inequality]
    \label{lem: vt bound}
    Let $\calA$ be any algorithm mapping a dataset $\calD$ to a policy $K_{\calD} \in \calK$ for a policy class $\calK$  that obeys obeys Assumption~\ref{asmp: closed loop}. Suppose all  instances in the ball $B(\theta^\star, \varepsilon)$ satisfy the stabilizability and detectability assumptions of \Cref{sec: problem formulation}. Suppose that the data is collected under an exploration policy $\pi_{\exp}$ such that $\mathsf{FI}^{\pi_{\exp}}(\theta)$ is Lipschitz continuous in $\theta$ over $\calB(\theta^\star, \varepsilon)$ and $\mathsf{FI}^{\pi_{\exp}}(\theta^\star) \succ 0$. There exist constants depending on the system, exploration policy, and length of the exploration episodes, $c_{\varepsilon}^{VT}(\theta^\star, \pi_{\exp}, T, \alpha)$ and $c_N(\theta^\star, \pi_{\exp}, T)$ such that for $\varepsilon \leq  c_{\varepsilon}^{VT}(\theta^\star, \pi_{\exp}, T, \alpha)$ and $N \geq  c_N(\theta^\star, \pi_{\exp}, T) \varepsilon^{-2}$, the following bound holds:
    \begin{align*}
        &\E_{\Theta, \calD} \brac{\norm{\Psi^{1/2} (K_{\calD}^{y,u} - K_{\Theta}^{y,u}) Z_{\theta^\star}^{K_{\calD}}}_{\calH_2}^2  \mathbf{1}(\calE)} \geq \mathsf{vTLB},
    \end{align*}
    where
    $Z_{\theta}^K \!=\! \bmat{I \\ K} \! \calF(\tilde \calP_{\theta}, K)$ and 
    $\mathsf{vTLB} \triangleq \frac{\trace\paren{\!H \paren{\mathsf{FI}^{\pi_{\exp}}}^{-1} \!}}{2 \sqrt{2} N}.$ 
\end{lemma}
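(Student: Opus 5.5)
The plan is to apply \Cref{lem: van trees} with a vector-valued $g$ built from the quantity inside the $\calH_2$ norm. The norm is an integral over frequency, so we view it as an element of $L_2$ and discretize, or equivalently apply the lemma coordinatewise in a basis of impulse-response coefficients.

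\textbf{Step 1: rewrite the integrand.} Because $K_{\calD}^{y,u}-K_{\Theta}^{y,u}=\calQ_{\Theta}^{K_\calD}\bmat{\tilde M_\Theta & -\tilde N_\Theta}$, the integrand is $\Psi_\Theta^{1/2}\calQ_{\Theta}^{K_\calD}\bmat{\tilde M_\Theta & -\tilde N_\Theta}Z_{\theta^\star}^{K_\calD}$. We define
\[
g(\calD,\Theta)\triangleq \Psi(\theta^\star)^{1/2}\calQ_{\Theta}^{K_\calD}\bmat{\tilde M_{\theta^\star} & -\tilde N_{\theta^\star}}Z_{\theta^\star}^{K_\calD}.
\]
This $g$ freezes every $\Theta$-dependence except the Youla parameter. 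Replacing $\Psi_\Theta$ and the factorization at $\Theta$ by their values at $\theta^\star$ costs a relative factor $1-O(\varepsilon)$. On $\calE$ this is controlled by \Cref{lem: coprime perturbations} together with $\norm{\calQ}_{\calH_2}$, in the same way as in \Cref{lem: closed loop perturbations}. We pay for it with part of the slack between $1/2$ and $1/(2\sqrt2)$ in $\mathsf{vTLB}$.

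\textbf{Step 2: compute $\tilde g$.} We need the derivative $D_\Theta g=\Psi^{1/2}D_\Theta\calQ_\Theta^{K_\calD}(\cdots)$. Differentiating the explicit expression for $\calQ_{\theta_2}^K$ in terms of $\calQ_{\theta_1}^K$ from the proof of \Cref{lem: smoothness of youla parameter} gives
\[
D_\Theta\calQ_\Theta^K=\dot\Delta_1+\dot\Delta_2\calQ+\calQ\dot\Delta_3+\calQ\dot\Delta_4\calQ .
\]
When $\calQ=0$ this equals $\dot\Delta_1=\dot{\tilde V}U-\dot{\tilde U}V$, which is exactly $-\frac{d}{d\tilde\theta}\calQ_{\theta}^{K_{\tilde\theta}}$. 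On $\calE$ the Youla parameter $\calQ_{\theta^\star}^{K_\calD}$ has $\calH_2$ norm $O(N^{-1/2}+\varepsilon)$, so all terms involving $\calQ$ are small. Also, $\bmat{\tilde M & -\tilde N}Z_{\theta^\star}^{K_\calD}=\tilde M_{\theta^\star}\calP^d_{\theta^\star}+O(\calQ)$, and by the proof of \Cref{lem: performance difference lemma} its spectral factor is $\Sigma_e^{1/2}$. Hence
\[
\tilde g\approx \P(\calE)\,\Psi^{1/2}\,\partial\calQ\,\Sigma_e^{1/2}.
\]
By \eqref{eq: performance_difference_Q_version} applied to $K_{\tilde\theta}$ and differentiated twice, $\trace(\tilde g\,\cdot\,\tilde g^\top)$ in the $\calH_2$ inner product reproduces $H(\theta^\star)/2$. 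So the lemma gives
\[
\E\norm{g}^2\ge \P(\calE)^2(1-O(\varepsilon))\,\trace\!\left(\tfrac12 H J^{-1}\right).
\]

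\textbf{Step 3: constants.} Lipschitz continuity of $\mathsf{FI}$ gives $\E_\Theta\mathsf{FI}\preceq \mathsf{FI}(\theta^\star)+O(\varepsilon)$. Also $\calI_\lambda=O(\varepsilon^{-2})$, so $J\preceq N\,\mathsf{FI}(\theta^\star)(1+O(\varepsilon)+O(1/(N\varepsilon^2)))$. The condition $N\ge c_N\varepsilon^{-2}$ makes the last term small.

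\textbf{Main obstacle: bounding $\P(\calE)$ from below.} If $\P(\calE)$ is small, then with probability bounded away from zero $\norm{\calQ_{\theta^\star}^{K_\calD}}_{\calH_2}^2$ exceeds its threshold. By \Cref{lem: smoothness of youla parameter}, $\norm{\calQ_\Theta}^2\ge\frac12\norm{\calQ_{\theta^\star}}^2-L_\calQ^2\varepsilon^2$. Combined with $\mu$ from \Cref{lem: performance difference lemma}, this gives an excess cost of at least a constant times $\trace(HFI^{-1})/(N(1-2^{-1/8}))$, so the target bound holds trivially anyway. This dichotomy is precisely how the constant $1-2^{-1/8}$ was calibrated: either $\P(\calE)\ge 2^{-1/8}$, and then $\P(\calE)^2\cdot\tfrac12\ge 1/(2\sqrt2)$ after the Step 1 losses are absorbed, or the expected cost already exceeds $\mathsf{vTLB}$. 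A secondary technicality is justifying integration by parts for a $\Theta$-dependent estimator. This uses analyticity in $\Theta$, vanishing of $\lambda$ on the boundary of the ball, and the uniform bound $\beta$ from \Cref{lem: bounded controller norm bounded Youla}.
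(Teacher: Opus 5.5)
Your proposal is correct and follows essentially the same route as the paper: the van Trees inequality with the event indicator, identification of $\tilde g$ with $\tfrac12 H(\theta^\star)$ through the performance difference lemma, absorbing the $J$ versus $N\,\mathsf{FI}^{\pi_{\exp}}(\theta^\star)$ and perturbation losses into a factor $2^{-1/4}$, and the same Markov-type dichotomy yielding $\bfP[\calE]\ge 2^{-1/8}$. The differences are small: you differentiate $\calQ_\Theta^{K_\calD}$ with the factorization frozen at $\theta^\star$, whereas the paper uses that $K_\calD^{y,u}Z_{\theta^\star}^{K_\calD}$ does not depend on $\Theta$, so only $K_\Theta^{y,u}$ is differentiated; and to get the indicator on the left-hand side you should apply the van Trees lemma to $g\,\mathbf{1}(\calE)$ (legitimate since $\calE$ is $\calD$-measurable, and it gives the same $\tilde g$) rather than state a bound on $\E\|g\|^2$.
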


\label{s: van trees proof}

\begin{proof}
    Let $\underline \Psi$ lower bound $\Psi(\theta)$ for all $\theta \in \calB(\theta^\star, \varepsilon)$ in Loewner order. 
    It holds by definition of the $\calH_2$ norm that
    \begin{align*}
         &\norm{ \underline \Psi^{1/2} (K_{\calD}^{y,u} - K_{\Theta}^{y,u}) Z_{\theta^\star}^{K_{\calD}}}_{\calH_2}^2= \\
         &\frac{1}{2\pi} \!\int_{-\pi}^\pi\!\norm{ \!\paren{\underline \Psi^{1/2} \!(K_{\calD}^{y,u} \!-\! K_{\Theta}^{y,u}) Z_{\theta^\star}^{K_{\calD}} \!}(e^{j\omega})}_F^2 d\omega. %
    \end{align*}
    We now apply \Cref{lem: van trees} to the integrand of the above expression multiplied by $\mathbf{1}(\calE)$. Note that by the fact that $K_{\calD}^{y,u} \bmat{I \\K_{\calD}} = K_{\calD }$, the Jacobian of the vectorized quantity in the above norm with respect to $\theta$ is 
    $D_{\theta} \VEC K_{\theta}^{y,u}(e^{j\omega})\vert_{\theta=\Theta} \paren{\underline \Psi^{1/2} \otimes Z_{\theta^\star}^{K_{\calD}}(e^{j\omega})}.$
    Consequently, by \Cref{lem: van trees} it holds that for any $\omega \in [-\pi, \pi]$ 
    \begin{align*}
    & \E \brac{\mathbf{1}(\calE)\norm{\underline \Psi^{1/2} (K_{\calD}^{y,u} - K_{\Theta}^{y,u})(e^{j\omega}) Z_{\theta^\star}^{K_{\calD}}(e^{j\omega}) }_F^2}\geq \\
     & \norm{ J^{-\frac{1}{2}} \E\brac{D_{\theta}\! \VEC K_{\theta}^{y,u}(e^{j\omega})\vert_{\theta=\Theta}  \mathbf{1}(\calE) \paren{\underline \Psi^{1/2} \!\otimes \! Z_{\theta^\star}^{K_{\calD}}(e^{j\omega})}}}_F^2, 
    \end{align*}
    with $J$ as in \eqref{eq: info shorthand}. 
    The integral of the above quantity over $\omega$ results in an expression of the form 
    \begin{align*}
        \norm{\E \brac{\mathcal{G}_{\Theta} (\underline \Psi^{1/2} \otimes Z_{\theta^\star}^{K_{\calD}}) \mathbf{1}{(\calE)}}}_{\calH_2}^2.
    \end{align*}
    By a triangle inequality, it holds that  
    \begin{align*}
        &\norm{\E \brac{\mathcal{G}_{\Theta} (\underline \Psi^{1/2} \otimes Z_{\theta^\star}^{K_{\calD}}) \mathbf{1}{(\calE)}}}_{\calH_2} \\
        &\geq  \norm{\E \brac{\calG_{\theta^\star} (\Psi(\theta^\star)\otimes Z_{\theta^\star}^{K_{\theta^\star}} \mathbf{1}{(\calE)}}}_{\calH_2} \\
        &-\norm{\E \brac{\calG_{\Theta} (\underline \Psi \otimes (Z_{\theta^\star}^{K_{\calD}} - Z_{\theta^\star}^{K_{\theta^\star}})) \mathbf{1}{(\calE)}}}_{\calH_2} \\
        &- \norm{\E \brac{\paren{\calG_{\Theta} (\underline \Psi \otimes Z_{\theta^\star}^{K_{\theta^\star}})) -\calG_{\theta^\star} (\Psi(\theta^\star) \otimes Z_{\theta^\star}^{K_{\theta^\star}}))}\mathbf{1}{(\calE)}}}_{\calH_2}.
    \end{align*}
    We upper bound the second term by observing that
    \begin{align*}
        Z_{\theta^\star}^{K_{\calD}} - Z_{\theta^\star}^{K_{\theta^\star}} = \bmat{N_{\theta^\star} \\ M_{\theta^\star}}  Q_{\theta^\star}^{K_{\calD}} \tilde M_{\theta^\star} \calP_{\theta^\star}^d.
    \end{align*}
    Consequently, the second term is upper bounded as
    \begin{align*}
        &\norm{\E \brac{\calG_{\Theta}(Z_{\theta^\star}^{K_{\calD}} - Z_{\theta^\star}^{K_{\theta^\star}}) \mathbf{1}{(\calE)}}}_{\calH_2} \\
        &\leq L(\theta^\star) \max_{\theta \in \calB(\theta^\star, \varepsilon)} \norm{\calG(\theta)}_{\calH_{\infty}} \E\brac{\norm{\calQ_{\theta^\star}^{K_{\calD}}}_{\calH_2} \mathbf{1}(\calE)},
    \end{align*}
    where $L(\theta^\star)$ bounds the $\calH_\infty$ norms of the matrices $N_{\theta^\star}$, $M_{\theta^\star}$, and $\tilde M_{\theta^\star} \calP_{\theta^\star}^d$. The above quantity 
    can be made small for $\varepsilon$ small and $N$ large by mirroring the analysis following \eqref{eq: negative term upper bound}.
    In particular,  substitute the bound on $\norm{\calQ_{\theta^\star}^{K_{\calD}}}$ under the event $\calE$ and observe that $\max_{\theta \in \calB(\theta^\star,\varepsilon)}\norm{\calG_{\theta}}_{\calH_{\infty}}$ decays at a rate $\frac{1}{\sqrt{N}}$ (due to $J^{-1/2})$. We can therefore select an instance dependent upper bound on $\varepsilon$ and a lower bound on $N$ such that the second term becomes negligible relative to the first. The third term can be upper bounded by observing that $\Psi(\theta)$ and $D_{\theta} \VEC K_{\theta}^{y,u}$ are smooth with respect to $\theta$. Specifically, $K_{\theta}^{y,u} (e^{j\omega})  = F(\theta) (e^{j\omega} - A_{\mathsf{cl}}^o(\theta))^{-1} \bmat{L(\theta) & B(\theta)}.$ 
    For any frequency $\omega \in [-\pi,\pi]$,
    we can express the resulting derivative in terms of the derivatives of the system matrices, the LQR gain, and the Kalman gain. By the stability of $A_{\mathsf{cl}}^o(\theta^\star)$ and the smoothness of the derivatives $D_{\theta} F(\theta)$ and $D_{\theta}L(\theta)$ from their characterizations in Lemma B.1 of \citep{simchowitz2020naive} along with the perturbation arguments of Theorem 5 of \citep{simchowitz2020naive}, it holds that for any frequency $\omega \in [-\pi, \pi]$, $D_{\theta} \VEC K_{\theta}^{y,u}(e^{j\omega})$ is a smooth function of $\theta$ for a neighborhood around $\theta^\star$. %
    Therefore the impact of the discrepancy between $\underline \Psi$ and $\Psi(\theta^\star)$ and $\calG_{\Theta}$ and $\calG_{\theta^\star}$ can be made arbitrarily small relative to the first term by taking $\varepsilon$ sufficiently small and $N$ sufficiently large. 
    
    We now lower bound the term $ \norm{\E \brac{\calG_{\theta^\star}(\Psi(\theta^\star) \otimes Z_{\theta^\star}^{K_{\theta^\star}}) \mathbf{1}{(\calE)}}}_{\calH_2}$.
     It is shown by Corollary 2.1  of \citep{lee2023fundamental} that $\norm{\calI_{\lambda}} = \frac{1}{\varepsilon^2} \norm{\calI_{\rho}}$. Therefore, there exists a constant $c_N(\theta^\star, \pi_{\exp}, T)$ such that for $N \geq  c_N(\theta^\star, \pi_{\exp}, T) \varepsilon^{-2}$, the contribution of $\calI_{\lambda}$ to $J$ can be made arbitrarily small. %
    By the assumption that $\mathsf{FI}^{\pi_{\exp}}(\theta)$ is a smooth function of $\theta$ for over $\calB(\theta^\star,\varepsilon)$ 
    and since 
    $\mathsf{FI}^{\pi_{\exp}}(\theta^\star) \succ 0$, there exists a constant $ c_{\varepsilon}(\theta^\star, \pi_{\exp}, T)$ such that for  $\varepsilon \leq c_{\varepsilon}(\theta^\star, \pi_{\exp}, T)$, the impact of the gap between $\E_{\Theta} \mathsf{FI}^{\pi_{\exp}}(\Theta)$ and $\mathsf{FI}^{\pi_{\exp}}(\theta^\star)$ can also be made arbitrarily small. 
         Let $c_{\varepsilon}(\theta^\star, \pi_{\exp}, T)$ be small enough and $c_N(\theta^\star, \pi_{\exp}, T)$ large enough that the slack introduced by from the above substitutions is less than $\frac{1}{2^{1/4}}$. Then 
    \begin{align*}
        &\E_{\Theta, \calD} \brac{\norm{ \underline \Psi^{1/2} (K_{\calD}^{y,u} - K_{\Theta}^{y,u}) Z_{\theta^\star}^{K_{\calD}}}_{\calH_2}^2 \mathbf{1}(\calE)} \geq \frac{\bfP[\calE]^2}{2^{1/4}}\\
        &\times \!\norm{D_{\theta} \VEC K_{\theta}^{y,u}\vert_{\theta=\theta^\star} (\Psi(\theta^\star)^{\frac{1}{2}} \!\otimes \!Z_{\theta^\star}^{K_{\theta^\star}}) (N \mathsf{FI}^{\pi_{\exp}}(\theta^\star))^{-\frac{1}{2}}}_{\calH_2}^2\!\!.
    \end{align*}
    From \Cref{lem: performance difference lemma} along with the definition of the $\calH_2$ norm,
    \begin{align*}
        &\frac{1}{2\pi}\int_{-\pi}^\pi D_{\theta} \VEC K_{\theta}^{y,u}(e^{j\omega})\vert_{\theta=\theta_\star}\paren{\Psi(\theta^\star) \otimes \Sigma_{\theta^\star}^{K_{\theta^\star}}(e^{j\omega})} \\
        &\times D_{\theta}\VEC K_{\theta}^{y,u}(e^{j\omega})^*\vert_{\theta=\theta^\star}\; d\omega = \frac{1}{2} H(\theta^\star). 
    \end{align*}
    Making this substitution, we find that
    \begin{equation}
    \begin{aligned}
        \label{eq: vt partial}
        &\E_{\Theta, \calD} \brac{\norm{ \underline \Psi^{1/2} (K_{\calD}^{y,u} - K_{\Theta}^{y,u}) Z_{\theta^\star}^{K_{\calD}}}_{\calH_2}^2 \mathbf{1}(\calE)} \\ %
        &\geq \frac{\bfP[\calE]^2}{2^{1/4}} \frac{1}{2}
        \trace\paren{ H(\theta^\star) \paren{N \mathsf{FI}^{\pi_{\exp}}(\theta^\star)}^{-1}}.
    \end{aligned}
    \end{equation}
    It remains only to lower bound $\Pr[\calE]$. 

    \textbf{Lower bound of $\bfP[\calE]$: } 
    Restrict attention to learning algorithms for which 
    \begin{align*}
        \E_{\Theta, \calD} \brac{J(K_{\calD}, \Theta) - J(K(\Theta), \Theta)} \leq \frac{\trace\paren{H(\theta^\star) \mathsf{FI}^{\pi_{\exp}}(\theta^\star)^{-1}}}{4N},
    \end{align*}
    as other algorithms immediately satisfy the lower bound. By the quadratic lower bound of \Cref{lem: performance difference lemma}, the performance gap upper bounds the squared controller parameter error: 
    \begin{align*}
        \E_{\Theta, \calD} \brac{J(K_{\calD}, \Theta) - J(K(\Theta), \Theta)} \geq \mu \E_{\Theta, \calD}\brac{\norm{\calQ_{\Theta}^{K_{\calD}}}_{\calH_2}^2}.
    \end{align*} 
    By the triangle inequality, it holds that for any $\theta \in \calB(\theta^\star, \varepsilon)$, 
    \begin{align*}
        \norm{\calQ_{\theta^\star}^{K_{\calD}}}_{\calH_2}^2 \leq  2\norm{\calQ_{\theta}^{K_\calD}}_{\calH_2}^2 + 2\norm{\calQ_{\theta}^{K_{\calD}} - \calQ_{\theta^\star}^{K_{\calD}}}_{\calH_2}^2.
    \end{align*}
    From the perturbation result of \Cref{lem: smoothness of youla parameter}, there exists an $c_{\varepsilon}^\calQ(\theta^\star, \alpha)$ such that if $\varepsilon \leq c_{\varepsilon}^\calQ(\theta^\star, \alpha)$, then for any $\theta \in \calB(\theta^\star, \varepsilon)$, $2\norm{\calQ_{\theta}^{K_{\calD}} - \calQ_{\theta^\star}^{K_{\calD}}}_{\calH_2}^2 \leq 2 L_\calQ(\theta^\star, \alpha)^2 \varepsilon^2$.  By selecting $\theta$ as the value that minimizes $\norm{\calQ_{\theta}^{K_{\calD}}}_{\calH_2}$ it holds with probability at least $1-\delta$ that
    \begin{align*}
        2\norm{\calQ_{\theta}^{K_{\calD}}}_{\calH_2}^2 \leq 2\norm{\calQ_{\Theta}^{K_{\calD}}}_{\calH_2}^2 &\leq 2 \frac{\E_{\Theta, \calD}\brac{\norm{\calQ_{\Theta}^{K_{\calD}}}_{\calH_2}^2}}{\delta}\\
        & 
        \leq \frac{\trace\paren{H(\theta^\star) \mathsf{FI}^{\pi_{\exp}}(\theta^\star)^{-1}}}{ TN \mu \delta},
    \end{align*}
    where the second inequality follows from Markov's inequality. Combining these inequalities, $\calE$ from \eqref{eq: calE definition} holds with probability at least $\frac{1}{2^{1/8}}$. To conclude the proof of \Cref{lem: vt bound}, we can substitute this bound into \eqref{eq: vt partial}, and let $ c_{\varepsilon}^{VT}(\theta^\star, \pi_{\exp}, T, \alpha)$ be the minimum of the aforementioned sufficient upper bounds on $\varepsilon$. %
\end{proof}

\ifTACmode\else
    \ifTACmode
\section{Non-strictly Causal Extension}
\else
\subsection{Non-strictly causal extension}
\fi
\label{s: non-strictly causal}
The optimal non-strictly causal LQG controller has the following form
\begin{align}
    \label{eq: lqr control_non_strictly}
    u^\theta_t = F(\theta) \hat x^\theta_{t \vert t},
\end{align}
where $\hat{x}^\theta_{t\vert t}\triangleq \bfE_{\theta} \brac{x_t \vert y_{0:t}, u_{0:t-1}}$ is the Kalman estimate for the state $x_t$ given the full history \emph{including} time $t$. Let the \emph{modified} Kalman gain $\bar L(\theta)$ be given by
\[
  \bar L(\theta) = \Sigma(\theta) C(\theta)^\top \Sigma_e(\theta)^{-1}.
\]
Notice that $A(\theta)\bar L(\theta)=L(\theta)$. Then, the Kalman filter estimate $\hat{x}^{\theta}_{t|t}$ is given by
\begin{align}   \label{eq: kf update_posterior} 
    \hat x^\theta_{t\vert t} &=  \hat x^\theta_{t \vert t-1} + \bar L(\theta)(y_t-C(\theta)\hat x^\theta_{t \vert t-1}).
\end{align}
Thus, under any control input $\bfu$ and output $\bfy$, the  representation of the Kalman filter estimate in the frequency domain becomes
\begin{align*}    
   \bfx^\theta&=(I-\bar L(\theta)C(\theta))(zI-A^o_{\mathsf{cl}}(\theta))^{-1}(B(\theta)\bfu+L(\theta)\bfy)\\&+\bar L(\theta)\bfy,
\end{align*}
while the optimal LQG controller is equal to
\[
K_\theta=F(\theta)(I-\bar{L}(\theta)C(\theta))(zI-\bar{A}(\theta))^{-1}A^c_{\mathsf{cl}}\bar{L}(\theta)+F(\theta)\bar{L}(\theta)
\]
where
\[
\bar A(\theta)=A_{\mathsf{cl}}^o(\theta)+B(\theta)F(\theta)(I-\bar L(\theta)C(\theta)).
\]
We can repeat the same steps as in the strictly causal case.
\subsubsection{Co-prime factorizations}
Consider the doubly co-prime factorizations
\begin{equation*}
\begin{aligned}
   \bmat{M_{\theta} & U_\theta \\ N_\theta & V_\theta} &\triangleq \left[\begin{array}{c|cc}
              A_{\mathsf{cl}}^c(\theta) & B(\theta) & L(\theta)+B(\theta)F(\theta)\bar L(\theta) \\ \hline \rule{0pt}{2.6ex}
              F(\theta) & I & F(\theta)\bar{L}(\theta) \\ C(\theta) & 0 & I 
            \end{array}\right] \\
     \bmat{\tilde V_{\theta} & -\tilde U_\theta \\ -\tilde N_\theta & \tilde M_{\theta}} &\triangleq \left[\begin{array}{c|cc}
             A_{\mathsf{cl}}^o(\theta) & -B(\theta) & -L(\theta) \\ \hline \rule{0pt}{2.6ex}
              F(\theta)I_{\bar LC}(\theta) & I & -F(\theta)\bar{L}(\theta) \\ C(\theta) & 0 & I 
            \end{array}\right],
\end{aligned}
\end{equation*}
where $I_{\bar L C}(\theta)=I-\bar L(\theta)C(\theta)$.
We can verify that this parameterization remains doubly co-prime while
\[
K_\theta=\tilde V_\theta^{-1}\tilde U_\theta=U_\theta V_\theta^{-1}.
\]
Note that $M_\theta,N_\theta,\tilde M_\theta,\tilde N_{\theta}$ remain unchanged from~\eqref{eq: nominal co-prime}.
\subsubsection{Extension of main result} The only part of the proof that changes is the intermediate expression of $\bfu^\theta$ in the performance difference lemma. Following the notation of the proof of Lemma~\ref{lem: performance difference lemma}, we have
\begin{align*}
\bfu^\theta&=F(\theta)\bfx^\theta=F(\theta)(I-\bar L(\theta)C(\theta))(zI-A^o_{\mathsf{cl}}(\theta))^{-1}(B(\theta)\bfu\\&+L(\theta)\bfy)+F(\theta)\bar L(\theta)\bfy\\
&=\tilde{U}_\theta \bfy+(I-\tilde{V}_\theta)\bfu=\bmat{\tilde U_\theta & I - \tilde V_\theta} \bmat{I \\ K} \calF(\tilde \calP_{\theta}, K)\bfy.
\end{align*}
Everything else remains the same. As a result of using a different co-prime factorization, the system specific constants in Lemmas~\ref{lem: coprime perturbations}-\ref{lem: closeness of LQG solutions},~\ref{lem: closed loop perturbations}, Theorem~\ref{thm: excess cost lower bound} will be different from those of the strictly causal case. 

\fi
\section{Conclusion}

This work provides a Cramer-Rao style lower bound for the excess cost achieved by an LQG controller learned from an offline set of interaction data from the system. The lower bound is expressed in terms of the Fisher Information for the offline dataset and the Hessian of the control objective under a certainty equivalent control policy, with respect to the system parameter estimate defining the certainty equivalent policy. Classic examples of fragile systems are examined in light of this lower bound, demonstrating when classically fragile control problems translate into problems of learning-enabled control with a large sample complexity. 

The results suggest the asymptotic optimality of certainty-equivalent based synthesis policies, and also suggest that an objective for task-weighted optimal experiment design used for the fully observed case by \citet{wagenmaker2021task} is also effective in the partially observed setting.  By highlighting the potential tradeoff between the cost of control and the excess cost of learning to control, this work additionally suggests a theoretically-informed manner of system co-design for learning-enabled control. Such a strategy is previewed in \Cref{s: tradeoffs}. Further investigation is left for future work. Finally, this work may provide an effective way to construct benchmarks for reinforcement learning applied to continuous control tasks with a rigorous understanding of the complexity of the benchmark.

\section*{REFERENCES}
\bibliographystyle{IEEEtran}
\bibliography{refs}

\ifTACmode
\raggedbottom
\makeatletter
\def\@IEEEBIOskipN{1\baselineskip}
\makeatother
\begin{IEEEbiography}[{\includegraphics[width=1in,height=1.25in,clip,keepaspectratio]{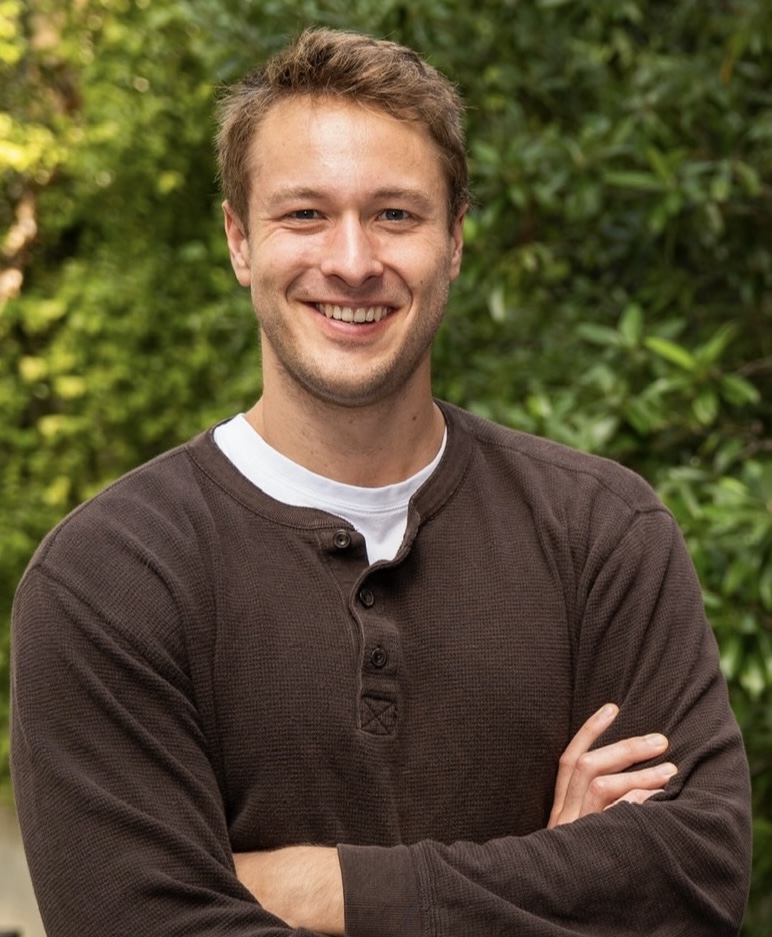}}]{Bruce D. Lee} is a Postdoctoral Fellow with the ETH AI Center. He is also affiliated with the Learning and Adaptive Systems Group, Institute for Automatic Control, and the Institute for Dynamic Systems and Control at ETH Zürich. He received his Ph.D. in Electrical and Systems engineering from the University of Pennsylvania in June, 2025. He also holds a B.E.E degree in Electrical Engineering from the University of Minnesota.
His research is at the intersection of reinforcement learning, robotics, and control. Bruce is recipient of the DoD NDSEG Doctoral Fellowship, and the ETH AI Center Postdoctoral Fellowship. His work was recognized as a Best Paper Finalist at the Sixth Annual Learning for Dynamics \& Control Conference. 
\end{IEEEbiography}

\begin{IEEEbiography}[{\includegraphics[width=1in,height=1.25in,clip,keepaspectratio]{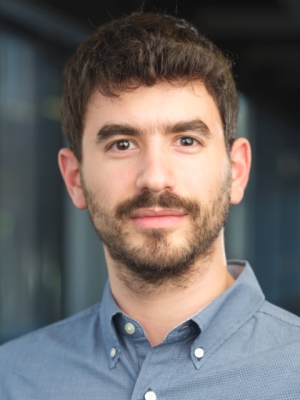}}]{Anastasios Tsiamis} is a Senior Scientist at the Automatic Control Laboratory, ETH Zurich, Switzerland. He obtained his Ph.D. at the Department of Electrical and Systems Engineering, University of Pennsylvania, USA in 2022. He received the Diploma degree in Electrical and Computer Engineering from the National Technical University of Athens, Greece, in 2014. His research interests include statistical and online learning in the setting of control systems, as well as robust and risk-aware control. Anastasios Tsiamis was a finalist for the IFAC Young Author Prize in IFAC 2017 World Congress and a finalist for the Best Student Paper Award in ACC 2019. He is a coauthor to the paper that has won the Best Student Paper Award in CDC 2022.
\end{IEEEbiography}

\begin{IEEEbiography}[{\includegraphics[width=1in,height=1.25in,clip,keepaspectratio]{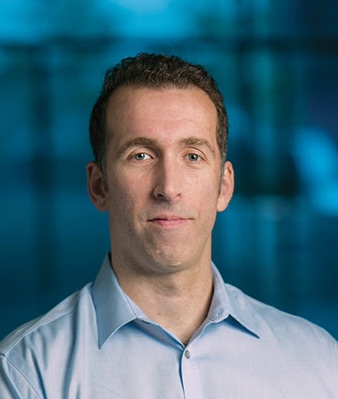}}]{Nikolai Manti} is an Associate Professor in the Department of Electrical and Systems Engineering at the University of Pennsylvania, where he is also a member of the GRASP Lab and the PRECISE Center. 
Prior to joining Penn, Nikolai was a postdoctoral scholar in EECS at UC Berkeley. He also held a postdoctoral position in the Computing and Mathematical Sciences Department at Caltech. He received his Ph.D. in Control and Dynamical Systems from Caltech in June 2016. He also holds B.A.Sc. and M.A.Sc. degrees in Electrical Engineering from the University of British Columbia in Vancouver, Canada.
His research interests broadly encompass the use of learning, optimization, and control in the design and analysis of safety-critical, data-driven autonomous systems. Nikolai is a recipient of the AFOSR YIP Award (2024), NSF CAREER Award (2021), a Google Research Scholar Award (2021), the 2024 IEEE Transactions on Control of Network Systems Best Paper Award, the 2021 IEEE CSS George S. Axelby Award, and the 2013 IEEE CDC Best Student Paper Award.  He is also a co-author on papers that have won the 2022 IEEE CDC and the 2017 IEEE ACC Best Student Paper Award.
\end{IEEEbiography}

\begin{IEEEbiography}[{\includegraphics[width=1in,height=1.25in,clip,keepaspectratio]{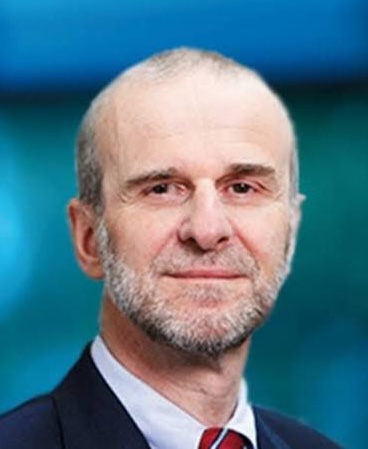}}]{Manfred Morari} Manfred Morari received the Diploma degree in chemical engineering from ETH Zürich, Zürich, Switzerland, and the Ph.D. degree in chemical engineering from the University of Minnesota, Minneapolis, MN, USA.
He is a Professor Emeritus at ETH Zürich, where he previously seved as Head of the Department of Information Technology and Electrical Engineering. He also served as a Professor of Electrical and Systems Engineering at the University of Pennsylvania, and the McCollum–Corcoran Professor of Chemical Engineering and Executive Officer of Control and Dynamical Systems at the California Institute of Technology (Caltech), Pasadena, CA, USA. Earlier, he was a Professor at the University of Wisconsin–Madison, WI, USA. 
He supervised more than 80 Ph.D. students. Dr. Morari is a Fellow of AIChE, IFAC, and the U.K. Royal Academy of Engineering, and a member of the U.S. National Academy of Engineering. He has received numerous awards, including the Eckman, Ragazzini, and Bellman Awards from the American Automatic Control Council (AACC); the Colburn, Professional Progress, and CAST Division Awards from the American Institute of Chemical Engineers (AIChE); the Control Systems Technical Field Award and the Bode Lecture Prize from IEEE; the Nyquist Lectureship and the Oldenburger Medal from the American Society of Mechanical Engineers (ASME); and the IFAC High Impact Paper Award. He also served as President of the European Control Association and on the technical advisory boards of several major corporations.
\end{IEEEbiography}

\begin{IEEEbiography}[{\includegraphics[width=1in,height=1.25in,clip,keepaspectratio]{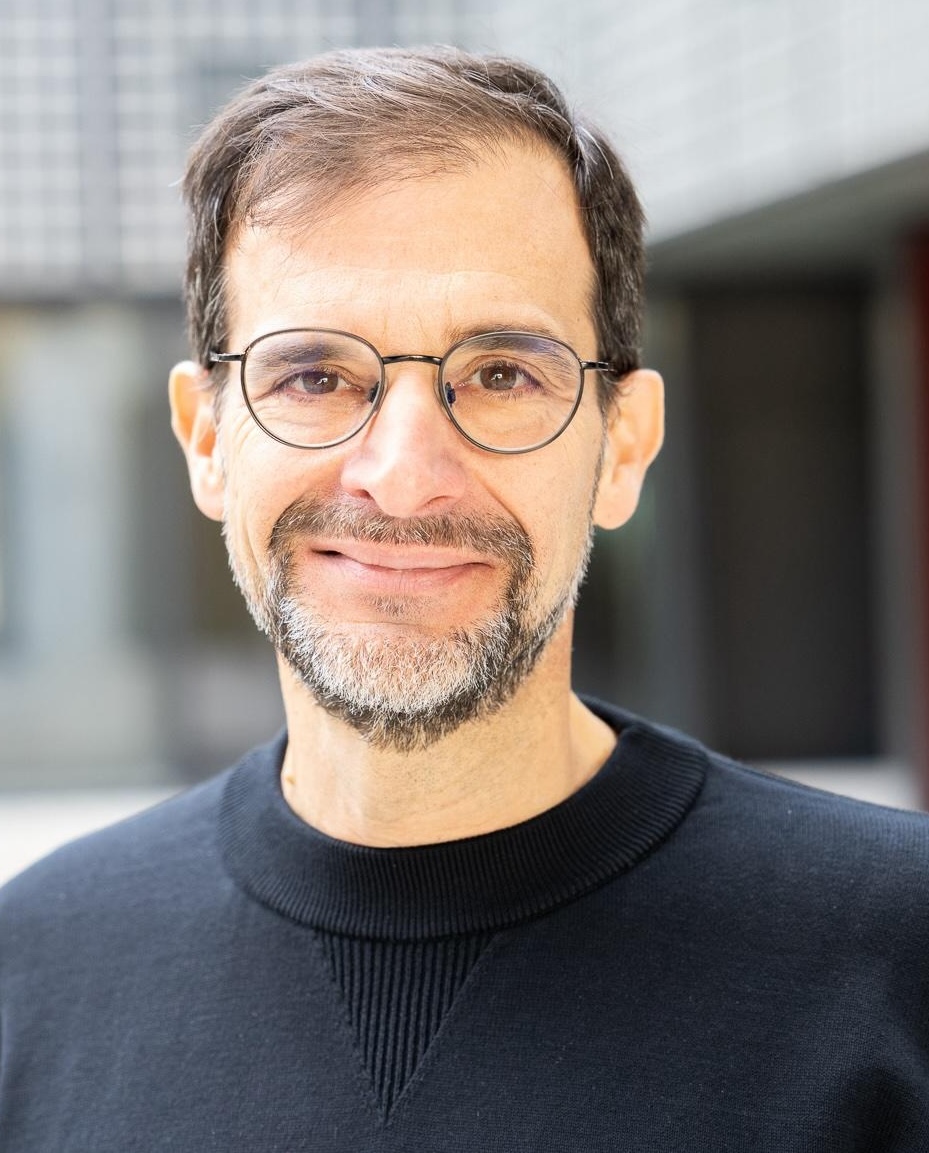}}]{John Lygeros} completed a B.Eng. degree in electrical engineering in 1990 and an M.Sc. degree in Systems Control in 1991, both at the Imperial College of Science Technology and Medicine, London, U.K.. In 1996 he obtained a Ph.D. degree from the Electrical Engineering and Computer Sciences Department, University of California, Berkeley. In the period 1996-2000 he held a series of research appointments at the National Automated Highway Systems Consortium, M.I.T., and U.C. Berkeley. In parallel, he also worked as a part-time research engineer at SRI International, Menlo Park, California, and as a Visiting Professor at the Department of Mathematics of the Université de Bretagne Occidentale, Brest, France. Between July 2000 and March 2003 he was a University Lecturer at the Department of Engineering, University of Cambridge, U.K., and a Fellow of Churchill College. Between March 2003 and July 2006 he was an Assistant Professor at the Department of Electrical and Computer Engineering, University of Patras, Greece. In July 2006 he joined the Automatic Control Laboratory at ETH Zurich where he is currently serving as the Professor for Computation and Control and the Head of the laboratory.
His research interests include modeling, analysis, and control of large scale dynamical systems, with applications to biochemical networks, energy systems, transportation, and advanced manufacturing. John Lygeros is a Fellow of the IEEE, and a member of the IET and the Technical Chamber of Greece. Between 2012 and 2015 he served in the Board of Governors of the IEEE Control System Society and in the Scientific Steering Committee of the Newton Institute, while between 2015 and 2018 he served as the Head of the Department of Infomation Technology and Electrical Engineering of ETH Zurich. Between 2013 and 2023 he served as the Vice-President Finances and a Council Member of the International Federation of Automatic Control (IFAC) and as a Board Member of the IFAC Foundation. Since 2020 he is serving as the Director NCCR Automation.
\end{IEEEbiography}
\fi

\ifTACmode
    \clearpage
    \appendices
    \section*{Supplementary Material for ``The Fragility of Learning LQG Controllers''}
    \section{Additional Examples}
    \subsection{Variants of Doyle's Example}

    \section{Non-minimum phase example}
Consider system
\[
A=\begin{bmatrix}
1& 1\\\theta&1
\end{bmatrix},\,B=\begin{bmatrix}
0\\1
\end{bmatrix},\,C=\begin{bmatrix}
-\nmpparam &1
\end{bmatrix}.
\]
The system has a non-minimum phase zero at $1+\nmpparam$.
Let $\Sigma_w=I$, $\Sigma_v=1$, $Q=I$, $R=1$. Let $\theta^{\star}=0$. 
\begin{lemma}[Difficulty of non-minimum phase example]\label{lem:example_nmp_difficulty}
    Recall the definition of the minimax excess cost $\mathcal{M}(\theta^\star,\varepsilon,\mathcal{A})$. We have the following lower bounds \begin{align*}
        \liminf_{T\rightarrow \infty} T\mathcal{M}(\theta^\star,\varepsilon,\mathcal{A})&\ge \frac{1}{N}O(\nmpparam^{-4})\\
         \liminf_{T\rightarrow \infty} \frac{T\mathcal{M}(\theta^\star,\varepsilon,\mathcal{A})}{J^\star(\theta^\star)}&\ge \frac{1}{N}O(\nmpparam^{-1}),
    \end{align*}
    where $J^\star(\theta^\star)=\min_{\pi\in\Pi}J(\pi,\theta)$.
\end{lemma}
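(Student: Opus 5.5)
The plan is to derive the bound from Theorem~\ref{thm: excess cost lower bound}. Here $d_\theta=1$, so $\trace(H\,\mathsf{FI}^{-1}) = H(\theta^\star)/\mathsf{FI}^{\pi_{\exp}}(\theta^\star)$. I therefore need three asymptotic estimates as $\nmpparam\to 0$: $H(\theta^\star)=\Omega(\nmpparam^{-7})$, $\lim_T \mathsf{FI}^{\pi_{\exp}}/T = O(\nmpparam^{-3})$, and $J^\star = O(\nmpparam^{-3})$, all under the optimal LQG exploration policy. Since the Fisher information grows linearly in $T$, multiplying the bound of Theorem~\ref{thm: excess cost lower bound} by $T$ and taking $\liminf_T$ gives $\tfrac{1}{4N}\,\nmpparam^{-7}/\nmpparam^{-3}\propto \nmpparam^{-4}/N$. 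Dividing by $J^\star$ then gives the $\nmpparam^{-1}/N$ ratio. Three preliminary checks are needed: $\mathsf{FI}$ is Lipschitz in $\theta$ for a stabilizing linear policy (by the Riccati perturbation remark), $\mathsf{FI}(\theta^\star)\succ0$, and the instances in a small ball remain stabilizable and detectable. The last holds because for fixed $\nmpparam>0$ the pair $(A,C)$ with $C=[-\nmpparam\ \ 1]$ is observable.

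\textbf{Step 1: asymptotics of the Riccati solutions.} At $\theta^\star=0$ we have $A=\bmat{1&1\\0&1}$, so the LQR problem does not depend on $\nmpparam$, and $P$, $F$, $A_{\mathsf{cl}}^c$, and $\Psi$ are all $O(1)$ constants. All of the $\nmpparam$-dependence sits in the filter. The transfer function from $w$ to $y$ has a zero at $1+\nmpparam$, approaching the double pole at $1$, so the estimation problem is nearly unobservable. I will change coordinates to $\tilde x = Tx$ with $T$ chosen so that $C$ becomes $[0\ \ 1]$, or equivalently work with $z_1=x_2-\nmpparam x_1$. I will then make the ansatz $\Sigma = \mathrm{diag}(\nmpparam^{-a},\nmpparam^{-b})$ times $O(1)$ entries and balance the leading terms of the filtering DARE. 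I expect $\Sigma_{11}\sim\nmpparam^{-3}$, which follows the double-integrator-with-weak-observation scaling. From this, $J^\star = \trace(PW)+\trace(\Psi F\Sigma F^\top)$-type terms give $J^\star=O(\nmpparam^{-3})$. I also expect the filter closed-loop $A_{\mathsf{cl}}^o$ to have an eigenvalue near $1-c\nmpparam$, together with the corresponding $L$ and $\Sigma_e$ scalings.

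\textbf{Step 2: derivatives.} With $\dot A = \bmat{0&0\\1&0}$ and $\dot B=\dot C=\dot\Sigma_w=\dot\Sigma_v=0$, I will plug into the Lemma B.1 formulas to get $\dot P$, $\dot F$ (both $O(1)$), $\dot\Sigma$ (through a Lyapunov equation in $A_{\mathsf{cl}}^o$ whose slow mode amplifies by $\nmpparam^{-1}$ per solve), and $\dot L$. Then I will evaluate Proposition~\ref{prop: hessian} and Corollary~\ref{Cor: info under optimal policy} with $D^\eta_{\exp}=0$. Both are quadratic forms in the stationary covariance of the same joint system $A^H_{\mathsf{joint}}$, driven by $\bmat{L\\ \dot L}e_t$. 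They differ only in the output map: $\bmat{\dot F & F}$ weighted by $\Psi$ for $H$, versus $\bmat{0 & C}$ weighted by $\Sigma_e^{-1}$ for $\mathsf{FI}$ (plus the $\dot\Sigma_e$ term). The intended mechanism is that the $\dot r$ block is driven through $\dot A\hat x$ and $\dot L e$ and then amplified by the slow pole of $A_{\mathsf{cl}}^o$. The Hessian reads $\dot r$ in the poorly observed direction, where $F$ has $O(1)$ weight, and picks up roughly $\nmpparam^{-7}$. The Fisher information reads it only through $C$, which nearly annihilates that direction, and so stays at $O(\nmpparam^{-3})$.

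\textbf{Main obstacle.} The main obstacle is the sharp leading-order evaluation of the two stationary covariances. Upper bounds on $\mathsf{FI}$ and $J^\star$ are fine, but $H$ needs a genuine \emph{lower} bound, so cancellations in $\bmat{\dot F&F}\Sigma_H\bmat{\dot F&F}^\top$ have to be ruled out. My first approach is to block-diagonalize $A^H_{\mathsf{joint}}$ in the coordinates adapted to the slow filter mode and expand $\dlyap$ as a series in $\nmpparam$, retaining only the $(1-\lambda_{\text{slow}}^2)^{-1}\sim\nmpparam^{-1}$ amplified terms. I would verify the leading coefficient of $H$ is nonzero by symbolic computation; the accompanying colab can serve as a numerical sanity check.
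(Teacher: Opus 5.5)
Your overall route is the paper's: Theorem~\ref{thm: excess cost lower bound} with $d_\theta=1$, Proposition~\ref{prop: hessian}, Corollary~\ref{Cor: info under optimal policy} with $D^\eta_{\exp}=0$ (so $\Sigma_{\mathsf{FI}}=\Sigma_H$ and only the output map differs), and $\xi$-scaled expansions of the filter Riccati solution, $\dot\Sigma$ and $\Sigma_{H,2}$. The gap is that you have put the difficulty on the wrong side.

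The Hessian lower bound involves no cancellation. $F$ and $\dot F$ do not depend on $\xi$, and $f_1\neq 0$ (otherwise $A+BF$ keeps the eigenvalue $1$). Also $\Sigma_{H,1}=O(\xi^{-2})$ and $\Sigma_{H,12}=O(\xi^{-4})$. So $H=\Theta(\xi^{-7})$ as soon as $[\Sigma_{H,2}]_{11}=2\xi^{-7}+O(\xi^{-6})$ is known.

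It is the \emph{upper} bound $\lim_T\mathsf{FI}/T=O(\xi^{-3})$ that is delicate, and ``$C$ nearly annihilates that direction'' does not give it. Use the paper's variables $d_1=\xi^7S_{11}$, $d_2=-\xi^6S_{11}+\xi^5S_{12}$, $d_3=-\xi^5S_{12}+\xi^4S_{22}$ for $S=\Sigma_{H,2}$. Then $S=\xi^{-7}d_1vv^\top+\xi^{-5}d_2(ve_2^\top+e_2v^\top)+\xi^{-4}(d_3-d_2)e_2e_2^\top$ with $v=(1,\xi)^\top$ and $Cv=0$, hence $CSC^\top=\xi^{-4}(d_3-d_2)$. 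This gives three levels of accuracy:
\begin{itemize}
\item Using only $Ce_1=-\xi$ gives $O(\xi^{-5})$.
\item Using $Cv=0$ gives $O(\xi^{-4})$.
\item Getting $O(\xi^{-3})$ needs the further fact that $d_2$ and $d_3$ share their leading coefficient. The paper obtains this by solving the scaled $3\times 3$ system, $(d_1,d_2,d_3)=(2,-2r,-2r)+O(\xi)$.
\end{itemize}
Both the slow eigenvector of $A^o_{\mathsf{cl}}$ and the dominant forcing $\dot L\Sigma_e\dot L^\top\approx 4\xi^{-6}vv^\top$ are aligned with $v$. So keeping only the $(1-\lambda_{\mathrm{slow}}^2)^{-1}$-amplified terms yields something $C$ kills. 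The order of $\mathsf{FI}$ is then set precisely by the subleading and non-amplified pieces you planned to drop.

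This is not cosmetic. With $\mathsf{FI}/T=O(\xi^{-4})$ you would only get $T\mathcal{M}\gtrsim\xi^{-3}/N$ and a ratio to $J^\star$ of order $1/N$, so the second claim would not follow. With $O(\xi^{-5})$ the ratio bound even tends to zero.

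Relatedly, ``the slow mode amplifies by $\xi^{-1}$ per solve'' undercounts. The forcing of the $\dot\Sigma$ Lyapunov equation is $O(\xi^{-3})$, yet $\dot p_1\sim 4\xi^{-5}$, because the slow eigenvector $\propto v$ is badly scaled. Exponents have to be read off in $\mathrm{diag}(\xi,1)$-scaled coordinates.

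To close the gap, you have two options:
\begin{itemize}
\item Carry the expansion of $\Sigma_{H,2}$ to the order that determines $d_3-d_2$, which is the paper's route.
\item Avoid the cancellation entirely with the data-processing inequality for Fisher information. The observed data $(y,u)$ is a function of the complete data $(x,y,u)$, and the policy and observation densities do not depend on $\theta$. The parameter enters only via $x_{2,t+1}=\theta x_{1,t}+\cdots$ with $\Sigma_w=I$. So the complete-data score is $\sum_t w_{2,t}x_{1,t}$, plus an initial-condition term that is $O(1)$ in $T$. Hence $\lim_T\mathsf{FI}/T\le\mathbb{E}_\infty[x_1^2]=p_1+[\Sigma_{H,1}]_{11}=2\xi^{-3}+O(\xi^{-2})$.
\end{itemize}
The second option gives exactly the needed bound with no cancellation bookkeeping, and the rest of your plan then goes through.
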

We observe that the dominant term  of the excess cost blows up as the system becomes unobservable and we approach a pole zero cancellation. The same is true relative to the optimal LQG cost.
\subsection{Computation of Riccati}
First, we characterize the Riccati solution as a function of $\nmpparam$, as $\nmpparam$ goes to zero.

\begin{lemma}[Riccati rate]\label{lem:example_nmp_Riccati_rates}
    Let $\Sigma=\begin{bmatrix} p_1&p\\p&p_2\end{bmatrix}$ be the solution to the Riccati equation for the Kalman filter. Define $\sigma^2_e=C\Sigma C^\top+\Sigma_v$, $\sigma_e>0$ and let $r=\frac{1+\sqrt{5}}{2}$. Then, we have
    \begin{equation}\label{eq:example_nmp_asymptotic_Riccati}
    \begin{aligned}
      \tilde p_1&\triangleq  \nmpparam^3 p_1=2+(4r+1)\nmpparam+O(\nmpparam^2)\\
      \tilde p&\triangleq  \nmpparam^2 p=2+(2r+1)\nmpparam+O(\nmpparam^2)\\
      \tilde p_2&\triangleq  \nmpparam p_2=2+(r+1)\nmpparam+O(\nmpparam^2)\\
      \sigma_e&=r+\frac{r^2}{2r-1}\nmpparam+O(\nmpparam^2).
    \end{aligned}
    \end{equation}
\end{lemma}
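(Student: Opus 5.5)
The plan is to substitute the ansatz \eqref{eq:example_nmp_asymptotic_Riccati} directly into the filtering Riccati equation
\[
\Sigma = A\Sigma A^\top - A\Sigma C^\top \sigma_e^{-2} C\Sigma A^\top + I,\qquad \sigma_e^2 = C\Sigma C^\top + 1 ,
\]
with $A=\bmat{1&1\\0&1}$ (since $\theta^\star=0$) and $C=\bmat{-\nmpparam & 1}$. We then use the implicit function theorem, in suitably rescaled coordinates, to show that an analytic branch of solutions exists with the stated expansions. Finally we identify this branch with the stabilizing solution.

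\textbf{Rescaling.} Write $p_1=\nmpparam^{-3}\tilde p_1$, $p=\nmpparam^{-2}\tilde p$ and $p_2=\nmpparam^{-1}\tilde p_2$. Then
\[
C\Sigma C^\top = \nmpparam^{-1}\bigl(\tilde p_1 - 2\tilde p + \tilde p_2\bigr).
\]
Since $\sigma_e$ must remain bounded, the combination $\tilde p_1-2\tilde p+\tilde p_2$ must be $O(\nmpparam)$. For the leading values $(2,2,2)$ this holds because $2-4+2=0$. It is therefore natural to introduce the variable
\[
s \triangleq \nmpparam^{-1}(\tilde p_1-2\tilde p+\tilde p_2) = C\Sigma C^\top,
\]
so that $\sigma_e^2 = 1+s$.

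\textbf{Reduced system.} We work with the unknowns $(\tilde p, \tilde p_2, s)$, with $\tilde p_1$ eliminated. Expand each of the three scalar Riccati equations, namely the $(1,1)$, $(1,2)$ and $(2,2)$ entries. The vector $A\Sigma C^\top$ has entries of order $\nmpparam^{-2}$ and $\nmpparam^{-1}$, again after cancellations. Multiply each equation by the power of $\nmpparam$ that makes it $O(1)$, and collect powers of $\nmpparam$.

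At order $\nmpparam^0$ we expect the system to reduce to $\tilde p=\tilde p_2=2$ together with $s^2 = s+1$. The positive root is $s=r=\tfrac{1+\sqrt5}{2}$, so $\sigma_e^2 = 1+r = r^2$ and $\sigma_e = r$. This matches the claimed leading terms, and it is reassuring that the first-order coefficients in the statement give exactly $s=(4r+1)-2(2r+1)+(r+1)=r$.

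The first-order coefficients $(4r+1)$, $(2r+1)$, $(r+1)$ and $r^2/(2r-1)$ then follow by differentiating the reduced equations in $\nmpparam$ at $\nmpparam=0$. This is a linear solve whose matrix is the Jacobian at the leading-order solution. Note that $2r-1=\sqrt5$ is exactly $\partial_s(s^2-s-1)$ at $s=r$, which explains the denominator in the expansion of $\sigma_e$.

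\textbf{Main obstacle and identification with the stabilizing solution.} The main obstacle is showing that, in these rescaled coordinates, the Jacobian of the reduced system at $(\tilde p,\tilde p_2,s)=(2,2,r)$ and $\nmpparam=0$ is nonsingular. In the original coordinates the problem is degenerate: observability is lost at $\nmpparam=0$, and Riccati perturbation bounds blow up. So the rescaling must be chosen so that the leading-order system is nondegenerate.

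If the naive rescaling leaves a singular direction, the first thing to try is to also rescale the second-order combination. For example, the relevant directions are $\tilde p-\tilde p_2$ and the $(1,1)$ cancellation, and one can keep the next-order unknown as a separate variable. With a nonsingular Jacobian, the implicit function theorem gives an analytic branch with the stated $O(\nmpparam^2)$ remainders.

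To conclude, the solution on this branch is positive definite for small $\nmpparam>0$. Since $\Sigma_w=I\succ0$ and $(A,C)$ is detectable for $\nmpparam>0$, the DARE has a unique positive semidefinite solution. The branch therefore coincides with $\Sigma$, which completes the argument.
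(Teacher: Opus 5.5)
Your overall plan (rescale, apply the implicit function theorem, identify the branch by uniqueness of the DARE solution) is the paper's. However, the step you flag as the main obstacle actually fails in the coordinates you chose.

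In $(\tilde p,\tilde p_2,s)$, the second entry of $\Sigma C^\top$, which is also the second entry of $A\Sigma C^\top$, is $p_2-\xi p=\xi^{-1}(\tilde p_2-\tilde p)$. This is a $0/0$ quotient at $\xi=0$, but its true value is $O(1)$: the $(2,2)$ entry of the DARE says exactly $(p_2-\xi p)^2=\sigma_e^2$. So the entries of $A\Sigma C^\top$ are really of order $\xi^{-1}$ and $1$, not $\xi^{-2}$ and $\xi^{-1}$, and the $O(1)$ entry is precisely what your coordinates cannot see.

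Write $\delta=\tilde p_2-\tilde p$. The first entry is then $\xi^{-2}\delta(1+\xi)-\xi^{-1}s$. Clearing powers of $\xi$ in the $(2,2)$, $(1,2)$ and $(1,1)$ entries gives
\begin{gather*}
\delta^2=\xi^2(1+s),\qquad \bigl(\delta(1+\xi)-\xi s\bigr)\delta=\xi^2\tilde p_2(1+s),\\
\bigl(\delta(1+\xi)-\xi s\bigr)^2=\xi^2\bigl(2\tilde p+\xi\tilde p_2+\xi^2\bigr)(1+s).
\end{gather*}
At $\xi=0$ all three collapse to $\delta^2=0$. The Jacobian in $(\tilde p,\tilde p_2,s)$ vanishes identically, and the leading-order system $\tilde p=\tilde p_2=2$, $s^2=s+1$ that you ``expect'' does not emerge. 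If you do not clear the powers of $\xi$, the equations are not even defined at $\xi=0$. Either way the implicit function theorem cannot be invoked as you describe.

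The missing idea is the paper's reformulation of the DARE. Since $(p_2-\xi p)^2=\sigma_e^2\neq0$, divide this common factor out of the $(1,2)$ and $(1,1)$ entries \emph{before} rescaling. This gives $p_2(p_2-\xi p)=-\xi p_1+(1-\xi)p+p_2$ and $2p+p_2+1=p_2^2$. Then carry $\omega=p_2-\xi p$ as an independent unknown, linked to the others by $\tilde p_2-\tilde p=\xi\omega$. The resulting system in $(\tilde p_1,\tilde p,\tilde p_2,\omega)$ is smooth, with nonsingular Jacobian at $(2,2,2,-r)$. Your closing hedge about keeping the next-order unknown as a separate variable points this way, but you never commit to it or check the Jacobian.

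Once this is done, your variable $s$ works as well. Eliminating $\omega$ leaves
\begin{gather*}
2\tilde p+\xi\tilde p_2+\xi^2=\tilde p_2^2,\qquad (\tilde p_2-\tilde p)(\tilde p_2-1-\xi)+\xi s=0,\\
s^2=(1+s)(\tilde p_2-1-\xi)^2 .
\end{gather*}
Its Jacobian in $(\tilde p,\tilde p_2,s)$ at $(2,2,r)$, $\xi=0$, has determinant $-2(2r-1)\neq0$; this is where your denominator $2r-1$ genuinely enters. Then $\sigma_e=\sqrt{1+s}$ yields the stated expansion.

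Finally, positive definiteness of the branch is not automatic, because $\tilde p_1\tilde p_2-\tilde p^2$ vanishes at leading order. You need the first-order coefficients to obtain $\tilde p_1\tilde p_2-\tilde p^2=2r\xi+O(\xi^2)>0$, as the paper does, before invoking uniqueness.
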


\subsection{Computation of Hessian}
The Kalman gain and the respective closed-loop matrix are
\begin{align}
    L&=A\Sigma C^\top (C\Sigma C^\top+\Sigma_v)^{-1}\nonumber\\
    &=\begin{bmatrix} 1&1\\0&1\end{bmatrix}\begin{bmatrix} -\nmpparam p_1+p\\-\nmpparam p+p_2\end{bmatrix}\sigma_e^{-2}\nonumber\\
    &=\begin{bmatrix} -p_2/\sigma_e\\-1/\sigma_e\end{bmatrix}\label{eq:example_nmp_Kalman_gain}
\end{align}
and
\begin{align}
\begin{bmatrix}
   a_1&a_2\\a_3&a_4 
\end{bmatrix}\triangleq A_{\mathsf{cl}}^o= A-LC=\begin{bmatrix}
      1-\nmpparam \frac{p_2}{\sigma_e}& 1+\frac{p_2}{\sigma_e}\\-\frac{\nmpparam}{\sigma_e}&1+\frac{1}{\sigma_e}
  \end{bmatrix}.\label{eq:example_nmp_closed_loop_matrix}
\end{align}

The first step towards computing the Hessian, is calculating the derivative of the Kalman gain $\dot L$.
\begin{lemma}[Derivative of Kalman gain]\label{lem:example_nmp_gain_derivative}
Recall that $r=\frac{1+\sqrt{5}}{2}$. We have
\begin{align*}
    \dot L_1&=-2r^{-1}\nmpparam^{-3}+O(\nmpparam^{-2})\\
    \dot L_2&=-2r^{-1}\nmpparam^{-2}+O(\nmpparam^{-1}).
\end{align*}
As a result
\[
\dot L\Sigma_e\dot L^\top=4\nmpparam^{-4}\begin{bmatrix}
    \nmpparam^{-2}&\nmpparam^{-1}\\\nmpparam^{-1}&1
\end{bmatrix}+\begin{bmatrix}
    O(\nmpparam^{-5})&O(\nmpparam^{-4})\\O(\nmpparam^{-4})&O(\nmpparam^{-3})
\end{bmatrix}.
\]
Furthermore, $\dot\Sigma_e=O(\nmpparam^{-1})$.
\end{lemma}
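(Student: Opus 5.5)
We specialize the derivative formulas (Lemma B.1 of \citep{simchowitz2020naive}, as quoted before \Cref{prop: hessian}) to the instance \eqref{eq: nmp example}. Here $\dot A = \bmat{0 & 0\\ 1 & 0}$ and $\dot B = 0$, $\dot C = 0$, $\dot\Sigma_w = 0$, $\dot\Sigma_v = 0$. The formulas therefore reduce to
\begin{align*}
\dot L &= \paren{A_{\mathsf{cl}}^o \dot\Sigma C^\top + \dot A \Sigma C^\top}\sigma_e^{-2},\\
\dot\Sigma &= \dlyap\paren{A_{\mathsf{cl}}^o, \mathsf{sym}(\dot A \Sigma (A_{\mathsf{cl}}^o)^\top)},\\
\dot\Sigma_e &= C\dot\Sigma C^\top .
\end{align*}
All ingredients come from \Cref{lem:example_nmp_Riccati_rates}, the gain \eqref{eq:example_nmp_Kalman_gain} and the closed-loop matrix \eqref{eq:example_nmp_closed_loop_matrix}. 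A useful simplification is $\Sigma C^\top = A^{-1} L \sigma_e^2 = \sigma_e\bmat{1-p_2 \\ -1}$. Hence $\dot A\Sigma C^\top = \bmat{0\\ \sigma_e(1-p_2)}$, which is of order $\xi^{-1}$ and will turn out to be subdominant.

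\textbf{Rescaling.} The entries of $A_{\mathsf{cl}}^o$ and $\Sigma$ scale inhomogeneously, so I will introduce $D = \mathrm{diag}(\xi^{-1}, 1)$ and work with the similarity-transformed quantities $\bar A = D^{-1}A_{\mathsf{cl}}^o D$ and $\bar \Sigma = D^{-1}\Sigma D^{-1}$. Using $\tilde p_2 = 2 + O(\xi)$ and $\sigma_e = r + O(\xi)$, the matrix $\bar A$ has entries $a_1$, $\xi^{-1}a_2 \cdot \xi$, $\xi a_3 \cdot \xi^{-1}$, $a_4$. Its limit as $\xi\to 0$ is
\[
\bar A_0 = \bmat{1-2/r & 2/r \\ -1/r & 1+1/r}.
\]
This matrix has trace $2-1/r$ and determinant $1 + 1/r - 2/r + 2/r^2 - 2/r^2\cdot 0$, to be checked. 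Its spectral radius should be strictly less than one, as it must be, since this is the limiting normalized Kalman closed loop. The Lyapunov equation for $\bar{\dot\Sigma} = D^{-1}\dot\Sigma D^{-1}$ then has a well-conditioned operator $X\mapsto X - \bar A_0 X\bar A_0^\top$ at leading order, with an $O(\xi)$ perturbation.

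\textbf{Leading order of $\dot\Sigma$.} The forcing is $D^{-1}\mathsf{sym}(\dot A\Sigma (A_{\mathsf{cl}}^o)^\top)D^{-1}$. Its entries should scale like $\xi^{-2}$ in the rescaled coordinates, since $\Sigma_{11}\sim 2\xi^{-3}$ feeds the $(2,\cdot)$ row through $\dot A$. Solving the $2\times 2$ limiting Lyapunov equation explicitly, for example by vectorization, gives $\bar{\dot\Sigma} = \xi^{-2}\bar S_0 + O(\xi^{-1})$. This means $\dot\Sigma_{11}\sim\xi^{-4}$, $\dot\Sigma_{12}\sim \xi^{-3}$ and $\dot\Sigma_{22}\sim\xi^{-2}$. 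Substituting into $\dot L = A_{\mathsf{cl}}^o\dot\Sigma C^\top\sigma_e^{-2} + O(\xi^{-1})$ should produce the claimed leading terms $\dot L_1 \approx -2r^{-1}\xi^{-3}$ and $\dot L_2\approx -2r^{-1}\xi^{-2}$. Throughout I will use $r^2 = r+1$ to simplify constants. The formula for $\dot L\Sigma_e\dot L^\top$ then follows by direct multiplication, using $\Sigma_e = r^2 + O(\xi)$.

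\textbf{Main obstacle: the estimate $\dot\Sigma_e = O(\xi^{-1})$.} Naively, $C\dot\Sigma C^\top = \xi^2\dot\Sigma_{11} - 2\xi\dot\Sigma_{12} + \dot\Sigma_{22}$, and each term is of order $\xi^{-2}$. The claim therefore requires exact cancellation at leading order, i.e.\ $[-1,1]\bar S_0 [-1,1]^\top = 0$.

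I would first try to verify this cancellation structurally. Apply the Lyapunov equation to the vector $\bar C_0 = [-1, 1]$, which is the rescaled $C$. Since $\bar C_0$ is (approximately) a left eigenvector-like direction of $\bar A_0$, the identity $\bar C_0\bar A_0 \approx \bar C_0 + \ldots$ relates $\bar C_0\bar S_0\bar C_0^\top$ to $\bar C_0(\text{forcing})\bar C_0^\top$. That forcing contraction vanishes because $\dot A^\top C^\top$ only involves the small entry. If this shortcut fails, I will fall back on carrying the expansion of $\bar{\dot\Sigma}$ to first order in $\xi$. As a cross-check, I will differentiate the scalar relation $\sigma_e^2 = C\Sigma C^\top + 1$ together with the Riccati equation, which directly yields $\dot\Sigma_e$ at the next order.
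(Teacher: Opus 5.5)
Your reduction of the derivative formulas and the identity $\Sigma C^\top=\sigma_e\begin{bmatrix}1-p_2\\-1\end{bmatrix}$ agree with the paper, as does the conclusion that $\dot A\Sigma C^\top$ is subdominant. The limit $\bar A_0$ is also right: it is exactly the paper's $\tilde A^o_{\mathsf{cl},\infty}$.

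\textbf{Where it breaks.} The gap is your assertion that $\bar A_0$ is strictly stable. Finishing the computation you left open gives $\operatorname{tr}\bar A_0=2-r^{-1}$ and $\det\bar A_0=1-r^{-1}$. So the eigenvalues are $1$ and $r^{-2}$, with right eigenvector $\mathbf 1=[1,1]^\top$ and left eigenvector $u=[1,-2]^\top$ for the eigenvalue $1$.

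This is the loss of observability at $\xi=0$. The rescaled output map $\bar C=CD=[-1,1]$ satisfies $\bar C\mathbf 1=0$, and for $\xi>0$ the observer pole is $1-\xi+O(\xi^2)$. Stability for each fixed $\xi$ therefore does not survive the limit.

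Hence $X\mapsto X-\bar A_0X\bar A_0^\top$ is singular, and the limiting equation is in fact inconsistent. Take $W=\mathsf{sym}(A^o_{\mathsf{cl}}\Sigma\dot A^\top)$ with $A^o_{\mathsf{cl}}\Sigma\dot A^\top=\begin{bmatrix}0&p_1-p-1\\0&p-p_2+1\end{bmatrix}$. The rescaled forcing is $\bar W=D^{-1}WD^{-1}=\xi^{-2}\begin{bmatrix}0&2\\2&4\end{bmatrix}+O(\xi^{-1})$, and $u^\top\bar Wu=8\xi^{-2}\neq0$. But $u^\top(X-\bar A_0X\bar A_0^\top)u=0$ for every $X$.

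So your vectorized solve has no solution, and there is no $\bar S_0$ to expand around. The actual behavior is $\bar{\dot\Sigma}=4\xi^{-3}\mathbf 1\mathbf 1^\top+O(\xi^{-2})$. In other words $\dot p_1=4\xi^{-5}+O(\xi^{-4})$, as in the paper's Riccati-derivative lemma, one order larger than your $\dot\Sigma_{11}\sim\xi^{-4}$.

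\textbf{Consequences for $\dot L$ and $\dot\Sigma_e$.} Your final orders for $\dot L$ survive only because $\bar C\mathbf 1=0$ kills this dominant block. The constants live one layer down, in $\bar{\dot\Sigma}\bar C^\top=-2r\xi^{-2}\mathbf 1+O(\xi^{-1})$. From there, $D^{-1}\dot L=\sigma_e^{-2}\bar A_0\bar{\dot\Sigma}\bar C^\top+O(\xi^{-1})=-2r^{-1}\xi^{-2}\mathbf 1+O(\xi^{-1})$. Your scheme has no access to that layer.

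The same holds for $\dot\Sigma_e=\bar C\bar{\dot\Sigma}\bar C^\top$, which needs two cancellations rather than one. The $\xi^{-3}$ term vanishes because $\bar C\mathbf 1=0$. The $\xi^{-2}$ term vanishes because $\bar{\dot\Sigma}\bar C^\top$ is again proportional to $\mathbf 1$ at leading order.

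Your contraction shortcut does not supply this either, for three reasons.
\begin{itemize}
\item $\bar C$ is a left eigenvector of $\bar A_0$ for the eigenvalue $r^{-2}$, not $1$. Exactly, $\bar C\bar A=\lambda\bar C-\xi e_2^\top$ with $\lambda\to r^{-2}$.
\item The forcing contraction $CWC^\top=2(\lambda\sigma_e(1-p_2)-\xi p)$ is of order $\xi^{-1}$, not zero; note $\dot A^\top C^\top=e_1$.
\item The remainder terms $\xi\,\bar C\bar{\dot\Sigma}e_2$ and $\xi^2\bar{\dot\Sigma}_{22}$ can only be bounded once the $\xi^{-3}$ structure above is known.
\end{itemize}

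\textbf{What is missing} is a resolution of this singular perturbation. The paper does it with the variables $q_1=\xi^5\dot p_1$, $q_2=\xi^3(-\xi\dot p_1+\dot p)$, $q_3=\xi^2(-\xi\dot p+\dot p_2)$, so that $\dot\Sigma C^\top=[\xi^{-3}q_2,\ \xi^{-2}q_3]^\top$. In these variables the limiting $3\times3$ system is nonsingular with solution $(4,-2r,-2r)$. This gives $\dot L$ directly, and also $\dot\Sigma_e=\xi^{-2}(q_3-q_2)=O(\xi^{-1})$.
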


The following lemma provides the expression for the Hessian.
\begin{lemma}[Hessian]\label{lem:example_nmp_Hessian}
    Let 
    \[
\Sigma_H=\begin{bmatrix}
    \Sigma_{H,1}&\Sigma_{H,12}\\\Sigma_{H,12}^\top&\Sigma_{H,2}
\end{bmatrix}.
    \]
    Then, we have
    \begin{align*}
        \Sigma_{H,1}&=\begin{bmatrix}
            O(\nmpparam^{-2})&O(\nmpparam^{-2})\\O(\nmpparam^{-2})&O(\nmpparam^{-2})
        \end{bmatrix},\,\Sigma_{H,12}=\begin{bmatrix}
            O(\nmpparam^{-4})&O(\nmpparam^{-3})\\O(\nmpparam^{-4})&O(\nmpparam^{-3})
        \end{bmatrix}\\
            \Sigma_{H,2}&=2\begin{bmatrix}
            \nmpparam^{-7}&\nmpparam^{-6}\\\nmpparam^{-6}&\nmpparam^{-5}
        \end{bmatrix}+\begin{bmatrix}
            O(\nmpparam^{-6})&O(\nmpparam^{-5})\\O(\nmpparam^{-5})&O(\nmpparam^{-4})
        \end{bmatrix}.
    \end{align*}
    Moreover, $H=\Omega(\nmpparam^{-7})$ and $C\Sigma_{H,2}C^\top=O(\nmpparam^{-3})$.
\end{lemma}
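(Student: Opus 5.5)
We will exploit the block lower-triangular structure of $A_{\mathsf{joint}}^H$ from \Cref{prop: hessian} at $\theta^\star=0$. Here $\dot B=0$ and $\dot C=0$, while $\dot A=\bmat{0&0\\1&0}$. Hence
\[
A_{\mathsf{joint}}^H=\bmat{A_{\mathsf{cl}}^c&0\\ \dot A & A_{\mathsf{cl}}^o},\qquad B_{\mathsf{joint}}^H=\bmat{L\\ \dot L},
\]
and the Lyapunov equation for $\Sigma_H$ splits into three equations that can be solved sequentially:
\begin{align*}
\Sigma_{H,1}&=\dlyap\paren{A_{\mathsf{cl}}^c,\,L\Sigma_eL^\top},\\
\Sigma_{H,12}&=A_{\mathsf{cl}}^c\Sigma_{H,12}(A_{\mathsf{cl}}^o)^\top+A_{\mathsf{cl}}^c\Sigma_{H,1}\dot A^\top+L\Sigma_e\dot L^\top,\\
\Sigma_{H,2}&=\dlyap\paren{A_{\mathsf{cl}}^o,\,\dot A\Sigma_{H,1}\dot A^\top+\mathsf{sym}(\dot A\Sigma_{H,12}(A_{\mathsf{cl}}^o)^\top)+\dot L\Sigma_e\dot L^\top}.
\end{align*}

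\textbf{First block.} The LQR data $(A,B,Q,R)$ do not depend on $\nmpparam$, so $A_{\mathsf{cl}}^c$, $F$, $\dot F$ and $\Psi$ are $\nmpparam$-independent, and $A_{\mathsf{cl}}^c$ is strictly stable. By \eqref{eq:example_nmp_Kalman_gain} and \Cref{lem:example_nmp_Riccati_rates}, $L\Sigma_eL^\top=\bmat{p_2^2&p_2\\p_2&1}=O(\nmpparam^{-2})$ entrywise. This gives $\Sigma_{H,1}=O(\nmpparam^{-2})$ directly.

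\textbf{Cross block.} For $\Sigma_{H,12}$, the forcing term is dominated by $L\Sigma_e\dot L^\top$. Using \Cref{lem:example_nmp_gain_derivative}, its columns scale as $(\nmpparam^{-4},\nmpparam^{-3})$ and its rows as $(p_2,1)$. The operator $X\mapsto X-A_{\mathsf{cl}}^cX(A_{\mathsf{cl}}^o)^\top$ could be nearly singular, since $A_{\mathsf{cl}}^o$ has an eigenvalue near $1$. To rule this out I would vectorize and use that the spectrum of $A_{\mathsf{cl}}^c$ is bounded away from the unit circle uniformly in $\nmpparam$. Then $|\lambda_i(A_{\mathsf{cl}}^c)\lambda_j(A_{\mathsf{cl}}^o)|\le\rho(A_{\mathsf{cl}}^c)<1$, so the inverse is bounded once $A_{\mathsf{cl}}^o$ is put in a (mildly $\nmpparam$-conditioned) eigenbasis. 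I expect this to yield the stated column pattern $O(\nmpparam^{-4}),O(\nmpparam^{-3})$, possibly after tracking one extra factor $p_2$ in the first row.

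\textbf{Second block and the Hessian.} Next I analyze $A_{\mathsf{cl}}^o$ from \eqref{eq:example_nmp_closed_loop_matrix} with $p_2\approx2/\nmpparam$ and $\sigma_e\approx r$. I would compute its eigenstructure to leading order: the slow eigenvalue is $\lambda=1-\nmpparam+O(\nmpparam^2)$, with right eigenvector close to $v=\bmat{1\\ \nmpparam}$; the other eigenvalue stays strictly inside the disk. The dominant forcing is $\dot L\Sigma_e\dot L^\top\approx 4\nmpparam^{-6}vv^\top$, which lies along $v$, while the other forcing terms are $O(\nmpparam^{-5})$ or smaller. The slow mode then amplifies the forcing by $(1-\lambda^2)^{-1}\approx(2\nmpparam)^{-1}$, giving $\Sigma_{H,2}\approx 2\nmpparam^{-7}vv^\top$, which is the claimed leading term. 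Finally,
\[
H=2\trace\paren{\Psi\bmat{\dot F&F}\Sigma_H\bmat{\dot F&F}^\top}.
\]
Its contribution $2\Psi F\Sigma_{H,2}F^\top\approx4\Psi F_1^2\nmpparam^{-7}$ dominates, since the remaining terms are $O(\nmpparam^{-4})$. This requires checking $F_1\neq0$ for the fixed LQR problem, and then $H=\Omega(\nmpparam^{-7})$.

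\textbf{Main obstacle.} The hardest step is the claim $C\Sigma_{H,2}C^\top=O(\nmpparam^{-3})$. Because $C=\bmat{-\nmpparam&1}$ satisfies $Cv=0$ exactly, the $\nmpparam^{-7}$ leading term cancels. Bounding the remainder then requires control of $\Sigma_{H,2}$ beyond leading order, in particular the component along the other eigenvector and the $O(\nmpparam)$ correction of the slow eigenvector. My plan is to change coordinates to $\bmat{v & e_2}$ (or to the exact eigenbasis of $A_{\mathsf{cl}}^o$). In those coordinates $C$ kills the first coordinate up to the eigenvector correction, and I would solve the $2\times2$ Lyapunov equation entrywise. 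The cross and non-slow entries see no $(2\nmpparam)^{-1}$ amplification, so they inherit the size of the forcing projected off $v$, namely $O(\nmpparam^{-4})\cdot\nmpparam$. Getting the second-order terms of $\dot L$ from \Cref{lem:example_nmp_gain_derivative} and of the eigenvector exactly aligned is where I expect the bookkeeping to be most delicate.
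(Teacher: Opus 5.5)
Your route is correct and differs from the paper's in the two nontrivial blocks; the final step needs its bookkeeping repaired, as explained in the second paragraph.

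\textbf{How the two routes compare.} The paper never diagonalizes $A_{\mathsf{cl}}^o$.
\begin{itemize}
\item For $\Sigma_{H,12}$ it rescales by $T=\mathrm{diag}(\xi^4,\xi^3)$. Then $TA_{\mathsf{cl}}^oT^{-1}$ has a well-conditioned limit with eigenvalues $1$ and $1-1/r$, and the paper inverts the fixed operator $X\mapsto X-A_{\mathsf{cl}}^cX(\tilde A^o_{\mathsf{cl},\infty})^\top$.
\item For $\Sigma_{H,2}=S$ it uses the $C$-adapted triangular variables $d_1=\xi^7S_{11}$, $d_2=\xi^5(S_{12}-\xi S_{11})$, $d_3=\xi^4(S_{22}-\xi S_{12})$. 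These give a fixed nonsingular $3\times 3$ system with right-hand side $(4,4,4)+O(\xi)$ and solution $(2,-2r,-2r)+O(\xi)$.
\item The last claim then reads $C\Sigma_{H,2}C^\top=\xi^{-4}(d_3-d_2)$, and the cancellation appears as the coincidence $d_2=d_3$ at leading order.
\end{itemize}
Your eigenbasis version explains the same facts structurally. One checks $\chi(1)=\xi/\sigma_e$ and $\chi'(1)=(\tilde p_2-1)/\sigma_e$, so $\lambda_s=1-\xi+O(\xi^2)$, which gives the $(2\xi)^{-1}$ amplification. The slow observer mode is nearly invisible to $C$, which gives the cancellation. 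The paper's version instead avoids eigenvectors and conditioning, and reuses the machinery of the $\dot\Sigma$ computation. Your treatment of $H$ is sharper than the paper's ``$F$ full rank''. Since the leading part of $\Sigma_{H,2}$ is rank one along $(1,\xi)$, what is needed is $F_1\neq 0$. This holds, because $F_1=0$ would make $A+BF$ upper triangular with eigenvalue $1$.

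\textbf{What the last step needs.} First, commit to the exact eigenbasis. In coordinates $[v,\ e_2]$, the matrix $A_{\mathsf{cl}}^o$ becomes
\[
\begin{bmatrix}1+\xi & 1+p_2/\sigma_e\\ -\xi^2 & 1-\xi+(1-\tilde p_2)/\sigma_e\end{bmatrix},
\]
so the equations do not decouple. The slow diagonal entry exceeds one. The cross entry of the solution is about $-2r\xi^{-5}$, which is larger than its $O(\xi^{-4})$ forcing. These are essentially the paper's coordinates, which is why the paper has to solve a coupled system there.

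Second, your sizes in the eigenbasis are off.
\begin{itemize}
\item Since $v_f\approx(1,\xi/2)$ and $\det V\approx-\xi/2$, the matrix $V^{-1}$ has a column of size $\xi^{-1}$. Hence $Y_{sf},Y_{ff}=O(\xi^{-5})$, not $O(\xi^{-3})$.
\item The $O(\xi^{-3})$ bound comes instead from $Cv_s=-\sigma_e\xi^2+O(\xi^3)$ and $Cv_f=-\xi/2+O(\xi^2)$. With these, $(Cv_s)^2Y_{ss}$ and $(Cv_f)^2Y_{ff}$ are both $O(\xi^{-3})$, and the cross term is $O(\xi^{-2})$.
\item To get $V^{-1}EV^{-\top}=O(\xi^{-5})$ for the forcing remainder $E$, you need its graded form
\[
\begin{bmatrix}O(\xi^{-5})&O(\xi^{-4})\\O(\xi^{-4})&O(\xi^{-3})\end{bmatrix}.
\]
An entrywise $O(\xi^{-5})$ bound would allow a $\xi^{-2}$ blow-up.
\item The graded form is where the column pattern of $\Sigma_{H,12}$ is needed, through $\mathsf{sym}(\dot A\Sigma_{H,12}(A_{\mathsf{cl}}^o)^\top)$.
\end{itemize}
That column pattern does follow in your eigenbasis, but not from a condition-number bound, which loses a factor $\xi^{-1}$. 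Write $\Sigma_{H,12}=YV^\top$. The $j$-th column of $Y$ is $(I-\lambda_jA_{\mathsf{cl}}^c)^{-1}(W_{12}V^{-\top})_j$, where $W_{12}=A_{\mathsf{cl}}^c\Sigma_{H,1}\dot A^\top+L\Sigma_e\dot L^\top$, and each such column is $O(\xi^{-4})$. The second entries of $v_s$ and $v_f$ are $O(\xi)$, which yields the pattern. No second-order terms of $\dot L$ are needed anywhere, only the graded $O(\cdot)$ bounds from the $\dot L$ lemma.
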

\subsection{Fisher Information under optimal policy}
Under the optimal policy, it follows from Corollary~\ref{Cor: info under optimal policy} that $  \Sigma_H=\Sigma_{\mathsf{FI}} $. Hence, we obtain the following rate.
\begin{lemma}[Fisher Information]\label{lem:example_nmp_FI}
    We have
    \[
\lim_{T\rightarrow \infty}\frac{1}{T}\mathsf{FI}(\theta)=O(\nmpparam^{-3}).
    \]
\end{lemma}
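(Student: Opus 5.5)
The plan is to reduce the statement to Corollary~\ref{Cor: info under optimal policy} together with Lemma~\ref{lem:example_nmp_Hessian}. When the exploration policy equals the optimal strictly causal LQG policy with $D_{\exp}^\eta = 0$, the joint matrices in Corollary~\ref{Cor: info under optimal policy} coincide with $A_{\mathsf{joint}}^H$ and $B_{\mathsf{joint}}^H$ of Proposition~\ref{prop: hessian}. The second column of $B_{\mathsf{joint}}^{\mathsf{FI}}$ vanishes, so the input covariance reduces to $B_{\mathsf{joint}}^H\Sigma_e (B_{\mathsf{joint}}^H)^\top$. Hence $\Sigma_{\mathsf{FI}}=\Sigma_H$. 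The Fisher Information formula of Proposition~\ref{prop: info} then reads
\begin{align*}
\lim_{T}\tfrac{1}{T}\mathsf{FI} = \sigma_e^{-2}\,\bmat{\dot C & C}\Sigma_H\bmat{\dot C & C}^\top + \tfrac{1}{2}\big(\sigma_e^{-2}\dot\Sigma_e\big)^2 .
\end{align*}
Since $C$ depends only on $\nmpparam$ and not on $\theta$, we have $\dot C=0$. The first term is therefore $\sigma_e^{-2} C\Sigma_{H,2}C^\top$.

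The key steps, in order, are as follows.

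First, by Lemma~\ref{lem:example_nmp_Riccati_rates}, $\sigma_e = r + O(\nmpparam)$. So $\sigma_e^{-2}$ is bounded above and below uniformly as $\nmpparam\to 0$.

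Second, Lemma~\ref{lem:example_nmp_Hessian} gives $C\Sigma_{H,2}C^\top = O(\nmpparam^{-3})$. It is worth checking why this bound is much smaller than the $\nmpparam^{-7}$ entries of $\Sigma_{H,2}$. With $C=\bmat{-\nmpparam & 1}$, the leading part $2\bmat{\nmpparam^{-7}&\nmpparam^{-6}\\ \nmpparam^{-6}&\nmpparam^{-5}}$ contributes $2(\nmpparam^{-5} - 2\nmpparam^{-5} + \nmpparam^{-5})=0$. This is exactly the direction $C$ annihilates. The remainder terms contribute $\nmpparam^2O(\nmpparam^{-6})+\nmpparam O(\nmpparam^{-5})+O(\nmpparam^{-4})$, which is $O(\nmpparam^{-4})$ naively. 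To reach $O(\nmpparam^{-3})$ one needs the cancellation to persist at next order, and I would take this from the lemma's statement rather than recompute it.

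Third, Lemma~\ref{lem:example_nmp_gain_derivative} gives $\dot\Sigma_e = O(\nmpparam^{-1})$. The log-determinant term is then $O(\nmpparam^{-2})$, which is dominated. Combining these three steps yields $\lim_T \tfrac1T\mathsf{FI}^{\pi_{\exp}}(\theta^\star)=O(\nmpparam^{-3})$.

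The main obstacle is the second step: the exact cancellation of the leading $\nmpparam^{-5}$ and $\nmpparam^{-4}$ orders in $C\Sigma_{H,2}C^\top$. If I had to verify it independently, I would not expand $\Sigma_{H,2}$ directly. Instead I would use the structure $\dot r_{t+1} = A_{\mathsf{cl}}^o\dot r_t + \dot A\hat x_{t|t-1} + \dot L e_t$ and track $C\dot r_t$ as a scalar process. The dominant part of $\dot L$ is aligned with $\bmat{\nmpparam^{-1} & 1}^\top$. Using $L=-\bmat{p_2 & 1}^\top/\sigma_e$ and $\tilde p_2 = 2+O(\nmpparam)$, I would show that this direction is nearly parallel to $L$, with $C\dot L = O(\nmpparam^{-1})$ after cancellation. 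I would then bound the $\calH_2$ norm of $C(zI-A_{\mathsf{cl}}^o)^{-1}$ restricted to the relevant input directions, using the eigenstructure of $A_{\mathsf{cl}}^o$ from \eqref{eq:example_nmp_closed_loop_matrix}.
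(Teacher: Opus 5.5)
Your proposal is the paper's proof. Under the optimal policy the corollary gives $\Sigma_{\mathsf{FI}}=\Sigma_H$, and $\dot C=0$ reduces the first term to $\sigma_e^{-2}C\Sigma_{H,2}C^\top=O(\xi^{-3})$; the paper obtains this bound as $\xi^{-4}(d_3-d_2)$, with $d_2,d_3$ sharing the leading constant $-2r$, which is exactly the extra cancellation you flag and defer to the Hessian lemma. Finally, $\dot\Sigma_e=O(\xi^{-1})$ makes the log-determinant term $O(\xi^{-2})$.

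One correction to your optional independent check. The leading direction $\begin{bmatrix}\xi^{-1}&1\end{bmatrix}^\top$ of $\dot L$ matters because it spans $\ker C$, not because it is close to $L$; in fact $CL=(\tilde p_2-1)/\sigma_e\to r^{-1}\neq 0$. A cleaner verification uses the identity $CA_{\mathsf{cl}}^o=(1-CL)\,C-\xi e_2^\top$ with $e_2^\top=\begin{bmatrix}0&1\end{bmatrix}$. This makes $C\dot r_t$ a stable scalar recursion (pole $\to 2-r$) driven by $\xi\dot r_{t,2}=O(\xi^{-3/2})$, $\hat x_{t,1}=O(\xi^{-1})$ and $C\dot L\,e_t=O(\xi^{-1})$ in $L^2$. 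Hence the $O(\xi^{-3})$ bound follows from the stated block sizes alone, without any further cancellation.
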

\section*{Proofs}
\subsection*{Proof of Lemma~\ref{lem:example_nmp_Riccati_rates}}
\begin{proof}
\textbf{  Part a) positive definite.}  First, we verify that under~\eqref{eq:example_nmp_asymptotic_Riccati} and sufficiently small $\nmpparam$, matrix
    \[
\begin{bmatrix} p_1&p\\p&p_2\end{bmatrix}
    \]
   is positive definite. Both $\tilde p_1$ and $\tilde p_2$ are guaranteed to be positive for small $\nmpparam$. We only need to verify positivity of the determinant $p_1p_2-p^2$ or equivalently $\tilde p_1\tilde p_2-\tilde p^2$. We have
   \begin{align*}
\tilde p_1\tilde p_2&=4+(2(r+1)+2(4r+1))\nmpparam+O(\nmpparam^2)\\{\tilde p}^2&=4+4(2r+1)\nmpparam+O(\nmpparam^2).
   \end{align*}
   As a result, for small enough $\nmpparam$ we have $\tilde p_1\tilde p_2-{\tilde p}^2=2r\nmpparam+O(\nmpparam^2)>0$.

\textbf{  Part b) satisfaction of DARE.} 
By Lemma~\ref{lem:example_nmp_DARE_reformulation}, we have that the DARE is equivalent to
\begin{equation*}\begin{aligned}
    2\tilde p+\tilde p_2 \nmpparam+\nmpparam^2&=\tilde p_2^2\\
    -(\tilde p_2-\nmpparam-1)\omega +1&=\omega^2\\
    (\tilde p_2-\nmpparam)\nmpparam\omega&=\tilde p-\tilde p_1\\
    \tilde p_2-\tilde p&=\nmpparam \omega.
\end{aligned}
\end{equation*}
We will exploit the implicit function theorem to define a solution at $\nmpparam=0$ and extend the solution to $\nmpparam>0$.
Define
\begin{equation*}g(\tilde p_1,\tilde p,\tilde p_2,\omega,\nmpparam)\triangleq \begin{bmatrix}
    2\tilde p+\tilde p_2 \nmpparam+\nmpparam^2-\tilde p_2^2\\
    -(\tilde p_2-\nmpparam-1)\omega +1-\omega^2\\
    (\tilde p_2-\nmpparam)\nmpparam\omega-\tilde p+\tilde p_1\\
    \tilde p_2-\tilde p-\nmpparam \omega.
\end{bmatrix}
\end{equation*}
Consider the equations at $\nmpparam=0$
\begin{equation*}\begin{aligned}
    2\tilde p_0-\tilde p_{20}^2&=0\\
    \omega_0^2+(\tilde p_{20}-1)\omega_0 -1&=0\\
    \tilde p_0-\tilde p_{10}&=0\\
    \tilde p_{20}-\tilde p_{0}&=0.
\end{aligned}
\end{equation*}
Observe that $\tilde p_{10}=\tilde p_0=\tilde p_{20}=2$, $\omega_0=-r$ is one solution. Equivalently, $g(2,2,2,-r,0)=0$. The Jacobian with respect to $\tilde p_1,\tilde p,\tilde p_2,\omega$ around the solution is equal to
\[
J\triangleq\frac{\partial g(\tilde p_1,\tilde p,\tilde p_2,\omega,\nmpparam)}{\partial(\tilde p_1,\tilde p,\tilde p_2,\omega)}|_{(2,2,2,-r,0)}=\begin{bmatrix}
0&2&-4&0\\0&0&r&-1+2r\\1&-1&0&0\\0&-1&1&0
\end{bmatrix}
\]
which is full rank. Hence, there exists a neighborhood of $(2,2,2,-r,0)$ where a solution to the DARE is well defined as a function of $\nmpparam$. Moreover, we can compute the gradient of $\tilde p_1,\tilde p,\tilde p_2,\omega$ with respect to $\nmpparam$
\begin{multline}
\frac{\partial}{\partial \nmpparam}\left.\begin{bmatrix}
p_1\\\tilde p\\\tilde p_2\\\omega
\end{bmatrix}\right\vert_{\nmpparam=0}=-J^{-1}\frac{\partial g(\tilde p_1,\tilde p,\tilde p_2,\omega,\nmpparam)}{\partial \nmpparam}|_{(2,2,2,-r,0)}\\=\begin{bmatrix}
0.5&0&-1&2\\0.5&0&0&2\\0.5&0&0&1\\\tfrac{r}{2-4r}&\tfrac{2}{2-4r}&0&\tfrac{2r}{2-4r}
\end{bmatrix}\begin{bmatrix}2\\-r\\-2r\\r
\end{bmatrix}\\=\begin{bmatrix}
    1+4r\\1+2r\\1+r\\r^2/(-2r+1)
\end{bmatrix}=\begin{bmatrix}
    1+4r\\1+2r\\1+r\\-\tfrac{3r+1}{5}
\end{bmatrix}
\end{multline}
Since $g$ is infinite times differentiable, the analytic implicit function theorem implies that all higher order derivatives exist and we can take the Taylor expansion around $(2,2,2,-r,0)$. 
Hence, there exists a neighborhood of $(2,2,2,-r,0)$ such that all solutions to~\eqref{eq:example_nmp_DARE_reformulation} are equal to
  \begin{equation*}
    \begin{aligned}
      \tilde p_1(\nmpparam)&=2+(4r+1)\nmpparam+O(\nmpparam^2)\\
      \tilde p(\nmpparam)&=2+(2r+1)\nmpparam+O(\nmpparam^2)\\
      \tilde p_2(\nmpparam)&=2+(r+1)\nmpparam+O(\nmpparam^2)\\
      \omega(\nmpparam)&=-r-\frac{r^2}{2r-1}+O(\nmpparam^2).
    \end{aligned}
    \end{equation*}

\textbf{Part c) uniqueness argument.} We showed that there exists a valid solution to~\eqref{eq:example_nmp_DARE_reformulation} of the form~\eqref{eq:example_nmp_asymptotic_Riccati} for a small enough neighborhood of $\nmpparam$ around zero. The solution is also positive definite for a small enough neighborhood of  $\nmpparam$ around zero. Since $(A,C)$ is observable and $(A,W^{1/2})$ is controllable, this solution is the unique positive definite solution to the DARE.

 The result for $\sigma_e$ follows from $\sigma^2_e=\omega^2$ (Lemma~\ref{lem:example_nmp_DARE_reformulation}) and the fact that $\sigma_e>0$, which imply $\sigma_e=-\omega$.
\end{proof}
\begin{lemma}[DARE reformulation]\label{lem:example_nmp_DARE_reformulation}
Let $P=\begin{bmatrix} p_1&p\\p&p_2\end{bmatrix}$ be the solution to the Riccati equation for the Kalman filter. Define $\tilde p_1=\nmpparam^3 p_1$,\, $\tilde p=\nmpparam^2 p,\,\tilde p_2=\nmpparam p_2$, $\omega=-\nmpparam p+p_2$. Then, the DARE is equivalent to the following system of equations
\begin{equation}\label{eq:example_nmp_DARE_reformulation}\begin{aligned}
    2\tilde p+\tilde p_2 \nmpparam+\nmpparam^2&=\tilde p_2^2\\
    -(\tilde p_2-\nmpparam-1)\omega +1&=\omega^2\\
    (\tilde p_2-\nmpparam)\nmpparam\omega&=\tilde p-\tilde p_1\\
    \tilde p_2-\tilde p&=\nmpparam \omega.
\end{aligned}
\end{equation}
Moreover, $\omega^2=\sigma^2_e$.
\end{lemma}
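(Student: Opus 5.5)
The plan is to write the filtering DARE entrywise, using the fact that $\Sigma C^\top$ and $A\Sigma C^\top$ have a simple structure, and then rescale each entry equation by the appropriate power of $\nmpparam$. At $\theta^\star=0$ we have $A=\bmat{1&1\\0&1}$, $C=\bmat{-\nmpparam & 1}$, $\Sigma_w=I$ and $\Sigma_v=1$, so the DARE reads
\begin{align*}
\Sigma = A\Sigma A^\top + I - g g^\top/\sigma_e^2, \qquad g\triangleq A\Sigma C^\top,\qquad \sigma_e^2=C\Sigma C^\top+1 .
\end{align*}
By definition of $\omega$, $\Sigma C^\top=\bmat{-\nmpparam p_1+p\\ \omega}$, hence $g=\bmat{g_1\\ g_2}=\bmat{-\nmpparam p_1+p+\omega\\ \omega}$. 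Also $A\Sigma A^\top=\bmat{p_1+2p+p_2 & p+p_2\\ p+p_2 & p_2}$.

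I will then read off the three scalar equations, in the order $(2,2)$, $(1,2)$, $(1,1)$.
\begin{itemize}
\item The $(2,2)$ entry gives $p_2=p_2+1-\omega^2/\sigma_e^2$, so $\omega^2=\sigma_e^2$. This is the ``moreover'' claim, and it also shows $\omega\neq 0$ because $\sigma_e^2\ge\Sigma_v=1$.
\item Using $\omega^2=\sigma_e^2$, the $(1,2)$ entry gives $p=p+p_2-g_1\omega/\sigma_e^2$, i.e.\ $g_1=p_2\omega$. Equivalently, $-\nmpparam p_1+p=(p_2-1)\omega$.
\item Using both facts, the $(1,1)$ entry gives $p_1=p_1+2p+p_2+1-p_2^2$, i.e.\ $2p+p_2+1=p_2^2$.
\end{itemize}

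Next I substitute the scalings $\tilde p_1=\nmpparam^3p_1$, $\tilde p=\nmpparam^2p$ and $\tilde p_2=\nmpparam p_2$.
\begin{itemize}
\item Multiplying $2p+p_2+1=p_2^2$ by $\nmpparam^2$ gives the first equation, $2\tilde p+\nmpparam\tilde p_2+\nmpparam^2=\tilde p_2^2$.
\item Multiplying $-\nmpparam p_1+p=(p_2-1)\omega$ by $\nmpparam^2$ gives the third equation, $\tilde p-\tilde p_1=(\tilde p_2-\nmpparam)\nmpparam\omega$.
\item Multiplying the definition $\omega=-\nmpparam p+p_2$ by $\nmpparam$ gives the fourth equation, $\tilde p_2-\tilde p=\nmpparam\omega$.
\item For the second equation, I expand $\sigma_e^2=\nmpparam^2p_1-2\nmpparam p+p_2+1=\nmpparam(\nmpparam p_1-p)+\omega+1$. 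Substituting $\nmpparam p_1-p=-(p_2-1)\omega$ yields $\sigma_e^2=-(\tilde p_2-\nmpparam-1)\omega+1$. Combining this with $\omega^2=\sigma_e^2$ gives the second equation.
\end{itemize}

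For the converse direction, every step above is reversible. The scalings are invertible for $\nmpparam>0$, and the divisions by $\omega$ and $\sigma_e^2$ are legitimate because $\sigma_e^2=\omega^2$ and $\sigma_e^2\ge 1$. Hence any solution of the four-equation system, with $\omega$ defined as stated, satisfies the three entrywise DARE equations.

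The main obstacle is purely bookkeeping. The delicate point is the second equation: there one must substitute the $(1,2)$-relation into the expression for $C\Sigma C^\top$ rather than trying to expand it directly. One must also keep track of $\omega=g_2$ appearing inside $g_1$. I would double-check the signs there against the Kalman gain formula $L=\bmat{-p_2/\sigma_e\\-1/\sigma_e}$ stated above. That formula corresponds to $\omega=-\sigma_e$, which is consistent with the sign choice used later.
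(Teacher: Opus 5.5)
Your proposal is correct and follows essentially the same route as the paper: expand the filtering DARE entrywise, use the $(2,2)$ entry to get $\omega^2=\sigma_e^2$, the $(1,2)$ entry to get $-\nmpparam p_1+p=(p_2-1)\omega$ (the paper packages this through the auxiliary variable $\chi=(-\nmpparam p_1+p)/\omega+1$ and shows $\chi=p_2$), and the $(1,1)$ entry to get $2p+p_2+1=p_2^2$; then substitute into the expansion of $C\Sigma C^\top+1$ and rescale by powers of $\nmpparam$. Your explicit treatment of the converse direction is a small addition that the paper omits. There, $\omega\neq 0$ already follows from the second equation itself (setting $\omega=0$ gives $1=0$), so you do not need $\Sigma\succeq 0$ to justify the divisions.
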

\begin{proof}
   Expanding the DARE, we obtain
\begin{align*}
    p_1&=p_1+2p+p_2+1-\frac{(-\nmpparam p_1+(1-\nmpparam)p+p_2)^2}{\sigma^2_e}\\
    p&=p+p_2-\frac{(-\nmpparam p_1+(1-\nmpparam)p+p_2)(-\nmpparam p+p_2)}{\sigma^2_e}\\
    p_2&=p_2+1-\frac{(-\nmpparam p+p_2)^2}{\sigma^2_e},
\end{align*}
which leads to
\begin{align*}
    2p+p_2+1&=\frac{(-\nmpparam p_1+(1-\nmpparam)p+p_2)^2}{\sigma^2_e}\\
    p_2&=\frac{(-\nmpparam p_1+(1-\nmpparam)p+p_2)(-\nmpparam p+p_2)}{\sigma^2_e}\\
    \sigma^2_e&=(-\nmpparam p+p_2)^2
\end{align*}
We can simplify the equations by introducing variables $\omega=-\nmpparam p+p_2$, $\chi=(-\nmpparam p_1+p)/\omega+1$.
Then, we obtain 
\begin{align*}
    2p+p_2+1&=\chi^2\\
    p_2&=\chi\\
    \sigma^2_e&=\omega^2\\
    (\chi-1)\omega&=p-\nmpparam p_1\\
    p_2-\nmpparam p&=\omega,
\end{align*}
where the last two equations follow from the definition of $\chi,\omega$. Expanding $\sigma^2_e=CPC'+V$ and eliminating $\chi$, leads to
\begin{align*}
    2p+p_2+1&=p_2^2\\
    \nmpparam^2 p_1-2\nmpparam p+p_2+1&=\omega^2\\
    (p_2-1)\omega&=p-\nmpparam p_1\\
    p_2-\nmpparam p&=\omega.
\end{align*}
Plugging the third and fourth equation into the second we finally obtain
\begin{align*}
    2p+p_2+1&=p_2^2\\
   - \nmpparam(p_2-1)\omega+\omega +1&=\omega^2\\
    (p_2-1)\omega&=p-\nmpparam p_1\\
    p_2-\nmpparam p&=\omega.
\end{align*}
To finish the proof, we
multiply the first equation by $\nmpparam^2$, the third by $\nmpparam^2$, and the fourth by $\nmpparam$.
\end{proof}
\subsection*{Proof of Lemma~\ref{lem:example_nmp_gain_derivative}}
\begin{proof}
By the definition of $A(\theta)$, we have \[\dot A=\begin{bmatrix}
0&0\\1&0\end{bmatrix}.\]
Recall the formula
    \begin{align*}
  \dot L &= \left(A_{\mathsf{cl}}^o \Sigma \dot C^\top + A_{\mathsf{cl}}^o \dot \Sigma  C^\top + (\dot A - L \dot C) \Sigma C^\top\right)\Sigma_e^{-1}\\
  &=\left( A_{\mathsf{cl}}^o \dot \Sigma  C^\top + \dot A \Sigma C^\top\right)\Sigma_e^{-1}.
    \end{align*}
    Define
\[
\dot \Sigma=\begin{bmatrix}
    \dot p_1&\dot p\\
    \dot p&\dot p_2
\end{bmatrix}
\]
   and consider the change of variables $q_2=\nmpparam^3(-\nmpparam \dot p_1+\dot p)$, $q_3=\nmpparam^2(-\nmpparam \dot p+\dot p_2)$. Then we have
   \[
A_{\mathsf{cl}}^o \dot \Sigma  C^\top=\begin{bmatrix}
    (a_1q_2+\nmpparam a_2q_3)\nmpparam^{-3}\\
     (\nmpparam^{-1}a_3q_2+ a_4q_3)\nmpparam^{-2}
\end{bmatrix}.
   \]
   By Lemmas~\ref{lem:example_nmp_Riccati_rates},~\ref{lem:example_nmp_Riccati_derivatives}, we obtain
   \[
A_{\mathsf{cl}}^o \dot \Sigma  C^\top\Sigma_e^{-1}=\begin{bmatrix}
    -2r^{-1}\nmpparam^{-3}+O(\nmpparam^{-2})\\ -2r^{-1}\nmpparam^{-2}+O(\nmpparam^{-1})
\end{bmatrix}.
   \]
   Meanwhile, from Lemma~\ref{lem:example_nmp_DARE_reformulation}, the second term is
   \[
\dot A\Sigma C^\top \Sigma^{-1}_e=\begin{bmatrix}
    0\\\frac{-\nmpparam p_1+p}{\sigma^2_e}
\end{bmatrix}=\begin{bmatrix}
    0\\\frac{1-p_2}{\sigma_e}
\end{bmatrix}=\begin{bmatrix}
    0\\O(\nmpparam^{-1})
\end{bmatrix}
   \]
   which can be absorbed in the higher order terms.

   The result for $\dot \Sigma_e$ follows directly from Lemma~\ref{lem:example_nmp_Riccati_derivatives}.
\end{proof}
\begin{lemma}[Derivative of Riccati matrix]\label{lem:example_nmp_Riccati_derivatives}
Recall that $r=\frac{1+\sqrt{5}}{2}$. 
We have
\begin{align*}
    \dot p_1&=4\nmpparam^{-5}+O(\nmpparam^{-4})\\
     -\nmpparam \dot p_1+\dot p&=-2r\nmpparam^{-3}+O(\nmpparam^{-2})\\
    -\nmpparam \dot p+\dot p_2&=-2r\nmpparam^{-2}+O(\nmpparam^{-1}).
\end{align*}
As a result, $\dot\Sigma_e=O(\nmpparam^{-1})$.
\end{lemma}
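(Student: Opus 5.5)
We plan to compute $\dot\Sigma$ by implicit differentiation of the Kalman DARE in $\theta$ at $\theta^\star=0$. We will reuse the rescaled variables of Lemma~\ref{lem:example_nmp_DARE_reformulation}, so that every unknown has an $O(1)$ limit as $\nmpparam\to 0$. The key choice is not to use the generic $\dlyap$ characterization of $\dot\Sigma$ directly. Since $A_{\mathsf{cl}}^o$ in \eqref{eq:example_nmp_closed_loop_matrix} has eigenvalues approaching the unit circle as $\nmpparam\to0$, the Lyapunov operator is ill-conditioned there, and a naive bound would lose the precise powers of $\nmpparam$.

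\textbf{Step 1: DARE reformulation for general $\theta$.} We redo the expansion in the proof of Lemma~\ref{lem:example_nmp_DARE_reformulation} with $A(\theta)=\bmat{1&1\\ \theta&1}$ in place of $A(0)$. This yields the three DARE entries as polynomial identities in $(p_1,p,p_2,\theta,\nmpparam)$, with $\sigma_e^2 = \nmpparam^2p_1-2\nmpparam p+p_2+1$. The terms involving $\theta$ appear only through the second row of $A$, namely through $\theta p_1 + p$ and $\theta p + p_2$.

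Differentiating these identities at $\theta=0$ gives a linear system in $(\dot p_1,\dot p,\dot p_2)$. We then rescale the unknowns as
\[
q_1=\nmpparam^5\dot p_1,\qquad q_2=\nmpparam^3(-\nmpparam\dot p_1+\dot p),\qquad q_3=\nmpparam^2(-\nmpparam\dot p+\dot p_2).
\]
This mirrors the rescaling $\tilde p_1,\tilde p,\tilde p_2,\omega$ used for the DARE itself. We also multiply the equations by the powers of $\nmpparam$ that make all coefficients $O(1)$.

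\textbf{Step 2: leading-order solve.} We substitute the expansions \eqref{eq:example_nmp_asymptotic_Riccati}, namely $\tilde p_1,\tilde p,\tilde p_2\to 2$ and $\omega\to -r$. At $\nmpparam=0$ this gives a $3\times3$ linear system for $(q_1,q_2,q_3)$. We will check that its matrix is nonsingular, as was done for the Jacobian $J$ in the proof of Lemma~\ref{lem:example_nmp_Riccati_rates}. We expect its solution to be $(4,-2r,-2r)$.

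Nonsingularity, together with analyticity of the rescaled system in $\nmpparam$, then gives $q_i(\nmpparam)=q_i(0)+O(\nmpparam)$ via the (analytic) implicit function theorem. Undoing the scaling gives the three stated expansions.

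\textbf{Step 3: $\dot\Sigma_e$.} We have
\[
\dot\Sigma_e=C\dot\Sigma C^\top=-\nmpparam(-\nmpparam\dot p_1+\dot p)+(-\nmpparam\dot p+\dot p_2).
\]
Both terms are $-2r\nmpparam^{-2}$ at leading order with opposite signs, so they cancel, and the lemma's claim hinges on this cancellation. Rather than push both expansions to the next order, we would differentiate the scalar identity $\sigma_e^2=\omega^2$ from Lemma~\ref{lem:example_nmp_DARE_reformulation}, or rather its $\theta$-dependent analogue from Step 1, which expresses $\sigma_e$ through a single rescaled quantity. This would give $\dot\Sigma_e=2\sigma_e\dot\sigma_e$ with $\dot\sigma_e=O(\nmpparam^{-1})$, read off from the $O(1)$ solution of the rescaled system times the appropriate scaling power.

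\textbf{Main obstacle.} We expect the main difficulty to be choosing the rescalings in Step 1 so that the limiting linear system is genuinely nonsingular. If the scaling is wrong, the leading-order matrix degenerates in the same way the Lyapunov operator does, and one must go to higher order.

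If the natural choice fails, we would try first to add an auxiliary variable $\dot\omega$, treated exactly like $\omega$ in the $\nmpparam=0$ Jacobian, and to solve the resulting $4\times4$ system. This has the added benefit of delivering $\dot\Sigma_e$ directly.
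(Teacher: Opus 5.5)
Your Steps 1--2 are the paper's proof. The paper uses exactly your rescaling $q_1=\xi^5\dot p_1$, $q_2=\xi^3(-\xi\dot p_1+\dot p)$, $q_3=\xi^2(-\xi\dot p+\dot p_2)$ and multiplies the rows by $\xi^4,\xi^3,\xi^2$. This gives a limiting $3\times 3$ matrix that is nonsingular, with right-hand side $(0,2,4)$. The implicit function theorem then yields $q=(4,-2r,-2r)+O(\xi)$.

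Your stated reason for avoiding the Lyapunov characterization does not hold up. Implicitly differentiating the DARE in $\theta$ at the stabilizing solution gives precisely $\dot\Sigma=A_{\mathsf{cl}}^o\dot\Sigma(A_{\mathsf{cl}}^o)^\top+D$, up to a factor $\sigma_e^2\to r^2$ if you clear denominators. That is the very equation the paper rescales. The ill-conditioning is removed by the rescaling, not by how you derive the linear system.

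Step 3 has a real misstep, though it is easily repaired. No next-order computation is needed. In your variables, $\dot\Sigma_e=\xi^2\dot p_1-2\xi\dot p+\dot p_2=\xi^{-2}(q_3-q_2)$. Step 2 already gives $q_2,q_3=-2r+O(\xi)$, so $q_3-q_2=O(\xi)$ and $\dot\Sigma_e=O(\xi^{-1})$. The $O(\cdot)$ remainders from the implicit function theorem are exactly what is required; this one-liner is how the paper concludes.

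Your proposed detour also does not do what you claim. The $\theta$-dependent analogue of $\sigma_e^2=\omega^2$ is the $(2,2)$ DARE entry $(\theta a+\omega)^2=\sigma_e^2(1+2\theta p+\theta^2 p_1)$, where $a=-\xi p_1+p$. Differentiating at $\theta=0$ with $\sigma_e=-\omega$ gives $\dot\sigma_e=-(a+\dot\omega)+\omega p$. Here $-\dot\omega=-\xi^{-2}q_3\approx 2r\xi^{-2}$ and $\omega p\approx -2r\xi^{-2}$ cancel at leading order, exactly as before. So $\sigma_e$ is not expressed through a single rescaled quantity, and you would face the same cancellation there.

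Drop the detour and use the direct computation above.
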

\begin{proof}
    Consider the change of variables $q_1=\nmpparam^5\dot p_1$, $q_2=-\nmpparam^4\dot p_1+\nmpparam^3\dot p$, $q_3=-\nmpparam^3 \dot p+\nmpparam^2 \dot p_2$. By Lemma~\ref{lem:example_nmp_Riccati_derivatives_change_variables}, Lemma~\ref{lem:example_nmp_Riccati_rates} and the identity $r^2=r+1$, we obtain
    \[
\Bigg(\begin{bmatrix}
    -2&4(3-2r) &-4(2-r)\\-1 &5-3r&-2\\0&2-r&-r
\end{bmatrix}+O(\nmpparam)\Bigg)\begin{bmatrix}
    \dot q_1\\\dot q_2\\\dot q_3
\end{bmatrix}=\begin{bmatrix}
    0\\2\\4
\end{bmatrix}+O(\nmpparam)
    \]
    By the implicit function theorem, it follows that
    \begin{align*}
\begin{bmatrix}
    \dot q_1\\\dot q_2\\\dot q_3
\end{bmatrix}&=\begin{bmatrix}
    -2&4(3-2r) &-4(2-r)\\-1 &5-3r&-2\\0&2-r&-r
\end{bmatrix}^{-1}\begin{bmatrix}
    0\\2\\4
\end{bmatrix}+O(\nmpparam)\\
&=\begin{bmatrix}
    4\\-2r\\-2r
\end{bmatrix}+O(\nmpparam).
    \end{align*}
    The result follows from substituting back $\dot p_1,\dot p,\dot p_2$. Finally,
    $\dot\Sigma_e=C\dot \Sigma C^\top=\nmpparam^2\dot p_1-2\nmpparam\dot p+\dot p_2=O(\nmpparam^{-1})$.
\end{proof}

To compute the derivative $\dot \Sigma$, we need to solve the Lyapunov equation
\begin{equation}\label{eq:example_nmp_derivative_Sigma_Lyapunov}
    \dot \Sigma=(A-LC)\dot{\Sigma}(A-LC)^\top+D,
\end{equation}
where $D=(A-LC)\Sigma \dot A^\top+\dot A\Sigma  (A-LC)^\top$.

\begin{lemma}[Derivative, change of variables]\label{lem:example_nmp_Riccati_derivatives_change_variables}
    Recall that $r=\frac{1+\sqrt{5}}{2}$. Consider the change of variables $q_1=\nmpparam^5\dot p_1$, $q_2=-\nmpparam^4\dot p_1+\nmpparam^3\dot p$, $q_3=-\nmpparam^3 \dot p+\nmpparam^2 \dot p_2$. Then~\eqref{eq:example_nmp_derivative_Sigma_Lyapunov} is equivalent to the equation
    \begin{align}
    &\begin{bmatrix}
        -2-\nmpparam&-2\tfrac{\tilde p_2}{\sigma_e}+\tfrac{\tilde p^2_2}{\sigma^2_e}-2\nmpparam-\nmpparam^2&-(\nmpparam+\tfrac{\tilde p_2}{\sigma_e})^2\\-1&-\tfrac{1}{\sigma_e}+\tfrac{\tilde{p}_2}{\sigma^2_e}-\nmpparam&-\tilde p_2(\tfrac{1+\sigma_e}{\sigma^2_e})-(1+\tfrac{1}{\sigma_e})\nmpparam\\0&\tfrac{1}{\sigma^2_e}&\tfrac{-2\sigma_e-1}{\sigma^2_e}
        \end{bmatrix}\nonumber \\&\times \begin{bmatrix}
            q_1\\q_2\\q_3
        \end{bmatrix}=\begin{bmatrix}
            0\\\tilde p_1- \tilde p \nmpparam-\nmpparam^3\\2\tilde p-2\tilde p_2\nmpparam+2\nmpparam^2 \label{eq:example_nmp_derivative_Sigma_Lyapunov_equivalent_system}
        \end{bmatrix}
    \end{align}
\end{lemma}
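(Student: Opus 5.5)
The plan is a direct, if lengthy, entrywise computation. Since $\dot\Sigma$ is symmetric, \eqref{eq:example_nmp_derivative_Sigma_Lyapunov} is equivalent to three scalar equations in $(\dot p_1,\dot p,\dot p_2)$. The map $(\dot p_1,\dot p,\dot p_2)\mapsto(q_1,q_2,q_3)$ is lower triangular with diagonal entries $\nmpparam^5,\nmpparam^3,\nmpparam^2$. It is therefore invertible for $\nmpparam>0$, with inverse
\[
\dot p_1=\nmpparam^{-5}q_1,\qquad \dot p=\nmpparam^{-3}q_2+\nmpparam^{-3}q_1,\qquad \dot p_2=\nmpparam^{-2}q_3+\nmpparam^{-2}q_2+\nmpparam^{-2}q_1 .
\]
So it suffices to show that some invertible recombination of the three scalar equations, rescaled by powers of $\nmpparam$, gives exactly \eqref{eq:example_nmp_derivative_Sigma_Lyapunov_equivalent_system}. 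The natural variables are $q_2=\nmpparam^3(\dot\Sigma C^\top)_1$ and $q_3=\nmpparam^2(\dot\Sigma C^\top)_2$. For this reason I will not use the raw $(1,1),(1,2),(2,2)$ equations. Instead I will use the $(1,1)$ equation, the first and second entries of the Lyapunov equation multiplied on the right by $C^\top$, and possibly also the scalar $C(\cdot)C^\top$ projection.

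\textbf{Step 1: simplify the forcing term $D$.} Since $\dot A=\bmat{0&0\\1&0}$, we have $\dot A\Sigma=\bmat{0&0\\p_1&p}$. Hence $D$ is determined by the first row of $A_{\mathsf{cl}}^o\Sigma$, and in particular $D_{11}=0$. This is consistent with the zero in the first entry of the right-hand side, which means the first row must come from the $(1,1)$ equation. I will express everything through $C^\top$-projections using the Kalman identity $A_{\mathsf{cl}}^o\Sigma C^\top=L\Sigma_v=L$. I will then eliminate $p_1,p,p_2$ in favor of $\tilde p_1,\tilde p,\tilde p_2$ using $\tilde p_1=\nmpparam^3p_1$, $\tilde p=\nmpparam^2p$, $\tilde p_2=\nmpparam p_2$. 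The relations of Lemma~\ref{lem:example_nmp_DARE_reformulation} are available for this: $\omega=-\nmpparam p+p_2=-\sigma_e$, $p-\nmpparam p_1=(p_2-1)\omega$, and $2p+p_2+1=p_2^2$. These should produce the right-hand side entries $\tilde p_1-\tilde p\nmpparam-\nmpparam^3$ and $2\tilde p-2\tilde p_2\nmpparam+2\nmpparam^2$.

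\textbf{Step 2: the left-hand side.} I will write $A_{\mathsf{cl}}^o$ through \eqref{eq:example_nmp_closed_loop_matrix}, so that $a_1=1-\tilde p_2/\sigma_e$, $a_2=1+p_2/\sigma_e$, $a_3=-\nmpparam/\sigma_e$, $a_4=1+1/\sigma_e$. I then expand $\dot\Sigma-A_{\mathsf{cl}}^o\dot\Sigma(A_{\mathsf{cl}}^o)^\top$ in each chosen projection, substitute the $q$-variables, and multiply each row by the power of $\nmpparam$ that clears negative powers: $\nmpparam^5$ for the $(1,1)$ row, and $\nmpparam^3$ and $\nmpparam^2$ for the projected rows. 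Each coefficient should then be a polynomial in $\nmpparam$, $\tilde p_2$ and $1/\sigma_e$, which I will compare with the stated matrix. An overall sign flip, i.e.\ moving $D$ to the other side, accounts for the negative diagonal entries such as $-2-\nmpparam$.

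\textbf{Main obstacle.} The hard part is bookkeeping: finding exactly which linear combination of the Lyapunov entries reproduces each stated row. The third row has zero coefficient on $q_1$, which suggests it is the $(2,\cdot)$ component after projecting with $C^\top$ on both sides. I will start with the guess that row 1 is the $(1,1)$ equation, row 2 is the first component of the equation multiplied by $C^\top$, and row 3 is the second component. If a residual $q_1$ term remains, I will subtract a multiple of the $(1,1)$ row, which preserves equivalence. Throughout I will invoke $\sigma_e^2=\omega^2$ and $p_2=\chi$ from the proof of Lemma~\ref{lem:example_nmp_DARE_reformulation} to collapse the mixed terms into the stated forms, for example $\tilde p_2^2/\sigma_e^2$ and $(1+\sigma_e)/\sigma_e^2$.
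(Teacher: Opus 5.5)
Your overall plan is the paper's: expand the Lyapunov equation for $\dot\Sigma$ entrywise, substitute the $q$-variables, rescale rows, and simplify $D$ with the DARE-reformulation identities. Your Step~1 matches the paper's computation $D=\begin{bmatrix}0&p_1-p-1\\ p_1-p-1&2(p-p_2+1)\end{bmatrix}$; note that what enters is the first \emph{column} $A_{\mathsf{cl}}^o(p_1,p)^\top$ of $A_{\mathsf{cl}}^o\Sigma$, not the first row. However, the computation as you set it up will not reproduce the stated system, for two concrete reasons.

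\textbf{The inverse change of variables is wrong.} From $q_2=-\xi^4\dot p_1+\xi^3\dot p$ one gets $\dot p=\xi^{-4}q_1+\xi^{-3}q_2$, and then $\dot p_2=\xi^{-3}q_1+\xi^{-2}q_2+\xi^{-2}q_3$. In matrix form,
\[
\begin{bmatrix}\dot p_1\\ \dot p\\ \dot p_2\end{bmatrix}=\begin{bmatrix}1&0&0\\ \xi&1&0\\ \xi^2&\xi&1\end{bmatrix}\begin{bmatrix}\xi^{-5}q_1\\ \xi^{-3}q_2\\ \xi^{-2}q_3\end{bmatrix}.
\]
These exact powers are what make the singular terms cancel. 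Set $\tau\triangleq\tilde p_2/\sigma_e$, so that $a_1=1-\tau$ and $a_2=1+\tau/\xi$. With the correct inverse, the $q_1$-coefficient of the $(1,1)$ equation is $(1-a_1^2)\xi^{-5}-2a_1a_2\xi^{-4}-a_2^2\xi^{-3}=-2\xi^{-4}-\xi^{-3}$. With your inverse, the term $(2\tau-\tau^2)\xi^{-5}$ survives uncancelled and nothing matches.

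The same computation settles two other points. Row $1$ must be scaled by $\xi^4$, not $\xi^5$. No sign flip is involved: the stated system comes from $\dot\Sigma-A_{\mathsf{cl}}^o\dot\Sigma(A_{\mathsf{cl}}^o)^\top=D$ exactly as written, with right-hand sides $+\xi^3D_{12}$ and $+\xi^2D_{22}$. The negative diagonal entries are produced by the cancellations, not by moving $D$ across.

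\textbf{The row identification is off.} The three stated rows are exactly the raw $(1,1)$, $(1,2)$, $(2,2)$ entries after substitution, multiplied by $\xi^4,\xi^3,\xi^2$; call them $R_1,R_2,R_3$. This is precisely the paper's proof:
\begin{itemize}
\item write the three distinct entries as a $3\times 3$ linear system in $(\dot p_1,\dot p,\dot p_2)$ with right-hand side $(0,\,p_1-p-1,\,2p-2p_2+2)$;
\item substitute the triangular inverse above;
\item left-multiply by $\mathrm{diag}(\xi^4,\xi^3,\xi^2)$.
\end{itemize}
Your $C^\top$-projected rows, scaled by $\xi^3$ and $\xi^2$, are $R_2-R_1$ and $R_3-R_2$. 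The latter has $q_1$-coefficient $1$, and subtracting a multiple of $R_1$ cannot turn it into $R_3$. You would have to add back $R_2$, which amounts to using the raw $(2,2)$ entry, so drop the projections.

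A quick check of the first column: the $q_1$-part of $\dot\Sigma$ is $\xi^{-5}q_1vv^\top$ with $v=(1,\xi)^\top\in\ker C$. Hence $A_{\mathsf{cl}}^o v=Av=(1+\xi,\xi)^\top$. This gives the entries $-2-\xi$, $-1$, $0$ at once, and the remaining entries follow by direct expansion.
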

\begin{proof}
We have
\begin{align*}
(A-LC)\Sigma \dot A^\top&=\begin{bmatrix}
    0&p_1+p+(-\nmpparam p_1+p)\frac{p_2}{\sigma_e}\\0&p+(-\nmpparam p_1+p)\frac{1}{\sigma_e}
\end{bmatrix}\\
&=\begin{bmatrix}
    0&p_1-p-1\\0&p-p_2+1
\end{bmatrix},
\end{align*}
where the last equality follows from $2p+p_2+1=p^2_2$ and $-(p_2-1)\sigma_e=-\nmpparam p_1+p$, which in turn follow from the proof of Lemma~\ref{lem:example_nmp_DARE_reformulation} and $\sigma_e=-\omega$. 
Rewriting~\eqref{eq:example_nmp_derivative_Sigma_Lyapunov} in vector form, we obtain
\begin{multline}
  \begin{bmatrix}
        1-a_1^2&-2a_1a_2&-a^2_2\\-a_1a_3&1-a_1a_4-a_2a_3&-a_2a_4\\-a^2_3&-2a_3a_4&1-a^2_4
        \end{bmatrix} \\\times \begin{bmatrix}
            \dot p_1\\\dot p\\\dot p_2
        \end{bmatrix}=\begin{bmatrix}
            0\\ p_1- p -1\\2p-2 p_2+2
        \end{bmatrix}.
\end{multline}
The result follows from substituting
\[
\begin{bmatrix}
            \dot p_1\\\dot p\\\dot p_2
        \end{bmatrix}=\begin{bmatrix}
            1&0&0\\\nmpparam&1&0\\\nmpparam^2&\nmpparam&1
        \end{bmatrix}\begin{bmatrix}
            \nmpparam ^{-5} q_1\\\nmpparam ^{-3}  q_2\\\nmpparam ^{-2}  q_3
        \end{bmatrix}
\]
multiplying from the left with
\[
\begin{bmatrix}
    \nmpparam^4&0&0\\0&\nmpparam^3&0\\0&0&\nmpparam^2
\end{bmatrix},
\]
and~\eqref{eq:example_nmp_closed_loop_matrix}.
\end{proof}

\subsection{Proof of Lemma~\ref{lem:example_nmp_Hessian}}
\begin{proof}
Recall the Lyapunov equation    \[\Sigma_H = \dlyap(A_{\mathsf{joint}}^H, B_{\mathsf{joint}}^H \Sigma_e (B_{\mathsf{joint}}^H)^\top)\] where
    \begin{equation*}
    A_{\mathsf{joint}}^H = \bmat{A_{\mathsf{cl}}^c & 0 \\ \dot A - L \dot C + \dot B F & A_{\mathsf{cl}}^o}, \quad
    B_{\mathsf{joint}}^H = \bmat{L \\ \dot L}. 
    \end{equation*}
    The Lyapunov equation is equivalent to
    the following system of equations
    \begin{align*}
        \Sigma_{H,1}&=A_{\mathsf{cl}}^c\Sigma_{H,1}A_{\mathsf{cl}}^{c,\top}+L\Sigma_e L^\top\\
        \Sigma_{H,12}&=A_{\mathsf{cl}}^c\Sigma_{H,1}\dot A+A_{\mathsf{cl}}^c\Sigma_{H,12}A_{\mathsf{cl}}^{o,\top}+L\Sigma_e\dot L^\top\\
        \Sigma_{H,2}&=\dot A \Sigma_{H,1}\dot A^{\top}+\dot A\Sigma_{H,12}A_{\mathsf{cl}}^{o,\top}+A_{\mathsf{cl}}^{o}\Sigma^{\top}_{H,12}\dot A^\top\\&+A_{\mathsf{cl}}^{o}\Sigma^{\top}_{H,2}A_{\mathsf{cl}}^{o,\top}+\dot L\Sigma_e \dot L^\top
    \end{align*}
\textbf{Bounding $\Sigma_{H,1}$.}  The result follows directly from $\|L\Sigma_e L^\top\|=O(\nmpparam^{-2})$ and the fact that $A_{\mathsf{cl}}^c$ is independent of $\nmpparam$ and is always stable. 

\noindent\textbf{Bounding $\Sigma_{H,12}$.} Define the similarity transformation $\tilde{A}^o_{\mathsf{cl}}=TA_{\mathsf{cl}}^o T^{-1}$ where $T=\begin{bmatrix}
    \nmpparam^4&0\\0&\nmpparam^3
\end{bmatrix}$. Define $\tilde{\Sigma}_{H,12}=\Sigma_{H,12} T$, $\tilde{W}= (\dot A\Sigma_{H,1}\dot A^\top+L\Sigma_e \dot L^\top)T$. Multiplying the second equation from the right by $T$, we obtain
\[\tilde{\Sigma}_{H,12}=A_{\mathsf{cl}}^c\tilde \Sigma_{H,12}\tilde A_{\mathsf{cl}}^{o,\top}+\tilde{W}.
\]
Observe that both $\|\tilde W\|=O(1), \|\tilde A_{\mathsf{cl}}^{o,\top}\|=O(1)$, with
\[
\tilde A_{\mathsf{cl}}^{o}=\underbrace{\begin{bmatrix}
    1-\frac{2}{r}&\frac{2}{r}\\-\frac{1}{r}&1+\frac{1}{r}
\end{bmatrix}}_{\tilde A^o_{\mathsf{cl},\infty}}+O(\nmpparam).
\]
Define the linear matrix operator
\[
\mathcal{L}(X)=X-A_{\mathsf{cl}}^c X \tilde A^{o,\top}_{\mathsf{cl},\infty}.
\]
By a vectorization argument, the eigenvalues of $\mathcal{L}$ are equal to the eigenvalues of $I-\tilde A^{o}_{\mathsf{cl},\infty}\otimes A_{\mathsf{cl}}^c $. The eigenvalues of $\tilde A^{o}_{\mathsf{cl},\infty}\otimes A_{\mathsf{cl}}^c$ are the pairwise products of the eigenvalues of $\tilde A^{o}_{\mathsf{cl},\infty}, A_{\mathsf{cl}}^c$. The eigenvalues of $A_{\mathsf{cl}}^c$ lie strictly inside the unit circle while $\tilde A^{o}_{\mathsf{cl},\infty}$ has spectral radius equal to one. Hence, all eigenvalues of $\mathcal{L}$ are non-zero and $\mathcal{L}$ is invertible.
By the implicit function theorem, we obtain
\[
\|\tilde \Sigma_{H,12}\|=\mathcal{L}^{-1}(\tilde W)+O(\nmpparam).
\]
Note that $\mathcal{L}$ is linear, independent of $\nmpparam$. Hence, so is $\mathcal{L}^{-1}$. 
Thus, $\|\tilde \Sigma_{H,12}\|=O(1)$. The result follows from $\Sigma_{H,12}=\tilde \Sigma_{H,12} T^{-1}$.

\noindent\textbf{Bounding $\Sigma_{H,2}$.} Define
\[
W_{H,2}=\dot A \Sigma_{H,1}\dot A^{\top}+\dot A\Sigma_{H,12}A_{\mathsf{cl}}^{o,\top}+A_{\mathsf{cl}}^{o}\Sigma^{\top}_{H,12}\dot A^\top+\dot L\Sigma_e \dot L^\top.
\]
By direct calculation
\[
\dot A\Sigma_{H,12}A_{\mathsf{cl}}^{o,\top}+A_{\mathsf{cl}}^{o}\Sigma^{\top}_{H,12}\dot A^\top=\begin{bmatrix}
    0&O(\nmpparam^{-4})\\O(\nmpparam^{-4})&O(\nmpparam^{-5})
\end{bmatrix}
\]
and $\|\dot A \Sigma_{H,1}\dot A^{\top}\|=O(\nmpparam^{-2})$. Hence, by Lemma~\ref{lem:example_nmp_gain_derivative}
\[
W_{H,2}=4\nmpparam^{-4}\begin{bmatrix}
    \nmpparam^{-2}&\nmpparam^{-1}\\\nmpparam^{-1}&1
\end{bmatrix}+\begin{bmatrix}
    O(\nmpparam^{-5})&O(\nmpparam^{-4})\\O(\nmpparam^{-4})&O(\nmpparam^{-3})
\end{bmatrix}.
\]

To compute $\Sigma_{H,2}$, we need to solve the Lyapunov equation
\[
\Sigma_{H,2}=A_{\mathsf{cl}}^o\Sigma_{H,2}A_{\mathsf{cl}}^{o,\top}+W_{H,2},
\]
which is similar to~\eqref{eq:example_nmp_derivative_Sigma_Lyapunov}. We will follow the exact same steps to solve it.
Using the change of variables 
\begin{align*}
    d_1&=\nmpparam^7[\Sigma_{H,2}]_{11}\\
    d_2&=-\nmpparam^6[\Sigma_{H,2}]_{11}+\nmpparam^5[\Sigma_{H,2}]_{12}\\
    d_3&=-\nmpparam^5[\Sigma_{H,2}]_{12}+\nmpparam^4[\Sigma_{H,2}]_{22}
\end{align*}
and following the same steps as in the proof of Lemma~\ref{lem:example_nmp_Riccati_derivatives}, we arrive at
 \[
\Bigg(\begin{bmatrix}
    -2&4(3-2r) &-4(2-r)\\-1 &5-3r&-2\\0&2-r&-r
\end{bmatrix}+O(\nmpparam)\Bigg)\begin{bmatrix}
     d_1\\ d_2\\ d_3
\end{bmatrix}=\begin{bmatrix}
    4\\4\\4
\end{bmatrix}+O(\nmpparam)
    \]
    By the implicit function theorem
    \[
\begin{bmatrix}
     d_1\\ d_2\\ d_3
\end{bmatrix}=\begin{bmatrix}
     2\\ -2r\\ -2r
\end{bmatrix}+O(\nmpparam).
    \]
    As a result
    \[
\Sigma_{H}=2\begin{bmatrix}
            \nmpparam^{-7}&\nmpparam^{-6}\\\nmpparam^{-6}&\nmpparam^{-5}
        \end{bmatrix}+\begin{bmatrix}
            O(\nmpparam^{-6})&O(\nmpparam^{-5})\\O(\nmpparam^{-5})&O(\nmpparam^{-4})
        \end{bmatrix}.
    \]
   Since $F$ is full rank, we have that $H=\Theta(F\Sigma_{H,2}F')=\Omega(\nmpparam^{-7})$.
   Finally, $C\Sigma_{H,2}C^\top=\nmpparam^{-4}(d_3-d_2)=\nmpparam^{-4}O(\nmpparam)=O(\nmpparam^{-3})$.
   \subsection*{Proof of Lemma~\ref{lem:example_nmp_FI}}
From Lemmas~\ref{lem:example_nmp_gain_derivative} and~\ref{lem:example_nmp_Hessian}, we obtain   \[
\dot\Sigma_e=O(\nmpparam^{-1}),\quad C\Sigma_{H,2}C^\top=O(\nmpparam^{-3}).
\]
The result follows from Proposition~\ref{prop: info}.
\end{proof}

\section{No Margin Example} 

We first study the dominant terms of the Riccati equation solutions for this example. We then proceed to prove our derivation of the Hessian.

\subsection{Riccati Solutions}

\begin{lemma}
    \label{lem: Riccati solution}
    For $\sigma$ sufficently small, the there exist unique positive definite solutions to the Riccati equations for the optimal LQR controller and Kalman Filter corresponding to the system \eqref{eq: doyle example}. These solutions satisfy 
    \begin{align*}
        P &= \bmat{1 & 1 \\ 1 & 1} + \bmat{4 & 2 \\ 2 & 1} \sqrt{\sigma} + o(\sqrt{\sigma})\\
        \Sigma &= \bmat{1 & 1 \\ 1 & 1} + \bmat{1 & 2 \\ 2 & 4} \sqrt{\sigma} + o(\sqrt{\sigma}).
    \end{align*}
\end{lemma}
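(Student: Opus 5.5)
By the duality between the two Riccati equations it suffices to treat the LQR equation. Let $J=\bmat{0&1\\1&0}$. For the instance \eqref{eq: doyle example} with $\theta^\star=1$ we have $JA^\top J=A$, $JC^\top=B$, $J\Sigma_wJ=Q$ and $\Sigma_v=R=\sigma$. Hence $\Sigma=\dare(A^\top,C^\top,\Sigma_w,\Sigma_v)=J\,\dare(A,B,Q,R)\,J=JPJ$. Conjugating by $J$ swaps the diagonal entries, which turns $\bmat{4&2\\2&1}$ into $\bmat{1&2\\2&4}$. So the claim for $\Sigma$ follows from the claim for $P$, and the rest of the plan concerns $P$ only.

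\textbf{Leading order.} First I identify $P_0=\lim_{\sigma\to0}P$. Write $Q=hh^\top$ with $h=\bmat{1&1}^\top$. The cheap-control problem can drive $z_t=h^\top x_t$ to zero after one step, since $h^\top B=1\neq0$, so the optimal cost tends to $x_0^\top Q x_0$. This gives $P_0=Q$.

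The transfer function from $u$ to $z$ is $h^\top(zI-A)^{-1}B=(z-1)/(z-2)^2$. Its zero lies at $z=1$, on the unit circle. This is the source of the $\sqrt{\sigma}$ rather than $\sigma$ expansion. It is also the main obstacle: in the naive variables $P=Q+\sigma X$ the implicit-function Jacobian at $\sigma=0$ is singular.

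\textbf{Expansion.} I therefore set $s=\sqrt{\sigma}$ and make the ansatz $P=Q+sX(s)$. Substituting into the DARE gives
\begin{equation*}
Q+sX=A^\top(Q+sX)A-\frac{A^\top(Q+sX)BB^\top(Q+sX)A}{B^\top(Q+sX)B+s^2}+Q .
\end{equation*}
Since $B^\top QB=1$, the denominator equals $1+sX_{22}+s^2$ and is analytic and nonzero near $s=0$. I multiply through by it and subtract the $s^0$ terms, which cancel because $P_0=Q$ solves the limiting equation: with $h^\top A=\bmat{2&3}$ and $h^\top B=1$, the residual $A^\top hh^\top A-A^\top hh^\top A$ vanishes. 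Dividing the remainder by $s$ leaves a polynomial system $g(X,s)=0$ in the three unknowns $X_{11},X_{12},X_{22}$.

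At $s=0$ some of these equations are degenerate in the first-order terms: the terms linear in $X$ cancel because of the unit-circle zero. I will then divide the corresponding scalar equations by a further power of $s$, possibly after the linear change of variables adapted to $h$ and $A^\top h$, in the style of the reduction in Lemma~\ref{lem:example_nmp_DARE_reformulation}. At $s=0$ this leaves a system that is quadratic in $X$. I will check that $X_0=\bmat{4&2\\2&1}$ solves it and that the Jacobian in $X$ there is nonsingular. This is the step I expect to take the most care, because choosing the right rescaling of each equation is exactly what makes the Jacobian invertible.

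\textbf{Conclusion.} The analytic implicit function theorem then yields an analytic branch $X(s)=X_0+O(s)$, and hence $P=Q+\sqrt{\sigma}X_0+o(\sqrt{\sigma})$. To identify this branch with the DARE solution, I note that for small $s$ the matrix $Q+sX_0$ is positive definite: its determinant is $s(4+1-4)+O(s^2)=s+O(s^2)>0$. The closed loop $A+BF$ is stable, with eigenvalues near $0$ and $1-s$. Since $(A,B)$ is controllable and $(A,Q^{1/2})$ is observable (the zero at $1$ is not a pole), the stabilizing positive semidefinite DARE solution is unique, so it coincides with this branch. Transferring the result through $J$ gives $\Sigma$.
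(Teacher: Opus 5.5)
Your plan is correct, and it organizes the singular perturbation argument differently from the paper.

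\textbf{How the paper proceeds.} It writes the LQR equation as three scalar equations in $(p_{11},p_{12},p_{22},x)$ with $x=\sqrt{\sigma}$, and notes that the Jacobian at $(1,1,1,0)$ is singular. It then applies the implicit function theorem to the first two equations, treating $p_{22}$ as a parameter. Second derivatives of the resulting implicit functions reduce the third equation to $u^2-x^2+\dots=0$ with $u=p_{22}-1$. Only then does it blow up via $u=wx$ and select $w=1$. The observer case is asserted to be identical.

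\textbf{How your route differs.} You blow up first, through $P=Q+\sqrt{\sigma}X$. You then remove the degeneracy by projecting onto the left kernel of the linearization and dividing only that combination by $\sqrt{\sigma}$. This leaves a single implicit-function step, with no second-order Taylor data of implicitly defined functions. You also obtain the observer case rigorously from the exact identity $\Sigma=JPJ$.

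\textbf{The step you flagged goes through.} Put $k=A^\top h$ and $m=A^\top XB$. After multiplying by the denominator and dividing by $s$, the equation is $\mathcal{L}_0(X)+s\,G_1(X)+s^2(X-A^\top XA)=0$, where
$\mathcal{L}_0(X)=X-A^\top XA-X_{22}kk^\top+km^\top+mk^\top$ and $G_1(X)=X_{22}(X-A^\top XA)-kk^\top+mm^\top$.
\begin{itemize}
\item In the coordinates $(X_{11},X_{12},X_{22})$, $\mathcal{L}_0$ is, up to sign, the paper's singular matrix $D_pg(1,1,1,0)$. Its kernel is spanned by your $X_0$.
\item Its left kernel is the functional $v^\top(\cdot)v$ with $v=(1,-1)^\top$. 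This $v$ is the state direction of the zero at $z=1$, since $h^\top v=0$ and $(A-I)v=-B$.
\item Along $X=\kappa X_0$ the rescaled equation $v^\top G_1(X)v=0$ becomes $\kappa^2-1=0$. Hence the augmented Jacobian is invertible at $\kappa=1$, and $\kappa=-1$ is excluded by positivity. This is exactly the paper's $w=\pm1$.
\end{itemize}

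\textbf{Three minor corrections.}
\begin{itemize}
\item No individual entry of the equation loses its linear part; only the combination $v^\top(\cdot)v$ does. So the change of basis that matters is the projection onto $v\perp h$, which is an element of the dual basis of $\{h,A^\top h\}$.
\item The cheap-control heuristic is not literally right. Keeping $h^\top x_t=0$ for all $t\ge 1$ forces the constant input of the zero dynamics at $z=1$, hence infinite control energy for any fixed $\sigma>0$; $P\to Q$ holds only as a limit of infima. This is harmless, because you check the $s^0$ cancellation algebraically.
\item Your claim that the closed-loop eigenvalues sit near $0$ and $1-s$ is unproved but also unnecessary. Uniqueness of the positive semidefinite solution under controllability of $(A,B)$ and observability of $(A,h^\top)$ already identifies your positive definite branch.
\end{itemize}
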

Given these Riccati solutions, we may compute 
\begin{align*}
    A_{\mathsf{cl}}^c &= \bmat{2 & 1   \\ -2 - 2\sqrt{\sigma} & -1-\sqrt{\sigma} } + \Delta A_{\mathsf{cl}}^c \\
    A_{\mathsf{cl}}^o &= \bmat{-1-\sqrt{\sigma} & 1 \\ -2-2\sqrt{\sigma} & 2 } + \Delta A_{\mathsf{cl}}^o,
\end{align*}
where $\norm{\Delta A_{\mathsf{cl}}^c} = o(\sqrt{\sigma})$ and $\norm{\Delta A_{\mathsf{cl}}^o} = o(\sqrt{\sigma})$. This leads to the characterization of $A_{
\mathsf{joint}}^H$ 
\ifTACmode
above \eqref{eq: hatAjointH}. 
\else
in \eqref{eq: Ajoint decomp}. 
\fi
Each of the closed loop matrices has eigenvalues at $O(\sigma
)$ and $1-\sqrt{\sigma} + O(\sigma)$. The near marginal stability of the closed-loop state feedback dynamics and observer error dynamics render the performance of the system under the learned controller sensitive to errors in the dynamics identification. Furthermore, the choice of the unknown dynamics amplifies the individual sensitivities of the observer and the controller.

\begin{proof}
    We prove the result for the controller. The proof for the observer is identical. The existence of a unique positive definite solution is ensured by the fact that the pair$(A,B)$ is stabilizable, the pair $(A,Q^{1/2})$ is detectable, and $R = \sigma > 0$ \citep{zhou1996robust}. To verify that this solution is characterized by the matrix $P$ given in the statement, first observe that $P$ is positive definite by the fact that its determinant is given by $\sqrt{\sigma} + o(\sqrt{\sigma}) > 0$ for $\sigma$ sufficiently small,  and the top left entry is positive. Next, we verify that $P$ satisfies the Riccati equation. Let
    \begin{align*}
        P = \bmat{p_{11} & p_{12} \\ p_{12} & p_{22}}. 
    \end{align*}
    Any solution to the Riccati equation then satisfies
    \begin{align*}
        0 &= (3 p_{11} +1) (p_{22} +x^2) - 4p_{12}^2 \\
        0 &= (2 p_{11} + 3p_{12} + 1)(p_{22} + x^2)- 2p_{12}(p_{12} + 2 p_{22})  \\
        0 &=  (p_{11} + 4 p_{12} + 3 p_{22} + 1)(p_{22} + x^2) - (p_{12} + 2 p_{22})^2 ,
    \end{align*}
    where $x = \sqrt{\sigma}$.
    Denote the right hand side of the above system of equations as $g_i(p_{11}, p_{12}, p_{22}, x)$ for $i=1,2,3$.
    This system of equations may be written compactly as 
    \begin{align*}
        g(p_{11}, p_{12}, p_{22}, x) \triangleq \bmat{g_1(p_{11}, p_{12}, p_{22}, x) \\ g_2(p_{11}, p_{12}, p_{22}, x) \\g_3(p_{11}, p_{12}, p_{22}, x) } = 0. 
    \end{align*}
    It holds that $g(1,1,1,0) = 0$. Let $p = \bmat{p_{11} & p_{12} & p_{22}}^\top$. Then 
    \begin{align*}
        &D_p g(p, x)=\\& \bmat{3(p_{22} + x^2) & -8 p_{12} & 3p_{11} + 1  \\ 
        2(p_{22} + x^2) & 3 (p_{22} + x^2) - 4p_{12} & 2p_{11} -p_{12} + 1\\
        p_{22} + x^2 & 4x^2 - 2p_{12} - 4p_{22} & 3x^2 + p_{11}+ 1 - 2p_{22} }\\
        &\frac{d}{dx} g(p, x)= \bmat{2x(3p_{11} + 1) \\ 2x (2p_{11} + 3p_{12} + 1)\\ 2x(p_{11} + 4p_{12} + 3p_{22} +1)}
    \end{align*}
    Evaluating $p_{11}, p_{12}, p_{22}, x = 1,1,1,0$ results in
    \begin{align*}
        D_p g(1,1,1,0) = \bmat{3 & -8 & 4 \\ 2 & -5 & 2 \\ 1 & -2 & 0},
    \end{align*}
    and $\frac{d}{dx} g(1,1,1,0)=0$. As $D_p g(1,1,1,0)$ is not invertible, the analytic implicit function theorem does not suffice to ensure a continuously differentiable solution $p(x)$ to the system of equations. However, by treating $p_{22}$ as a parameter, it does suffice to ensure the existence of analytic functions $p_{11}(p_{22}, x)$ and $p_{12}(p_{22}, x)$ over a neighborhood of $p_{22}=1$ and $x=0$ which satisfy $g_1(p_{11}(p_{22}, x), p_{12}(p_{22},x), p_{22}, x) = g_2(p_{11}(p_{22}, x), p_{12}(p_{22},x), p_{22}, x)=0$, and with 
    \begin{align*}
        \bmat{(\nabla p_{11}(p_{22}, x))^\top \\ (\nabla p_{12}(p_{22}, x))^\top } = -\bmat{3 & -8 \\ 2 & -5}^{-1} \bmat{4 & 0 \\ 2 & 0} = \bmat{4 & 0 \\ 2 & 0}. 
    \end{align*}
    We can similarly compute the second order deviations of the solutions $p_{12}(p_{22}, x)$ and $p_{12}(p_{12},x)$ as 
    \begin{align*}
        \nabla^2 p_{11}(0,0) &= \bmat{-8 & 0 \\ 0 & -56} \\
        \nabla^2 p_{12}(0,0) &= \bmat{-4 & 0 \\ 0 & -20}.
    \end{align*}
    Denote $p_{22} - 1 = u$. It holds by a Taylor series expansion that
    \begin{align*}
        p_{11}(1+u, x) =& 1 + 4 u - 4 u^2 -28 x^2 + \sum_{i=0}^2 o\paren{ |x^{i} u^{2-i}|} \\
        p_{12}(1+u,, x) =& 1 + 2 u - 2 u^2 -10 x^2 + \sum_{i=0}^2 o\paren{ |x^{i} u^{2-i}|}.
    \end{align*}
    
    Substituting these solutions to $g_3$  reduces our system of equations to 
    \begin{align*}
        0 = h(u, x) &\triangleq g_3(p_{11}(u+1, x), p_{12}(u+1, x), p_{22}, x).
    \end{align*}
    By the above Taylor expansions for $p_{11}(1+u, x)$ and $p_{12}(1+u,x)$, this constraint is equivalent to
    \begin{align*}
        h(u,x) &= -\left(3 + 4 u - 2u^2 - 10x^2 + \sum_{i=0}^2 o(|x^i u^{2-i}|)\right)^2 \\
        +\big(9 +& 15 u - 12 u^2 - 68x^2 +  \sum_{i=0}^2 o(|x^i u^{2-i}|)\big)(1 + u + x^2) \\
        &= u^2 - x^2 + \sum_{i=0}^2 o(|x^i u^{2-i}|). 
    \end{align*}
    Letting $u = wx$, it holds that
    \begin{align*}
         h(wx, x) = (w^2 - 1) x^2 + \sum_{i=0}^2 o(|w^i x^2|).
    \end{align*}
   Consider the system defined by $0 = \tilde h(w,x) \triangleq \frac{1}{x^2} h(wx, x)$. The function $\tilde h$ is analytic in a neigborhood of $(1,0)$, and satisfies
    \begin{align*}
        \tilde h(w,x) = w^2 -1 + \sum_{i=0}^2 \frac{ o(|w^i x^2|)}{x^2}. 
    \end{align*}
    For $x = 0$, $w = \pm1$ are solutions to the above equation. Consider the positive solution. By the Implicit Function Theorem, there is a continuously differentiable solution $w(x)$ for a neighborhood around $x=0$, with $w(x) = 1 + o(x)$. Then $p_{22}(x) = 1 + w(x) x$ is a continuously differentiable solution to $h(p_{22}, x) = 0$, while $p_{11}(x) = p_{11}(p_{22}(x), x)$, $p_{12}(p_{22}(x), x)$ and $p_{22}(x)$ are continuously differentiable solutions to the system of equations of interest for an interval of $x > 0$. They satisfy $p_{11}(x) = 1 + 4x + o(x^2)$, $p_{12}(x)= 1 + 2x + o(x)$, $p_{22}(x) = 1 + x + o(x)$.
    \end{proof}

\subsection{Proof of \Cref{prop: doyle hessian}}

To prove this result, we first express the dominant term in the Lyapunov solution $\Sigma_H$ defining the problem.
\begin{lemma}
    \label{lem: SigmaH evaluation}
    The quantity $\Sigma_H$ required for the Hessian evaluation in \Cref{prop: hessian} is given by 
    \begin{align*}
        \Sigma_H = 16 \sigma^{-3/2} v_1 v_1^\top + O(\sigma^{-1}).
    \end{align*}
\end{lemma}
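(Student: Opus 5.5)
The approach is to solve the Lyapunov equation $\Sigma_H = A_{\mathsf{joint}}^H \Sigma_H (A_{\mathsf{joint}}^H)^\top + W$, with $W \triangleq B_{\mathsf{joint}}^H \Sigma_e (B_{\mathsf{joint}}^H)^\top$, in the coordinates of the Jordan decomposition $\hat A_{\mathsf{joint}}^H = V J V^{-1}$ from \eqref{eq: jordan}. Here $v_1$ is the first column of $V$, the genuine eigenvector of the block for $1-\sqrt{\sigma}$. I will write $\tilde \Sigma \triangleq V^{-1}\Sigma_H V^{-\top}$ and $\tilde W \triangleq V^{-1} W V^{-\top}$. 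Then $\tilde\Sigma = (J + \tilde X)\tilde\Sigma(J+\tilde X)^\top + \tilde W$, where $\tilde X = V^{-1}XV^{-1}\cdot V = O(\sigma)$. Before this, I will use \Cref{lem: Riccati solution} to expand $L$, $\dot L$, $\Sigma_e = \sigma + O(\sigma^{3/2})$ and $V$ to leading order in $\sqrt{\sigma}$. I also need to check that $V$ and $V^{-1}$ stay $O(1)$ as $\sigma\to 0$, so that a change of basis does not change orders of magnitude. This looks plausible because the Jordan structure of \eqref{eq: jordan} is uniform in $\sigma$.

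The first step is to solve the exact Jordan problem, ignoring $\tilde X$. For a $2\times 2$ block $J_1 = \lambda I + N$ with $\lambda = 1-\sqrt{\sigma}$, I expand $\sum_k J_1^k \tilde W_{11} (J_1^k)^\top$ using $J_1^k = \lambda^k I + k\lambda^{k-1}N$. The $(1,1)$ entry then picks up the term $\sum_k k^2 \lambda^{2k-2}[\tilde W_{11}]_{22}$. Since
\begin{align*}
\sum_k k^2\lambda^{2k} \approx \frac{2}{(1-\lambda^2)^3} \approx \frac{2}{8\sigma^{3/2}} = \frac{1}{4}\sigma^{-3/2},
\end{align*}
this entry is of order $\sigma^{-3/2}$. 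All other entries of that block are $O(\sigma^{-1})$ or smaller. The cross terms between the block at $1-\sqrt\sigma$ and the nilpotent block at $0$ are $O(\sigma^{-1/2})$ and $O(1)$. The leading term is therefore $\tfrac14\sigma^{-3/2}[\tilde W]_{22}\, e_1e_1^\top$. Mapping back gives $\Sigma_H \approx \tfrac14 \sigma^{-3/2}[\tilde W]_{22}\, v_1 v_1^\top$.

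The constant $16$ then requires $[\tilde W]_{22} = 64 + o(1)$, which I will verify from the row $e_2^\top V^{-1}$ (the generalized-eigenvector direction) applied to $B_{\mathsf{joint}}^H\Sigma_e^{1/2}$. This is the one place where care is needed. $L$ is $O(1)$, and $\dot L$ may be as large as $O(\sigma^{-1/2})$ because $\dot\Sigma$ involves a Lyapunov solve with $A_{\mathsf{cl}}^o$ near marginal. Since $\Sigma_e = O(\sigma)$, I must track exactly which powers survive. I will compute $\dot P$, $\dot\Sigma$, $\dot L$ to leading order via the formulas of Lemma B.1 of \citep{simchowitz2020naive}, together with \Cref{lem: Riccati solution}. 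If $\dot L\Sigma_e^{1/2}$ is $O(1)$, then $\tilde W = O(1)$ and the bookkeeping above is consistent.

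The main obstacle is controlling the perturbation $\tilde X = O(\sigma)$, since the resolvent $(I - \hat A\otimes \hat A)^{-1}$ has norm of order $\sigma^{-2}$ in the Jordan direction. A naive perturbation bound would then only give $O(\sigma^{-1})$ relative errors. I will instead rescale the Jordan block with $D = \mathrm{diag}(\sigma^{-1/4},\sigma^{1/4})$, which turns it into $\lambda I + \sigma^{1/2}N$. In these coordinates the Lyapunov operator has an inverse of norm $O(\sigma^{-1/2})$, and the perturbation becomes $O(\sigma^{1/2})$ after rescaling. A Neumann-series argument then shows the correction is a relative $O(\sigma^{1/2})$, that is, $O(\sigma^{-1})$ in absolute terms. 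That yields $\Sigma_H = 16\sigma^{-3/2}v_1v_1^\top + O(\sigma^{-1})$. If the eigenvalue shift from $X$ is the issue, I would instead absorb $X$ exactly by redefining $J$ with the true eigenvalue $1-\sqrt\sigma+O(\sigma)$, which changes the leading constant only at relative order $\sqrt\sigma$.
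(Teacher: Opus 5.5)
Your leading-order computation matches the paper's. In Jordan coordinates the factor $\sum_k k^2\lambda^{2k-2}\approx \tfrac14\sigma^{-3/2}$ multiplies $[\tilde W]_{22}=(e_2^\top V^{-1}B^H_{\mathsf{joint}})^2\Sigma_e=64+O(\sqrt{\sigma})$, which gives the constant $16$.

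The gap is the perturbation step. Take your own estimates: the rescaled inverse Lyapunov operator has norm $O(\sigma^{-1/2})$ and the rescaled perturbation has size $O(\sigma^{1/2})$. Then the Neumann ratio is $O(\sigma^{-1/2})\cdot O(\sigma^{1/2})=O(1)$, so you get neither convergence nor a relative $O(\sigma^{1/2})$ correction. Sharper constants cannot fix this. The entry your rescaling inflates to $O(\sigma^{1/2})$ is $[V^{-1}XV]_{21}=e_2^\top V^{-1}Xv_1$. A perturbation $a\sigma$ in that slot turns the block into $\left[\begin{smallmatrix}\lambda & 1\\ a\sigma & \lambda\end{smallmatrix}\right]$, whose eigenvalues are $1-\sqrt{\sigma}\pm\sqrt{a\sigma}$. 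They move by the same order as the stability margin, which changes the $\sigma^{-3/2}$ coefficient and destroys stability if $a>1$. So no argument that uses only $\|X\|=O(\sigma)$ can prove the lemma. Your fallback of re-centering $J$ at the true eigenvalue assumes, without justification, that the perturbed matrix still has a well-conditioned Jordan block there.

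The missing idea is the block lower-triangular structure of $A^H_{\mathsf{joint}}$. Its upper-right block is zero, so $\hat A^H_{\mathsf{joint}}$ and $X$ both leave $\{x:x_1=x_2=0\}=\mathrm{span}(v_1,v_3)$ invariant. Hence $[V^{-1}XV]_{ij}=0$ for $i\in\{2,4\}$, $j\in\{1,3\}$, and in particular the dangerous $(2,1)$ entry vanishes. The paper's proof uses exactly this fact (a vector with zero first two entries has zero second Jordan coordinate) in its auxiliary lemma on the iterated operator $\mathcal{L}(Y)$. There it propagates three coefficient sequences instead of a Neumann bound.

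With this fact your route goes through, and is somewhat cleaner, provided you rescale all four Jordan coordinates: $D=\mathrm{diag}(\sigma^{-1/4},\sigma^{1/4},\sigma^{-1/4},\sigma^{1/4})$ with $J\mapsto D^{-1}JD$. The only entries multiplied by $\sigma^{-1/2}$ are then the structural zeros. So the rescaled perturbation is $O(\sigma)$, the Neumann ratio is $O(\sigma^{1/2})$, and undoing the scaling leaves an $O(\sigma^{-1})$ error. If you rescale only the $2\times 2$ Jordan block, the $(2,4)$ and $(3,1)$ couplings stay at $O(\sigma^{3/4})$, and the same argument gives only $O(\sigma^{-5/4})$.

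Two smaller corrections to your setup:
\begin{itemize}
\item Since $\Sigma\succeq\Sigma_w$, you have $\Sigma_e=C\Sigma C^\top+\sigma=1+O(\sqrt{\sigma})$, not $\sigma+O(\sigma^{3/2})$. This value, together with $L\to(3,2)^\top$, is what gives $[\tilde W]_{22}\to 64$.
\item $\dot L=0$ exactly, because $\theta$ enters only through $B$ and the Kalman gain does not depend on $B$. Hence $B^H_{\mathsf{joint}}=(L^\top,0)^\top$ and $\tilde W=O(1)$ with no further tracking.
\end{itemize}
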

The calculation of the above expression relies on the Jordan form of \eqref{eq: hatAjointH}. We present the proof of this result shortly.  
To evaluate the Hessian of \Cref{prop: hessian}, it remains to determine the derivative of the controller, $\dot F$. 

\begin{lemma}
    It holds that 
    \begin{align*}
        \dot F = \bmat{2 & 3} + O(\sqrt{\sigma}).
    \end{align*}
\end{lemma}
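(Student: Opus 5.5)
The plan is to avoid the general formula for $\dot F$ (from Lemma B.1 of \citep{simchowitz2020naive}) and instead use the structure of the instance \eqref{eq: doyle example}: the parameter only enters through $B(\theta)=\theta b$ with $b=\bmat{0 & 1}^\top$, and $\dot A = 0$. The formula route is unattractive because $\dot P=\dlyap((A_{\mathsf{cl}}^c)^\top,\cdot)$ involves a closed-loop eigenvalue near $1-\sqrt{\sigma}$. It would produce $O(\sigma^{-1/2})$ terms that must cancel, which is hard to track.

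\textbf{Step 1: rescaling identity.} With $R=\sigma$, the input substitution $u'=\theta u$ shows that the LQR problem $(A,\theta b,Q,\sigma)$ is equivalent to $(A,b,Q,\sigma/\theta^2)$. Hence
\[
P(\theta)=\dare(A,b,Q,\sigma\theta^{-2}), \qquad F(\theta)=\theta^{-1}\bar F(\sigma\theta^{-2}),
\]
where $\bar F(r)$ denotes the LQR gain of $(A,b,Q,r)$. Differentiating at $\theta^\star=1$ gives
\[
\dot F = -\bar F(\sigma) - 2\sigma\, \bar F'(\sigma).
\]

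\textbf{Step 2: leading term.} By \Cref{lem: Riccati solution}, $P=\bmat{1&1\\1&1}+O(\sqrt\sigma)$ and $\Psi=p_{22}+\sigma=1+O(\sqrt\sigma)$. Therefore
\[
\bar F(\sigma)=-\Psi^{-1}b^\top P A=-\Psi^{-1}\bmat{2p_{12} & p_{12}+2p_{22}}=-\bmat{2 & 3}+O(\sqrt\sigma),
\]
so the first term of $\dot F$ already equals $\bmat{2 & 3}+O(\sqrt\sigma)$.

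\textbf{Step 3: the correction $2\sigma\bar F'(\sigma)$.} Write $x=\sqrt\sigma$. The proof of \Cref{lem: Riccati solution} constructs $p_{11}(x),p_{12}(x),p_{22}(x)$ as continuously differentiable functions of $x$ near $0$, via the implicit function theorem applied to the analytic $\tilde h(w,x)$ and to $p_{11}(p_{22},x)$, $p_{12}(p_{22},x)$. Consequently $\bar F$ is a $C^1$ function of $x$, so $d\bar F/dx = O(1)$. By the chain rule, $\bar F'(\sigma)=\frac{1}{2x}\frac{d\bar F}{dx}=O(\sigma^{-1/2})$, and thus $2\sigma\bar F'(\sigma)=x\,\frac{d\bar F}{dx}=O(\sqrt\sigma)$. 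Combining with Step 2 gives $\dot F=\bmat{2&3}+O(\sqrt\sigma)$.

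\textbf{Main obstacle.} The delicate point is Step 3. The proof must guarantee that the Riccati expansion can be differentiated in $x$ with bounded derivative, and not merely that $P=P_0+P_1\sqrt\sigma+o(\sqrt\sigma)$ holds. I would verify this by checking that the implicit-function construction in \Cref{lem: Riccati solution} is nondegenerate at $(w,x)=(1,0)$, i.e. $\partial_w\tilde h=2\neq 0$. It also requires that the composed maps are $C^1$ in $x$, which follows from analyticity of $g$. This yields $dp_{ij}/dx=O(1)$ and hence the claim.

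As a cross-check, I would also evaluate the Lemma B.1 expression numerically for small $\sigma$ to confirm that the large terms in $\dot P$ indeed cancel against $b^\top P\dot B F$.
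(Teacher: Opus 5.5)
Your argument is correct, and it takes a genuinely different route from the paper.

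\textbf{The paper's route.} The paper substitutes $\dot A=0$ and $\dot B=B$ into the Lemma~B.1 formula, which gives $\dot F=-F-2\Psi^{-1}B^\top P A_{\mathsf{cl}}^c-\Psi^{-1}B^\top\dot P A_{\mathsf{cl}}^c$. It then uses the identity $B^\top P A_{\mathsf{cl}}^c=R\Psi^{-1}B^\top PA=-RF$ twice:
\begin{itemize}
\item it makes the middle term $O(\sigma)$;
\item it makes the forcing term of the Lyapunov equation for $\dot P$ equal to $-2F^\top RF=O(\sigma)$.
\end{itemize}
Summing the Lyapunov series against the closed-loop mode near $1-\sqrt{\sigma}$ costs a factor $\sigma^{-1/2}$, so $\dot P=O(\sqrt{\sigma})$.

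\textbf{Your route.} You use the scaling symmetry $B(\theta)=\theta b$ (scalar input) to turn the $\theta$-derivative into a $\sigma$-derivative. You then control $\sigma\bar F'(\sigma)=\tfrac{x}{2}\,d\bar F/dx$ through the $C^1$ (in fact analytic) dependence of the Riccati solution on $x=\sqrt{\sigma}$, all the way down to $x=0$.

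\textbf{How they relate.} The two are the same estimate in different clothing. We have $\dot P=-2\sigma\,\partial_\sigma P$, and $\partial_\sigma P$ is exactly the Lyapunov sum of $F^\top F$ under $(A_{\mathsf{cl}}^c)^\top$. So both routes amount to showing $\sigma\,\partial_\sigma P=O(\sqrt{\sigma})$.
\begin{itemize}
\item The paper gets this from spectral information on $A_{\mathsf{cl}}^c$. This needs bounded eigenvector conditioning, which holds because the eigenvalues near $0$ and near $1-\sqrt{\sigma}$ stay separated.
\item You get it from the implicit-function construction inside the proof of the Riccati lemma, not from its stated $o(\sqrt{\sigma})$ expansion. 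You correctly flag this as the point to check. It should be checked together with the lemma's identification of that branch as the unique positive definite solution for small $x>0$.
\end{itemize}
Your route is shorter and avoids any Lyapunov-series estimate, but it relies on this instance's special structure. The paper's is the generic computation and works for any parameterization.

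\textbf{A correction to your motivation.} The formula route involves no $O(\sigma^{-1/2})$ terms that need to cancel. Once you note $(A_{\mathsf{cl}}^c)^\top PB=-F^\top R$, the forcing for $\dot P$ is already $O(\sigma)$, so $\dot P$ itself is $O(\sqrt{\sigma})$. The leading term then comes entirely from $-\Psi^{-1}B^\top P\dot B F=-F+\Psi^{-1}RF$. Your proposed numerical cross-check would therefore not find any cancellation.
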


\begin{proof}
We may work out $\dot F$ analytically using Lemma B.1 of \citep{simchowitz2020naive} as
\begin{align*}
    \dot F &= - (B^\top P B + R)^{-1}(\dot B^\top P (A+BF) + B^\top P (\dot A \\
    &+ \dot B F) + B^\top P' (A+BF)),
\end{align*}
where $\dot P = \dlyap(A+BF, (A+BF)^\top P ( \dot A + \dot B F) + (\dot A + \dot B F)^\top P (A+BF)$. Substituting $\dot A = 0$ and $\dot B = B$ this simplifies to 
\begin{align*}
    \dot F = -F -2\Psi^{-1} B^\top P (A+BF) - \Psi^{-1} B^\top P' (A+BF). 
\end{align*}
It holds that 
\begin{align*}
    \Psi^{-1} B^\top P (A+BF) = \Psi^{-1} R B^\top P A = O(\sigma). 
\end{align*}
For the remaining term in the expression of $\dot F$, observe that the second argument defining the Lyapunov function simplifies as 
\begin{align*}
    \mathsf{sym}((A+BF)^\top P (\Delta_A + \Delta_B F)) &= \mathsf{sym}((A+BF)^\top P B F)) \\&= \mathsf{sym}(RB^\top P A) \\
    &= O(\sigma). 
\end{align*}
Then expanding the Lyapunov solution as an infinite series and using the eigendecomposition of $A+BF$, we find that $P' = O(\sqrt{\sigma})$, and therefore $-\Psi^{-1} B^\top P' (A+BF)=O(\sqrt{\sigma})$. Thus $\dot F = \bmat{2 & 3} + O(\sqrt{\sigma})$.
\end{proof}

Consider now the Hessian from \Cref{prop: hessian}:
\begin{align*}
    H(\theta^\star) = 2 \Psi \bmat{\dot F & F} \Sigma_H \bmat{\dot F & F}^\top.
\end{align*}
By the fact that $\Sigma_H = 16 \sigma^{-3/2} v_1 v_1^\top+ O(\sigma^{-1})$, we conclude that $H(\theta^\star)$ scales as $\sigma^{-3/2}$, and it is contributed by the term $16 \sigma^{-3/2} \Psi \bmat{\dot F & F} v_1 v_1^\top \bmat{\dot F & F}$. Evaluating this quantity leads to the expression of \Cref{prop: doyle hessian}.

We now proceed to prove \Cref{lem: SigmaH evaluation}.

\begin{proof} Recall the  matrix $\hat A_{\mathsf{joint}}^H$ defined in \eqref{eq: hatAjointH}. The Jordan decomposition of this matrix is given by $\hat A_{\mathsf{joint}}^H = V J V^{-1}$ with
\begin{align}
\label{eq: jordan full}
\begin{aligned}
    J &= \bmat{1 - \sqrt{\sigma} & 1 & 0 & 0 \\ 0 & 1- \sqrt{\sigma} &0 & 0\\ 0&0 &0 &1\\ 0 &0 &0 &0}\\
    V&= \bmat{0 & -\frac{1 - \sqrt{\sigma}}{(1+\sqrt{\sigma})^2} & 0 & \frac{1 - \sqrt{\sigma}}{4} \\ 0 & \frac{1-\sqrt{\sigma}}{1 + \sqrt{\sigma}} & 0 & \frac{-1 + \sqrt{\sigma}}{2}  \\ 1 & 1 & 1 & 0 \\ 2 & 1 & 1 + \sqrt{\sigma} & -1},
\end{aligned}
\end{align}
where the columns of $V$ are denoted by $v_i$ for $i =1,\dots, 4$. %

Let $\hat \Sigma_H =\dlyap(\hat A_{\mathsf{joint}}^H, B_{\mathsf{joint}}^H \Sigma_e (B_{\mathsf{joint}}^H)^\top).$ Define the operator $\calL: \R^{4 \times 4} \to \R^{4 \times 4}$ for an arbitrary $Y \in \R^{4 \times 4}$ as
\begin{align}
    \label{eq: lyapunov difference operator}
    \calL(Y) = 
    \dlyap(\hat A_{\mathsf{joint}}^H, \mathsf{sym}(X Y (\hat A_{\mathsf{joint}}^H)^\top) + X Y X^\top ).
\end{align}

Then it holds that
\begin{align*}
    \Sigma_H &= \dlyap(A_{\mathsf{joint}}^H, B_{\mathsf{joint}}^H \Sigma_e (B_{\mathsf{joint}}^H)^\top) = \hat \Sigma_H + \calL(\Sigma_H). 
\end{align*}

Applying \Cref{lem: hatA dlyap} with $m=n=\frac{1}{\sqrt{2}}\bmat{L \\ 0} \Sigma_e^{1/2}$, we find
\begin{align*}
    \hat \Sigma_H = \alpha_{1} v_1 v_1^\top + \alpha_{2} (v_1 v_2^\top + v_2 v_1^\top) + R, 
\end{align*}
for $\alpha_{1}, \alpha_{2}$ and $R$ as in \Cref{lem: hatA dlyap}. Then it holds that $\hat \Sigma_H$ is dominated by a quantity that grows as $\sigma^{-3/2}$ by the fact that the coefficient $\alpha_1$ is nonzero. To show that $\Sigma_H$ has the same rate of growth, we consider the difference $E$ between $\Sigma_H$ and $\hat \Sigma_H$: 
\begin{align*}
    E \triangleq \Sigma_H - \hat \Sigma_H = \calL(\Sigma_H).
\end{align*} 
Observe that $\calL$ is a linear operator. Then
\begin{align*}
    \calL(\Sigma_H) = \calL(\hat \Sigma_H) + \calL(\Sigma_H - \hat \Sigma_H) = \calL(\hat \Sigma_H) + \calL^2(\Sigma_H). 
\end{align*}
Iterating this recursion, it holds that
\begin{align*}
    E = \sum_{i=1}^\infty \calL^i(\hat \Sigma_H). 
\end{align*}

By \Cref{lem: hot lyap operator}, $\calL^i(\hat \Sigma_H) = G_i + \beta_{2,i} (v_1 v_2^\top + v_2 v_1^\top) + R_i$ for symmetric matrices $R_i$ and $G_i$, with $G_i$ nonzero only in the bottom right $2\times 2$ block, and which satisfy $\norm{G_i} \leq \beta_{1,i}$, $\norm{R_i} \leq \beta_{3,i}$. Here, the sequences $\curly{\beta_{1, i}}_{i=0}^\infty, \curly{\beta_{2,i}}_{i=0}^\infty, \curly{\beta_{3,i}}_{i=0}^\infty$ satisfy the elementwise inequality
\begin{align*}
    \bmat{\beta_{1,i+1} \\ \abs{\beta_{2,i+1}} \\ \beta_{3,i+1}} \leq c\bmat{\sigma^{1/2} & 1 & \sigma^{-1/2} \\ 0 & \sigma^{1/2} & 1 \\ 0 & \sigma & \sigma^{1/2}}\bmat{\beta_{1,i} \\ \abs{\beta_{2,i}} \\ \beta_{3,i}}
\end{align*}
and start from $\beta_{1,0} = \abs{\alpha_1}$, $\beta_{2,0} = \alpha_2$, and $\beta_{3,0} = \norm{R}$. The quantity $c$ is some universal positive constant. 
The eigenvalues of the matrix defining the above inequality are $0, \sigma^{1/2}$ and $2 \sigma^{1/2}$. Therefore, the norm of the series $\sum_{i=1}^{\infty} \calL^i(\hat \Sigma_H)$ can be bounded by 
$\frac{C}{1-c \sqrt{\sigma}} (\abs{\beta_{1,1}} + \abs{\beta_{2,1}} + \abs{\beta_{3,1}})$ for some other universal constant $C$. For $\sigma$ sufficiently small, $\frac{1}{1-c \sqrt{\sigma}} \leq 2$. Additionally, the terms $\beta_{1,1}, \beta_{2,1}$ and $\beta_{3,1}$ arise from the first iteration of \Cref{lem: hot lyap operator} applied to the values $\alpha_1, \alpha_2, \alpha_3$ defining $\hat \Sigma_H$, and are therefore all norm bounded by a universal constant multiplied by $\sigma^{-1}$. Consequently, $E = O(\sigma^{-1})$.  

As a result, it holds that $\Sigma_H =  \alpha_1 v_1 v_1^\top + O(\sigma^{-1}) =\frac{1}{2} (c^m_2)^2 \sigma^{-3/2} v_1 v_1^\top + O(\sigma^{-1})$ for $c^m = V^{-1} \frac{1}{\sqrt{2}}\bmat{L \\ 0} \Sigma_e^{1/2}$. 
    To conclude the proof, we identify $v \gets v_1$, and evaluate $c_2^m = (\frac{1}{\sqrt{2}}V^{-1} \bmat{L \\ 0 } \Sigma_e^{1/2})_2 = \frac{8}{\sqrt{2}} + O(\sqrt{\sigma})$.
\end{proof}

\begin{lemma}
    \label{lem: hot lyap operator}
    Consider a matrix $\Sigma=G + \beta_{2} (v_1 v_2^\top +  v_2 v_1^\top) + R$, for $G$ a symmetric matrix nonzero only in the bottom right $2 \times 2$ block and satisfying $\norm{G}\leq \beta_1$, and $R$ a symmetric matrix satisfying $\norm{R} \leq \beta_3$. Recall the operator $\calL$ from \eqref{eq: lyapunov difference operator}. It holds that
    \begin{align*}
        &\calL(\Sigma) = \tilde G + \tilde \beta_{2} (v_1 v_2^\top + v_2 v_1^\top) + \tilde R,
    \end{align*}
    where $\tilde G$ and $\tilde R$ are symmetric matrices with $\tilde G$ nonzero only in the bottom right $2\times 2$ block with $\tilde G$, $\tilde R$, and $\tilde \beta_2$  satisfying
    \begin{align*}
    \norm{\tilde G} &\leq c (\beta_1 \sigma^{1/2} + \abs{\beta_2} + \beta_3 \sigma^{-1/2})\\
    \abs{\tilde\beta_2} &\leq c (\abs{\beta_2} \sigma^{1/2} + \beta_3)\\
    \norm{\tilde R} &\leq c(\abs{\beta_2} \sigma + \beta_3 \sigma^{1/2})
\end{align*}
for some universal positive constant $c$.
    
\end{lemma}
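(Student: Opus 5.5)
The plan is to diagonalize the Lyapunov operator through the Jordan form \eqref{eq: jordan full}. Write $\hat A_{\mathsf{joint}}^H = VJV^{-1}$. For any symmetric $W$,
\[
\dlyap(\hat A_{\mathsf{joint}}^H, W) = V\,\dlyap\bigl(J, V^{-1}W V^{-\top}\bigr)V^\top ,
\]
so we only need to solve a Lyapunov equation driven by $J$ and then read off coefficients in the basis $\{v_iv_j^\top\}$.

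\textbf{Step 1: explicit solution for $J$.} For $J=\mathrm{diag}(J_1,J_0)$, with $J_1$ the $2\times 2$ Jordan block at $\lambda=1-\sqrt\sigma$ and $J_0$ the nilpotent block, solve $Y=JYJ^\top+\tilde W$ entry by entry using $\sum_k \lambda^{2k}\approx \sigma^{-1/2}/2$, $\sum_k k\lambda^{2k}\approx \sigma^{-1}/4$ and $\sum_k k^2\lambda^{2k}\approx \sigma^{-3/2}/4$. This gives the following gains.
\begin{itemize}
\item The $v_1v_1^\top$ coefficient picks up $\tilde W_{22}$ with gain $O(\sigma^{-3/2})$, $\tilde W_{12}$ with gain $O(\sigma^{-1})$, and $\tilde W_{11}$ with gain $O(\sigma^{-1/2})$.
\item The $v_1v_2^\top+v_2v_1^\top$ coefficient picks up $\tilde W_{22}$ with gain $O(\sigma^{-1})$ and $\tilde W_{12}$ with gain $O(\sigma^{-1/2})$.
\item Every remaining coefficient, namely $v_2v_2^\top$, the cross terms with $v_3,v_4$, and the nilpotent block, has gain at most $O(\sigma^{-1/2})$.
\end{itemize}
Denote by $w_i^\top$ the rows of $V^{-1}$. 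Then $\tilde W_{22}=w_2^\top W w_2$, $\tilde W_{12}=w_1^\top W w_2$, and so on.

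\textbf{Step 2: decompose the forcing.} Set $W=\mathsf{sym}(X\Sigma(\hat A^H_{\mathsf{joint}})^\top)+X\Sigma X^\top$, with $X=O(\sigma)$, and treat the three pieces of $\Sigma$ separately.

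For the $R$ piece, $\|W\|\lesssim \sigma\beta_3$. Its $\tilde W_{22}$ part produces a $v_1v_1^\top$ term of size $\sigma^{-1/2}\beta_3$, and I will absorb this into $\tilde G$. This absorption rests on one observation: $v_1=e_3+2e_4$ is supported on the last two coordinates, so $v_1v_1^\top$ is nonzero only in the bottom-right block. That is the reason for tracking $G$ in the standard basis at all. The remaining contributions of the $R$ piece have size $\sigma^{-1}\cdot\sigma\beta_3=\beta_3$ into $\tilde\beta_2$ and $\sigma^{1/2}\beta_3$ into $\tilde R$, as claimed.

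For the $\beta_2$ piece, I use $(\hat A_{\mathsf{joint}}^H)^\top w_i \approx$ the Jordan action to write $W$ in the $V$ basis. Terms of the form $X v_1 (\cdot)^\top$ are $O(\sigma)\beta_2$, so $\tilde W=O(\sigma\beta_2)$, giving bounds $\beta_2$, $\sigma^{1/2}\beta_2$ and $\sigma\beta_2$ for the three outputs.

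\textbf{Step 3: the $G$ piece (main obstacle).} Naively $\|W\|\lesssim\sigma\beta_1$, which would feed $\sigma^{-1/2}\beta_1$ into the $v_1v_1^\top$ coefficient. The claim instead requires $\sigma^{1/2}\beta_1$, and requires no $\beta_1$ at all in $\tilde\beta_2$ or $\tilde R$. So I must show that the forcing $XG(\hat A^H_{\mathsf{joint}})^\top$ has vanishing components $w_2^\top(\cdot)w_2$ and $w_1^\top(\cdot)w_2$ to leading order, gaining a full factor of $\sigma$. I would check this through three structural facts.
\begin{itemize}
\item The bottom-right block of $\hat A_{\mathsf{joint}}^H$ is $A^o_{\mathsf{cl}}$, which leaves the span of $e_3,e_4$ invariant. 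So $G$ propagated through $\hat A$ stays in the lower block.
\item The lower-right block of $X$ is $\Delta A^o_{\mathsf{cl}}$. Using the Riccati expansion in Lemma~\ref{lem: Riccati solution}, $\Delta A^o_{\mathsf{cl}}=L$-perturbation times $C$, which lies in a fixed rank-one direction.
\item The left vector $w_2$ restricted to coordinates $3,4$ annihilates that direction up to $O(\sqrt\sigma)$.
\end{itemize}
Any leftover components of $W$ outside this block map into the $v_3,v_4$ and nilpotent coordinates with $O(1)$ gain. Those outputs again lie in the bottom-right block or in $\tilde G$, giving the $\sigma^{1/2}\beta_1$ bound. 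I expect this cancellation to be delicate: it depends on the exact $O(\sigma)$ structure of $X$. I would first verify it symbolically from \eqref{eq: hatAjointH} and the second-order Riccati terms. If it fails, I would redefine $G$ to live in $\mathrm{span}\{v_1,v_3\}$ and redo the bookkeeping.
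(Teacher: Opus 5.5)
Your architecture is the paper's: Jordan coordinates for $\hat A_{\mathsf{joint}}^H$, a three-way split of $\Sigma$, a crude size bound for the $R$ piece, and absorbing any $v_1v_1^\top$ term into $\tilde G$ because $v_1=e_3+2e_4$. Your $R$ piece is correct. The gap is in the $G$ and $\beta_2$ pieces, which rest on one exact structural fact that you never pin down. Both $\hat A_{\mathsf{joint}}^H$ and $X$ have zero top-right $2\times 2$ block. Hence $\mathrm{span}\{e_3,e_4\}=\mathrm{span}\{v_1,v_3\}$ is invariant under both, and so the rows $w_2,w_4$ of $V^{-1}$ vanish identically on coordinates $3,4$.

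Your first Step 3 bullet essentially states this fact, and it already finishes the $G$ piece. Instead, you look for a leading-order cancellation from the rank-one structure of $\Delta A^o_{\mathsf{cl}}$, with $w_2$ annihilating it ``up to $O(\sqrt\sigma)$''. That would not prove the lemma. The bounds on $\tilde\beta_2$ and $\tilde R$ contain no $\beta_1$, so for $\beta_2=\beta_3=0$ the lemma asserts that $\calL(G)$ lies exactly in the bottom-right block. An $O(\sqrt\sigma)$ residual in $\tilde W_{22}$ would put $\sigma^{1/2}\beta_1$ into $\tilde\beta_2$ and $\beta_1$ into the $v_1v_1^\top$ coefficient. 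Likewise, leftover forcing that excites $v_4$ would not land in the bottom-right block, contrary to what you write.

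No cancellation is needed. The forcing $\mathsf{sym}(XG(\hat A_{\mathsf{joint}}^H)^\top)+XGX^\top$ is supported on the bottom-right block. The Lyapunov solution stays there and equals the embedded solution of the $2\times2$ equation driven by the block $A^o_{\mathsf{cl}}$, whose eigenvalues $0$ and $1-\sqrt\sigma$ are simple. The gain is therefore $O(\sigma^{-1/2})$, and the whole output, of norm $O(\sigma^{1/2}\beta_1)$, is $\tilde G$. This is the paper's argument. Your fallback of placing $G$ in $\mathrm{span}\{v_1,v_3\}$ is not an alternative, because that span is exactly the bottom-right coordinate block.

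The same fact is missing from your $\beta_2$ piece. A bound $\tilde W=O(\sigma\beta_2)$ fed through your Step 1 gains yields $\sigma^{-1/2}|\beta_2|$, $|\beta_2|$ and $\sigma^{1/2}|\beta_2|$ for the $v_1v_1^\top$ coefficient, $\tilde\beta_2$ and $\tilde R$. Each is a factor $\sigma^{-1/2}$ worse than claimed.

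What makes it work is the following. Take $M=v_1v_2^\top+v_2v_1^\top$. Every rank-one term of $\mathsf{sym}(XM(\hat A_{\mathsf{joint}}^H)^\top)+XMX^\top$ contains a factor $Xv_1$ or $\hat A_{\mathsf{joint}}^Hv_1=(1-\sqrt\sigma)v_1$. Both lie in $\mathrm{span}\{e_3,e_4\}$ and so have zero second Jordan coordinate. Hence $\tilde W_{22}=0$ exactly; this is the paper's $c_2^mc_2^n=0$. The remaining $\tilde W_{12}=O(\sigma\beta_2)$ then gives $|\beta_2|$ and $\sigma^{1/2}|\beta_2|$ for the first two outputs.

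For $\|\tilde R\|\lesssim\sigma|\beta_2|$ you also need a sharper form of your third Step 1 bullet. The only remaining coefficient with gain $\sigma^{-1/2}$ is that of $v_2v_2^\top$, and it is driven by $\tilde W_{22}$ alone. Every coefficient involving $v_3$ or $v_4$ has $O(1)$ gain, because the zero-eigenvalue Jordan block is nilpotent.
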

\begin{proof}
    Substituting the expression for $\Sigma$ into the first term above, leads to the sum of quantities of the form 
\begin{align*}
    \dlyap(\hat A_{\mathsf{joint}}^H, \mathsf{sym}(X M (\hat A_{\mathsf{joint}}^H)^\top) + X M X^\top ),
\end{align*}
with $M$ taking the following possible values $G_1$, $\beta_2 (v_1 v_2^\top +v_2 v_1^\top)$, $R$. Consider each such quantity separately:
\begin{itemize}
    \item 
    $M = G$: Only the bottom right $2 \times 2$ block of G is nonzero. Multiplying by either $X$ or $\hat A_{\mathsf{joint}}^H$ preserves this structure.  Thus  
    \begin{align*}
        &\dlyap(\hat A_{\mathsf{joint}}^H, \mathsf{sym}(X G \hat A_{\mathsf{joint}}^H)^\top) + X G X^\top)\preceq \\& c \beta_1 \sigma \dlyap\paren{\hat A_{\mathsf{joint}}^H, \bmat{0 & 0 \\ 0 & I}},
    \end{align*}
    for a universal constant $c$
    It holds by an eigendecomposition and by evaluation of a geometric series that
    \begin{align*}
        &\dlyap\paren{\hat A_{\mathsf{joint}}^H[2:,2:], I}\\
        &= \sum_{k=0}^\infty \bmat{-1 -\sqrt{\sigma} & 1 \\ -2 - 2 \sqrt{\sigma} & 2 }^k \left(\bmat{-1 -\sqrt{\sigma} & 1 \\ -2 - 2 \sqrt{\sigma} & 2 }^k \right)^\top \\
        &   \preceq c \sigma^{-1/2} I, 
    \end{align*}
    for a universal constant $c$.
    Then the resulting matrix has only nonzero elements in the lower right $2\times 2$ block, and is norm bounded by $c \beta_1 \sigma^{1/2}$ for a universal constant $c$.
    \item 
    $M = \beta_2 (v_1 v_2^\top +v_2 v_1^\top)$: Similar to above, we leverage the fact that $X v_1$ and $\hat A_{\mathsf{joint}}^H v_1$ map to a vector with the first two entries zero. We find that $V^{-1}$ multiplied by a vector that is zero in the first two components results in a vector whose second element is zero. Therefore, application of \Cref{lem: hatA dlyap} with $m$ and $n$ arising from multiplication of $X$ or $\hat A_{\mathsf{joint}}^H$ with $v_1$ and $v_2$ results in  either $c_2^m=0$ or $c_2^n=0$. Therefore, it holds that
    \begin{align*}
        &\dlyap(\hat A_{\mathsf{joint}}^H, \mathsf{sym}(X M (\hat A_{\mathsf{joint}}^H)^\top) + X M X^\top ) \\
        &= \alpha_1' v_1 v_1^\top  + \alpha_2' (v_1 v_2^\top + v_2 v_1^\top) + R',
    \end{align*}
    where $\abs{\alpha_1'} \leq c \abs{\beta_2}$, $\abs{\alpha_2'} \leq c \abs{\beta_2} \sigma^{1/2}$ and $\norm{R'} \leq c \abs{\beta_2} \sigma$ for a universal constant $c$.
    \item 
    $M=R$: It holds that 
    \begin{align*}
        &\dlyap(\hat A_{\mathsf{joint}}^H, \mathsf{sym}(X M (\hat A_{\mathsf{joint}}^H)^\top) + X M X^\top ) \\
        &\preceq \norm{R} (2\norm{X}\norm{\hat A_{\mathsf{joint}}^H} + \norm{X}^2) \dlyap(\hat A_{\mathsf{joint}}^H, I). 
    \end{align*}
    By \Cref{lem: hatA dlyap}, the above may be expressed by summing the solutions for $n=m=1/\sqrt{2} e_i$ as
    \begin{align*}
        \alpha_1'' v_1 v_1^\top  +\alpha_2'' (v_1 v_2^\top + v_2 v_1^\top) + R'',
    \end{align*}
    for $R''$ a symmetric matrix satisfying $\norm{R''} \leq \alpha_3''$ with $\abs{\alpha_1''} \leq c \beta_3 \sigma^{-1/2}$, $\abs{\alpha_2''} \leq c \beta_3$, and $\alpha_3'' \leq c \beta_3 \sigma^{1/2}$ for a universal constant $c$.
\end{itemize}
Summing these three terms concludes the proof.
\end{proof}

\begin{lemma}
    \label{lem: hatA dlyap}
    For $\hat A_{\mathsf{joint}}^H$ defined in \eqref{eq: hatAjointH} with Jordan form defined by $J$ and $V$ in \eqref{eq: jordan}, and for any $m, n \in \R^4$, there exist real numbers $\alpha_1$ and $\alpha_2$ and a symmetric matrix $R$ such that
    \begin{align*}
        &\dlyap(\hat A_{\mathsf{joint}}^H, mn^\top + n m^\top) \\
        &= \alpha_{1} v_1 v_1^\top + \alpha_{2} (v_1 v_2^\top + v_2 v_1^\top) + R,
    \end{align*}
    where
    \begin{align*}
        \alpha_{1} &= \frac{1}{2} c_2^m c_2^n \sigma^{-3/2} + \bar \alpha_1 \sigma^{-1} \\
        \bar \alpha_1 &\leq 16 \norm{V^{-1}}^2 \norm{m} \norm{n}\\
        \abs{\alpha_{2}} &\leq  \abs{c_2^m c_2^n} \sigma^{-1}  + \bar \alpha_2 \sigma^{-1/2} \\
        \bar \alpha_2 &\leq 16 \norm{V^{-1}}^2 \norm{m} \norm{n}\\
        \norm{R} &\leq 16 \norm{V}^2 \norm{V^{-1}}^2 \norm{n}\norm{m} + c_2^m c_2^n \norm{V}^2 \sigma^{-1/2}
    \end{align*}
    and for any matrix $b \in \R^4$,
    \begin{align*}
        c^b \triangleq V^{-1} b. 
    \end{align*}
\end{lemma}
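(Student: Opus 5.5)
The plan is to diagonalize $\hat A_{\mathsf{joint}}^H$ by its Jordan form and solve the Lyapunov equation in Jordan coordinates. Write $\hat A_{\mathsf{joint}}^H = VJV^{-1}$. Then $\dlyap(\hat A_{\mathsf{joint}}^H, mn^\top+nm^\top) = V\,\dlyap(J, c^m (c^n)^\top + c^n (c^m)^\top)\,V^\top$, with $c^m=V^{-1}m$ and $c^n=V^{-1}n$. Thus it suffices to compute $Y=\sum_{k\ge 0} J^k S (J^k)^\top$, where $S = c^m(c^n)^\top + c^n(c^m)^\top$, and then read off the coefficients of $v_1v_1^\top$ and $v_1v_2^\top+v_2v_1^\top$ in $VYV^\top$.

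Since $J=\mathrm{diag}(J_1,J_0)$, with $J_1=\lambda I+E_{12}$, $\lambda=1-\sqrt\sigma$, and $J_0$ nilpotent of index two, the powers are explicit: $J_1^k=\bmat{\lambda^k & k\lambda^{k-1}\\ 0&\lambda^k}$ and $J_0^k=0$ for $k\ge 2$. I will split $Y$ into blocks.

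\textbf{The $(1,1)$ entry.} This entry is $\sum_k \big(\lambda^k s_{11}+2k\lambda^{2k-1}s_{12}+k^2\lambda^{2k-2}s_{22}\big)$, where $s_{ij}=S_{ij}$. The geometric sums give $\sum_k\lambda^{2k}=(1-\lambda^2)^{-1}\approx \tfrac12\sigma^{-1/2}$, $\sum_k k\lambda^{2k-1}\approx \tfrac14\sigma^{-1}$, and $\sum_k k^2\lambda^{2k-2}=(1+\lambda^2)/(1-\lambda^2)^3 = \tfrac14\sigma^{-3/2}+O(\sigma^{-1})$. With $s_{22}=2c^m_2c^n_2$, the leading term is $\tfrac12 c_2^mc_2^n\sigma^{-3/2}$. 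The remainder is $O(\sigma^{-1})$, with a constant controlled by $|s_{12}|\le 2\|V^{-1}\|^2\|m\|\|n\|$; this gives $\bar\alpha_1$, and I will absorb slack into the factor $16$.

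\textbf{The $(1,2)$ entry.} This entry is $\sum_k (\lambda^{2k}s_{12}+k\lambda^{2k-1}s_{22})$. Its size is $|s_{22}|\tfrac14\sigma^{-1}+O(\sigma^{-1/2})$, which gives the bound on $\alpha_2$.

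\textbf{The remaining entries.} The $(2,2)$ entry of the $J_1$ block is $O(\sigma^{-1/2})\,s_{22}$ plus bounded terms. The cross blocks between $J_1$ and $J_0$, and the $J_0$ block itself, are finite sums, since $J_0^k=0$ for $k\ge2$ kills $\lambda^k$-tails after two steps. These are $O(1)\cdot\|V^{-1}\|^2\|m\|\|n\|$, and the cross blocks involve $\sum_k \lambda^k\cdot(\text{bounded})$ only up to $k\le1$.

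\textbf{Collecting terms.} All of these pieces go into $R = V(\,Y-Y_{11}e_1e_1^\top - Y_{12}(e_1e_2^\top+e_2e_1^\top))V^\top$. Its norm is then bounded by $\|V\|^2$ times the residual entries, which gives the stated $\|R\|\le 16\|V\|^2\|V^{-1}\|^2\|m\|\|n\|+c_2^mc_2^n\|V\|^2\sigma^{-1/2}$. The identification of the coefficients uses $Ve_i=v_i$.

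The main obstacle is bookkeeping constants uniformly in small $\sigma$. In particular, $V$ depends on $\sigma$, but it stays bounded and invertible near $\sigma=0$; I will check this from the explicit form \eqref{eq: jordan full}. The exact series identities $\sum k^2x^{k-1}$ evaluated at $x=\lambda^2$ also need to be expanded carefully so that the $\sigma^{-3/2}$ coefficient is exactly $\tfrac12 c_2^mc_2^n$.
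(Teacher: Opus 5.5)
Your proposal is correct and is essentially the paper's argument. The paper also writes $(\hat A_{\mathsf{joint}}^H)^k = VJ^kV^{-1}$ and expands $VJ^k c^b=\sum_i f_i^b(k)\,v_i$, with $f_1^b(k)=c_1^b\lambda^k+c_2^b k\lambda^{k-1}$, $f_2^b(k)=c_2^b\lambda^k$, and $f_3^b,f_4^b$ supported on $k\le 1$; it then sums the geometric series, which is exactly your entrywise computation of $Y=\sum_k J^kS(J^k)^\top$.

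Two small points of bookkeeping. In the $(1,1)$ entry, $\lambda^k s_{11}$ should read $\lambda^{2k}s_{11}$. The $\sigma^{-1}$ remainder $\bar\alpha_1$ also contains the subleading piece $\tfrac18 s_{22}\sigma^{-1}$ of $(1+\lambda^2)/(1-\lambda^2)^3$, not only the $s_{12}$ term, but that piece is likewise bounded by a constant times $\|V^{-1}\|^2\|m\|\|n\|$.
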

\begin{proof}
For $k \geq 0$,
\begin{align*}
    &(\hat A_{\mathsf{joint}}^H)^k = V J^k V^{-1} \\
    &= V \bmat{(1-\sqrt{\sigma})^k & k (1-\sqrt{\sigma})^{k-1}  & 0 & 0 \\ 0 & (1-\sqrt{\sigma})^k & 0 & 0 \\ 0 & 0 & \mathbf{1}_{k=0} & \mathbf{1}_{k=1} \\ 0 & 0 & 0& \mathbf{1}_{k=0} } V^{-1}.
\end{align*}
Consequently, it holds that
\begin{align*}
    \label{eq: lyap expansion}
    \dlyap(\hat A_{\mathsf{joint}}^H, mn^\top) &= \sum_{k=0}^\infty (\hat A_{\mathsf{joint}}^H)^k mn^\top ((\hat A_{\mathsf{joint}}^H)^k)^\top \\
    &= \sum_{k=0}^\infty  V J^k V^{-1} m n^\top V^{-\top} (J^k)^\top V^\top. 
\end{align*}
Then 
\begin{align*}
    &\dlyap(\hat A_{\mathsf{joint}}^H, m n^\top) \\
    &= \sum_{k=0}^\infty (f_{1}^m(k) v_1  + f_2^m(k) v_2 + f_3^m(k) v_3 + f_4^m(k)v_4)\\
    &\times (f_{1}^n(k) v_1  + f_2^n(k) v_2 + f_3^n(k) v_3 + f_4^n(k)v_4)^\top 
\end{align*}
where for a vector $b\in \R^{4}$,
\begin{align*}
    f_{1}^b(k) &= c_1^b(1-\sqrt{\sigma})^k+  c_2^b k (1-\sqrt{\sigma})^{k-1} \\
    f_{2}^b(k) &= c_2^b (1-\sqrt{\sigma})^k,\\
    f_3^b(k)&= c_3^b\mathbf{1}_{k=0} + c_4^b \mathbf{1}_{k=1}    \\
    f_4^b(k)&= \mathbf{1}_{k=0} c_4^b
\end{align*}
Evaluating the geometric series leads to the expression in the statement.
\end{proof}

\else
    \appendices
    
\fi
\end{document}